\crefname{equation}{}{}
\crefname{enumi}{}{}
\pgfplotsset{compat=newest}
\DeclareMathOperator{\Diag}{Diag}
\DeclareMathOperator{\Sp}{Sp}
\DeclareMathOperator{\Tr}{Tr}
\DeclareMathOperator{\sign}{sign}
\DeclareMathOperator{\cov}{cov}
\DeclareMathOperator{\rmi}{\mathrm{i}}
\DeclareMathOperator{\cond}{\mathrm{cond}}
\DeclareMathOperator{\supp}{\mathrm{supp}}
\newtheorem{definition}{Definition}
\newtheorem{theorem}{Theorem}
\newtheorem{lemma}{Lemma}
\newtheorem{remark}{Remark}
\newtheorem{proposition}{Proposition}
\newtheorem{example}{Example}
\newtheorem{fact}{Fact}
\newtheorem{assumption}{Assumption}
\def\cyclepopping{\textsc{CyclePopping}}
\def\HKPV{\textsc{HKPV}}
\newcommand{\calA}{\mathcal{A}}
\newcommand{\calB}{\mathcal{B}}
\newcommand{\rmd}{\mathrm{d}}
\newcommand{\E}{\mathbb{E}}
\newcommand{\I}{\mathbb{I}}
\newcommand{\R}{\mathbb{R}}
\newcommand{\bmA}{\mathbf{A}}
\newcommand{\op}{{\mathrm{op}}}
\newcommand{\intdim}{\mathrm{intdim}}
\newcommand{\lmax}{\lambda_{\mathrm{max}}}
\newcommand{\mumax}{\kappa}
\newcommand{\deff}{d_{\rm eff}}
\newcommand{\calC}{\mathcal{F}}
\newcommand{\calS}{\mathcal{S}}
\newcommand{\calV}{\mathcal{V}}
\newcommand{\calE}{\mathcal{E}}
\newcommand{\DPP}{\mathrm{DPP}}
\newcommand{\bmf}{\bm{f}}
\newcommand{\bmx}{\bm{x}}
\newcommand{\bmy}{\bm{y}}
\newcommand{\bmb}{\bm{b}}
\newcommand{\bmxi}{\bm{\xi}}
\newcommand{\bmdelta}{\bm{\mathrm{e}}}
\newcommand{\bmpsi}{\bm{\psi}}
\newcommand{\pIS}{p_{(\mathrm{IS})}}
\newcommand{\dinf}{d_{\rm inf}}
\newcommand{\lev}{\mathrm{l}}
\newcommand{\pp}{\mathcal{X}}
\newcommand{\Uone}{{\rm U}(1)}
\renewcommand{\Re}{\operatorname{Re}}
\renewcommand{\vec}{\operatorname{vec}}
\newcommand{\Arg}{\mathrm{arg}}
\renewcommand{\top}{\mathsf{\scriptscriptstyle T}}
\providecommand{\keywords}[1]
{
  \small	
  {\textit{Keywords---}} #1
}
\title{Sparsification of the regularized magnetic Laplacian\\
with multi-type spanning forests}
\author{M. Fanuel and R. Bardenet\\
Université de Lille, CNRS, Centrale Lille\\
UMR 9189 - CRIStAL, F-59000 Lille, France\\
\texttt{\{michael.fanuel, remi.bardenet\}@univ-lille.fr} \\}
\date{\today}
\begin{document}
\maketitle

\begin{abstract}
    In this paper, we consider a $\Uone$-connection graph, that is, a graph where each oriented edge is endowed with a unit modulus complex number that is conjugated under orientation flip.
    A natural replacement for the combinatorial Laplacian is then the \emph{magnetic} Laplacian, an Hermitian matrix that includes information about the graph's connection.
    Magnetic Laplacians appear, e.g., in the problem of angular synchronization.
    In the context of large and dense graphs, we study here sparsifiers of the magnetic Laplacian $\Delta$, i.e., spectral approximations based on subgraphs with few edges.
    Our approach relies on sampling multi-type spanning forests (MTSFs) using a custom determinantal point process, a probability distribution over edges that favours diversity.
    In a word, an MTSF is a spanning subgraph whose connected components are either trees or cycle-rooted trees.
    The latter partially capture the angular inconsistencies of the connection graph, and thus provide a way to compress the information contained in the connection.
    Interestingly, when the connection graph has weakly inconsistent cycles, samples from the determinantal point process under consideration can be obtained \emph{à la Wilson}, using a random walk with cycle popping.
    We provide statistical guarantees for a choice of natural estimators of the connection Laplacian, and investigate two practical applications of our sparsifiers{: ranking with angular synchronization and graph-based semi-supervised learning.}
    {From a statistical perspective, a side result of this paper of independent interest is a matrix Chernoff bound with intrinsic dimension, which allows considering the influence of a regularization -- of the form $\Delta + q \mathbb{I}$ with $q>0$ -- on sparsification guarantees.}

\end{abstract}
\keywords{graph Laplacian, angular synchronization, determinantal point process, sparsification.}
\tableofcontents

\section{Introduction} 
\label{sec:introduction}
Consider a connected undirected graph of $n$ nodes and $m$ edges.
A $\Uone$-\emph{connection graph} is defined by endowing each oriented edge $uv$  with a complex phase $\phi_{uv} = \exp(-\rmi \vartheta(uv))$ such that $\vartheta(vu) = -\vartheta(uv)\in [0,2\pi)$.
The corresponding $\Uone$-\emph{connection} is the map $e\mapsto \phi_e$, which associates to an oriented edge a complex phase that is conjugated under orientation flip.
The associated \emph{magnetic Laplacian matrix}\footnote{
    The magnetic Laplacian is a special case of the vector bundle Laplacian of~\citet{kenyon2011}, and a close relative to the \emph{connection Laplacian}~\citep{Bandeira}.
} is
\[
    \Delta
    = \sum_{\text{edge } uv}
    w_{uv}
    (\bmdelta_u -\phi_{uv} \bmdelta_v)
    (\bmdelta_u -\phi_{uv} \bmdelta_v)^*,
\]
where $\bmdelta_u$ denotes an $n\times 1$ vector of the canonical basis associated to node $u$ and $w_{uv} \geq 0$ is the weight of edge $uv$.
The case $\phi_\cdot=1$ corresponds to the usual combinatorial Laplacian.
More generally, the smallest \emph{magnetic eigenvalue} $\lambda_{\min}(\Delta)$ is zero iff there exists a real-valued function $h$ such that $\vartheta(uv) = h(u) - h(v)$ for all oriented edges $uv$.
In that case, $\Delta$ is unitarily equivalent to the combinatorial Laplacian;\footnote{
    {This result has probably been well-known for a long time. 
    A few recent references are } \citep[Theorem 2.2]{ChungEtAl} or \citep{Berkolaiko,AFST_2011_6_20_3_599_0,Fanuel2018MagneticEigenmapsVisualization}.
    In particular, such a function $h$ always exists if the underlying graph is a tree.
}
the connection graph is then said to be \emph{consistent}, and the connection to be \emph{trivial}.

{
In this paper, we build a sparse symmetric random matrix $\widetilde{\Delta}_t$ such that, for $q\geq 0$ a regularization parameter, the condition number of 
\begin{equation}
    (\widetilde{\Delta}_t + q\I)^{-1}(\Delta + q \I) \label{eq:typical_precond}
\end{equation}
is small.
This makes linear systems that involve \eqref{eq:typical_precond} more amenable to iterative solvers.
Indeed the condition number controls the stability of the solver, while sparsity can ensure that the inverse or a Cholesky decomposition of $\widetilde{\Delta}_t + q\I$ is cheap to compute.
Our randomized approximation of the magnetic Laplacian relies on an ensemble of independent spanning subgraphs $\calC_1, \dots, \calC_t$, and takes the form
\begin{equation}
    \widetilde{\Delta}_t
    = \frac{1}{t}\sum_{\ell = 1}^{t} \widetilde{\Delta}(\calC_\ell) \text{ with } \widetilde{\Delta}(\calC) = \sum_{\substack{\text{edge } \\ uv \in \calC}}
    \frac{w_{uv}}{\lev(uv)}
    (\bmdelta_u -\phi_{uv} \bmdelta_v)
    (\bmdelta_u -\phi_{uv} \bmdelta_v)^*,\label{eq:def_intro_sparsifier}
\end{equation}
where $\lev(uv)$ is the \emph{leverage score} of edge $uv$, measuring its marginal importance, to be defined in \cref{sec:dpp_sampling_of_edges__multitype_spanning_forests}.
The spanning subgraphs we use are \emph{multi-type spanning forests} \citep[MTSF;][]{kenyon2019}.
MTSFs are disjoint unions of trees --~a connected graph without cycle~-- and cycle-rooted trees --~a connected graph with a unique cycle; see \cref{fig:MTSF}.
The probability distribution over these subgraphs heavily depends on the graph structure, the connection and the parameter $q$.
Intuitively, while spanning trees are known to preserve information on the combinatorial Laplacian, the presence of cycles in MTSFs shall help preserve the additional information about the connection.
}

{
    We now introduce two practical motivations for the sparsification of the magnetic Laplacian: angular synchronization and semi-supervised learning. 
    Given a $\Uone$-connection $\vartheta$, angular synchronization \citep{Singer11} is, loosely speaking, the problem of finding a function $h$ valued in $[0, 2 \pi)$ such that
    \begin{equation}
        \label{e:angular_synchronization}
        \vartheta(uv) \approx h(u) - h(v) \text{ for all } uv\in \mathcal{E}.
    \end{equation}
    An intuition of the connection between angular synchronization and the magnetic Laplacian can be gained as follows.
    Define $f_h(u) = e^{\rmi h(u)}$ for any node $u$, and let $\bmf_h$ be the vector whose $u$-th entry is $f_h(u)$. 
    The value of the quadratic form of the magnetic Laplacian at $\bmf_h$ can be interpreted as a weighted sum of squared errors of the angular synchronization problem \cref{e:angular_synchronization}, in the sense that
    \begin{equation*}
        \bmf^*_h \Delta \bmf_h = 4\sum_{\text{edge } uv} w_{uv}
        \sin^2(\vartheta(uv) -  h(v) +  h(u)).
    \end{equation*} 
    Thus, intuitively, the phase of an eigenvector of $\Delta$ with small eigenvalue corresponds to an approximate solution to \eqref{e:angular_synchronization}.
    Given an approximation of a small eigenvalue, solving for such an eigenvector is a linear problem involving $\Delta$ that can take advantage of preconditionning by a sparsifier.
    A more formal statement clarifying the relation between the ``frustation'' of the connection and the smallest eigenvalue of the (normalized) magnetic Laplacian is given in \citep{Bandeira}, in the form of a Cheeger inequality.
    In the paper, we shall rather be more explicit about the link between the least eigenvector of $\Delta$ and a particular \emph{relaxation} of the synchronization problem.
}

{
Another practical motivation, this time for studying the \emph{regularized} Laplacian $\Delta + q \I$ with $q > 0$, comes from machine learning. 
Graph-based semi-supervised learning consists in estimating function values on a fixed set of nodes, given only a few evaluations at training nodes (``labeled" data).
Namely, let $(y_{\text{train},\ell},u_\ell)_{1\leq \ell \leq l}$ be labeled data, constisting of pairs of ``outputs'' $y_{\text{train}}\in \mathbb{C}$ and ``input'' nodes $u_\ell$ for $1\leq \ell \leq l$. 
This label information is encoded in a vector $\bmy$ as follows: $y(u_\ell) = y_{\text{train},\ell}$ for $1\leq \ell \leq l$ and $y(u) = 0$ otherwise.
A prediction on unlabeled nodes can be obtained by solving
\begin{equation}
    \label{e:ssl}
    \bmf_\star \in \arg\min_{\bmf \in \R^n} \bmf^* \Delta \bmf + q \|\bmf - \bmy\|_2^2,
\end{equation}
possibly followed by a rounding step. 
The spirit of the method is to fit $\bmf$ to the data $\bmy$ while promoting smoothness by penalizing rapidly varying modes, namely high-frequency oscillations.
The latter are associated with eigenvectors of $\Delta$ with large eigenvalues, where \emph{large} is to be understood compared to the scale $q\geq 0$.
In other words, frequencies larger than $q$ are to be suppressed because they are interpreted as noise.
The Laplacian-regularized least squares problem \eqref{e:ssl} is solved by computing the solution of a Laplacian linear system \emph{à la} Tikhonov
\begin{equation}
    (\Delta + q \I)\bmf = q \bmy.\label{eq:Laplacian_q_reg_system}
\end{equation}
From \eqref{eq:Laplacian_q_reg_system}, it is clear that $\bmf_\star$ is obtained by smoothing out Fourier modes of $\bmy$ with frequencies $\lambda$ typically larger than $q$; the filter function being given by $\lambda \mapsto (1+\lambda/q)^{-1}$.
In practice, iterative solvers can leverage an approximation of the matrix on the left-hand side to accelerate the computation, which justifies again the study of sparsifiers.
}

\begin{figure}[t]
    \centering
    \begin{subfigure}[b]{0.24\textwidth}
        \centering
        \includegraphics[scale=0.25]{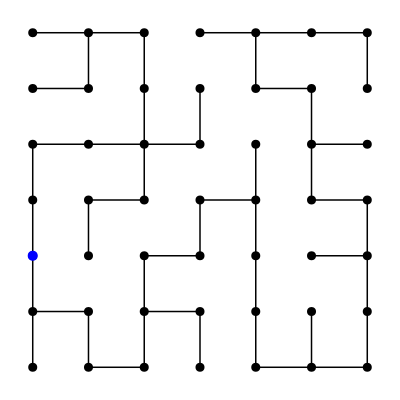}
        \caption{spanning tree\label{fig:ST}}
    \end{subfigure}
    \begin{subfigure}[b]{0.24\textwidth}
        \centering
        \includegraphics[scale=0.25]{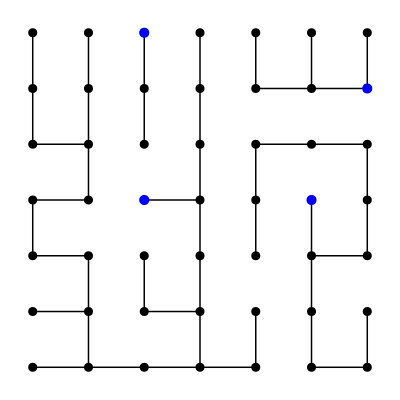}
        \caption{spanning forest (SF)\label{fig:SF}}
    \end{subfigure}
    \begin{subfigure}[b]{0.25\textwidth}
        \centering
        \includegraphics[scale=0.25]{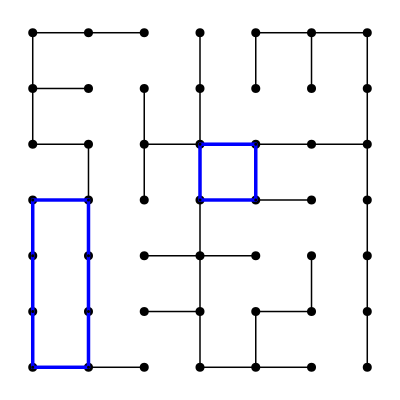}
        \caption{cycle-rooted SF\label{fig:CRSF}}
    \end{subfigure}
    \begin{subfigure}[b]{0.24\textwidth}
        \includegraphics[scale=0.25]{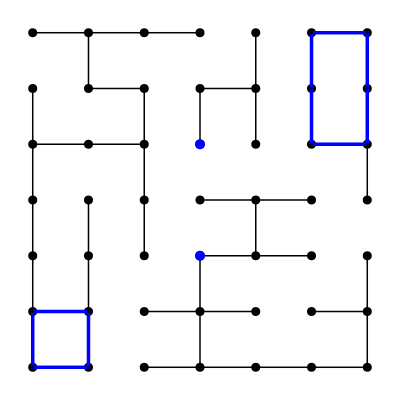}
        \centering
        \caption{multi-type SF\label{fig:MTSF}}
    \end{subfigure}
    \caption{
        Different spanning subgraphs of a grid graph.
        In blue, we show the root nodes used for sampling the subgraphs with the cycle-popping algorithm of \cref{sec:sampling_a_multitype_spanning_forest}, as well as the cycles in which some of the trees are rooted.
        }
    \label{fig:graphs}
\end{figure}
\subsection{Outline of the paper and contributions \label{sec:outline_contributions}}
{To set the scene, Section~\ref{sec:related_work} discusses related work about Laplacian sparsification and random spanning subgraphs.
Next, in Section~\ref{sec:magnetic_laplacian}, we give more details about the two motivating applications of this paper: angular synchronization and graph-based semi-supervised learning.}
Section~\ref{sec:dpp_sampling_of_edges__multitype_spanning_forests} reviews probability distributions over MTSFs, which build on recent generalizations of uniform spanning trees.
Like the set of edges of a uniform spanning tree, one of the most natural distributions over MTSFs associated with $\Delta + q \I$ is a determinantal point process (DPP).
DPPs are distributions, here over subsets of edges, that favour diversity and originate in quantum optics \citep{Mac75,HKPV06}.

\paragraph*{Statistical guarantees.} Our contributions start with Section \ref{sec:sparsification_of_the_regularized_magnetic_laplacian}, which contains our statistical guarantees for the sparsification of {$\Delta + q \I$ for $q\geq 0$} based on the DPP introduced in Section~\ref{sec:dpp_sampling_of_edges__multitype_spanning_forests}.
Our bounds depend on the expected number of edges $\deff = \Tr(\Delta(\Delta + q \I)^{-1})$ of the MTSF, and on $\mumax = \|\Delta(\Delta + q \I)^{-1}\|_{\op}$.
 {Let us fix beforehand a tolerance $\epsilon \in (0,1)$.}
Informally, we give a sparsifier $\widetilde{\Delta}_t {+ q \I}$ built with a batch of $t$ independent MTSFs, such that if $\deff/ \mumax$ is large enough and
\begin{equation}
t \gtrsim \frac{\mumax}{\epsilon^{2}}\log\left(\frac{\deff}{\mumax\delta}\right),\label{eq:lower_bound_on_batch}
\end{equation}
then, with probability at least $1-\delta$,
\begin{equation}
(1-\epsilon)(\Delta + q \I)
\preceq {\widetilde{\Delta}_t + q \I}
\preceq (1 + \epsilon)(\Delta + q \I);\label{eq:multiplicativeBoundNoq}
\end{equation}
see \cref{thm:BatchMTSFs}.
{
Note that we were inspired by \citet*{KKS22}, who derived the same result for the combinatorial Laplacian in the particular case $q=0$.
Not suprisingly, in their result as in ours, roughly $\log(n)/\epsilon^2$ independent subgraphs are sufficient to have a $(1\pm \epsilon)$-sparsifier of the Laplacian.
}

{ The \emph{multiplicative} guarantee \eqref{eq:multiplicativeBoundNoq} is well-suited for analysing the quality of a preconditionner of the form \cref{eq:typical_precond} for linear systems such as \cref{eq:Laplacian_q_reg_system}.
One interesting and unusual point is the effect of the regularization parameter $q$ on the approximation on the \emph{unregularized} Laplacian. Loosely speaking, there is a trade-off between the number of edges in the sparsifier and the accuracy of the approximation.}

{
\begin{itemize}
    \item \textbf{Lower bound on batch size decreases with $q$.} On the one hand, $\kappa$, $\deff$ as well as $\deff/ \mumax$ are decreasing functions of $q$; see \cref{prop:decreasing_deff} in Appendix for more details.
    Thus, the lower bound \eqref{eq:lower_bound_on_batch} on the batch size is also a decreasing function of $q$.
    In other words, the condition \eqref{eq:lower_bound_on_batch} on the number of independent MTSFs becomes less strict as $q$ increases.
    \item \textbf{The statistical guarantee is loose up to scale $q$.} On the other hand, the inequality \cref{eq:multiplicativeBoundNoq} is in fact comparing ellipsoids associated to quadratic forms. To understand the role of $q$ in this picture, consider individually the orthonormal eigenpairs of $\Delta$, namely, $(\mathbf{v}_\ell, \lambda_\ell)_\ell$ for which $\lambda_\ell \neq 0$. Thus, by definition of the Loewner partial order on positive semidefinite matrices, the inequality \cref{eq:multiplicativeBoundNoq} yields the following bound on the relative error on the ``approximation of the $\ell$-th eigenpair''
    \begin{equation}
        \left| \frac{\mathbf{v}_\ell^* \widetilde{\Delta}_t\mathbf{v}_\ell - \lambda_\ell}{\lambda_\ell} \right|\leq \epsilon (1 +q/\lambda_\ell).\label{eq:relative_abs_err_eigenpair}
    \end{equation}
    The guarantee on the approximation of the $\ell$-th eigenpair is looser when $\lambda_\ell \ll q $, whereas for $\lambda_\ell \gg q$ the effect of $q$ is almost negligible and the right-hand side of \eqref{eq:relative_abs_err_eigenpair} is close to $\epsilon$.
\end{itemize}
Therefore, we conclude that, for $q$ large w.r.t.\ the least eigenvalues of $\Delta$, the sparsifier does not necessarily approximate well the projection of $\Delta$ onto the eigenspaces corresponding to the bottom of its spectrum, i.e., what intuitively corresponds to the large-scale structures of the graph. 
This is consistent with the fact that the expected number of edges in the sparsifier is a decreasing function of $q$.
In particular, a $(1\pm \epsilon)$-approximation of $\Delta+q\I$ with $q>0$ can allow fewer edges than an $(1\pm \epsilon)$-approximation of $\Delta$.}

{A key side result of this paper is} a matrix Chernoff bound with an intrinsic matrix dimension for determinantal point processes, in the spirit of~\citet*{KKS22}; {see \cref{thm:Chernoff_StronglyRayleigh}. 
A corollary of this tail bound is the statistical guarantee \cref{eq:multiplicativeBoundNoq} for the sparsifier approximation which is stated formally in \cref{thm:BatchMTSFs}.} 

\paragraph*{Sampling is fast.}  {We advocate the use of a specific MTSF sampling algorithm -- called $\cyclepopping{}$ -- based on a cycle-popping random walk \citep{kassel2017,guo_jerrum_2021}.
This algorithms works under the hypothesis of the \emph{weak inconsistency} of the graph cycles.
While we do not claim any novelty for the algorithm, which was proposed by \citet{kassel2017} and proved correct in more details by \citet{Kassel15}, our contribution is to demonstrate this dedicated algorithm as a viable candidate to implement our sparsifiers. 
In particular, $\cyclepopping{}$ is fast compared to alternatives such as the QR-inspired algorithm of \cite*{HKPV06}, called HKPV algorithm, which is a generic DPP sampler. 
This algebraic algorithm has a complexity $\mathcal{O}(mn^2)$ in the absence of any particular structure and can be slow in practice in the case of large graphs.
Our numerical experiments in \cref{sec:HKPV_extra_results} on graphs of relatively small scale confirm these remarks.
To give an order of magnitude, an execution of $\cyclepopping{}$ for a graph with about $10^5$ nodes and $10^6$ edges takes about $3$ seconds on a laptop as reported in \cref{fig:cond_number_Stanford_q_0}.}

\paragraph*{Self-normalized importance sampling.}  
As previously mentioned, $\cyclepopping{}$ requires that cycles are weakly inconsistent. 
To circumvent this limitation, in \cref{sec:montecarlo_with_selfnormalized_importance_sampling}, we further investigate a self-normalized Monte-Carlo algorithm to deal with larger inconsistencies and provide a central limit theorem for the sparsifiers obtained by this technique.

\paragraph*{Two use cases.}  
Finally, in \cref{sec:empirical_results}, we illustrate our results on two case studies: ranking from pairwise comparisons using Sync-Rank~\citep{Cucuringu16} and preconditioning regularized Laplacian systems.
Deferred mathematical results and proofs are given in \cref{sec:deferred_proofs} and  additional empirical results can be found in \cref{sec:additional_numerical_simulations}.

\begin{table}[b]
    \begin{center}
        \begin{tabular}{lll}
            \toprule
            Abbreviation & Full name & Definition\\
            \midrule
            ST & spanning tree & a connected spanning {subgraph} without cycle.\\
            SF & spanning forest & a spanning {subgraph} given by a disjoint union of trees.  \\
            CRT & cycle-rooted tree & a connected subgraph with only one cycle. \\
            CRSF & cycle-rooted spanning forest & a disjoint union of cycle-rooted trees.\\
            MTSF & multi-type spanning forest & a disjoint union of trees and cycle-rooted trees.\\
            \bottomrule
        \end{tabular}
    \end{center}
    \caption{List of abbreviations for subgraphs with their definitions.\label{tab:spanning_graphs}}
\end{table}
\subsection{Limitations \label{sec:limitations}}
{
    For the benefit of the quick reader, we immediatedly point out to some limitations of the paper. 
}
{
    First, our statistical guarantees rely on the assumption that \emph{leverage scores} $\lev(uv)$ are known \emph{exactly}, in order to construct sparsifiers with formula \cref{eq:def_intro_sparsifier}.
    A control on the effect of approximating the leverage scores is missing in our work, as it is the case for similar sparsification methods in the literature to our knowledge. 
    To circumvent this question for the moment, we investigate both a heuristic that replaces leverage scores by $\lev_{\rm unif}(e) = |\calC| / m$, and the classical Johnson-Lindenstrauss (JL) lemma  for sketching leverage scores.
    Second, the $\cyclepopping{}$ algorithm provably works only under the condition that cycles in the connection graphs are weakly inconsistent; see below for a definition. 
    To avoid this limitation, we propose a self-normalized importance sampling method for which we derive a central limit theorem, so that the statistical guarantee is \emph{weaker} than \cref{eq:multiplicativeBoundNoq} in that case.
    We leave the extension of $\cyclepopping{}$ to strongly inconstistent cycles for future work.
    Third, we do not discuss how to choose the parameter $q$ in practice.
    This question is related to the estimation of hyperparameters in semi-supervised learning. 
    In line with this comment, we do not discuss how we can sample sparsifiers for several values of $q$ in a efficient way; a possible approach would be to use the coupling algorithm by \citet{AvGaud2018}.
    Although our companion Julia code provides comparisons of the performance of the Laplacian approximation with different baselines, our simulations are only a proof-of-concept and leave space for runtime improvements.
}

\subsection{Notations}
Sets are denoted with a calligraphic font, e.g., $\mathcal{E}$, $\mathcal{V}$, except for $[n] = \{1, \dots, n\}$.
Matrices are denoted by uppercase letters ($\Delta,B,Y$) whereas vectors are typeset as bold lowercase ($\bmx,\bmy, \bmdelta$).
The canonical basis of $\R^n$ is denoted by $(\bmdelta_u)_{u\in [n]}$, and $\bm{1}$ is the all-ones vector. For convenience, we denote the $u$-th entry of $\bmf$ by $f(u)$.
The Hermitian conjugate of complex matrix $A$ is written $A^*$, and its transpose is $A^\top$.
The Moore-Penrose pseudo-inverse of a matrix $A$ is denoted by $A^+$.
Also, $|A| = (A^* A)^{1/2}$ denotes the matrix absolute value of the real symmetric matrix $A$.
For $A,B$ real symmetric matrices,
we write $A \preceq B$ if $\bm{x}^* A \bm{x} \leq \bm{x}^* B \bm{x}$ for all $\bm{x}\in \mathbb{C}^n$.
We denote by $\cond(A)$ the condition number of a matrix $A$, namely, the ratio of its largest to smallest singular value.
The spectrum of a symmetric matrix $A$ is denoted by $\Sp(A)$. The operator norm of $A$ is $\|A\|_{\op} = \max_{\|\bmx\|_2 = 1}\|A\bmx\|_2$.
In what follows, $\mathcal{G}$ is a connected undirected graph with vertex set $\mathcal{V}$ and edge set $\mathcal{E}$; for simplicity, denote $n=|\mathcal{V}|$ and $m = |\mathcal{E}|$.
{Edges are assumed to be simple, i.e., there is no multiple edge.} 
Each $e\in \mathcal{E}$ comes with a weight $w_e > 0$.
{Unless stated otherwise, $\mathcal{G}$ has no self-loop.}
We denote an \emph{oriented} edge as $e=vv'$ where $v\in \mathcal{V}$ is $e$'s head and $v'\in \mathcal{V}$ is $e$'s tail.
Note that, by a slight abuse of notation, to denote an edge in $\mathcal{E}$ as a pair of nodes, an orientation has to be chosen, and thus $vv'\in \mathcal{E}$ and $v'v \in \mathcal{E}$ denote the same edge.
At this level, the orientation is purely arbitrary, that is, a matter of convention.
Let further $1_{\calS}(\cdot)$ be the indicator function of the set $\calS$.
For conciseness, we finally denote $a\vee b = \max\{a,b\}$ and $a \wedge b = \min\{a,b\}$.

\section{Related work} 
\label{sec:related_work}

A key question in graph theory concerns approximating the combinatorial Laplacian of a graph by the Laplacian of one of its subgraphs with few edges.
The resulting Laplacian is called a \emph{sparsifier} of the original Laplacian \citep*{SparsificationSurvey}.
\emph{Spectral} sparsifiers aim to preserve properties of the spectrum of the combinatorial Laplacian, whereas \emph{cut-sparsifiers} intend to preserve the connectivity of the graph, namely a cut, see, e.g., \citet*{Fung11}.
Spectral sparsifiers of the graph Laplacian were defined by~\citet*{SpielmanTeng}.
\cite*{BatsonSpielmanSrivastava} show the existence of an $(1\pm \epsilon)$-accurate \emph{spectral sparsifier} with $\mathcal{O}(n/\epsilon^2)$ edges and give a polynomial time deterministic algorithm; see also~\citet*{Zou12}.
These algorithms require {a running time of $\Omega(n^4)$}.

Direct improvements of linear-sized spectral sparsifiers using \emph{randomization} are given by~\cite*{AZZO15,LeSu18} with $\mathcal{O}(n^{2+\epsilon})$ and $\Omega(n^2)$ time complexity, respectively.
The seminal work of~\cite*{SpielmanSrivastava} showed that, with high probability, a $(1\pm \epsilon)$-accurate \emph{spectral sparsifier} of the combinatorial Laplacian is obtained by i.i.d.\  sampling of $\mathcal{O}(n \log(n)/\epsilon^2)$ edges {with respect to a distribution weighted by leverage scores.
Similar guarantees are derived by \citet{cohen2015uniform} for an edge-by-edge, leverage-score-based, sampling approach.
Leverage score estimation can be performed using the Johnson-Lindenstrauss lemma, which only requires solving $\mathcal{O}(\log n)$ linear systems in the Laplacian; see \citep{SpielmanSrivastava,cohen2015uniform}.
The solver of \citep{KyngSparseCholeskyConnection} can be used to solve these linear systems
in nearly-linear time.
}
To our knowledge, randomized spectral sparsifiers have shown so far superior performance in practice over their deterministic counterparts.

In the perspective of sampling as few edges as possible, \emph{joint} edge samples have also been used to yield sparsifiers. An $\epsilon$-accurate cut sparsifier is obtained with high probability by a union of $\Theta\left(\epsilon^{-2}\log^2(n)\right)$ uniform random spanning trees~\citep*{Fung11}.
Similarly,
\citet*{KyngSong} show that averaging the Laplacians of $\mathcal{O}\left(\epsilon^{-2}\log^2(n)\right)$ uniform random spanning trees yields a $(1\pm \epsilon)$ spectral sparsifier with high probability.
In a recent work, \cite*{KKS22} improve this result to $\mathcal{O}\left(\epsilon^{-2}\log(n)\right)$ random spanning trees.
Also, a short-cycle decomposition approach for the Laplacian sparsification is given by~\citet*{CPSSW18}.

As we just mentioned, spectral sparsification can be performed by sampling spanning trees.
This problem has been tackled using random walks \citep*{Aldous89,Broder89,Wilson96}.
More recently, \citet*{Schild18} introduced Laplacian solvers within random walks, to improve the shortcutting methods of~\citet*{MST15,KeMa09}.

The determinantal point {process} associated to the edges of a uniform Spanning Tree (ST) is well-known to be associated to the combinatorial Laplacian~\citep*{Pemantle91}.
STs are closely related to random Spanning Forests (SFs), which are in turn associated to the regularized combinatorial Laplacian; see~\cite{AvGaud2018} for a detailed study of random-walk-based sampling algorithms.
Recently, Monte Carlo approaches for approximating the regularized combinatorial Laplacian using random forests were studied by \citet*{pilavci2020,PABT2020,PABGAA2020}, where variance reduction methods and statistical analyses are also given in the context of graph signal processing.
In contrast with our paper, \citet{pilavci2020,PABT2020} consider estimators of the inverse regularized combinatorial Laplaciann using the random process associated with the roots of random forests.

The magnetic Laplacian appears in several works in physics, see, e.g., \citet*{zbMATH01219775,AFST_2011_6_20_3_599_0,Berkolaiko}.
It was also used for visualizing directed networks in~\cite{Fanuel2018MagneticEigenmapsVisualization}.
As mentioned above, it is a special case of the connection Laplacian~\citep*{Bandeira}.
Sparsifying a connection Laplacian with leverage score sampling was studied by~\cite{ChungEtAl} in the more general setting of an $\mathrm{O}(d)$-connection, but restricting to the case where the group elements are consistent over any graph cycle.
{From a different perspective}, a sparsified Cholesky solver for connection Laplacians was also proposed by~\citet*{KyngSparseCholeskyConnection}.

{Finally, we briefly mention generic algorithms for sampling from determinantal measures. 
As mentioned in \cref{sec:dpp_sampling_of_edges__multitype_spanning_forests}, any DPP with a symmetric correlation kernel can be sampled using a mixture argument and a generic linear algebraic chain rule; see~\citet[Alg.\ 18]{HKPV06} for the original argument, as well as the discussions around \citet[Alg.\ 1]{LMR2015} and \citet[Alg.\ 1]{KT12}).
However, this generic approach requires an eigendecomposition of the correlation kernel.
It is possible to circumvent this eigendecomposition with the methods of \citet{launay_galerne_desolneux_2020, poulson2020high} but the overall complexity remains similar.
Monte Carlo Markov Chain approaches for sampling DPPs were also proposed by \cite{AnariLGVV21}; see also \citep{hermon2023modified} for a discussion of the mixing time of the corresponding Markov chain.
The aforementioned methods do not use any particular structure, whereas the random walk-based approaches that we use in this paper heavily rely on the graph structure \citep{Wilson96,kassel2017}.
}

\section{The magnetic Laplacian and two motivations} 
\label{sec:magnetic_laplacian}

\paragraph*{Combinatorial Laplacian.}
Let $\mathcal{G}$ be a connected undirected graph, with vertex set $\mathcal{V}$ and edge set $\mathcal{E}$, of respective cardinality $n=|\mathcal{V}|$ and $m = |\mathcal{E}|$.
In what follows, we often abuse these notations and denote the vertex set by $[n]$.
It is customary to define the \emph{oriented} edge-vertex incidence matrix $B_0\in\mathbb{R}^{m\times n}$ as
\[
    B_0(e,v)
    = \begin{cases}
        1  & \text{ if $v$ is $e$'s head }, \\
        -1 & \text{ if $v$ is $e$'s tail,} \\
        0  & \text{ otherwise,}
    \end{cases}
\]
and the combinatorial Laplacian as
\begin{equation}
    \Lambda = B^\top_0 W B_0 \in\mathbb{R}^{n\times n},
    \label{e:combinatorial_laplacian}
\end{equation}
where $W = \Diag(w) \in\mathbb{R}^{m\times m}$ is a diagonal matrix containing non-negative edge weights.
Obviously, different choices of orientation of the edges yield different $B_0$ matrices, but they all produce the same Laplacian matrix.
The matrix $B_0$ is a representation of a discrete derivative for a given ordering of the nodes and an ordering of the edges with an arbitrary orientation.
More precisely, for an oriented edge $e = uv$, the discrete derivative of a function $f$ on $\mathcal{V}$ in the direction $e$ reads
$\rmd_0 f(e)
\triangleq f(u) - f(v)
= (\bmdelta_u^\top - \bmdelta_v^\top) \bmf
= \sum_{w\in[n]}B_0(e,w)f(w)$.

\paragraph*{Magnetic Laplacian.}
We now twist this construction by following the definition of~\citet{kenyon2011}, that we particularize to the group of unit-modulus complex numbers $\Uone$.
Let $\phi_{ve} = \phi_{ev}^{-1}\in \Uone$ and define $\phi_{vv'} = \phi_{ev'}\phi_{ve}$.
Consider the twisted \emph{oriented} and \emph{complex} incidence matrix $B\in\mathbb{C}^{m\times n}$ given by
\[
    B(e,v)
    = \begin{cases}
        \phi_{ve}  & \text{ if $v$ is $e$'s head,} \\
        -\phi_{ve} & \text{ if $v$ is $e$'s tail,} \\
        0          & \text{ otherwise.}
    \end{cases}
\]
The matrix $B$ corresponds to a representation of the twisted discrete derivative.\footnote{In the words of \cite{kenyon2011}, we consider a $\Uone$-connection 
such that $\phi_{vv'} = \exp(-\rmi \vartheta(vv'))$, and $\phi_{ev'} = \exp(-\rmi \vartheta(vv')/2)$ whereas $\phi_{ev} = \exp(\rmi \vartheta(vv')/2)$ for $e = vv'$.
Our definition of the (twisted) discrete derivative differs from \citet{kenyon2011} by a minus sign.}
For any oriented edge $e = uv$, the twisted derivative of $f$ in the direction $e$ is  given by $\rmd f(e) \triangleq \phi_{ue}f(u) - \phi_{ve}f(v)= (\phi_{eu}\bmdelta_u - \phi_{ev}\bmdelta_v)^* \bmf = \sum_{w\in[n]}B(e,w)f(w)$.
Similarly to the combinatorial Laplacian \eqref{e:combinatorial_laplacian}, the magnetic Laplacian is defined as
\begin{equation}
    \Delta = B^* W B \in\mathbb{C}^{n\times n},
\label{e:magnetic_laplacian}
\end{equation}
with, again, a diagonal edge weight matrix $W\in\mathbb{R}^{m\times m}$ with non-negative entries.
Thus, by construction, the spectrum of $\Delta$ is non-negative.
As anticipated in \cref{sec:introduction} and in the case of a connected graph, the least eigenvalue of $\Delta$ vanishes iff the corresponding connection graph is trivial. In contrast, $\Lambda$ has always an eigenvalue equal to zero.


\subsection{The spectral approach to angular synchronization} 
\label{sub:Eigenvalue problem and angular synchronization}
As explained below, this problem is related to the magnetic Laplacian that we sparsify in this paper.
Applications range from synchronizing distributed networks~\citep{CLS12} to image reconstruction from pairwise differences~\citep{Yu2009} solving jigsaw puzzles~\citep{huroyan2020solving}, aligning respiratory signals~\citep{mcerlean2024unsupervised}, or robust ranking from pairwise comparisons~\citep{Cucuringu16}; see \cref{sub:magnetic_laplacian_sparsification_and_ranking} for details on the latter.
Although we focus on $\mathrm{U}(1)$ in this paper, different groups {have} also been considered, e.g. synchronization using $\mathrm{SO}(3)$ for identifying the 3D structure of molecules \citep{CSC12}.

Following the seminal paper by \cite{Singer11}, the rationale of the spectral approach to angular synchronization goes as follows.
Synchronization \eqref{e:angular_synchronization} can be achieved by solving
\begin{equation}
    \max_{\mathbf{h} \in [0,2\pi)^n}
    \sum_{uv\in \calE}\cos\Big(\vartheta(uv) + h(v) -h(u)\Big). \label{eq:synchronization}
\end{equation}
In particular, the cosine loss function is approximately quadratic for small argument values, and promotes robustness as it saturates for large argument values.
Several relaxations of the above problem are obtained by noting that $\cos(t) = \Re\left(e^{\rmi t}\right)$.
Consider the following spectral relaxation
\begin{equation}
    \max_{\bm{f}\in \mathbb{C}^n: \|\bm{f}\|^2 = n}
    \sum_{uv\in \calE} w_{uv}
    \Re\left(f(u)^*e^{\rmi \vartheta(uv)}f(v)\right),\label{eq:spectral_syncrank_normalized}
\end{equation}
where we have further introduced weights such that $w_{uv}\geq 0$.
Once a solution $\bmf$ to \eqref{eq:spectral_syncrank_normalized} has been found, the angle estimates $\Arg(f(u))$ are set to the argument of each entry of $\bmf$, with the convention that $\Arg(0) = 0$.

We now relate this problem to an eigenvalue problem for the magnetic Laplacian.
Indeed, consider the spectral problem
\begin{equation}
    \min_{\|\bm{f}\|^2 = n}
    \sum_{uv\in \calE} w_{uv}
    \left|f(u) - e^{\rmi \vartheta(uv)} f(v)\right|^2,\label{eq:spectral_syncrank}
\end{equation}
where the objective is proportional to the quadratic form defined by the magnetic Laplacian \eqref{e:magnetic_laplacian}.
The objectives of \cref{eq:spectral_syncrank_normalized} and \cref{eq:spectral_syncrank} relate as
\[
    \sum_{uv\in \calE} w_{uv}
    \left|f(u) - e^{\rmi \vartheta(uv)} f(v)\right|^2 = -2\sum_{uv\in \calE} w_{uv}
    \Re\left(f(u)^*e^{\rmi \vartheta(uv)}f(v)\right) + \sum_{uv\in \calE} w_{uv}
    \left(|f(u)|^2 +  |f(v)|^2 \right).
\]
Since the last term above is constant when all the components of $\bmf$ have the same modulus, namely if $f(u) = \mathrm{cst} \cdot e^{\rmi h(u)}$ for all $u\in [n]$, \cref{eq:spectral_syncrank} is indeed another relaxation of \cref{eq:synchronization}.

\subsubsection{Least eigenvector approximation: sparsify-and-eigensolve}
Solving \cref{eq:spectral_syncrank} amounts to finding the least eigenspace of the magnetic Laplacian, which we assume to have dimension one in this paper.
We now explain how a spectral sparsifier \cref{eq:multiplicativeBoundNoq} can be used to approximate this eigenvector with the smallest eigenvalue. This approach may be called sparsify-and-eigensolve, in the spirit of sketch-and-solve methods; see \citet[Section 10.3]{MT20}.

Assume w.l.o.g.\ that the respective least eigenvectors of $\Delta$ and $\tilde\Delta$ satisfy $\|\bmf_1\|_2 = \|\tilde{\bmf}_1\|_2 = \sqrt{n}$ in light of the spectral relaxation~\cref{eq:spectral_syncrank_normalized}.
Let $\delta_\star = \lambda_2(\Delta) - \lambda_1(\Delta)$ be the spectral gap in the Laplacian spectrum, where $\lambda_1(\Delta) \leq \dots \leq \lambda_n(\Delta)$ are the eigenvalues of $\Delta$.
If $\|\Delta - \widetilde{\Delta}\|_{\op}< \delta_\star$, then the Davis-Kahan theorem guarantees that
\[
    \text{dist}(\bmf_1 ,\tilde{\bmf}_1)
    = \min_{\theta\in [0,2\pi)}
    \| \bmf_1 -\tilde{\bmf}_1e^{\rmi \theta}\|_2
    \leq
    \frac
    {\sqrt{2}\|(\Delta - \widetilde{\Delta}) \bmf_1\|_{2}}
    {\delta_\star - \|\Delta - \widetilde{\Delta}\|_{\op}}
    \leq
    \frac
    {\sqrt{2}\|\Delta - \widetilde{\Delta}\|_{\op}}
    {\delta_\star - \|\Delta - \widetilde{\Delta}\|_{\op}}
    \sqrt{n},
\]
see~\citet[Lemma 11]{ZhBou18}; see also \citet{YWS15,DaKa70} for more general statements.
Now by definition \eqref{eq:multiplicativeBoundNoq}, $\|\Delta - \widetilde{\Delta}\|_{\op}\leq \epsilon \|\Delta\|_{\op}$.
Thus, if $\epsilon \|\Delta\|_{\op}< \delta_\star$, we obtain
\begin{equation}
    \text{dist}(\bmf_1 ,\tilde{\bmf}_1)
    \leq
    \frac
    {\epsilon \lambda_{n}(\Delta)}
    {\delta_\star - \epsilon \lambda_{n}(\Delta)}
    \sqrt{2n},\label{eq:guarantee_eigenvector}
\end{equation}
so that computing the least eigenvector of $\tilde{\Delta}$ is a controlled replacement for that of $\Delta$.
Note that by analogy with the combinatorial Laplacian, we conjecture that a large spectral gap $\delta_\star$ in the magnetic Laplacian spectrum presumably corresponds to a graph without clear bottleneck, i.e., with a good connectivity.

Although appealing, we have however observed in practice that sparsify-and-eigensolve is often not very accurate; see the numerical simulations in \cref{sec:exp_SyncRank}.
In contrast, we now provide another use of sparsifiers for improving the numerical convergence of eigensolvers.

\subsubsection{Preconditioning eigenvalue problem: sparsify-and-precondition} \label{subsubsec:Preconditioning_eigenvalue_problem}
Still in the context of angular synchronization, the spectral sparsifier can also be used to build a preconditioner for computing the least eigenpair $(\lambda_1,\bmf_1)$ of $\Delta$.
We dub this approach \emph{sparsify-and-precondition}, by analogy with the \emph{sketch-and-precondition} paradigm in randomized linear algebra \citep[Section 10.5]{MT20}.
Sparsify-and-precondition assumes the knowledge of an approximation $\hat{\lambda}_1$ of $\lambda_1$.
The least eigenvector is then approximated by the solution of the linear system $(\Delta-\hat{\lambda}_1\I_n) \bmf = 0$; see \citet{AKNOZ17} for more details.
Since the convergence rate of iterative linear solvers depends on the condition number of the matrix defining the system, a common technique is to rather solve $T^{-1}(\Delta-\hat{\lambda}_1\I_n) \bmf = 0$, when $T$ is a well-chosen positive definite matrix, called a \emph{preconditioner}, such that $\cond\left(T^{-1}\Delta\right)$ is smaller than $\cond(\Delta)$ \citep{KN03}.

In this paper, we naturally choose $T = \widetilde{\Delta}$ for which derive the multiplicative guarantee \cref{eq:multiplicativeBoundNoq} with high probability.
Incidentally, a preconditioner for another eigenvalue solver, called the Lanczos method, was also discussed earlier by~\citet{MS93, Morgan2000}.

\subsection{Preconditioned magnetic Laplacian systems for semi-supervised learning} 
\label{sub:regularized_laplacian_systems_preconditioning_and_sparse_cholesky_factorization}
Outside angular synchronization, the approximation $\widetilde{\Delta}$ can also serve as a preconditioner for more general Laplacian systems; see \citet[Section 17]{Vishnoi2013} and references therein.
In the context of semi-supervised learning~\citep{ZBLWS13}, optimization problems of the following form
\begin{equation}
    \label{e:semi-supervised_learning}
    \bmf_\star \in \arg\min_{\bmf \in \R^n} \bmf^* \Delta \bmf + r(\bmf, \bmy),
\end{equation}
are often considered when only a few outputs are given for training, namely, $y_{\text{train},\ell}$ are associated to a subset of nodes $u_\ell$ for $1\leq \ell \leq l$. Here $r(\bmf, \bmy)$ is a risk between $\bmf$ and the vector $\bmy$ which stores label information as follows: $y(u_\ell) = y_{\text{train},\ell}$ for $1\leq \ell \leq l$ and $y(u) = 0$ otherwise.
Solving one of these problems yields a prediction on the unlabeled nodes, possibly after a rounding step such as $\Arg(\bmf_\star)$.

If we choose $r(\bmf,\bmy) = q \| \bmf - \bmy \|^2$ with a parameter $q > 0$, the first order optimality condition is the \emph{regularized} Laplacian system $(\Delta + q \I_n)\bmf = q \bmy$.
If $\epsilon\in (0,1)$ and the following guarantee holds
\begin{align}
    (1-\epsilon)(\Delta + q \I_n)
    \preceq \widetilde{\Delta} + q \I_n
    \preceq (1+\epsilon) (\Delta + q \I_n),
    \label{eq:multiplicativeBound}
\end{align}
then
$
    \Sp((
        \widetilde{\Delta} + q \I_n
    )^{-1}
    \left(
        \Delta + q \I_{n}
        \right))
$
is in the interval $[(1+\epsilon)^{-1}, (1-\epsilon)^{-1}]$.
Finding a sparsifier $\widetilde{\Delta}$ satisfying \eqref{eq:multiplicativeBound} thus guarantees a well-conditioned linear system.

In the context of angular synchronization, \citet[Section 7]{Cucuringu16} also discusses a semi-supervised approach for ranking from pairwise comparisons with anchors, i.e., with a few nodes for which the ranking is given a priori. Thus, the problem \cref{e:semi-supervised_learning} including the magnetic Laplacian can also be considered for ranking with anchors.


Moreover, from the computational perspective, solving the preconditioned linear system might also be fast if $\widetilde{\Delta}$ is sparse.
In \cref{sec:cholesky_decomposition_of_the_magnetic_laplacian_of_a_mtsf}, we discuss an ordering of the nodes so that the Cholesky factorization $R^\top R = \widetilde{\Delta} + q \I_n$ is very sparse when the sparsifier is associated with one multi-type spanning forest.
Since the triangular matrix $R$ is sparse, the system $R \bmf = \bmb$ can be solved quickly.
In particular, in the case $q=0$ and for a non-trivial connection, a sparsifier $\widetilde{\Delta}$ can be obtained by sampling one CRSF.
Denote by $\calV_c$ the set of nodes in the cycles of this CRSF.
The Cholesky factorization of $\widetilde{\Delta}$ yields a triangular matrix $R$ with $\mathcal{O}(n + |\calV_c|)$  non-zero off-diagonal entries, and is obtained in $\mathcal{O}(n + |\calV_c|)$ operations; see \cref{prop:CholeskyMTSF}.
Thus, the linear system $R \bmf = \bmb$ can be solved in $\mathcal{O}(n +|\calV_c|)$ operations.

Finally, we note that another possible choice for the risk in \eqref{e:semi-supervised_learning} is $r(\bmf,\bmy) = -2\Re(\bmf^*\bmy)$, whose first-order optimality conditions yield an unregularized Laplacian system as first order optimality conditions.
Equality constraints can also be imposed by taking the risk as the characteristic function of the constraint set.

\section{Multi-Type Spanning Forests and determinantal point processes} 
\label{sec:dpp_sampling_of_edges__multitype_spanning_forests}

We now describe the distribution we sample from to build our sparsifiers satisfying \cref{eq:multiplicativeBoundNoq} with high probability.
To simplify the expressions, we assume hereinafter that all edge weights are equal to $1$, i.e., $W = \I_m$.

\subsection{A DPP that favors inconsistent cycles}
For each cycle-rooted tree in an MTSF, with cycle $\eta$, define the \emph{holonomy} of $\eta$ by
\begin{equation}
    \label{e:holonomy}
    \mathrm{hol}(\eta) = \prod_{vv'\in \eta}\phi_{vv'}{\triangleq e^{\rmi \theta(\eta)}},
\end{equation}
{where $\theta(\eta) = \sum_{vv'\in \eta} \vartheta(vv')$ is the corresponding angle.} 
We consider the distribution over MTSFs $\calC$ given by
\begin{equation}
    \label{eq:proba_MTSF}
    p(\calC)
    = \frac{q^{|\rho(\calC)|}}{\det (\Delta + q\I_n)}
    \prod_{\eta \in \text{cycles}(\calC)}
    \Big(2 - 2\cos \theta(\eta)\Big),
\end{equation} where
$|\rho(\calC)|$ is the number of trees of the MTSF $\calC$.
Definition \eqref{eq:proba_MTSF} calls for comments.
First, although the calculation of the holonomy $\mathrm{hol}(\eta)$ depends on the orientation of the cycle $\eta$, we note that {$\cos \theta(\eta)$} is invariant under orientation flip.
In particular, the measure \cref{eq:proba_MTSF} favours MTSFs with inconsistent cycles, namely, cycles $\eta$ such that {$1- \cos \theta(\eta)$} is large.
Second, a small value of $q>0$ promotes MTSFs with more cycle-rooted trees than trees.
Third, for a trivial connection $\phi_{vv'} = 1$, \cref{eq:proba_MTSF} is a measure on spanning forests~\citep[SF;][]{AvGaud2018} (see \cref{fig:SF}) and, in the limit $q\to 0$, we recover the uniform measure on spanning trees~\citep[ST;][]{Pemantle91,Lyons2003}; see \cref{fig:ST}.
Fourth, for a non-trivial connection, the limit $q\to 0$ gives a measure on cycle-rooted spanning forests \citep[CRSFs;][Section~3]{kenyon2011}.
A CRSF is nothing else than an MTSF where all the connected components are cycle-rooted trees; see \cref{fig:CRSF}.
In the most general case of a non-trivial connection with $q>0$, it is easy to check that the number of edges in an MTSF $\calC$ is $|\calC| = n - |\rho(\calC)|$.  In the  case $q=0$, $\calC$ is almost surely a CRSF and  $|\calC| = n$.

The measure \cref{eq:proba_MTSF}  on MTSFs is actually associated with a discrete determinantal point process (DPP) on the edges of the graph.
A discrete DPP is a distribution over subsets of a finite ground set $[m]=\{1,\dots,m\}$ that favours diverse sets, as measured by a kernel matrix.
We restrict here to DPPs with symmetric kernels.

\begin{definition}[Discrete DPP]\label{def:discreteDPP}
    Let $K$ be an $m\times m$ Hermitian matrix with eigenvalues within $[0,1]$, called a correlation kernel.
    We say that $\calC\sim\DPP(K)$ if
    \[
        \Pr(\calA\subseteq \calC) = \det(K_{\calA\calA}),
    \]
    for all $\calA\subseteq [m]$. If $K$ has all its eigenvalues being equal to either $0$ or $1$, $\DPP(K)$ is called projective.
\end{definition}
The distribution in Definition~\ref{def:discreteDPP} is characterized by its so-called \emph{inclusion} probabilities.
Its existence is guaranteed by the Macchi-Soshnikov theorem \citep{Mac75,Soshnikov00}.
When the spectrum of $K$ is further restricted to lie within $[0,1)$, the DPP is called an $\mathrm{L}$-ensemble, and the probability mass function actually takes a simple form \citep{Mac75}.
For an exhaustive account on discrete DPPs, we refer to \cite{KT12}.
\begin{proposition}[$\mathrm{L}$-ensemble]
    Let $\calC\sim \DPP(K)$, with $\Sp(K)\subset[0,1)$.
    Let $\mathrm L = (I-K)^{-1}K$ be the so-called likelihood kernel of the DPP.
    Then
    \begin{equation}
    \label{eq:l-ensemble-likelihood}
        \Pr(\calC) = \det(\mathrm{L}_{\calC\calC})/\det(\mathrm{L}+\I).
    \end{equation}
\end{proposition}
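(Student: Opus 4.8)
The plan is to derive the likelihood formula \eqref{eq:l-ensemble-likelihood} from the inclusion-probability characterization in \cref{def:discreteDPP} by a standard inclusion–exclusion computation. First I would recall that, for any probability distribution over subsets $\calC\subseteq[m]$ specified by inclusion probabilities $\Pr(\calA\subseteq\calC)=\det(K_{\calA\calA})$, the elementary probabilities are recovered via
\begin{equation*}
    \Pr(\calC=\calS) = \sum_{\calS\subseteq\calA\subseteq[m]} (-1)^{|\calA\setminus\calS|}\,\Pr(\calA\subseteq\calC)
    = \sum_{\calT\subseteq[m]\setminus\calS} (-1)^{|\calT|}\det(K_{(\calS\cup\calT)(\calS\cup\calT)}).
\end{equation*}
The task is then to show this alternating sum equals $\det(\mathrm L_{\calS\calS})/\det(\mathrm L+\I)$ when $\Sp(K)\subset[0,1)$, so that $\I-K$ is invertible and $\mathrm L=(\I-K)^{-1}K$ is well defined.

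The key algebraic steps I would carry out are the following. Using $\det(K_{\calB\calB})$ expanded over the complement, one has the classical identity $\sum_{\calT\subseteq[m]\setminus\calS}(-1)^{|\calT|}\det(K_{(\calS\cup\calT)(\calS\cup\calT)}) = (-1)^{|[m]\setminus\calS|}\det\big((K-\I)_{\text{restricted appropriately}}\big)$; more cleanly, I would invoke the principal-minor/complementation identity
\begin{equation*}
    \sum_{\calT\subseteq\bar{\calS}}(-1)^{|\calT|}\det(K_{(\calS\cup\calT)(\calS\cup\calT)}) = \det(\I-K)\cdot\big[(\I-K)^{-1}K\big]_{\calS\calS}\big/\text{(normalization)},
\end{equation*}
or, most transparently, substitute $K=\mathrm L(\I+\mathrm L)^{-1}$, so $\I-K=(\I+\mathrm L)^{-1}$, and use the fact that for any matrix $M$ and its ``complementary'' principal submatrices one has $\sum_{\calT\subseteq\bar{\calS}}(-1)^{|\calT|}\det\big((M(\I+M)^{-1})_{(\calS\cup\calT)(\calS\cup\calT)}\big) = \det(M_{\calS\calS})/\det(\I+M)$. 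The cleanest route is probably a direct verification: check that the proposed right-hand side $\Pr(\calC=\calS)=\det(\mathrm L_{\calS\calS})/\det(\mathrm L+\I)$ does satisfy $\sum_{\calS\supseteq\calA}\Pr(\calC=\calS)=\det(K_{\calA\calA})$, i.e., that re-summing the $\mathrm L$-ensemble formula reproduces the correct inclusion probabilities, and then appeal to the fact that inclusion probabilities determine the law uniquely. For that re-summation one uses the matrix identity $\sum_{\calA\subseteq\calS\subseteq[m]} \det(\mathrm L_{\calS\calS}) = \det(\mathrm L+\I)\det\big((\I - (\I+\mathrm L)^{-1})_{\calA\calA}\big)$, combined with $\I-(\I+\mathrm L)^{-1}=\mathrm L(\I+\mathrm L)^{-1}=K$.

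The main obstacle is purely bookkeeping: getting the complementation/principal-minor identity in exactly the right form, and in particular controlling the signs $(-1)^{|\calT|}$ and making sure the normalization $\det(\mathrm L+\I)$ drops out correctly. Concretely, the crux is the lemma that for an $m\times m$ matrix $A$ and $\calA\subseteq[m]$,
\begin{equation*}
    \sum_{\calA\subseteq\calS\subseteq[m]}\det(A_{\calS\calS}) = \det(\I+A)\cdot\det\!\big((A(\I+A)^{-1})_{\calA\calA}\big),
\end{equation*}
applied with $A=\mathrm L$; this is where one invokes $\det(\I+A)=\sum_{\calS}\det(A_{\calS\calS})$ and the Schur-complement identity for principal minors of $A(\I+A)^{-1}$. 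Everything else — invertibility of $\I-K$ from $\Sp(K)\subset[0,1)$, positivity of $\det(\mathrm L_{\calS\calS})$, and that the elementary probabilities sum to one (which is immediate from $\sum_\calS\det(\mathrm L_{\calS\calS})=\det(\mathrm L+\I)$) — is routine and I would state it without belaboring the computation.
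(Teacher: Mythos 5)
The paper does not actually prove this proposition: it states the L-ensemble formula as a known fact and cites \citet{Mac75}. So there is no ``paper's proof'' to compare against; I can only evaluate your argument on its own merits, and it is correct.

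Your plan is the standard derivation. The Möbius inversion $\Pr(\calC=\calS)=\sum_{\calA\supseteq\calS}(-1)^{|\calA\setminus\calS|}\det(K_{\calA\calA})$ is the right starting point, but you are also right that the cleanest route is the ``inverse'' verification: show $\sum_{\calS\supseteq\calA}\det(\mathrm L_{\calS\calS})/\det(\mathrm L+\I)=\det(K_{\calA\calA})$ and then invoke uniqueness of the law given its inclusion probabilities. The central lemma you isolate,
\begin{equation*}
    \sum_{\calA\subseteq\calS\subseteq[m]}\det(A_{\calS\calS}) = \det(\I+A)\cdot\det\bigl((A(\I+A)^{-1})_{\calA\calA}\bigr),
\end{equation*}
is correct. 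For the record, a tidy way to see it (which would shorten your bookkeeping): by multilinearity of the determinant, $\sum_{\calA\subseteq\calS}\det(A_{\calS\calS})=\det(D+A)$ where $D$ is the diagonal $0/1$ matrix with $D_{ii}=1$ iff $i\notin\calA$; then $\det(D+A)=\det(\I+A)\det\bigl(\I-(\I+A)^{-1}(\I-D)\bigr)$, and since $\I-D$ is the coordinate projection onto $\calA$, the second factor is exactly the $\calA\calA$ principal minor of $\I-(\I+A)^{-1}=A(\I+A)^{-1}$. Applied with $A=\mathrm L$, and using $\I+\mathrm L=(\I-K)^{-1}$ so that $\mathrm L(\I+\mathrm L)^{-1}=\I-(\I+\mathrm L)^{-1}=K$, you recover $\det(K_{\calA\calA})$ as wanted. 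The remaining points you flag as routine (invertibility of $\I-K$ from $\Sp(K)\subset[0,1)$, nonnegativity of $\det(\mathrm L_{\calS\calS})$ since $\mathrm L\succeq 0$, normalization $\sum_\calS\det(\mathrm L_{\calS\calS})=\det(\mathrm L+\I)$) are indeed routine. No gap.
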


We now show that \eqref{eq:proba_MTSF} is a DPP.
Let $q>0$.
In what follows, we simply set $w_e = 1$ for all $e\in \mathcal{E}$ to simplify mathematical expressions.
We define an $\mathrm{L}$-ensemble with correlation kernel
\begin{equation}
    \label{eq:K}
    K
    = B\left(\Delta+ q \I_n\right)^{-1}B^*
    = q^{-1} BB^* \left(q^{-1} BB^*+  \I_m \right)^{-1},
\end{equation}
where we read off the likelihood kernel $\mathrm{L} = q^{-1}BB^*$.
A direct calculation shows that this DPP is precisely the distribution we introduced in \eqref{eq:proba_MTSF}, i.e., $\Pr(\calC) = p(\calC)$.
More precisely, using the definition of an $\mathrm{L}$-ensemble and Sylvester's identity $\det(q^{-1} B B^* + \I_m) = \det(q^{-1} B^* B +  \I_n)$, we can rewrite \eqref{eq:l-ensemble-likelihood} as
\[
    \Pr(\calC) = q^{|\rho(\calC)|}\frac{\det(B_{\calC:} B^*_{\calC:})}{\det(\Delta + q \I_n)}.
\]
The product of factors involving the holonomies in \eqref{eq:proba_MTSF} arise from the use of the Cauchy-Binet identity applied to $\det(B_{\calC:} B^*_{\calC:})$; see \citet{kenyon2011} for a complete discussion.
\begin{remark}[Regularized Laplacian.]\label{rem:reg_Laplacian_loops}
{In the context of semi-supervised learning, we often consider the regularized Laplacian $\Delta + q \mathbb{I}$ with the regularization parameter $q >0$; as we discussed in \cref{sub:regularized_laplacian_systems_preconditioning_and_sparse_cholesky_factorization}.
Thus, the $q$-dependent MTSF measure \cref{eq:proba_MTSF} has a natural motivation in this case.
However, it is easy to see that the measure \cref{eq:proba_MTSF}
\begin{equation*}
    p(\calC)
    \propto q^{|\rho(\calC)|}
    \prod_{\eta \in \text{cycles}(\calC)}
    \Big(2 - 2\cos \theta(\eta)\Big),
\end{equation*} can be obtained as a particular case of a measure over CRSFs associated to an auxilliary graph, obtained by augmenting the graph with self-loops.\footnote{
    We thank one of the anonymous referees for this suggestion.}
This \emph{ad hoc} construction is obtained by adding to each node of the original graph a loop with holonomy $e^{\rmi \vartheta}$ and a product of
edge weights $w$ such that $w \cdot (2-2\cos \vartheta) = q$.
In this paper, we chose not to use this viewpoint, since the definition of an incidence matrix is less obvious in this case,and since a synchronization problem with a self-loop has no direct meaning.
Furthermore, in our motivating example of semi-supervised learning, tuning the parameter $q$ allows for the classical bias-variance trade-off.
This is why we prefer to keep the role of this parameter explicit.}
\end{remark}
\begin{table}
    \begin{center}
        \begin{tabular}{ll l l}
            \toprule
            Parameter & Connection & Spanning graph sampled by $\DPP(K)$ & Illustration\\
            \midrule
            $q = 0$ & trivial & Spanning tree (ST) & \cref{fig:ST}\\
            $q >0$ & trivial & Spanning forest (SF) & \cref{fig:SF}\\
            $q = 0$ & non-trivial & Cycle-rooted spanning forest (CRSF) & \cref{fig:CRSF}\\
            $q >0$ & non-trivial & Multi-type spanning forest (MTSF) & \cref{fig:MTSF}\\
            \bottomrule
        \end{tabular}
    \end{center}
    \caption{Joint edge samples associated with $\DPP(K)$ and  $K = B\left(\Delta+ q \I_n\right)^{+}B^*$ as a function of $q$. Note that, if the connection is trivial, the magnetic Laplacian $\Delta$ is unitarily equivalent to the combinatorial Laplacian $\Lambda$; see \cref{sec:introduction}.}
    \label{tab:connection-sample-type}
\end{table}

\subsection{A few consequences of determinantality}
From \cref{def:discreteDPP} the marginal inclusion probability of any edge is given by
\begin{equation}
        \Pr(e\in \calC) = K_{ee} \triangleq \lev(e), \label{eq:LSdef}
\end{equation}
which is the so-called {\emph{magnetic}} leverage score of $e\in \calE$.
{In the absence of a connection, the (standard) leverage scores obtained from the diagonal of $K = B_0 \Lambda^+ B_0^\top$} play an important role in the seminal work of \citet{SpielmanSrivastava}, who build spectral sparsifiers using i.i.d.\ {\emph{leverage score-weighted}} edge sampling{; see also \citep{cohen2015uniform} for a version with edge-by-edge sampling}.
In this sense, the DPP \eqref{eq:proba_MTSF} generalizes the latter construction, where off-diagonal entries of $K$ encode negative dependence between edges.

The expansion of the normalizing constant of \eqref{eq:proba_MTSF}, see \citet[Theorem 2.4]{kenyon2019} and originally derived for CRSFs by~\citet{FORMAN199335}, is given by
\begin{equation}
    \label{eq:Normalization}
    \det(\Delta + q \I_n)
    = \sum_{\text{MTSFs } \calC}
    q^{|\rho(\calC)|}
    \prod_{\text{cycles } \eta \in \calC}
    \Big(2 - {2\cos \theta(\eta)}\Big),
\end{equation}
where the sum is over all non-oriented MTSFs, and $\mathrm{hol}(\eta)$ are $1/\mathrm{hol}(\eta)$ are the holonomies of the two possible orientations of the cycle $\eta$.
We refer to~\citet{ALCO_2020} for a generic proof technique.

Another consequence of the determinantal structure of the distribution \cref{eq:proba_MTSF} is that the expected number of edges in an MTSF and its variance are given in closed form as
\[
    \E_{\calC \sim \DPP(K)}[|\calC|]
    = \Tr\Big(\Delta (\Delta + q \I_n)^{-1}\Big),
    \quad\text{and}\quad
    \mathbb{V}_{\calC \sim \DPP(K)}[|\calC|]
    = \Tr\Big(\Delta (\Delta + q \I_n)^{-2}\Big),
\]
see e.g. \citep{KT12}.
In particular, this confirms that the expected number of edges culminates at $n$ for $q \to 0$ (CRSF) and goes to zero as $q \to +\infty$. In other words, a large value of $q$ tends to promote MTSFs with small trees.
\begin{remark}[$q = 0$]
    In the case of a non-trivial connection, the correlation kernel \cref{eq:K} for $q = 0$ is the orthogonal projector $B \Delta^{-1} B^*$. The corresponding DPP is then a projective DPP whose samples are CRSFs; see \cref{fig:CRSF} for an illustration and \citet{kenyon2011} for a reference. This DPP samples sets of constant cardinality.
\end{remark}

In the absence of any specific structure, sampling a discrete DPP can be done thanks to a generic algorithm \citep{HKPV06}, which relies on the eigendecomposition of the correlation kernel \cref{eq:K}, followed by a linear algebraic procedure analogous to a chain rule.
The time complexity of this algorithm is $\mathcal{O}(m^3)$ where $m$ is the number of edges.
For the specific case of \cref{eq:K}, in the vein of \cite{Wilson96} and \cite{kassel2017}, we show in \cref{sec:sampling_a_multitype_spanning_forest} how to leverage the graph structure to obtain a cheaper but still exact sampling random walk-based algorithm.

\section{Statistical guarantees of sparsification} 
\label{sec:sparsification_of_the_regularized_magnetic_laplacian}
We now give our main statistical results, guaranteeing that the DPP of Section~\ref{sec:dpp_sampling_of_edges__multitype_spanning_forests} yields a controlled sparse approximation of the magnetic Laplacian.
Throughout this section, we work under the following assumption.
\begin{assumption}[Non-singularity of $\Delta + q \I_n$]\label{assump:non-singularity}
    Let a connected graph be endowed with a $\Uone$-connection and denote by $\Delta + q \I_n$  its (regularized) magnetic Laplacian with $q \geq 0$.  We assume either that
    \begin{itemize}
        \item[(i)] the $\Uone$-connection is non-trivial (cf. \cref{tab:connection-sample-type}), or
        \item[(ii)] $q >0$.
    \end{itemize}
\end{assumption}
Under \cref{assump:non-singularity}, $\Delta + q \I_n$ has only strictly positive eigenvalues since either $\Delta$ is strictly positive definite or $q>0$.
\subsection{Sparsification with one MTSF \label{sec:one_MTSF_section}}
For an MTSF $\calC$, we define the $\vert \calE\vert\times \vert \calE\vert$ sampling matrix $S(\calC)$ by
\[
    S_{ee'}(\calC) = \delta_{ee'} 1_\calC(e') /\sqrt{\lev(e)},
\]
where $\lev(e)$ is the leverage score of the edge $e\in \calE$; see \cref{eq:LSdef}. This matrix is obtained from the identity matrix by dividing each column indexed by some $e\in \calC$ by $\sqrt{\lev(e)}$ and filling with zeros all the remaining columns.
By construction, under the DPP with correlation kernel \cref{eq:K}, this sampling matrix satisfies 
$
    \E_{\calC}[SS^\top] = \I_{m},
$
since $\lev(e) = \Pr(e\in \calC)$.
Consider now
\begin{equation}
    \widetilde{\Delta}(\calC) = B^* SS^\top  B.
    \label{eq:Delta_Tilde}
\end{equation}
We emphazise that $\widetilde{\Delta}$ is the Laplacian of a weighted graph with fewer edges than the original graph, thus motivating the word ``sparsification''.
The weights of the edges in the sparsified graph can be read from the non-zero (diagonal) elements of $SS^\top$.

\begin{proposition}[Sparsification from one MTSF]
    \label{prop:OneMTSF} Let \cref{assump:non-singularity} hold and $K = B\left(\Delta+ q \I_n\right)^{-1}B^*$.
    Let $\calC\sim \DPP(K)$, and $\widetilde{\Delta}(\calC)$ be given by \eqref{eq:Delta_Tilde}.
    Let further $\mumax = \|K\|_{\op}$ and $\deff = \Tr(K)$ be the average number of edges sampled by $\DPP(K)$.
    Finally, let $\delta \in (0,1)$, and $\epsilon >0$ such that
    $$
    \epsilon^2 = 37 \mumax \left( 2 \log\left(\frac{4 \deff}{\delta\mumax }\right)\vee \sqrt{3}\right).
    $$
    With probability at least $1-\delta$,
    \[
        (1-\epsilon)(\Delta + q \I_n)
        \preceq \widetilde{\Delta}(\calC) + q \I_n
        \preceq (1+\epsilon)(\Delta + q \I_n).
    \]
\end{proposition}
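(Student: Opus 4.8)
The plan is to reduce the claimed Loewner sandwich to a single operator-norm deviation bound, and then to invoke a matrix Chernoff inequality adapted to determinantal point processes. First I would whiten the problem: since \cref{assump:non-singularity} makes $\Delta+q\I_n$ positive definite, the desired inequality $(1-\epsilon)(\Delta+q\I_n)\preceq \widetilde{\Delta}(\calC)+q\I_n\preceq(1+\epsilon)(\Delta+q\I_n)$ is equivalent to
\[
    \big\| (\Delta+q\I_n)^{-1/2}\big(\widetilde{\Delta}(\calC)+q\I_n\big)(\Delta+q\I_n)^{-1/2} - \I_n \big\|_{\op} \le \epsilon .
\]
Using the standing convention $W=\I_m$, so that $\Delta=B^*B$, introduce $M:=B(\Delta+q\I_n)^{-1/2}\in\mathbb{C}^{m\times n}$. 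Then $MM^*=K$, $M^*M=\Delta(\Delta+q\I_n)^{-1}$, and $q(\Delta+q\I_n)^{-1}=\I_n-M^*M$. Since $\widetilde{\Delta}(\calC)=B^*SS^\top B$, the whitened matrix equals $M^*SS^\top M+(\I_n-M^*M)$, so the task reduces to showing $\|M^*SS^\top M-M^*M\|_{\op}\le\epsilon$ with probability at least $1-\delta$.

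Next I would identify the random matrix as a DPP-indexed sum of rank-one projections. Writing $\bm{m}_e\in\mathbb{C}^n$ for the conjugate transpose of the $e$-th row of $M$, the diagonal matrix $SS^\top$ has entries $(SS^\top)_{ee}=1_\calC(e)/\lev(e)$, hence $M^*SS^\top M=\sum_{e\in\calC}\tfrac1{\lev(e)}\bm{m}_e\bm{m}_e^*$ while $M^*M=\sum_{e\in\calE}\bm{m}_e\bm{m}_e^*$. The structural facts I would use are: $\lev(e)=K_{ee}=\|\bm{m}_e\|_2^2$ by \cref{eq:LSdef}, so each summand $\tfrac1{\lev(e)}\bm{m}_e\bm{m}_e^*$ is an orthogonal projection of rank one, in particular $0\preceq\tfrac1{\lev(e)}\bm{m}_e\bm{m}_e^*\preceq\I_n$; $\Pr(e\in\calC)=\lev(e)$, so $\E_\calC[M^*SS^\top M]=M^*M$; and $\|M^*M\|_{\op}=\|K\|_{\op}=\mumax$, $\Tr(M^*M)=\Tr(K)=\deff$, so the mean has intrinsic dimension $\deff/\mumax$. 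Thus I am exactly in the position of concentrating a sum of rank-one positive semidefinite matrices, each of norm $1$, indexed by $\calC\sim\DPP(K)$, around a mean of operator norm $\mumax$ and trace $\deff$.

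The remaining step is a matrix Chernoff bound with intrinsic dimension for such DPP-indexed sums, in the spirit of \citet{KKS22}; it is precisely this tail bound that the paper isolates as \cref{thm:Chernoff_StronglyRayleigh}. Applying it with per-summand bound $1$ and mean $M^*M$, one obtains for $\epsilon\in(0,1)$ a bound of the form $\Pr(\|M^*SS^\top M-M^*M\|_{\op}>\epsilon)\le C\,(\deff/\mumax)\,\exp(-c\,\epsilon^2/\mumax)$ for universal constants $c,C>0$, after controlling the top and bottom eigenvalues of $M^*SS^\top M-M^*M$ separately and taking a union bound. Requiring the right-hand side to be at most $\delta$ and solving for $\epsilon$ yields $\epsilon^2\asymp\mumax\log(\deff/(\mumax\delta))$; carrying the explicit constants through and inserting the floor that keeps $\epsilon^2/\mumax$ inside the valid range of the tail bound produces exactly the stated value $\epsilon^2=37\,\mumax\big(2\log(4\deff/(\delta\mumax))\vee\sqrt{3}\big)$, which together with the reduction of the first paragraph proves the proposition.

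I expect the genuine difficulty to lie entirely in that last step: the summands $\{\tfrac1{\lev(e)}\bm{m}_e\bm{m}_e^*\}_{e\in\calC}$ are negatively dependent but not independent, so Tropp's classical intrinsic-dimension matrix Chernoff bound does not apply off the shelf. The way around it — which \cref{thm:Chernoff_StronglyRayleigh} carries out — is to use that a DPP, and more generally any strongly Rayleigh measure, admits a stochastic-domination/coupling argument bounding the matrix moment generating function of $\sum_{e\in\calC}\tfrac1{\lev(e)}\bm{m}_e\bm{m}_e^*$, following \citet{KyngSong} and its sharpening in \citet{KKS22}; the intrinsic-dimension refinement, replacing the ambient dimension $n$ by $\deff/\mumax$ in the pre-factor, is what makes the estimate informative when $\deff\ll n$, for instance for large $q$.
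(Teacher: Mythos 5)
Your proposal is correct and takes essentially the same route as the paper's proof: after whitening by $(\Delta+q\I_n)^{-1/2}$, the claim reduces to concentrating the DPP-indexed sum $\sum_{e\in\calC}\lev(e)^{-1}\bmpsi_e\bmpsi_e^*$ of norm-one rank-one positives around its mean $\Psi\Psi^*$ (your $M=B(\Delta+q\I_n)^{-1/2}$ is exactly the paper's $\Psi^*$), and one then invokes \cref{thm:Chernoff_StronglyRayleigh} with $r=1$ and intrinsic dimension $\deff/\mumax$. Your bookkeeping — bounding the pre-factor via the floor on $\varepsilon^2/\mumax$ and then setting $\epsilon=\varepsilon\mumax$ — is precisely how the paper extracts the stated value of $\epsilon^2$.
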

The proof is based on a new matrix Chernoff bound given in \cref{thm:Chernoff_StronglyRayleigh}.
\begin{proof}
    Under \cref{assump:non-singularity}, $\Delta + q \I_n$ has only strictly positive eigenvalues. Hence, we can introduce the $n\times m$ matrix $\Psi = (\Delta + q \I_n)^{-1/2} B^*$, so that $K = \Psi^* \Psi.$
    Now, define
    \[
        Y_e =\frac{1}{\lev(e)} \bmpsi_e \bmpsi_e^* \text{ for all } e\in [m],
    \]
    where $\bmpsi_e$ denotes the $e$-th column of $\Psi$ and $\lev(e) = \bmpsi_e^*\bmpsi_e$.
    By definition, $\|Y_e\|_{\op}\leq 1$ and $\E_{\calC}\left[\sum_{e\in \calC}Y_e\right] = \sum_{e=1}^m \bmpsi_e\bmpsi_e^* = \Psi\Psi^*$.
    By construction, $\mumax = \lambda_{\max}(\Psi\Psi^*) \leq 1$.
    With the notations of \cref{thm:Chernoff_StronglyRayleigh}, for all $\varepsilon\in(0,1]$, it holds
    \[
        \Pr\left[
        \left\|
        \sum_{e\in \calC} Y_e - \E_{\calC}
        \left[\sum_{e\in \calC}Y_e\right]
        \right\|_{\op}
        \geq \varepsilon\mumax
        \right]
        \leq
        2\intdim(M)
        \left( 1+ c_1\frac{r^2}{\mumax^2 \varepsilon^4} \right)
        \exp\left(-c_2\frac{\varepsilon^2 \mumax}{r}\right),
    \]
    where we take $r = 1$ and with $c_1 = 3\cdot 37^2$ and $c_2 = \frac{1}{2 \cdot 37}$.
    Now, fix $\varepsilon^4 \geq c_1/\mumax^2$, so that
    \[
        \Pr\left[\left\|\sum_{e\in \calC} Y_e - \E_{\calC}\left[\sum_{e\in \calC}  Y_e\right]\right\|_{\op} \geq \varepsilon\mumax\right]  \leq 4\intdim(M) \exp\left(-c_2 \varepsilon^2 \mumax\right).
    \]
    Let $\delta\in (0,1)$ such that
    $
        4\intdim(M) \exp\left(-c_2 \varepsilon^2 \mumax\right)\leq \delta,
    $
    provided that $\varepsilon$ is large enough.
    We identify $M = \Psi \Psi^*$ and we find $\intdim(M) = \intdim(\Psi^* \Psi) = \deff /\mumax$.
    Now, we choose  $\varepsilon^2 \geq \frac{1}{c_2\mumax}\log(\frac{4 \deff}{\mumax\delta})$ and define $\epsilon = \mumax \varepsilon$, whereas we emphasize that $\varepsilon$ and $\epsilon$ are distinct quantities.
    Therefore,
    we have
    \[
        -\epsilon  \I \preceq \sum_{e\in \calC} Y_e - \E_{\calC}\left[\sum_{e\in \calC}Y_e\right]\preceq \epsilon \I,
    \]
    with probability at least $1-\delta$.
    Therefore, we have
    $
        -\epsilon  \I \preceq \Psi SS^\top \Psi^* - \Psi\Psi^* \preceq \epsilon  \I,
    $
    by definition of $S = S(\calC)$ at the begining of \cref{sec:one_MTSF_section}.
    Note that we take
    \[
        \epsilon^2 = \left(c_1^{1/2}\mumax \right)\vee \left(\frac{\mumax}{c_2 }\log\left(\frac{4 \deff} {\mumax \delta}\right) \right).
    \]
    Finally, plugging back $\Psi = (\Delta + q \I_n)^{-1/2} B^*$, we obtain the following result. With probability at least $1-\delta$, it holds that
    \[
        -\epsilon  (\Delta + q \I_n) \preceq \widetilde{\Delta}(\calC) - \Delta \preceq \epsilon  (\Delta + q \I_n).
    \]
    This completes the proof.
\end{proof}

It is well-known \citep[Theorem 13.3]{Vishnoi2013} that we can construct a Cholesky decomposition of a suitably permuted Laplacian matrix of a spanning tree with at most $\mathcal{O}(n)$ non-zero off-diagonal entries.
Similarly, we can construct a very sparse Cholesky decomposition of the regularized magnetic Laplacian matrix of an MTSF thanks to a suitable node ordering.
The sparsity of this decomposition depends on the set of nodes in the cycles of the MTSF, that we denote by $\calV_c$.
In a word, the resulting triangular matrix has $\mathcal{O}(n + |\calV_c|)$ non-zero off-diagonal entries, and is obtained in $\mathcal{O}(n + |\calV_c|)$ operations.
This is stated more formally in \cref{prop:CholeskyMTSF}.
\begin{proposition}[Sparse Cholesky decomposition for the Laplacian of one MTSF]
    \label{prop:CholeskyMTSF}
    Consider an MTSF with $r$ rooted trees and $c$ cycle-rooted trees, with cycle-lengths $n_1, \dots, n_c$.
    Let $\widetilde{\Delta}$ be the magnetic Laplacian of this MTSF endowed with a unitary connection and let $q\geq 0$.
    There exits an ordering of the nodes of this MTSF such that the Cholesky factorization of $\widetilde{\Delta} + q \mathbb{I}$ has at most $n -r + (n_1 -3) +\dots + (n_c -3)$ non-zero off-diagonal entries.
    This Cholesky factorization then requires $\mathcal{O}(n + n_1 + \dots + n_c)$ operations.
\end{proposition}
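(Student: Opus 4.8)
The plan is to exhibit an explicit node ordering of the MTSF under which Cholesky elimination creates no fill-in beyond what is already present, and then count the surviving off-diagonal nonzeros. Recall that for a Laplacian-type matrix $\widetilde{\Delta}+q\I$, performing one step of Cholesky (symmetric Gaussian elimination) on pivot node $v$ replaces the graph by the ``elimination graph'': $v$ is removed and its neighbours become pairwise connected. Hence the only way to avoid fill-in is to eliminate, at each step, a node whose neighbours in the \emph{current} graph are already mutually adjacent --- in particular, eliminating a degree-one node (a leaf) never creates fill, and eliminating a node of degree two only creates fill if its two neighbours were not already joined. So the strategy is a ``leaf-peeling'' order.

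First I would handle the forest part. Each of the $r$ rooted trees is processed by repeatedly eliminating leaves; since a leaf has a single neighbour, no fill-in is ever created, and each elimination destroys exactly one off-diagonal entry (the edge joining the leaf to its parent). Eliminating a whole tree on $k$ nodes this way uses its $k-1$ edges and leaves a single $1\times 1$ diagonal block; iterating over all rooted trees disposes of the tree-nodes, contributing $n_{\mathrm{forest}} - r$ off-diagonal entries to the Cholesky factor, where $n_{\mathrm{forest}}$ is the number of nodes lying in rooted trees. Next, each cycle-rooted tree is treated similarly: all the ``hanging tree'' parts (the nodes not on the unique cycle) are leaf-peeled first, creating no fill and using one edge per such node, until only the cycle $\eta$ of length $n_i$ remains as an isolated cycle graph on $n_i$ nodes. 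Eliminating a cycle on $n_i$ vertices in path order creates fill only at the very end: successively eliminating degree-two nodes shortens the cycle by one each time (the two neighbours get joined --- this is the one fill edge per step, but it is ``recycled'' as the next cycle edge, so it does not add to the count of the factor's entries beyond the $n_i$ edges present), until three nodes remain forming a triangle, which is then eliminated with no further fill. A careful bookkeeping shows the Cholesky factor restricted to such a cycle-rooted tree has at most $(\#\text{tree edges}) + (n_i - 3) + 3 = (\text{nodes in this component}) - 1$... I would instead phrase the count directly: over the whole MTSF, the total number of off-diagonal nonzeros in $R$ is $|\calC| - \big(\text{edges that become triangle-redundant}\big)$; since $|\calC| = n - r$ (an MTSF has $n-r$ edges, $r$ = number of trees), and each of the $c$ cycles saves $3$ off-diagonal entries relative to a naive bound because the last triangle needs only the entries already counted, we arrive at the stated bound $n - r + \sum_{i=1}^c (n_i - 3)$.

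For the operation count: eliminating a leaf costs $O(1)$ work (one Schur-complement update into the parent's diagonal), so the forest and the hanging-tree parts cost $O(n)$ total; eliminating a degree-two node on a cycle also costs $O(1)$ (updating two diagonal entries and one off-diagonal entry), so processing all cycles costs $O(n_1 + \cdots + n_c)$; the final triangle per cycle is $O(1)$. Summing gives $O(n + n_1 + \cdots + n_c)$ operations, as claimed. The only genuinely delicate point --- and the one I expect to require the most care --- is the fill-in accounting on the cycles: one must verify that peeling the degree-two nodes of an isolated cycle never creates an edge to a \emph{new} pair of vertices outside the current shrinking cycle (it does not, because the current graph is always exactly a cycle until three nodes remain), and that the regularization $q\I$ and the $\Uone$-phases $\phi_{uv}$ on the edges do not alter the \emph{sparsity pattern} of the Schur complements --- only their numerical values --- which is immediate since $q\I$ is diagonal and each $\phi_{uv}$ is a nonzero scalar on an already-present entry.
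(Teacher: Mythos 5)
Your elimination order -- peel leaves in the trees, peel leaves in the hanging parts of each cycle-rooted tree until only the cycle remains, then eliminate the cycle -- is exactly the order used in the paper's proof, and your $\mathcal{O}(1)$-per-pivot operation count is also correct.

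However, the counting of non-zero off-diagonal entries in the Cholesky factor $R$ is wrong, and in a way that contradicts the very bound you are trying to prove. You argue that when a degree-two node on the shrinking cycle is eliminated, the fill edge is ``recycled'' as the next cycle edge and therefore ``does not add to the count of the factor's entries beyond the $n_i$ edges present.'' That claim is false. The number of off-diagonal non-zeros in the $v$-th column of $R$ equals the degree of $v$ in the \emph{current} elimination graph at the moment it is pivoted out, i.e., $\bm{u}/d$ in the Schur-complement step has as many non-zeros as $v$ has neighbours then. So each of the first $n_i-2$ cycle eliminations contributes \emph{two} off-diagonal entries to $R$, the $(n_i-1)$-st contributes one, and the last contributes none, giving $2(n_i-2)+1 = 2n_i - 3$ entries for the cycle alone. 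Equivalently, the total is (original edges) $+$ (fill edges) $= n_i + (n_i - 3)$, and the fills are genuinely \emph{added}, not absorbed. Your own summary formula already betrays this: $n - r + \sum_i (n_i - 3) > n - r = |\calC|$ whenever some cycle has length greater than~$3$, so the bound cannot possibly be of the form $|\calC|$ minus something ``triangle-redundant.'' The paper avoids this confusion by counting column-by-column from the pivot's degree: $n_t - 1$ entries for a tree on $n_t$ nodes, and $(n_{crt}-n_i) + (2n_i - 3) = n_{crt} + n_i - 3$ for a cycle-rooted tree with $n_{crt}$ nodes and cycle length $n_i$; summing over components gives the stated $n - r + \sum_{i=1}^c(n_i - 3)$. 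Your inline check ``$(\#\text{tree edges}) + (n_i-3) + 3 = (\text{nodes in this component}) - 1$'' also does not hold ($n_{crt}-n_i + n_i = n_{crt}$, not $n_{crt}-1$, and neither equals $n_{crt} + n_i - 3$), which is a further symptom of the same miscount. The fix is simply to count off-diagonal entries by the degree of each node when eliminated, rather than trying to reason about which edges of the filled graph are ``new.''
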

In particular, \Cref{prop:CholeskyMTSF} implies { that $\widetilde{\Delta}$ can be factorized with a triangular matrix $R$  with at most $\mathcal{O}(n + |\calV_c|)$ non-zero off-diagonal entries. 
Hence, the linear system $R \bmf = \bmy$ can be solved in $\mathcal{O}(n +|\calV_c|)$ in time.}
The proof of \cref{prop:CholeskyMTSF}  given in \cref{sec:cholesky_decomposition_of_the_magnetic_laplacian_of_a_mtsf} also gives a specific recipe to construct the permutation of the nodes such that $R$ is sparse.

\subsection{Sparsification from a batch of independent MTSFs}

The value of $\epsilon$ in the multiplicative bound from \cref{prop:OneMTSF} can be large.
To take it down, we now consider a batch (or ensemble), $\calC_1, \dots, \calC_t$ of $t$ MTSFs, drawn independently from the same DPP as in \cref{prop:OneMTSF}.
The corresponding average of sparse Laplacians,
\[
    \frac{1}{t} \sum_{\ell=1}^t\widetilde{\Delta}(\calC_\ell) = B^* S^{(b)} S^{(b)\top}B,
\]
is itself the Laplacian of a sparse graph, with edge set $\cup_{\ell=1}^t \calC_\ell$, with suitably chosen edge weights.
Note that each edge in the batch is a multi-edge.
The $\vert \calE\vert\times \vert \calE\vert$ sampling matrix of the batch is diagonal with entries given by
\begin{align}
    \label{eq:sampling_matrix}
    S_{ee'}^{(b)}
    =
    \delta_{ee'}
    \sqrt{\frac{\mathrm{n}(e)}{t\times \lev(e)}}
    1_{\cup_{\ell=1}^t \calC_\ell}(e^\prime),
\end{align}
where $\mathrm{n}(e)$ is the number of times edge $e$ appears in the batch of MTSFs.
\begin{theorem}[Sparsification with a batch of MTSFs]
    \label{thm:BatchMTSFs}
    Let \cref{assump:non-singularity} hold.
    Let $\epsilon\in (0,1)$ and denote by $\calC_1, \dots, \calC_t$ MTSFs drawn independently from $\DPP(K)$ with $K = B\left(\Delta+ q \I_n\right)^{-1}B^*$.
    Let $\mumax = \|K\|_{\op}$ and $\deff = \Tr(K)$ the average number of edges sampled by this DPP.
    Denote by $\widetilde{\Delta}(\calC_\ell)$ for $1\leq \ell \leq t$ the corresponding sparsified magnetic Laplacian.

    Let $\delta \in (0,1)$.
    Then, if
    \[
        t
        \geq
         \frac{37\mumax}{ \epsilon^{2}}
        \left(
        2 \log \left(\frac{4 \deff}{\delta\mumax} \right) \vee \sqrt{3}
        \right),
    \]
    {with probability at least $1-\delta$},
    \[
        (1-\epsilon)(\Delta + q \I_{n})
        \preceq
        \frac{1}{t}
        \sum_{\ell=1}^t\widetilde{\Delta}(\calC_\ell)
        + q \I_n
        \preceq (1+\epsilon)(\Delta + q \I_{n}).
    \]
\end{theorem}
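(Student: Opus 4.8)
The plan is to reduce the batch statement to the matrix Chernoff bound \cref{thm:Chernoff_StronglyRayleigh} exactly as in the proof of \cref{prop:OneMTSF}, but applied to the pooled collection of edges coming from the $t$ independent MTSFs. First I would set $\Psi = (\Delta + q\I_n)^{-1/2}B^*$ so that $K = \Psi^*\Psi$, and for each $\ell$ and each edge $e$ define $Y_e^{(\ell)} = \lev(e)^{-1}\bmpsi_e\bmpsi_e^*$, which satisfies $\|Y_e^{(\ell)}\|_\op \le 1$. The key observation is that
\[
    \frac1t\sum_{\ell=1}^t \widetilde\Delta(\calC_\ell)
    = \frac1t\sum_{\ell=1}^t (\Delta+q\I_n)^{1/2}\Big(\sum_{e\in\calC_\ell}Y_e^{(\ell)}\Big)(\Delta+q\I_n)^{1/2} - \text{(shift)},
\]
so that after conjugating by $(\Delta+q\I_n)^{-1/2}$, the quantity to control is $\frac1t\sum_{\ell=1}^t\sum_{e\in\calC_\ell}Y_e^{(\ell)}$, whose expectation is $\Psi\Psi^*$ with $\lambda_{\max}(\Psi\Psi^*)=\mumax\le 1$.

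Next I would argue that the pooled family $\{\tfrac1t Y_e^{(\ell)}\}_{\ell\in[t],\,e}$, under the product of $t$ copies of $\DPP(K)$ restricted to the realized edge sets, is a valid input to \cref{thm:Chernoff_StronglyRayleigh}: a product of strongly Rayleigh measures is strongly Rayleigh, and the summands are rescaled by $1/t$, so one invokes the theorem with norm bound $r = 1/t$ in place of $r=1$. This is the step where I expect the only real subtlety — one must check that the hypotheses of \cref{thm:Chernoff_StronglyRayleigh} (negative dependence / strong Rayleigh-ness across the union of the $t$ samples, and the correct bookkeeping of the multiplicities $\mathrm{n}(e)$ encoded in $S^{(b)}$) are genuinely met, rather than merely summing $t$ independent one-sample tail bounds, which would cost an extra $\log t$ or a union bound. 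Granting this, \cref{thm:Chernoff_StronglyRayleigh} with $r = 1/t$ gives, for $\varepsilon\in(0,1]$,
\[
    \Pr\left[\Big\|\tfrac1t\sum_{\ell=1}^t\sum_{e\in\calC_\ell}Y_e^{(\ell)} - \Psi\Psi^*\Big\|_\op \ge \varepsilon\mumax\right]
    \le 2\intdim(M)\Big(1 + c_1\tfrac{r^2}{\mumax^2\varepsilon^4}\Big)\exp\Big(-c_2\tfrac{\varepsilon^2\mumax}{r}\Big),
\]
with $c_1 = 3\cdot 37^2$, $c_2 = \tfrac{1}{2\cdot 37}$, and $\intdim(M) = \deff/\mumax$ as in \cref{prop:OneMTSF}.

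Finally I would choose $\varepsilon$ to make the right-hand side at most $\delta$. Taking $r = 1/t$, the exponent becomes $-c_2 t\varepsilon^2\mumax$, and requiring both $t\varepsilon^4 \ge c_1/\mumax^2$ (to absorb the polynomial prefactor into a constant, turning $2(1+\cdots)$ into $4$) and $t\varepsilon^2 \ge \tfrac{1}{c_2\mumax}\log(4\deff/(\mumax\delta))$ yields
\[
    t \ge \frac{1}{\varepsilon^2}\left(\frac{\sqrt{c_1}}{\mumax}\;\vee\;\frac{1}{c_2\mumax}\log\Big(\frac{4\deff}{\mumax\delta}\Big)\right).
\]
Setting $\epsilon = \mumax\varepsilon$ (so $\varepsilon^2 = \epsilon^2/\mumax^2$) and substituting $\sqrt{c_1} = \sqrt{3}\cdot 37$, $1/c_2 = 2\cdot 37$ turns this into exactly the stated condition $t \ge \tfrac{37\mumax}{\epsilon^2}\big(2\log(4\deff/(\delta\mumax)) \vee \sqrt3\big)$. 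On that event, $-\epsilon\I \preceq \tfrac1t\sum_\ell\sum_{e\in\calC_\ell}Y_e^{(\ell)} - \Psi\Psi^* \preceq \epsilon\I$; conjugating back by $(\Delta+q\I_n)^{1/2}$ gives $-\epsilon(\Delta+q\I_n) \preceq \tfrac1t\sum_\ell\widetilde\Delta(\calC_\ell) - \Delta \preceq \epsilon(\Delta+q\I_n)$, and adding $q\I_n$ throughout (and regrouping $\Delta + q\I_n$ on the outer sides) produces the claimed sandwich. The main obstacle, as noted, is justifying the single application of the strongly-Rayleigh Chernoff bound to the pooled sample with the sharp $r=1/t$ scaling; everything after that is the same constant-chasing as in \cref{prop:OneMTSF}.
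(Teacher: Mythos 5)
Your proposal follows essentially the same route as the paper's proof: set $\Psi = (\Delta+q\I_n)^{-1/2}B^*$, rescale the rank-one matrices by $1/t$ so the norm bound is $r=1/t$, invoke \cref{thm:Chernoff_StronglyRayleigh} once on the pooled sample, absorb the polynomial prefactor by requiring $t\varepsilon^4\ge c_1/\mumax^2$, and then set $\epsilon = \mumax\varepsilon$ to obtain the stated batch-size condition. The one "subtlety" you flag but do not close — whether the pooled process over $t$ independent MTSFs meets the hypotheses of \cref{thm:Chernoff_StronglyRayleigh} with the sharp $r=1/t$ — is resolved exactly as you anticipate and more simply than the strong-Rayleigh argument you sketch: the concatenation of $t$ independent copies of $\DPP(K)$ is itself a DPP on $[tm]$ with block-diagonal correlation kernel $\mathrm{diag}(K,\dots,K)$, so the theorem (stated for DPPs) applies directly with the matrices $Y_{e+\ell m}=\tfrac{1}{t\,\lev(e)}\bmpsi_e\bmpsi_e^*$, no multiplicity bookkeeping or union bound needed.
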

This result is similar to the conclusion of~\cite{KKS22} who show that $\mathcal{O}\left(\epsilon^{-2}\log n\right)$ random spanning trees give a $(1\pm \epsilon)$ sparsifier of the combinatorial Laplacian.
\begin{proof}
    Let $\pp$ be the point process on $[tm]$ made of the concatenation of $t\geq 1$ independent copies of $\DPP(K)$.
    We first note that $\pp$ is a again a DPP, and its correlation kernel is a block diagonal matrix with $t$ copies of $K$ along the diagonal.
    We now apply \cref{thm:Chernoff_StronglyRayleigh} to that DPP.

    Define the $n\times m$ matrix $\Psi = (\Delta + q \I_n)^{-1/2} B^*$, so that
    $
        K = \Psi^* \Psi.
    $
    We also need $t$ identical copies of the same set of matrices.
    Hence, we define
    \begin{equation}
        \label{e:extended_Yes}
        Y_{e + \ell m} = \frac{1}{t\times \lev(e)}\bmpsi_e \bmpsi_e^*, \text{ for all } \ell \in\{0, t-1\} \text{ and all } e \in [m],
    \end{equation}
    where $\bmpsi_e$ denotes the $e$-th column of $\Psi$.
    By definition, $\|Y_e\|_{\op}\leq 1/t$ and $\E_{\pp}\left[\sum_{e\in \pp} Y_e\right] = \sum_{e=1}^m \bmpsi_e\bmpsi_e^* = \Psi\Psi^*$.
    Let now
    $
        M \triangleq \E_{\pp}\left[\sum_{e\in \pp} Y_e\right] = \Psi\Psi^*
    $
    and $\mumax = \mumax(M) = \mumax(K)\leq 1$.
    By \cref{thm:Chernoff_StronglyRayleigh}, for all $\varepsilon\in(0,1]$, it holds
    \[
        \Pr\left[\left\|\sum_{e\in \pp}  Y_e - \E_{\pp}\left[\sum_{e\in \pp} Y_e\right]\right\|_{\op} \geq \varepsilon\mumax\right]  \leq 2\intdim(M)\left( 1+c_1 \frac{r^2}{\mumax^2 \varepsilon^4} \right) \exp\left(-c_2\frac{\varepsilon^2 \mumax}{r}\right),
    \]
    for strictly positive constants $c_1$ and $c_2$ and where $r=1/t$. Here, $\intdim(M)= \deff/\mumax$.
    If we take $t \geq \frac{\sqrt{c_1}}{\mumax\varepsilon^2}$, then
    $
        1 + c_1 \frac{1/t^2}{\mumax^2 \varepsilon^4}\leq 2
    $.
    In particular, taking $t$ large enough so that
    $$
        1\geq \delta \geq 4\frac{\deff}{\mumax}\exp\left(-c_2 \varepsilon^2 t \mumax\right),
    $$
    we obtain, with probability at least $1-\delta$,
    \begin{equation}
        \label{e:result_of_thm2}
        -\varepsilon \mumax \I \preceq \sum_{e\in \pp} Y_e - \E_{\pp}\left[\sum_{e\in \pp} Y_e\right]\preceq \varepsilon \mumax \I.
    \end{equation}
    The condition on the batchsize {reads}
    \[
        t \geq \frac{1}{\mumax \varepsilon^2} \left\{  \left(c_2^{-1}\log\left(\frac{4\deff}{\mumax \delta}\right)\right)\vee\sqrt{c_1} \right\}.
    \]
    As we did in the proof of \cref{prop:OneMTSF}, we define $\epsilon = \varepsilon \mumax \in (0,1]$.
    Upon plugging back the definition \eqref{e:extended_Yes} in \cref{e:result_of_thm2}, we obtain the desired $(1\pm \epsilon)$-multiplicative bound with probability larger than $1-\delta$, provided that
    \[
        t \geq \frac{\mumax}{ \epsilon^2} \left\{ \left( c_2^{-1}\log\left(\frac{4\deff}{\mumax \delta}\right)\right)\vee\sqrt{c_1} \right\}.
    \]
    This completes the proof.
\end{proof}

\section{A Matrix Chernoff bound with intrinsic dimension}
\cref{prop:OneMTSF,thm:BatchMTSFs} rely on a new matrix Chernoff bound, with intrinsic dimension, for a sum of positive semidefinite matrices.
We state this result here because it might be of independent interest.
{Note that the intrinsic dimension $\Tr(M)/\|M\|_{\op}$ of a positive semi-definite $d\times d$ Hermitian matrix $M$ gives an estimation of its  rank, by accounting for the number of its eigenvalues which are significantly larger than zero.
The matrix Chernoff bound of \citet{KKS22}, which has inspired this paper, only depends on the matrix dimension $d$,  which may be larger than the intrinsic dimension in some applications.}

\begin{theorem}[Chernoff bound with intrinsic dimension]
    \label{thm:Chernoff_StronglyRayleigh}
    Let $\pp\sim\DPP$ on $[m]$.
    Let $Y_1, \dots, Y_m$ be $d\times d$ Hermitian matrices such that $0\preceq Y_i\preceq r \I$  for all $i\in [m]$ and some $r>0$.
    Assume $\E_{\pp}\left[\sum_{i\in \pp} Y_i\right] \preceq M$ for some Hermitian matrix $M$.
    Let $\intdim(M) = \Tr(M)/\|M\|_{\op}$ and $\mumax = \|M\|_{\op}$.
    Then, for all $\varepsilon\in(0,1]$, it holds {that}
    \[
        \Pr\left[
        \left\|
        \sum_{i\in \pp} Y_i
        - \E_{\pp}\left[\sum_{i\in \pp} Y_i\right]
        \right\|_{\op}
        \geq \varepsilon\mumax
        \right]
        \leq 2\intdim(M)
        \left( 1+ c_1\frac{r^2}{ \mumax^2\varepsilon^4} \right)
        \exp\left(-c_2\frac{\mumax\varepsilon^2}{r}\right),
    \]
    where $c_1 = 3\cdot 37^2$ and $c_2 = \frac{1}{2\cdot 37}$.
\end{theorem}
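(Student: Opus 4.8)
The strategy is to combine the Laplace-transform (Cram\'er--Chernoff) method for matrix-valued random variables with an intrinsic-dimension variant of the matrix Bernstein/Chernoff trace trick, adapted to a determinantal point process rather than independent sampling. The starting point is the exponential-moment bound for DPPs: for a strongly Rayleigh measure (which a DPP with symmetric kernel is), the matrix Laplace transform $\E_{\pp}\exp\!\left(\vartheta \sum_{i\in \pp} Y_i\right)$ is dominated, in the Loewner order after taking $\Tr$, by the independent-sampling analogue $\E\exp\!\left(\vartheta \sum_i \epsilon_i Y_i\right)$ where $\epsilon_i$ are independent Bernoullis with $\E \epsilon_i = \Pr(i\in\pp)$. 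This negative-association/stochastic-domination step is exactly the ingredient \citet{KKS22} use; I would invoke it (or re-derive it via Lyons' theory of strongly Rayleigh measures and Aldous--Fill-type inequalities) to reduce to the independent case. The key point is that the reduction preserves $\E_{\pp}[\sum_{i\in\pp}Y_i]\preceq M$.

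\textbf{Main steps.} First, reduce via strong Rayleigh domination to independent Bernoulli sampling, so that classical matrix concentration tools apply to $\sum_i \epsilon_i Y_i$. Second, apply the intrinsic-dimension matrix Chernoff/Bernstein machinery: using $0\preceq Y_i\preceq r\I$ and $\E[\sum_i\epsilon_i Y_i]\preceq M$, one obtains mean-matrix cgf bounds $\log\E e^{\vartheta \epsilon_i Y_i}\preceq g(\vartheta) \E[\epsilon_i Y_i]/r$ for a suitable scalar function $g$, hence $\Tr\exp(\sum_i \log\E e^{\vartheta\epsilon_i Y_i})\le \Tr\exp\!\big(g(\vartheta) M/r\big)$. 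Third, replace the naive $d$-dependent bound $\Tr\exp(\vartheta' M)\le d\, e^{\vartheta'\mumax}$ by the intrinsic-dimension bound $\Tr\exp(\vartheta' M)\le \intdim(M)\,\varphi(\vartheta'\mumax)$ with $\varphi(x)=e^x$ or the sharper $e^x - 1 + \cdot$; this is the trick of Tropp (Chapter 7 of \emph{An Introduction to Matrix Concentration Inequalities}) or Minsker, which yields the prefactor $\intdim(M)$ instead of $d$. Fourth, I would do the union bound over the two one-sided deviations ($\lambda_{\max}$ of the centered sum $\ge \varepsilon\mumax$ and $\le -\varepsilon\mumax$), optimize the free parameter $\vartheta$, and carefully track the polynomial correction factor $\big(1+c_1 r^2/(\mumax^2\varepsilon^4)\big)$ that arises from the intrinsic-dimension lemma when one does not simply bound $\varphi$ by a pure exponential --- this polynomial term is the price of the refined prefactor and is exactly what appears in the statement. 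Fifth, choose the explicit constants ($c_1=3\cdot 37^2$, $c_2=1/(2\cdot 37)$) by tracing through the inequalities; the number $37$ presumably comes from a specific numerical choice of the Laplace-transform parameter combined with the bound on $g$.

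\textbf{Main obstacle.} The delicate part is \emph{not} the independent-sampling matrix Chernoff bound itself (that is textbook once the domination is in hand) but rather \emph{transferring the exponential moment inequality through the strong Rayleigh property while simultaneously keeping the intrinsic-dimension refinement}. In the independent case one has a clean product structure $\E\prod_i e^{\vartheta\epsilon_i Y_i}$ that subadditivity of $\Tr\exp$ handles; for a DPP there is no such product, so one must either (a) use the stochastic-domination result of \citet{KKS22} as a black box --- which gives $\Tr\E_{\pp}\exp(\vartheta\sum_{i\in\pp}Y_i)\le \Tr\E\exp(\vartheta\sum_i\epsilon_i Y_i)$ directly --- or (b) re-prove such a domination. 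I would take route (a): state the domination lemma, cite it, and then the rest is the intrinsic-dimension Chernoff computation on the right-hand side. A secondary nuisance is that $M$ is only an upper bound for the mean (not the mean itself), so monotonicity of $\Tr\exp$ and of the relevant matrix functions under the Loewner order must be invoked at each step; this is routine but must be stated. Finally, verifying the explicit constants and the $\varepsilon\in(0,1]$ restriction requires care but no new ideas.
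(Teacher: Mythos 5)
Your high-level outline (matrix Laplace transform, intrinsic-dimension prefactor, explicit constants) is in the right spirit, and your diagnosis that the hard part is handling the DPP dependence is correct. But the specific reduction you propose as the load-bearing step --- that for a strongly Rayleigh measure one has
\[
\Tr\,\E_{\pp}\exp\Bigl(\vartheta\sum_{i\in\pp}Y_i\Bigr)\;\le\;\Tr\,\E\exp\Bigl(\vartheta\sum_i\epsilon_i Y_i\Bigr)
\]
with independent Bernoullis matching the marginals --- is \emph{not} what \citet{KKS22} prove, and I am not aware of such a matrix-valued domination in the literature. The scalar negative-association argument works because $\exp\bigl(\theta\sum_i\xi_i y_i\bigr)$ factors as $\prod_i e^{\theta\xi_i y_i}$; for matrices there is no such factorization and Golden--Thompson controls only two terms, so the NA argument does not port over. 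Citing this as a black box from \citet{KKS22} is a genuine gap: it leaves the entire DPP $\to$ independent reduction unproved, and I would not expect it to go through in this form.

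What \citet{KKS22} (and the paper) actually do is different in kind. After a dilation argument due to Lyons --- embed $\DPP(K)$ into an $m$-homogeneous projective DPP on $[2m]$ by appending the kernel $\I-K$, set $Y_i=0$ on the new indices --- one reduces to a $k$-homogeneous point process, for which the chain rule gives a matrix martingale with increments $Z_e = \E_{\pp_e}[Y_e+\sum_{i\in\pp_e}Y_i] - \E_{\pp}[\sum_{i\in\pp}Y_i]$. Using $\ell_\infty$-independence (which holds with parameter $2$ for every DPP) one proves $|Z_e|\preceq r\dinf\I$ and $\E_{e\sim\nu_\pp}[Z_e^2]\preceq r\dinf^2\E_{e\sim\nu_\pp}[Y_e]$, and then an induction with Golden--Thompson gives the key trace-MGF bound
\[
\E_{\pp}\Tr\exp\Bigl(H+\theta\sum_{i\in\pp}Y_i\Bigr)\;\le\;\Tr\exp\Bigl(H+(\theta+c\theta^2)\,\E_{\pp}\bigl[\textstyle\sum_{i\in\pp}Y_i\bigr]\Bigr),
\]
where the right-hand side is a \emph{deterministic} trace exponential, not a Bernoulli expectation. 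The cost of dependence is the self-bounding $c\theta^2$ term, with $c = r\dinf\vee\frac{r}{2}(9\dinf^2+1)$. Feeding $H=-\theta\E_\pp[\sum Y_i]$ into the matrix Laplace bound with $\psi(t)=e^{\theta t}-\theta t-1$, then applying Tropp's $\Tr\varphi(A)\le\intdim(A)\varphi(\|A\|_{\op})$ and optimizing $\theta_\star=\varepsilon/(2c)$, produces both the intrinsic-dimension prefactor and the polynomial correction $1+12c^2/(\mumax^2\varepsilon^4)$, whence $c_1=12c^2/r^2=3\cdot 37^2$ and $c_2=r/(4c)=1/(2\cdot 37)$. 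Your plan correctly anticipates steps three through five, but the DPP-specific step two must be replaced by this martingale/$\ell_\infty$-independence argument, preceded by the dilation to a homogeneous DPP which you did not mention.
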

The proof of \cref{thm:Chernoff_StronglyRayleigh} is given in \cref{sec:main_proof}, after introducing a few useful tools from \cref{app:Chernoff_StronglyRayleigh} to \cref{sec:Chernoff_bounds}.
The proof techniques are {mainly} borrowed from~\citet{KKS22} and adapted to account for a matrix intrinsic dimension \emph{\`a la}~\cite{Tropp15}. {By specializing the approach of \citet{KKS22} to DPPs, the proof can be significantly simplified, hopefully contributing to a wider adoption of the techniques.
}
Let us now discuss the interpretation of \cref{thm:Chernoff_StronglyRayleigh}.

An obvious limitation of the bound in \cref{thm:Chernoff_StronglyRayleigh} is that it describes the concentration only for $\varepsilon$ sufficiently small, whereas the bounds obtained by \citet{Tropp15} for i.i.d.\ sampling  do not suffer from this limitation. Despite this fact, it is still possible to use this result to prove, e.g., \cref{thm:BatchMTSFs}.
The upper bound in \cref{thm:Chernoff_StronglyRayleigh} can be made small provided that the parameter $r$ is small enough and that $\mumax = \| M \|_{\op}$ remains constant.
The latter condition is enforced by making the DPP depend on $r$ so that the expected cardinality of $\pp$ grows if $r$ goes to zero.
Note that the upper bound on the failure probability decreases if the ratio $\mumax\varepsilon^2/r$ increases.
Thus, for a fixed $\varepsilon\in (0, 1]$, the upper bound on the failure probability will take a small value as $r$ goes to zero and as the expected number of sampled points increases;  see the proof of \cref{thm:BatchMTSFs} for more details.
{Finally, the  explicit numerical constants appearing in this tail bound are given for completeness but we do not believe that they are optimal in any sense.}

\subsection{Proof outline}
\label{app:Chernoff_StronglyRayleigh}

We need to prove a matrix Chernoff bound.
We strongly rely on the proof techniques in \citep{KyngSong,KKS22,Tropp15}, restated here in the vocabulary of point processes.
We specialize arguments to DPPs whenever we can, yielding quicker proofs in some places.
In particular, we mix matrix martingales arguments à la \cite{Tropp15}, the notion of $\ell_\infty$-independence of \cite{KKS22}, and a dilation argument for DPPs    from \citet{Lyons2003}.

Since this section is relatively long, we provide here a short outline.
\cref{app:Preliminaries} introduces some of the vocabulary of point processes, and the recent notion of $\ell_\infty$-independence of \cite{KKS22} for discrete measures.
In particular, we give tools to bound the conditionals in the chain rule for DPPs.
In \cref{sec:preliminary_results}, we list and derive tools to investigate the concentration properties of random matrices, in particular the matrix Laplace transform, which involves the expectation of the trace of a matrix exponential.
\cref{lem:GeneralizedLemma5.2} gives conditions under which we can push the expectation inside the trace-exponential.
In \cref{sec:Chernoff_bounds}, we use this \cref{lem:GeneralizedLemma5.2} along with a deviation bound from  \cite{Tropp15} and \cite{Oli10} to obtain a matrix Chernoff bound.
This bound features Tropp's intrinsic dimension, applied to the DPP kernel. However, at this stage, the bound only applies to $k$-homogeneous DPPs, that is, DPPs with almost surely fixed cardinality.
We remove the latter restriction using a dilation argument due to \cite{Lyons2003}.
This provides \cref{thm:Intrinsic_Matrix_Chernoff_bound_linfty}, which we introduce for future reference.
We conclude in \cref{sec:main_proof} with the proof its corollary, \cref{thm:Chernoff_StronglyRayleigh}.

\subsection{Discrete measures, point processes and $\ell_\infty$-independence}
\label{app:Preliminaries}

Throughout this section, we consider a probability measure $\mu$ on $\{0,1\}^m$.
Recall that $\mu$ is said to be $k$-homogeneous if the Boolean vector $\bmxi\sim\mu$ almost surely has exactly $k$ entries equal to $1$.
Finally, for $\xi\sim\mu$, we consider
\begin{equation}
    \label{def:pp}
        \pp = \pp(\bmxi) \triangleq \{i \in [m] : \xi_i = 1\} \subset [m].
\end{equation}
$\pp$ is a point process on $[m]$, i.e., a random subset of $[m]$.
Similarly, we can talk of the Boolean measure $\mu$ associated to a point process $\pp$.

\subsubsection{Intensity and density}
\label{sec:Intensity_Density}
For a point process $\pp$, define its \emph{intensity} as $\rho_\pp:[m]\rightarrow \mathbb{R}_{\geq 0}$ given by
$$
\rho_\pp(i) = \E_{\pp}[1_{\pp}(i)], \quad i\in[m].
$$
\begin{example}
    \label{ex:DPPs}
    \cref{def:discreteDPP} implies that for $\pp\sim \text{DPP}(K)$, $\rho_\pp(i) = K_{ii}$ for any $i\in [m]$.
\end{example}

Furthermore, if $\pp$ is $k$-homogeneous, its \emph{density} at $i\in [m]$ is
$
\nu_\pp (i) = \rho_\pp(i)/k.
$
Intuitively, the intensity $\rho_\pp(i)$ is the probability that $i$ belongs to a sample of $\pp$, whereas the density $\nu_\pp$ is the $1$-homogeneous point process whose samples are obtained by first getting a sample of $k$ points from $\pp$ and then sampling uniformly one of these random $k$ points.

\subsubsection{Conditioning}
To define $\ell_\infty$-independence, we need to express conditioning.
First, for any $\calA_1 \subset [m]$ with $\mu(\calA_1) \neq 0$, define the \emph{reduced} probability measure $\mu_{\calA_1}$ on $[m]\setminus \calA_1$ by
\[
    \mu_{\calA_1} (\calB) \propto \mu(\calA_1 \cup \calB), \quad \calB \subseteq [m]\setminus \calA_1.
\]
This reduced measure translates fixing some of the components $\xi\sim\mu$ to $1$.
The ``dual'' operation consists in the exclusion of $\calA_0 \subset [m]$, where $\calA_0$ is such that $\mu(\calC) < 1$ for all subsets $\calC \subseteq \calA_0$.
We denote the corresponding reduced measure on $[m]\setminus \calA_0$ as
\[
    \mu^{\calA_0} (\calB) \propto \mu(\calB), \quad\calB \subseteq [m]\setminus \calA_0.
\]
This corresponds to fixing $\xi_i = 0$ for all $i\in \calA_0$.
For disjoint $\calA_0, \calA_1\subset[m]$, we can further define the reduced probability measure $\mu_{\calA_1}^{\calA_0}$ corresponding to both excluding $\calA_0$ and including $\calA_1$ by
\[
    \mu^{\calA_0}_{\calA_1} (\calB) \propto \mu(\calA_1\cup\calB), \quad\calB \subseteq [m]\setminus (\calA_0\cup \calA_1).
\]
The notation is consistent, in that, for instance, $\mu_{\emptyset}^{\calA_1} = \mu^{\calA_1}$.
Moreover, the correspondence \eqref{def:pp} between point processes and measures allows to define the point process $\pp_{\calA_1}^{\calA_0}$ associated to the reduced measure $\mu_{\calA_1}^{\calA_0}$.
Finally, for simplicity, for $e\in[m]$, we respectively write $\pp_e$ and $\pp^e$ instead of $\pp_{\{e\}}$ and $\pp^{\{e\}}$.

\begin{example}[\cref{ex:DPPs}, continued]
    \label{ex:conditioning_DPPs}
    Let $\pp\sim\text{DPP}(K)$ with symmetric kernel $K$, and $e\in [m]$ such that $\rho_X(e) = K_{ee} > 0$, then $\pp_e$ is still a DPP, with kernel $f,g\mapsto K_{fg} - K_{fe}K_{eg}/K_{ee}$; see \citep{ShTa03}.
    More generally, for any disjoint $\calA_0, \calA_1$ such that $\calA_1$ is in the support of $\text{DPP}(K)$, then $\pp^{\calA_0}_{\calA_1}$ exists and is a DPP, see e.g. \citep{KT12} for details.
\end{example}

\subsubsection{$\ell_\infty$-independence}
\citet{KKS22} introduced $\ell_\infty$-independence for Boolean measures, and we restate it here in terms of point processes.
Given a point process $\pp$ on $[m]$ with intensity $\rho_\pp$,  define
\begin{equation}
    \delta_{\pp,e}(i) = \begin{cases}
        1 - \rho_\pp(e) & \text{ if } i = e,\\
        \rho_{\pp_e}(i) - \rho_{\pp}(i) & \text{ otherwise},
    \end{cases} \label{eq:delta_e_i}
\end{equation}
for all $e,i \in \supp(\rho_\pp)$.
\begin{definition}[$\ell_\infty$-independence] \label{def:pp_parameter}
    We say that $\mu$ is $\ell_\infty$-independent with parameter (w.p.) $\dinf$ if for all disjoint subsets $\calA_0$ and $\calA_1$ of $[m]$ such that the point process $\mathcal{Y} = \pp_{\calA_1}^{\calA_0}$ exists, for all $e \in \supp(\rho_{\mathcal{Y}})$,
    \[
     \sum_{i\in \supp(\rho_{\mathcal{Y}})} |\delta_{\mathcal{Y},e}(i)| \leq \dinf.
    \]
\end{definition}
Furthermore, to better interpret $\ell_\infty$-independence, it is useful to see a point process as a random counting measure.
Indeed, the quantity that is upper bounded in \cref{def:pp_parameter} is the total variation distance
\[
    d_{\infty}\left( \delta_e + \E_{X_e}\left[\sum_{i\in X_{e}} \delta_i\right], \E_{X}\left[\sum_{j\in X} \delta_j\right]\right)= \sum_{i\in \supp(\rho_{\pp})} |\delta_{\pp,e}(i)|.
\]

DPPs\footnote{And, more generally, measures that satisfy the stochastic covering property \citep{pemantle_peres_2014}.} satisfy $\ell_\infty$-independence with parameter $\dinf = 2$. This is the content of \cref{ex:dp_dinf}.
\begin{example}[Examples~\ref{ex:DPPs} and \ref{ex:conditioning_DPPs}, continued]
    \label{ex:dp_dinf}
    Let $\pp \sim \DPP(K)$ for some symmetric $m\times m$ matrix $K$.
    Then $\rho_\pp(e) = K_{ee} \leq 1$ for all $e\in [m]$.
    Furthermore, for all $e\in [m]$ such that $K_{ee} >0$, we have $\rho_{\pp_e}(i) = K_{ii} - K_{ie}^2/K_{ee}$ for all $i\in [m]$ such that $i\neq e$.
    We deduce that, for $e\in [m]$ such that $K_{ee} >0$,
    \[
        1- \rho_\pp(e) = 1 - K_{ee} \geq 0,
    \]
    and that, for $i\neq e$,
    $$
    \rho_{\pp}(i) - \rho_{\pp_e}(i) =   K_{ie} K_{ei}/K_{ee} = (K_{ie})^2/K_{ee} \geq 0.$$
    It follows that
    \[
        \sum_{i\neq e}| \rho_{\pp_e}(i) - \rho_{\pp}(i)| = \sum_{i\neq e} \frac{K_{ei} K_{ie}}{K_{ee}} = \frac{(K^2)_{ee} - (K_{ee})^2}{K_{ee} }.
    \]
    Now, recall that by the Macchi-Soshnikov theorem, existence of $\pp$ implies $K^2\preceq K$; see Section~\ref{sec:dpp_sampling_of_edges__multitype_spanning_forests}.
    We thus obtain, using notation \eqref{eq:delta_e_i},
    \begin{equation}
        \sum_{i\in \supp(\rho_{\pp})} |\delta_{\pp,e}(i)| = 1  - K_{ee} + \frac{(K^2)_{ee} - (K_{ee})^2}{K_{ee}}  \leq 2 -  2 K_{ee},
        \label{e:conclusion}
    \end{equation}
    for all $e\in [m]$ such that $K_{ee} >0$.

    Now, as noted in Example~\ref{ex:conditioning_DPPs}, if $\pp$ is a DPP, then for any disjoint subsets $\calA_0$ and $\calA_1$ of $[m]$ such that $\pp_{\calA_1}^{\calA_0}$ exists, $\pp_{\calA_1}^{\calA_0}$ is a DPP.
    We can thus prove \eqref{e:conclusion} for this new DPP.
    Consequently, a DPP satisfies $\ell_\infty$-independent with parameter $\dinf = 2$.

    Finally, note that $k$-homogeneous DPPs with symmetric kernels are actually those for which $K^2=K$, so called \emph{projection}, \emph{projective}, or \emph{elementary} DPPs, see, e.g., \citep{HKPV06}.
    In particular, we note that the bound in \eqref{e:conclusion} is tight if $\pp$ is $k$-homogeneous.
    In that case, for a fixed $i\in \supp(\rho_\pp)$, we also have
    $
        \E_{\pp}[\delta_{\pp,e}(i)] = 0.
    $
\end{example}
We end this section on DPPs and independence by a lemma relating $\ell_\infty$-independence and marginals.
In \citep{KKS22}, this lemma is a consequence of what the authors call \emph{average multiplicative independence}.
For our limited use, we can avoid defining this new notion of independence.
\begin{lemma}\label{lem:consequence_of_avg_mult_indpt}
    Let $\pp$ be a point process on $[m]$, which is $k$-homogeneous and $\ell_\infty$-independent with parameter $\dinf$.
    Denote by $\rho_\pp$ and $\nu_\pp$ its intensity and density; see \cref{sec:Intensity_Density}.
    For all $e$ and $i \in \supp(\rho_\pp)$, using notation
    \cref{eq:delta_e_i}, it holds
    \[
        \E_{e\sim \nu_{\pp}} | \delta_{\pp,e}(i) | \leq \dinf  \nu_{\pp}(i), \text{ for all } i \in \supp(\rho_\pp).
    \]
    Furthermore, if $\pp\sim \DPP(K)$ for some orthogonal projector $K$, then $\E_{e\sim \nu_{\pp}}  \delta_{\pp,e}(i) = 0$, and
    \[
        \E_{e\sim \nu_{\pp}} | \delta_{\pp,e}(i) |  = d_\pp(i)\nu_\pp(i) , \text{ for all } i \in \supp(\rho_\pp),
    \]
    where $d_\pp(i) = 2 - 2 K_{ii}$.
\end{lemma}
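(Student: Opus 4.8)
The plan is to reduce the statement to the symmetry of the two-point inclusion function and then apply $\ell_\infty$-independence ``with the roles of the two indices exchanged.''  First I would record that, for any point process $\pp$ and any distinct $e,i\in\supp(\rho_\pp)$, one has
\[
  \rho_\pp(e)\,\rho_{\pp_e}(i) \;=\; \Pr\big(\{e,i\}\subseteq\pp\big) \;=\; \rho_\pp(i)\,\rho_{\pp_i}(e),
\]
which is immediate from $\rho_{\pp_e}(i)=\Pr(i\in\pp\mid e\in\pp)$ and the symmetry of the two-point inclusion probability. Multiplying the definition \eqref{eq:delta_e_i} of $\delta_{\pp,e}(i)$ by $\rho_\pp(e)\ge 0$ and inserting this identity gives, for $e\ne i$,
\[
  \rho_\pp(e)\,\big|\delta_{\pp,e}(i)\big| \;=\; \big|\Pr(\{e,i\}\subseteq\pp) - \rho_\pp(e)\rho_\pp(i)\big|,
\]
whose right-hand side is invariant under swapping $e$ and $i$. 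Dividing by the homogeneity parameter $k$ therefore yields $\nu_\pp(e)\,|\delta_{\pp,e}(i)| = \nu_\pp(i)\,|\delta_{\pp,i}(e)|$ for $e\ne i$, and the case $e=i$ is trivial since $|\delta_{\pp,i}(i)| = 1-\rho_\pp(i)$.

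Next I would sum over $e\in\supp(\rho_\pp)$ and exploit this symmetry together with $|\delta_{\pp,i}(i)|=1-\rho_\pp(i)$:
\[
  \E_{e\sim\nu_\pp}\big|\delta_{\pp,e}(i)\big| \;=\; \sum_{e\in\supp(\rho_\pp)}\nu_\pp(e)\,\big|\delta_{\pp,e}(i)\big| \;=\; \nu_\pp(i)\sum_{e\in\supp(\rho_\pp)}\big|\delta_{\pp,i}(e)\big|.
\]
The inner sum is exactly the quantity controlled by \cref{def:pp_parameter}, applied with $\calA_0=\calA_1=\emptyset$ (so $\mathcal{Y}=\pp$) and with the distinguished index there being our fixed $i$; it is hence at most $\dinf$, which gives the first claim.

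For the projective case I would simply plug in the explicit conditional intensities recalled in \cref{ex:dp_dinf}: when $\pp\sim\DPP(K)$ with $K$ an orthogonal projector, $\pp$ is $k$-homogeneous with $k=\Tr(K)$, while $\delta_{\pp,e}(i) = -|K_{ie}|^2/K_{ee}$ for $e\ne i$ and $\delta_{\pp,e}(e)=1-K_{ee}$. Then
\[
  k\,\E_{e\sim\nu_\pp}\delta_{\pp,e}(i) \;=\; K_{ii}(1-K_{ii}) - \sum_{e\ne i}|K_{ie}|^2 \;=\; K_{ii}(1-K_{ii}) - \big((K^2)_{ii}-K_{ii}^2\big),
\]
which vanishes since $K^2=K$; keeping absolute values turns the two minus signs into plus signs, so $k\,\E_{e\sim\nu_\pp}|\delta_{\pp,e}(i)| = 2K_{ii}(1-K_{ii}) = k\,d_\pp(i)\,\nu_\pp(i)$, as claimed.

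I do not anticipate a genuine obstacle here; the one point that needs care is the legitimacy of ``transposing'' $\ell_\infty$-independence, that is, turning a bound on $\sum_i|\delta_{\pp,e}(i)|$ (fixed $e$) into one on the $\nu_\pp$-average of $|\delta_{\pp,e}(i)|$ (fixed $i$). This is precisely what the identity $\nu_\pp(e)|\delta_{\pp,e}(i)|=\nu_\pp(i)|\delta_{\pp,i}(e)|$ provides, and it uses $k$-homogeneity twice: once so that $\nu_\pp=\rho_\pp/k$ with a \emph{constant} $k$, and once so that $\rho_{\pp_e}$ is the honest conditional intensity making the two-point symmetry exact. Everything else is bookkeeping over $\supp(\rho_\pp)$.
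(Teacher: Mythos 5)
Your proof is correct and follows essentially the same route as the paper's: both arguments exploit the Bayes-type symmetry of the two-point inclusion (the paper's \cref{eq:bayes_p_nu}) to rewrite $\nu_\pp(e)\,|\delta_{\pp,e}(i)|$ as $\nu_\pp(i)\,|\delta_{\pp,i}(e)|$, then sum over $e$, factor out $\nu_\pp(i)$, and invoke \cref{def:pp_parameter} with the distinguished index $i$; the projective case is the same explicit computation using $K^2=K$. The only difference is presentational -- you isolate the pointwise identity before summing, whereas the paper manipulates inside the sum.
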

For completeness, \Cref{lem:consequence_of_avg_mult_indpt} is proved in \cref{sec:proof_of_inf_am}.
\subsection{Manipulating the matrix Laplace transform}
\label{sec:preliminary_results}
Consider a set of Hermitian $d\times d$ matrices satisfying $0\preceq Y_i\preceq r \I$ for $1\leq i\leq m$ and some $r>0$.
Further let $\pp$ be a $k$-homogeneous point process on $[m]$.
In order to study the concentration of $\sum_{i\in \pp}  Y_i$ about its expectation for a $k$-homogeneous $\pp$, we provide Lemma~\ref{lem:GeneralizedLemma5.2} below.
The case $r = 1$ has already been considered by~\citet{KKS22}.
To make our manuscript explicit and self-contained, we extend this analysis to any $r>0$.
\begin{lemma}[Extension of Lemma 5.2 in~\cite{KKS22}]
    \label{lem:GeneralizedLemma5.2}
    Let $\pp$ be a $k$-homogeneous point process on $[m]$, which is $\ell_\infty$-independent with parameter $\dinf$.
    Consider a set of Hermitian $d\times d$ matrices satisfying $0\preceq Y_i\preceq r \I$ for $1\leq i\leq m$ and some $r >0$.
    Let $c \geq r \dinf \vee  \frac{r}{2} \left( 9 \dinf^2 + 1\right) $.
    Then, for every
    $\theta \in[-1/(2c), 1/(2c)]$ and for every Hermitian matrix $H$, the following holds
    \begin{equation}
        \label{e:push_expectation_inside_trace_exponential}
        \E_{\pp}
        \Tr \exp\left( H + \theta\sum_{i\in \pp} Y_i \right)
        \leq
        \Tr \exp\left(
        H +
        (\theta + c\theta^2)
        \E_{\pp}\left[\sum_{i\in \pp} Y_i\right]
        \right).
    \end{equation}
\end{lemma}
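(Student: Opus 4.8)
The plan is to prove \eqref{e:push_expectation_inside_trace_exponential} by induction on the homogeneity parameter $k$ of $\pp$. The base case $k=0$ is immediate, since then $\pp=\emptyset$ almost surely and both sides equal $\Tr\exp(H)$. For the inductive step I assume the statement for every $(k-1)$-homogeneous point process that is $\ell_\infty$-independent with parameter $\dinf$, keeping the same $c$, the same bound $r$, and the same range of $\theta$. The engine of the step is to reveal a single point of $\pp$ through the chain rule, apply the induction hypothesis to what remains, and then absorb the leftover expectation by a concavity argument.

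Concretely, let $\pp$ be $k$-homogeneous and write $\bar S\triangleq\E_\pp\big[\sum_{i\in\pp}Y_i\big]=\sum_i\rho_\pp(i)Y_i$. Drawing $e\sim\nu_\pp$ and then the reduced process $\pp_e$ reconstructs $\pp$, so
\[
    \E_{\pp}\Tr\exp\!\Big(H+\theta\textstyle\sum_{i\in\pp}Y_i\Big)
    =\E_{e\sim\nu_\pp}\;\E_{\pp_e}\Tr\exp\!\Big((H+\theta Y_e)+\theta\textstyle\sum_{i\in\pp_e}Y_i\Big).
\]
For $e\in\supp(\rho_\pp)$, the process $\pp_e$ is $(k-1)$-homogeneous, and — since every reduced process of $\pp_e$ is again a reduced process of $\pp$ — it is $\ell_\infty$-independent with parameter $\dinf$; the induction hypothesis applied with $H$ replaced by $H+\theta Y_e$ therefore gives
\[
    \E_{\pp}\Tr\exp\!\Big(H+\theta\textstyle\sum_{i\in\pp}Y_i\Big)
    \le\E_{e\sim\nu_\pp}\Tr\exp\!\Big(H+\theta Y_e+(\theta+c\theta^2)\textstyle\sum_{i\ne e}\rho_{\pp_e}(i)Y_i\Big).
\]
Setting $\beta\triangleq\theta+c\theta^2$ and using $\rho_{\pp_e}(i)=\rho_\pp(i)+\delta_{\pp,e}(i)$ for $i\ne e$ (notation \eqref{eq:delta_e_i}), the exponent on the right equals $(H+\beta\bar S)+D_e$ with
\[
    D_e\;=\;(\theta-\beta\rho_\pp(e))\,Y_e\;+\;\beta\sum_{i\ne e}\delta_{\pp,e}(i)\,Y_i .
\]
I would then apply Lieb's concavity theorem to $A\mapsto\Tr\exp\big((H+\beta\bar S)+\log A\big)$ at $A=\exp(D_e)\succ0$ and take expectation over $e\sim\nu_\pp$; by Jensen and the monotonicity of $\Tr\exp$ for the Loewner order, it then suffices to prove $\E_{e\sim\nu_\pp}\exp(D_e)\preceq\I$, which together with $\bar S=\E_\pp[\sum_{i\in\pp}Y_i]$ yields \eqref{e:push_expectation_inside_trace_exponential}.

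To establish $\E_{e\sim\nu_\pp}\exp(D_e)\preceq\I$, I would first bound $\|D_e\|$ using $0\preceq Y_i\preceq r\I$, $\rho_\pp(e)\in[0,1]$, $\sum_i|\delta_{\pp,e}(i)|\le\dinf$ (from $\ell_\infty$-independence), and $|\theta|\le\tfrac1{2c}\le\tfrac1{2r\dinf}$, obtaining $\|D_e\|\le1$ up to a harmless constant, so that $e^x\le1+x+x^2$ lifts to $\exp(D_e)\preceq\I+D_e+D_e^2$ by functional calculus. Next, for the first moment, computing the coefficient of each $Y_i$ in $\E_{e\sim\nu_\pp}D_e$ and using the chain-rule identity $\sum_{e\ne i}\nu_\pp(e)\rho_{\pp_e}(i)=\rho_\pp(i)-\nu_\pp(i)$, equivalently $\sum_{e\ne i}\nu_\pp(e)\delta_{\pp,e}(i)=\nu_\pp(i)(\rho_\pp(i)-1)$, produces the clean cancellation $\E_{e\sim\nu_\pp}D_e=(\theta-\beta)\sum_i\nu_\pp(i)Y_i=-\tfrac{c\theta^2}{k}\bar S\preceq0$. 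For the second moment, writing $D_e=\sum_i a_{e,i}Y_i$, the operator Cauchy–Schwarz inequality and $Y_i^2\preceq rY_i$ give $D_e^2\preceq r\big(\sum_i|a_{e,i}|\big)\sum_i|a_{e,i}|Y_i$; bounding $\sum_i|a_{e,i}|$ uniformly in $e$ and using \cref{lem:consequence_of_avg_mult_indpt} ($\E_{e\sim\nu_\pp}|\delta_{\pp,e}(i)|\le\dinf\nu_\pp(i)$) to control $\E_{e\sim\nu_\pp}|a_{e,i}|$ yields $\E_{e\sim\nu_\pp}D_e^2\preceq\tfrac{r}{k}\tau^2\,\bar S$ with $\tau=O(|\theta|\dinf)$. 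Collecting terms, $\E_{e\sim\nu_\pp}\exp(D_e)\preceq\I-\tfrac{c\theta^2}{k}\bar S+\tfrac{r\tau^2}{k}\bar S\preceq\I$ as soon as $c\theta^2\ge r\tau^2$, and carrying out the estimates with sharp constants is what yields the stated threshold $c\ge r\dinf\vee\tfrac r2(9\dinf^2+1)$.

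I expect this last paragraph to be the main obstacle: simultaneously (i) keeping $\|D_e\|$ small enough for the quadratic bound on $\exp$ to be valid and (ii) making the constant in $\E_{e\sim\nu_\pp}D_e^2\preceq\tfrac{c\theta^2}{k}\bar S$ tight enough to land on the announced $c$; both require carefully tracking the interplay of $\theta$, $\beta=\theta+c\theta^2$, and $\rho_\pp(e)\in[0,1]$, and squeezing \cref{lem:consequence_of_avg_mult_indpt}. By contrast, the chain-rule decomposition, the inheritance of $\ell_\infty$-independence by $\pp_e$, and the Lieb/Jensen step are routine.
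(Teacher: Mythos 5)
Your proposal follows the same skeleton as the paper's proof: induction on the homogeneity $k$, revealing one point via the chain rule, applying the induction hypothesis to $\pp_e$, and reducing the inductive step to the operator inequality $\E_{e\sim\nu_\pp}\exp(D_e)\preceq\I$, where your $D_e=(\theta-\beta\rho_\pp(e))Y_e+\beta\sum_{i\ne e}\delta_{\pp,e}(i)Y_i$ is algebraically the same as the paper's $(\theta+c\theta^2)Z_e-c\theta^2 Y_e$. Your first-moment identity $\E_{e\sim\nu_\pp}D_e=-\tfrac{c\theta^2}{k}\bar S$ is correct and equivalent to the paper's observation that $\E_{e\sim\nu_\pp}Z_e=0$.

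The two genuine departures are technical. First, you invoke Lieb's concavity theorem and Jensen where the paper uses Golden--Thompson; in this setting the two are interchangeable and both reduce to the same target $\|\E_e\exp(D_e)\|_{\op}\le1$, so this is a matter of taste. Second, and more substantially, you bound $\exp(D_e)\preceq\I+D_e+D_e^2$ directly (valid for $\lambda_{\max}(D_e)\le1$, which your range of $\theta$ does enforce) and then control $\E_e D_e^2$ by the operator Cauchy--Schwarz inequality $(\sum_i a_{e,i}Y_i)^2\preceq(\sum_i|a_{e,i}|)\sum_i|a_{e,i}|Y_i^2$ together with \cref{lem:consequence_of_avg_mult_indpt}, instead of the paper's route of splitting $D_e=\beta Z_e-c\theta^2Y_e$ and applying the KKS22 inequality $\exp(A-B)\preceq\I+A-B+2A^2+2B^2$ (for $A\preceq\I$, $B\succeq0$) to the two pieces, with $\E_e[Z_e^2]\preceq r\dinf^2\,\E_e[Y_e]$ from \cref{lem:Fact5.4}. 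Your route is cleaner in that it avoids the separate quadratic terms in $A$ and $B$, and for $\dinf\ge2$ it even gives a smaller threshold.

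The one caveat is your closing claim that ``carrying out the estimates with sharp constants is what yields the stated threshold $c\ge r\dinf\vee\tfrac r2(9\dinf^2+1)$.'' That is not quite right for your route: tracking the constants through $\exp(D_e)\preceq\I+D_e+D_e^2$ and $\sum_i|a_{e,i}|\le|\theta|+|\beta|\dinf\le|\theta|(1+\tfrac32\dinf)$ yields a condition of the form $c\ge r(1+\tfrac32\dinf)^2$, which is smaller than the paper's $\tfrac r2(9\dinf^2+1)$ for $\dinf\ge2$ (in particular for the sole application $\dinf=2$), but is larger for $\dinf$ near $1$, e.g.\ $6.25r$ vs.\ $5r$ at $\dinf=1$. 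So your argument proves the lemma with a slightly different (and for the DPP application, slightly better) constant, but it does not reproduce the stated threshold uniformly in $\dinf$; to land exactly on that threshold one should follow the paper's $A$--$B$ split and the $2A^2+2B^2$ inequality.
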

The proof of \cref{lem:GeneralizedLemma5.2} is by induction.
Before we proceed, we derive two intermediate results, \cref{lem:Fact5.4} and \cref{lem:bound_on_c}.

\Cref{lem:Fact5.4} is an adaptation of \citet[Claim~5.4, arxiv v2]{KKS22}.
It allows controlling the difference between an expected sum of matrices sampled w.r.t.\ a $k$-homogeneous point process and the same sum conditioned on a certain term being already present.
For any $e$ such that $\rho_\pp(e) > 0$, define the increment
\begin{equation}
    Z_e(\pp) = \E_{\pp_e} \left[Y_e + \sum_{i \in \pp_e} Y_i\right] - \E_{\pp} \left[\sum_{i\in \pp} Y_i\right] =  \sum_{i\in \supp(\rho_{\pp})} \delta_{\pp,e}(i) Y_i,\label{eq:Z}
\end{equation}
where $\pp_e$ is the reduced process and $\delta_{\pp,e}(i)$ is given in \cref{eq:delta_e_i}.
This quantity can be interpreted as the difference sequence of a matrix martingale associated to the sequential revelation of $\pp = \{e_1, \dots , e_k\}$ by the chain rule\footnote{This martingale appears in \citep{pemantle_peres_2014} in the scalar case.}.
Indeed, note that $\E_{e\sim \nu_\pp}[Z_e(\pp)] = 0$, and define the martingale
\[
    M_j = Y_{e_1} + \dots + Y_{e_j} + \E_{\pp_{e_1, \dots, e_j}}\left[\sum_{i\in \pp_{e_1, \dots, e_j} }Y_i\right] - \E_{\pp} \left[\sum_{i\in \pp} Y_i\right].
\]
Then $Z_{e_j}(\pp_{e_1, \dots, e_{j-1}}) = M_j -M_{j-1}$ for $j\in[k]$, with the initialization $M_0 = 0$.
Note that $\pp_{e_1, \dots, e_{k}}$ is simply the empty set.
In particular, a telescopic sum argument yields
\[
     M_k = Z_{e_k}(\pp_{e_1, \dots, e_{k-1}}) + \dots+ Z_{e_2}(\pp_{e_1}) + Z_{e_1}(\pp) =  \sum_{i \in \pp} Y_i - \E_{\pp} \left[\sum_{i\in \pp} Y_i\right],
\]
which will be implicitly used below in the proof of \cref{lem:GeneralizedLemma5.2}.
\begin{fact}
    \label{lem:Fact5.4}
    Let $k\geq 1$ and $\pp$ be a $k$-homogeneous point process on $[m]$, which is $\ell_\infty$-independent with parameter $\dinf$.
    Let $0\preceq Y_i\preceq r \I$ for $1\leq i\leq m$.
    Let $e$ be such that $\rho_\pp(e) > 0$ and $Z_e = Z_e(\pp)$ be defined in \cref{eq:Z}.
    Then
    \begin{itemize}
        \item[(i)] $|Z_e| \preceq r \dinf  \I$,
        \item[(ii)] $\E_{e\sim \nu_\pp}[Z_e^2] \preceq r \dinf^2  \E_{e\sim \nu_\pp}[Y_e]$,
    \end{itemize}
    where $\nu_\pp$ is the intensity of $\pp$.
\end{fact}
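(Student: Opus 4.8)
The plan is to prove the two bounds directly from the representation $Z_e = \sum_{i\in\supp(\rho_\pp)}\delta_{\pp,e}(i)Y_i$ given in \cref{eq:Z}, exploiting the hypotheses $0\preceq Y_i\preceq r\I$ and the $\ell_\infty$-independence with parameter $\dinf$, together with \cref{lem:consequence_of_avg_mult_indpt}.

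For part (i), I would use the triangle inequality in the operator-norm/Loewner sense. Writing $Z_e = \sum_i \delta_{\pp,e}(i)Y_i$, split the sum according to the sign of $\delta_{\pp,e}(i)$: the positive part is $\preceq \sum_{i:\delta\geq 0}\delta_{\pp,e}(i)\cdot r\I$ and the negative part is $\succeq -\sum_{i:\delta<0}|\delta_{\pp,e}(i)|\cdot r\I$, since $0\preceq Y_i\preceq r\I$. Hence $-r\bigl(\sum_i|\delta_{\pp,e}(i)|\bigr)\I \preceq Z_e \preceq r\bigl(\sum_i|\delta_{\pp,e}(i)|\bigr)\I$, and the $\ell_\infty$-independence bound $\sum_i |\delta_{\pp,e}(i)|\leq\dinf$ gives $|Z_e|\preceq r\dinf\I$. (One must be slightly careful that the intermediate matrices are not themselves $\preceq c\I$ automatically, but bounding each $Y_i$ entrywise in the Loewner order is exactly what is needed here.)

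For part (ii), the strategy is to bound $Z_e^2 \preceq \|Z_e\|_{\op} \cdot |Z_e|$ using part (i) — more precisely $Z_e^2 \preceq r\dinf\,|Z_e|$ since $|Z_e|\preceq r\dinf\I$ implies $Z_e^2 = |Z_e|^2 \preceq (r\dinf)|Z_e|$. Then $|Z_e| \preceq \sum_i |\delta_{\pp,e}(i)|\,Y_i$ (again by the sign-splitting argument, using $0\preceq Y_i$), so $Z_e^2 \preceq r\dinf \sum_i |\delta_{\pp,e}(i)|\,Y_i$. Taking the expectation over $e\sim\nu_\pp$ and swapping sum and expectation, $\E_{e\sim\nu_\pp}[Z_e^2] \preceq r\dinf \sum_i \bigl(\E_{e\sim\nu_\pp}|\delta_{\pp,e}(i)|\bigr) Y_i$. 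Now \cref{lem:consequence_of_avg_mult_indpt} gives $\E_{e\sim\nu_\pp}|\delta_{\pp,e}(i)| \leq \dinf\,\nu_\pp(i)$, so the right-hand side is $\preceq r\dinf^2 \sum_i \nu_\pp(i) Y_i = r\dinf^2\,\E_{e\sim\nu_\pp}[Y_e]$, which is the claim.

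I expect the main subtlety — rather than a genuine obstacle — to be the justification of $|Z_e|\preceq\sum_i|\delta_{\pp,e}(i)|Y_i$: this is not the triangle inequality for the matrix absolute value in general, but it does follow here because each $Y_i$ is positive semidefinite, so that both $Z_e$ and $-Z_e$ are bounded above by $\sum_i|\delta_{\pp,e}(i)|Y_i$ in the Loewner order, hence so is $|Z_e|$. One should also double-check that the order-preserving step $A\preceq B \Rightarrow \|Z\|\cdot A \preceq \|Z\|\cdot B$ and the replacement $Z_e^2\preceq r\dinf|Z_e|$ are used correctly (the latter uses that for Hermitian $Z$, $|Z|\preceq\alpha\I$ implies $Z^2\preceq\alpha|Z|$, which is clear on each eigenvalue). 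Everything else is bookkeeping.
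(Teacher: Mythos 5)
Your proof of part (i) is correct and essentially equivalent to the paper's, which bounds $|Z_e| \preceq \|Z_e\|_{\op}\I$ and then uses the triangle inequality for the operator norm together with $\|Y_i\|_{\op}\leq r$. Both routes legitimately rely on the fact that $-c\I \preceq A \preceq c\I$ for a \emph{scalar} $c$ implies $|A|\preceq c\I$.

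Part (ii), however, has a genuine gap at exactly the spot you flagged as a "subtlety rather than an obstacle." The implication ``$Z_e \preceq C$ and $-Z_e \preceq C$, with $C\succeq 0$, imply $|Z_e|\preceq C$'' is \emph{false} for a non-scalar $C$. Concretely, take
\[
A = \begin{pmatrix} 1 & 0 \\ 0 & -1 \end{pmatrix}, \qquad C = \begin{pmatrix} 2 & 3/2 \\ 3/2 & 2 \end{pmatrix}.
\]
Then $C\succ 0$, $C-A = \left(\begin{smallmatrix} 1 & 3/2 \\ 3/2 & 3\end{smallmatrix}\right)\succeq 0$, $C+A = \left(\begin{smallmatrix} 3 & 3/2 \\ 3/2 & 1\end{smallmatrix}\right)\succeq 0$, but $|A|=\I$ and $C-\I = \left(\begin{smallmatrix} 1 & 3/2 \\ 3/2 & 1\end{smallmatrix}\right)$ has negative determinant, hence $|A|\not\preceq C$. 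Your ``sign-splitting'' argument only proves $\pm Z_e \preceq \sum_i|\delta_{\pp,e}(i)|Y_i$, which does \emph{not} give $|Z_e| \preceq \sum_i|\delta_{\pp,e}(i)|Y_i$. With that step broken, the chain $Z_e^2 \preceq r\dinf|Z_e| \preceq r\dinf\sum_i|\delta|Y_i$ collapses, and you are left only with the cruder $Z_e^2 \preceq (r\dinf)^2\I$, which is not strong enough.

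The paper sidesteps $|Z_e|$ entirely. Writing $s_i = \sign(\delta_{\pp,e}(i))$, it expands $Z_e^2 = \sum_{i,j} s_is_j\,|\delta_{\pp,e}(i)||\delta_{\pp,e}(j)|\,Y_iY_j$, bounds the cross terms via the matrix AM--GM inequality $\pm(AB+BA) \preceq A^2 + B^2$, and then uses $Y_i^2 \preceq rY_i$ and $\sum_i|\delta_{\pp,e}(i)|\leq \dinf$ to obtain $Z_e^2 \preceq r\dinf\sum_i|\delta_{\pp,e}(i)|Y_i$ directly. From there, taking $\E_{e\sim\nu_\pp}$ and invoking \cref{lem:consequence_of_avg_mult_indpt} finishes the argument exactly as you intended. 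If you replace your ``$|Z_e|\preceq\sum_i|\delta|Y_i$'' step with this expansion and AM--GM bound, the rest of your write-up goes through.
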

\cref{lem:Fact5.4} straightforwardly relies on \cref{lem:consequence_of_avg_mult_indpt}, and we give a proof in \cref{sec:proof_of_Fact5.4}.
Now, we use \cref{lem:Fact5.4} to derive the following lemma with explicit constants.
\begin{lemma}[Extension of Lemma~5.3 in~\cite{KKS22}]
    \label{lem:bound_on_c}
    Under the assumptions of \cref{lem:Fact5.4}, if $c\geq r \dinf \vee  \frac{r}{2} \left( 9 \dinf^2 + 1\right)$, then for all $\theta \in [-1/(2c), 1/(2c)]$,
    \begin{align*}
        \E_{e\sim \nu_\pp}\left[e^{(\theta + c \theta^2)Z_e - c\theta^2 Y_e}\right]\preceq \I.
    \end{align*}
\end{lemma}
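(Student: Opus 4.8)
The plan is to bound the matrix exponential by its second-order Taylor polynomial applied spectrally, and then push the expectation through using the two conclusions of Fact~\ref{lem:Fact5.4}. Fix $\theta \in [-1/(2c), 1/(2c)]$ and $e \in [m]$ with $\rho_\pp(e) > 0$, write $\alpha = \theta + c\theta^2$ and $\beta = c\theta^2 \geq 0$, and set $A = \alpha Z_e - \beta Y_e$, a Hermitian matrix. The first step is to locate the spectrum of $A$. From $|c\theta| \leq 1/2$ one gets $|\alpha| = |\theta|\,|1 + c\theta| \leq 3/(4c)$ and $0 \leq \beta \leq 1/(4c)$; combined with Fact~\ref{lem:Fact5.4}(i), namely $|Z_e| \preceq r\dinf\I$, and with $0 \preceq Y_e \preceq r\I$, this gives $\|A\|_{\op} \leq |\alpha| r\dinf + \beta r \leq \frac{r}{4c}(3\dinf + 1) \leq 1$, the last inequality using the hypothesis $c \geq r\dinf \vee \frac{r}{2}(9\dinf^2 + 1)$.

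Since $e^x \leq 1 + x + x^2$ for $|x| \leq 1$, the transfer rule for Hermitian matrices then yields $e^{A} \preceq \I + A + A^2$. Applying $\E_{e\sim\nu_\pp}$, which is monotone for the Loewner order, gives $\E_{e\sim\nu_\pp}[e^{A}] \preceq \I + \E_{e\sim\nu_\pp}[A] + \E_{e\sim\nu_\pp}[A^2]$. For the linear term, $\E_{e\sim\nu_\pp}[Z_e] = 0$ (recorded just before Fact~\ref{lem:Fact5.4}) leaves $\E_{e\sim\nu_\pp}[A] = -\beta\,\E_{e\sim\nu_\pp}[Y_e]$. For the quadratic term I would expand $A^2 = \alpha^2 Z_e^2 - \alpha\beta(Z_e Y_e + Y_e Z_e) + \beta^2 Y_e^2$, bound the cross term by the operator inequality $\pm(Z_e Y_e + Y_e Z_e) \preceq Z_e^2 + Y_e^2$ to obtain $A^2 \preceq (\alpha^2 + |\alpha|\beta)Z_e^2 + (|\alpha|\beta + \beta^2)Y_e^2$, and then invoke Fact~\ref{lem:Fact5.4}(ii), $\E_{e\sim\nu_\pp}[Z_e^2] \preceq r\dinf^2\,\E_{e\sim\nu_\pp}[Y_e]$, together with $Y_e^2 \preceq rY_e$. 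This produces $\E_{e\sim\nu_\pp}[A^2] \preceq r\big((\alpha^2 + |\alpha|\beta)\dinf^2 + |\alpha|\beta + \beta^2\big)\E_{e\sim\nu_\pp}[Y_e]$.

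Collecting the pieces, $\E_{e\sim\nu_\pp}[e^{A}] \preceq \I + s\,\E_{e\sim\nu_\pp}[Y_e]$ with $s = -\beta + r\big((\alpha^2 + |\alpha|\beta)\dinf^2 + |\alpha|\beta + \beta^2\big)$, so it remains to check $s \leq 0$. This is immediate for $\theta = 0$; otherwise $\beta > 0$, and dividing by $\beta$ and using $\alpha^2/\beta = (1 + c\theta)^2/c \leq 9/(4c)$, $|\alpha| \leq 3/(4c)$ and $\beta \leq 1/(4c)$, one finds $s/\beta \leq -1 + \frac{r}{c}\big(3\dinf^2 + 1\big) \leq 0$ by the choice of $c$. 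Since $\E_{e\sim\nu_\pp}[Y_e] \succeq 0$, this gives $s\,\E_{e\sim\nu_\pp}[Y_e] \preceq 0$, hence $\E_{e\sim\nu_\pp}[e^{A}] \preceq \I$, as claimed. I expect the main obstacle to be purely the bookkeeping of the quadratic coefficients — keeping the cross term under control and verifying that the stated lower bound on $c$ dominates the bound $r(3\dinf^2 + 1)$ that these estimates produce (which it does in the regime of interest, in particular for $\dinf = 2$, the value relevant to DPPs); there is no conceptual difficulty.
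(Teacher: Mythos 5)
Your argument is correct and takes a genuinely different route from the paper's. The paper invokes the pre-packaged matrix inequality $\exp(A-B) \preceq \I + A - B + 2A^2 + 2B^2$, valid for Hermitian $A, B$ with $A \preceq \I$ and $B \succeq 0$ (cited from Fact~4.2 of \citet{KKS22}), applied with $A = (\theta + c\theta^2)Z_e$ and $B = c\theta^2 Y_e$; this cleanly separates $Z_e$ from $Y_e$ at the cost of a factor $2$ on both squares. You instead keep the exponent $\alpha Z_e - \beta Y_e$ intact, check that its operator norm is at most $1$, transfer the scalar bound $e^x \leq 1 + x + x^2$ spectrally, and then bound the cross term $\pm(Z_e Y_e + Y_e Z_e) \preceq Z_e^2 + Y_e^2$ by hand. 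Both approaches are standard and about equally long; yours avoids relying on an external fact but pays for it in the cross-term estimate, so the resulting constants differ. Concretely, your calculation yields the sufficient condition $c \geq r(3\dinf^2 + 1)$, whereas the paper's route reproduces exactly the stated $c \geq \frac{r}{2}(9\dinf^2 + 1)$. The latter implies the former precisely when $\frac{1}{2}(9\dinf^2+1) \geq 3\dinf^2+1$, i.e.\ when $\dinf \geq 1/\sqrt{3}$. You flag this yourself, and it is harmless for every application in the paper (the lemma is instantiated only for DPPs via the dilation argument, where $\dinf = 2$), but strictly speaking your argument does not prove the lemma as stated for $\dinf < 1/\sqrt{3}$; a reader wanting the full generality of the statement would need to either tighten your cross-term bookkeeping or fall back on the paper's Fact~4.2 approach.
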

\begin{proof}
    The idea of this proof is to use $\exp(A-B) \preceq \I + A - B + 2A^2 + 2 B^2$ for $A$ and $B$ {Hermitian} such that $A\preceq \I$ and $B\succeq 0$ \citep[Fact 4.2]{KKS22}. Here we choose $A = (\theta + c \theta^2)Z_e$ and $B = c\theta^2 Y_e$.

    First, we have
    \begin{align*}
        (\theta + c \theta^2)Z_e
         & \preceq \frac{3}{4c} Z_e\quad \text{(if $\theta\leq 1/(2c)$)}             \\
         & \preceq \frac{1}{c} r \dinf \I \quad \text{(by \cref{lem:Fact5.4} (i))} \\
         & \preceq \I. \quad \text{  (for $c\geq r \dinf$)}
    \end{align*}
    Then, we find
    \begin{align*}
        \E_{e\sim \nu_\pp}
        \left[
            e^{(\theta + c \theta^2)Z_e - c\theta^2 Y_e}
            \right]
         & \preceq
        \E_{e\sim \nu_\pp}
        \left[
            \I
            + (\theta + c \theta^2)Z_e
             -c \theta^2 Y_e
            + 2(\theta + c \theta^2)^2 Z_e^2
            + 2 (c\theta^2)^2 Y_e^2
            \right]
        \\
         & \preceq
        \I
        - c \theta^2 \E_{e\sim \nu_\pp}[Y_e]
        + \frac{9}{2} \theta^2 \E_{e\sim \nu_\pp}[Z_e^2]
        + \frac{1}{2}\theta^2 r \E_{e\sim \nu_\pp}[Y_e]
        \quad \text{(since $|\theta|\leq 1/(2c)$ and $Y_e\preceq r \I$)}
        \\
         & \preceq
        \I
        -  \theta^2
        \E_{e\sim \nu_\pp}[Y_e]
        \left(
        c-\frac{9}{2}r \dinf^2 -\frac{1}{2}r
        \right) \
        \text{(by \cref{lem:Fact5.4} (ii))}
        \\
         & \preceq \I,
    \end{align*}
    where we used that $\E_{e\sim \nu_\pp}[Z_e] = 0$ and where the last inequality is due to  $c\geq \frac{9}{2}r \dinf^2 + \frac{1}{2}r$.
\end{proof}
We are now ready to prove \cref{lem:GeneralizedLemma5.2}. This proof simply follows the argument of~\citet[Lemma 5.2]{KKS22} and is given below in order to make this paper self-contained.
The only changes are as follows: the upper bound $Y_i\preceq r\I$ and the condition on $c$ which is derived in \cref{lem:bound_on_c}\footnote{It is worth noticing that the proof technique of the Freedman inequalily for matrix martingales obtained by \citet[Theorem 1.2]{Oli09} uses the Golden-Thomson inequality in a similar way as \citet{KKS22} and in the proof of \cref{lem:GeneralizedLemma5.2} below.}.

\begin{proof}[Proof of \cref{lem:GeneralizedLemma5.2}]
    We proceed by induction over the degree of homogeneity of the point process, using the chain rule.

    First, we show that the statement is true if $k=1$, i.e., a point process with almost surely one point.
    In that case, the left-hand side of \cref{e:push_expectation_inside_trace_exponential} reduces to
    \begin{align}
        \E_{\pp}\Tr \exp\left( H + \theta\sum_{i\in \pp} Y_i \right) = \E_{e\sim \nu_\pp} \Tr \exp\left( H + \theta Y_e \right). \label{eq:first_step_initialization}
    \end{align}
    We now add and subtract the quantity that appears at the right-hand side of \eqref{e:push_expectation_inside_trace_exponential} and write
    \[\theta Y_e = (\theta + c \theta^2)\E_{e'\sim \nu_\pp} Y_{e'} + (\theta + c \theta^2) Z_e - c\theta^2 Y_e,
    \]
    where $Z_e = Y_e - \E_{e'\sim \nu_\pp} Y_{e'}$.
    Identity \cref{eq:first_step_initialization} and the Golden-Thomson inequality\footnote{For $A$ and $B$ Hermitian matrices, $\Tr(\exp(A+B))\leq \Tr(\exp(A) \exp(B))$. To our knowledge, this technique is due to \citet{AW02}.} together yield
    \begin{align}
        \E_{e\sim \nu_\pp}\Tr\exp(H + \theta Y_e) &\leq \Tr\left[\exp(H + (\theta + c \theta^2)\E_{e'\sim \nu_\pp} Y_{e'}) \E_{e\sim \nu_\pp}\exp((\theta + c \theta^2) Z_e - c\theta^2 Y_e)\right]\nonumber\\
        & \leq \Tr\left[\exp(H + (\theta + c \theta^2)\E_{e'\sim \nu_\pp} Y_{e'}) \|\E_{e\sim \nu_\pp}\exp((\theta + c \theta^2) Z_e - c\theta^2 Y_e)\|_{\op}\right],\label{eq:second_step_initialization}
    \end{align}
    since $\exp(H + (\theta + c \theta^2)\E_{e'\sim \nu_\pp} Y_{e'})$ is positive semidefinite. By \cref{lem:bound_on_c}, if $c\geq r\dinf \vee  \frac{r}{2} \left( 9 \dinf^2 + 1\right)$, for all $\theta \in [-1/(2c), 1/(2c)]$, we have the following bound
    \[
        \|\E_{e\sim \nu_\pp}\exp((\theta + c \theta^2) Z_e - c\theta^2 Y_e)\|_{\op} \leq 1,
    \]
    which can be substituted in \cref{eq:second_step_initialization} to give the desired statement,
    \begin{align*}
        \E_{\pp}\Tr \exp\left( H + \theta\sum_{i\in \pp}  Y_i \right) & = \E_{e\sim \nu_\pp} \Tr \exp\left( H + \theta Y_e \right)\\
        &\leq \Tr\left(\exp(H + (\theta + c \theta^2)\E_{e'\sim \nu_\pp} Y_{e'})\right) \\
        &= \Tr\exp\left(H + (\theta + c \theta^2)\E_{\pp}\left[\sum_{i\in \pp} Y_i\right]\right).
    \end{align*}


    Now, let $l\geq 2$, and assume \cref{lem:GeneralizedLemma5.2} is true for $k=l-1$.
    Then, the chain rule and the induction hypothesis applied to the $(l-1)$-homogeneous point process $\pp_e$ yield
    \begin{align*}
    \E_{\pp}
        \Tr \exp\left( H + \theta\sum_{i\in \pp} Y_i \right) &= \E_{e\sim \nu_\pp}\E_{\pp_e} \Tr \exp\left( H + \theta Y_e + \theta\sum_{i\in \pp_e} Y_i \right) \\
        & \leq  \E_{e\sim \nu_\pp} \Tr \exp\left(
            H + \theta  Y_e +
            (\theta + c\theta^2)
            \E_{\pp_e}\left[\sum_{i\in \pp_e} Y_i\right] \right).
    \end{align*}
    Now we write the argument of the exponential as follows
    \begin{align*}
        \theta Y_e + (\theta + c\theta^2) \E_{\pp_e}\left[\sum_{i\in \pp_e} Y_i\right] &=  (\theta + c\theta^2) \E_{\pp_e}\left[Y_e + \sum_{i\in \pp_e} Y_i \right] -c \theta^2  Y_e \\
        &=  (\theta + c\theta^2) \E_{\pp}\left[\sum_{i\in \pp} Y_i \right] + (\theta + c\theta^2) Z_e -c \theta^2  Y_e,
    \end{align*}
    where $Z_e$ is defined in \cref{lem:Fact5.4}.
    Using the Golden-Thomson inequality again, we deduce
    \begin{align*}
        \E_{\pp}
            \Tr \exp\left( H + \theta\sum_{i\in \pp} Y_i \right)
            & \leq  \E_{e\sim \nu_\pp} \Tr \exp\left(H + (\theta + c\theta^2) \E_{\pp}\left[\sum_{i\in \pp} Y_i \right] + (\theta + c\theta^2) Z_e -c \theta^2  Y_e\right) \\
            &\leq  \Tr \left(\exp\left(H +(\theta + c\theta^2) \E_{\pp}\left[\sum_{i\in \pp} Y_i \right]\right)\E_{e\sim \nu_\pp}\exp\Big((\theta + c\theta^2) Z_e -c \theta^2  Y_e\Big)\right)\\
            &\leq  \Tr \exp\left(H +(\theta + c\theta^2) \E_{\pp}\left[\sum_{i\in \pp} Y_i \right]\right) \left\| \E_{e\sim \nu_\pp}\exp\Big((\theta + c\theta^2) Z_e -c \theta^2  Y_e\Big) \right\|_{\op}\\
            &\leq  \Tr \exp\left(H +(\theta + c\theta^2) \E_{\pp}\left[\sum_{i\in \pp} Y_i \right]\right),
        \end{align*}
        where the last bound is due to \cref{lem:bound_on_c}.
        This concludes the proof.

\end{proof}
\subsection{Intrinsic Matrix Chernoff bound for a DPP with fixed cardinality}\label{sec:Chernoff_bounds}

Along the lines of \citet{KKS22}, we now use Lemma~\ref{lem:GeneralizedLemma5.2} to obtain a deviation bound on the sum of matrices indexed by a $k$-homogeneous point process.

\begin{theorem}[Intrinsic Matrix Chernoff bound] \label{thm:Intrinsic_Matrix_Chernoff_bound_linfty}
    Let $\pp$ be a point process on $[m]$.
    Assume that $\pp$ is $k$-homogeneous and $\ell_\infty$-independent with parameter $\dinf$.
    Let $Y_1, \dots, Y_m$ be a collection of $d\times d$ Hermitian matrices such that $0\preceq Y_i\preceq r\I$  for all $i\in [m]$ and for some $r>0$.
    Assume $\E_{\pp}\left[\sum_{i\in \pp} Y_i\right] \preceq M$ for some Hermitian matrix $M$.
    Denote $\intdim(M) = \Tr(M)/\|M\|_\op$ and $\mumax = \|M\|_\op$.
    Let $c = r \dinf \vee  \frac{r}{2} \left( 9 \dinf^2 + 1\right)$.
    Then, for all $\varepsilon\in(0,1]$, it holds
    \begin{align*}
        \Pr
        \left[
        \left\|
        \sum_{i\in \pp} Y_i
        - \E_{\pp}\left[\sum_{i\in \pp} Y_i\right]
        \right\|_{\op}
        \geq \varepsilon \mumax
        \right]
        \leq
        2 \intdim(M)
        \left( 1+ \frac{12c^2}{\mumax^2 \varepsilon^4} \right)
        \exp\left(-\frac{\varepsilon^2 \mumax}{4c}\right).
    \end{align*}
\end{theorem}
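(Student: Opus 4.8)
The plan is to feed the one-step trace-exponential estimate of \cref{lem:GeneralizedLemma5.2} into the intrinsic-dimension version of the matrix Laplace transform method \citep{Tropp15}. Set $Z = \sum_{i\in\pp}Y_i - \E_{\pp}\!\left[\sum_{i\in\pp}Y_i\right]$, which is centered, $\E_{\pp}Z = 0$. Since $Z$ is Hermitian, $\|Z\|_{\op}\geq\varepsilon\mumax$ holds exactly when $\lambda_{\max}(Z)\geq\varepsilon\mumax$ or $\lambda_{\max}(-Z)\geq\varepsilon\mumax$, so a union bound reduces the task to a one-sided tail and accounts for the factor $2$. I will detail the upper tail; the lower tail is identical after replacing $\theta$ throughout by the admissible value $-\theta\in[-1/(2c),0)$.

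The crucial device is the function $\psi(x) = e^x - x - 1$, which is nonnegative, convex, vanishes at $0$, and is increasing on $[0,\infty)$. Nonnegativity gives $\Tr\psi(\theta Z)\geq\lambda_{\max}\!\big(\psi(\theta Z)\big)\geq\psi\!\big(\theta\lambda_{\max}(Z)\big)$, so for $\theta\in(0,1/(2c)]$ and $t>0$ the event $\{\lambda_{\max}(Z)\geq t\}$ is contained in $\{\Tr\psi(\theta Z)\geq\psi(\theta t)\}$; since $\psi(\theta t)>0$, Markov's inequality gives $\Pr[\lambda_{\max}(Z)\geq t]\leq \E_{\pp}\Tr\psi(\theta Z)/\psi(\theta t)$. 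Because $Z$ is centered, $\E_{\pp}\Tr\psi(\theta Z) = \E_{\pp}\Tr e^{\theta Z} - d$, and this cancellation of the ambient dimension $d$ is exactly what lets the intrinsic dimension survive into the final bound.

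I would then bound $\E_{\pp}\Tr e^{\theta Z}$. Writing $e^{\theta Z} = \exp\!\big(H + \theta\sum_{i\in\pp}Y_i\big)$ with $H = -\theta\,\E_{\pp}\!\left[\sum_{i\in\pp}Y_i\right]$ and invoking \cref{lem:GeneralizedLemma5.2} (applicable because $c = r\dinf\vee\frac{r}{2}(9\dinf^2+1)$, $\pp$ is $k$-homogeneous and $\ell_\infty$-independent with parameter $\dinf$, and $|\theta|\leq 1/(2c)$) gives $\E_{\pp}\Tr e^{\theta Z}\leq\Tr\exp\!\big(c\theta^2\,\E_{\pp}[\sum_{i\in\pp}Y_i]\big)$. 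Each $Y_i\succeq 0$ forces $\E_{\pp}[\sum_{i\in\pp}Y_i]\succeq 0$, hence $0\preceq\E_{\pp}[\sum_{i\in\pp}Y_i]\preceq M$ and, as a sum of two positive semidefinite matrices, $M\succeq 0$. Since $c\theta^2\geq 0$, monotonicity of the trace exponential gives $\E_{\pp}\Tr e^{\theta Z}\leq\Tr e^{c\theta^2 M}$, whence $\E_{\pp}\Tr\psi(\theta Z)\leq\Tr\big(e^{c\theta^2 M}-\I\big) = \sum_j\big(e^{c\theta^2\lambda_j(M)}-1\big)$. The convexity bound $e^{c\theta^2\lambda}-1\leq\frac{\lambda}{\mumax}(e^{c\theta^2\mumax}-1)$ for $\lambda\in[0,\mumax]$ collapses this to $\intdim(M)\,(e^{c\theta^2\mumax}-1)$, so that for every $\theta\in(0,1/(2c)]$, $\Pr[\lambda_{\max}(Z)\geq t]\leq \intdim(M)\,(e^{c\theta^2\mumax}-1)/(e^{\theta t}-\theta t-1)$.

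It remains to optimize. Take $t = \varepsilon\mumax$ and $\theta = \varepsilon/(2c)$; since $\varepsilon\in(0,1]$ this lies in $(0,1/(2c)]$, so the admissibility constraint never binds. With $a = \varepsilon^2\mumax/(4c) = c\theta^2\mumax$ one has $\theta t = 2a$, and the ratio becomes $(e^a-1)/(e^{2a}-2a-1)$. I would close with the elementary inequality $(e^a-1)/(e^{2a}-2a-1)\leq\big(1+\tfrac{3}{4a^2}\big)e^{-a}$ for $a>0$: clearing denominators, it is equivalent to $2a+1-e^a\leq\frac{3}{4a^2}(e^{2a}-2a-1)$, whose left-hand side is at most $2\ln 2-1<\tfrac32$ while its right-hand side is at least $\frac{3}{4a^2}\cdot 2a^2=\tfrac32$ because $e^{2a}-2a-1\geq 2a^2$. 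Substituting $a$ back turns $\tfrac{3}{4a^2}$ into $12c^2/(\mumax^2\varepsilon^4)$ and $e^{-a}$ into $\exp(-\varepsilon^2\mumax/(4c))$; adding the two one-sided tails gives the claim. The main obstacle, I expect, is the middle step — correctly isolating the role of centering, so that passing from $\Tr e^{\theta Z}$ to $\Tr\psi(\theta Z)$ eliminates the dependence on the ambient dimension $d$ without weakening the tail; once that is set up, the rest is a routine marriage of \cref{lem:GeneralizedLemma5.2}, monotonicity of the trace exponential, and a single-variable estimate.
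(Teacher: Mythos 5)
Your proof is correct and follows essentially the same route as the paper's: the matrix Laplace transform with $\psi(x)=e^x-x-1$, Lemma~\ref{lem:GeneralizedLemma5.2} applied with $H=-\theta\,\E_\pp[\sum_{i\in\pp}Y_i]$, the intrinsic-dimension reduction, and optimization at $\theta_\star=\varepsilon/(2c)$. The only cosmetic deviations are that you prove Tropp's intrinsic-dimension lemma inline via the convexity bound $e^{c\theta^2\lambda}-1\leq\tfrac{\lambda}{\mumax}(e^{c\theta^2\mumax}-1)$, and that you retain the $-1$ in the numerator and verify the matching elementary inequality $(e^a-1)/(e^{2a}-2a-1)\leq(1+\tfrac{3}{4a^2})e^{-a}$ directly instead of citing Tropp's Eq.\ (7.7.4); both choices are harmless and yield the same constants.
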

Here are the main differences w.r.t.~\citet{KKS22}.
First, in their assumptions, the authors of the latter paper suppose there exist two non-negative constants $\mu_1$ and $\mu_2$ such that $\mu_1 \I\preceq \E_{\pp}\left[\sum_{i\in \pp} Y_i\right] \preceq \mu_2 \I$, as well as assume $\|Y_i\|_\op\leq 1$ for all $i\in [m]$.
We relax both assumptions.
Second, we replace the matrix dimension $d$ by the intrinsic dimension of \cite{Tropp15}.
Third, the matrices here are Hermitian rather than symmetric.

\begin{proof}
The proof consists of four steps.
First, we apply the matrix Laplace transform bound of~\citet[Proposition 7.4.1]{Tropp15}; see also \citet{AW02}.
Let $U$ be a random Hermitian
matrix, let $\psi: \R \to \R_{\geq 0}$ be a nonnegative function that is nondecreasing on $[0, \infty)$, then for each $t \geq 0$, we have
\begin{equation}
    \label{eq:Markov}
    \Pr\left[\lmax(U) \geq t \right]
    \leq \frac{1}{\psi(t)} \E_\pp [\Tr \psi(U)].
\end{equation}
Let $\theta>0$, and apply the bound to the centered random variable
\begin{equation}
    U = \sum_{i\in \pp} Y_i  - \E_{\pp}\left[\sum_{i\in \pp} Y_i\right],
    \label{eq:U}
\end{equation}
with\footnote{Here we differ from \cite{KKS22} who take $\psi(t) = e^{\theta t}$.} $\psi(t) = e^{\theta t} -t\theta -1$.
It comes
\begin{align}
    \label{eq:right-hand side_1}
    \Pr\left[\lmax\left(\sum_{i\in \pp}  Y_i - \E_{\pp}\left[\sum_{i\in \pp} Y_i\right]\right) \geq t \right]
     & \leq
    \frac{1}{e^{\theta t} -t\theta -1}
    \E_{\pp}
    \Tr \left(
    \exp\left(\theta\left(\sum_{i\in \pp} Y_i - \E_{\pp}\left[\sum_{i\in \pp} Y_i\right] \right)\right)
    -\I
    \right).
\end{align}

The second step of the proof consists in using $\ell_\infty$-independence to bound the right-hand side of \eqref{eq:right-hand side_1}, following the same strategy as~\citet[Proof of Thm 5.1]{KKS22}.
Recall that \cref{lem:GeneralizedLemma5.2} yields
\begin{equation}
    \label{eq:KeyKKS22}
    \E_{\pp}
    \Tr \exp\left( H + \theta'\sum_{i\in \pp}  Y_i \right)
    \leq
    \Tr \exp
    \left(
    H
    + (\theta' + c\theta'^2)
    \E_{\pp}\left[\sum_{i\in \pp} Y_i\right]
    \right),
\end{equation}
for any Hermitian $H$, and for all $\theta' \in [\frac{-1}{2c}, \frac{1}{2c}]$, where $c =r \dinf \vee  \frac{r}{2} \left( 9 \dinf^2  + 1\right)$.
Taking $\theta' = \theta$ and $H = -\theta\E_{\pp}\left[\sum_{i\in \pp} Y_i\right]$, \eqref{eq:right-hand side_1} yields
\begin{align}
    \Pr\left[
        \lmax\left(
        \sum_{i\in \pp} Y_i
        - \E_{\pp}\left[\sum_{i\in \pp}  Y_i\right]
        \right)
        \geq t
        \right]
    &\leq \frac{1}{e^{\theta t} -t\theta -1}
    \Tr \left( \exp \left( -c\theta^2 H \right)-\I\right)\\
    &\leq \frac{1}{e^{\theta t} -t\theta -1}
    \Tr \left( \exp \left( c\theta^2 M \right)-\I\right),
    \label{eq:use_of_M_1}
\end{align}
for all $\theta \in (0, \frac{1}{2c}]$.

The third step is in the introduction of the intrinsic dimension, to further bound the right-hand side of \eqref{eq:use_of_M_1}.
Let $\varphi$ be a convex function on the interval $[0, \infty)$, and assume that $\varphi(0) =0$.
For any positive-semidefinite matrix $\bmA$, \citep[Lemma 7.5.1]{Tropp15} asserts that $\Tr \varphi( \bmA)\leq \intdim (\bmA) \varphi(\|\bmA\|)$.
Take here $\varphi(a) = e^a -1$, and define $g(\theta) =  c\theta^2$.
We obtain
\[
    \Tr \Big( \exp \left(g(\theta) M \right)-\I\Big)
    \leq \intdim(M) \varphi\Big(g(\theta) \|M\|\Big)
    = \intdim(M) \varphi\Big(g(\theta) \mumax\Big),
\]
where we used that $\intdim(\alpha \bmA) = \intdim(\bmA)$ for all $\alpha\neq 0$ and $\bmA\succeq 0$.
By a direct substitution, we obtain
\begin{align}
    \label{eq:generic_Chernoff}
    \Pr\left[
        \lmax\left(
        \sum_{i\in \pp} Y_i - \E_\pp\left[\sum_{i\in \pp} Y_i\right]
        \right)
        \geq t
        \right]
    \leq \intdim(M)
    \frac{e^{g(\theta) \mumax}}{e^{\theta t} -t\theta-1},
\end{align}
for all $\theta\in (0,\frac{1}{2c}]$. 

The fourth step consists in simplifying the obtained bound.
Following~\citet[Eq (7.7.4)]{Tropp15}, we have
    \begin{equation}
        \label{e:tool1}
        \frac{e^{g(\theta) \mumax}}{e^{\theta t} -t\theta-1}
        = \frac{e^{\theta t}}{e^{\theta t} -t\theta-1}
        e^{- \theta t + g(\theta) \mumax }
        \leq \left(1+ \frac{3}{\theta^2 t^2}\right)
        e^{- \theta t +g(\theta) \mumax },
    \end{equation}
    for any positive $\theta$.
    Now, we let $t =\mumax \varepsilon$ with $\varepsilon \in(0,1]$.
    We further note that the argument of the exponential on the right-hand side of \eqref{e:tool1} is minimized by $\theta_\star = \frac{\varepsilon}{2c}$, so that $0<\theta_\star\leq \frac{1}{2c}$.
    Thus, it holds that
    \begin{align*}
        \label{eq:generic_Chernoff}
        \Pr\left[
            \lmax\left(
            \sum_{i\in \pp} Y_i - \E_{\pp}\left[\sum_{i\in \pp} Y_i\right]
            \right)
            \geq \varepsilon \mumax
            \right]
        \leq \intdim(M)
        \left( 1+ \frac{12 c^2}{\mumax^2 \varepsilon^4} \right)
        \exp\left(-\frac{\varepsilon^2 \mumax}{4c}\right).
    \end{align*}
    The same deviation bound for $\lmax\left(
        \E_{\pp}\left[\sum_{i\in \pp} Y_i\right] - \sum_{i\in \pp} Y_i\right)$ can be obtained along the same four steps, thus proving the statement by a union bound.

\end{proof}

\subsection{Going beyond fixed cardinality by a dilation argument}
\label{sec:main_proof}

We now remove the $k$-homogeneity assumption in \cref{thm:Intrinsic_Matrix_Chernoff_bound_linfty} using a homogenization technique that is proper to DPPs.
\begin{proof}[Proof of \cref{thm:Chernoff_StronglyRayleigh}]
    While $\pp$ is any DPP on $[m]$, we follow \citet{Lyons2003} in considering the DPP $\pp'$ on $[2m]$ with correlation kernel
    \[
        K^\prime =
        \begin{pmatrix}
            K & \sqrt{1 - K} \sqrt{K}\\
            \sqrt{1 - K} \sqrt{K} & 1 - K
        \end{pmatrix}.
    \]
    By construction, ${K^\prime}^2= K'$, and $K'$ has rank $m$, therefore $\pp^\prime$ is an $m$-homogeneous DPP, see, e.g., \citep{KT12}.

    Now define the matrices $Y_i = 0$ for all $ i \in \{m+1, \dots, 2m\}$.
    Elementary properties of DPPs imply that $\pp'\cap[m]$ has distribution $\text{DPP}(K)$; see again, for instance, \citep{KT12}.
    In particular, the law of
    \[
        \sum_{i\in \pp^\prime} Y_i = \sum_{i\in \pp^\prime} 1(i\in [m]) Y_i,
    \]
    is the same as the law of $\sum_{i\in \pp} Y_i$, and it is enough to prove a deviation bound for $\sum_{i\in \pp^\prime} Y_i$.
    We simply check that \cref{thm:Intrinsic_Matrix_Chernoff_bound_linfty} applies.
    First,
    \[
        \E_{\pp^\prime}\left[\sum_{i\in \pp^\prime} Y_i\right] = \E_{\pp}\left[\sum_{i\in \pp} Y_i\right] \preceq M.
    \]
    Additionally, since $\pp^\prime$ is a DPP, it is $\ell_\infty$-independent with parameter $\dinf = 2$; see \cref{ex:dp_dinf}.
    \cref{thm:Intrinsic_Matrix_Chernoff_bound_linfty} thus applies, with the constant  $c = r \dinf \vee  \frac{r}{2} \left( 9 \dinf^2 + 1\right)$ where $\dinf = 2$.
    For reference, the numerical values of the absolute constants appearing in \cref{thm:Chernoff_StronglyRayleigh} are $c_1 = 12 c^2/r^2 = 3 \cdot 37^2$ and $c_2 = \frac{r}{4c} =  \frac{1}{2\cdot 37}$.
\end{proof}

\section{Sampling multi-type spanning forests with random walks}
We promote sampling MTSFs with the help of an algorithm based on a random walk called $\cyclepopping{}$ rather than exact algebraic DPP samplers mentioned in \cref{sec:related_work}.
\subsection{$\cyclepopping{}$  for weakly inconsistent graphs} 
\label{sec:sampling_a_multitype_spanning_forest}

To make our statistical guarantees of Section~\ref{sec:sparsification_of_the_regularized_magnetic_laplacian} practical, we need a sampling algorithm for the DPP with kernel \eqref{eq:K}. 
In the case of \eqref{eq:K}, we show here how to leverage the structure of the probability distribution \cref{eq:proba_MTSF} in order to avoid any eigendecomposition, provided that the connection graph is not too inconsistent.
The algorithm described in this section is an {interesting particular case} of the work of \citet{Wilson96} and \citet{kassel2017}.
The main idea is to run a random walk on the nodes of the graph, with an additional absorbing node, popping cycles with some probability as they appear.
After post-processing, one obtains an MTSF with the desired distribution.
We first discuss a seemingly necessary assumption on the connection, and then the different components of the algorithm, before concluding on its running time.
\paragraph{Limitation to weakly inconsistent cycles.}
\citet{kassel2017} proposed a variant of the celebrated \emph{cycle-popping} algorithm of \cite{Wilson96} for efficiently sampling CRSFs, i.e., joint edge samples w.r.t.\ the DPP with kernel \cref{eq:K} with $q=0$.
In this section, we further adapt this idea to sample MTSFs, i.e., $q\geq 0$ in \cref{eq:K}.
To our knowledge, this cycle-popping algorithm can only be used to sample from \cref{eq:proba_MTSF} if
\begin{equation}
    \label{eq:cond_weakly_inconsistent}
    \cos \theta(\eta) \in [0,1],
    \text{ for all cycles } \eta\subset \mathcal{E},
\end{equation}
in which case we refer to the cycles as {\emph{weakly inconsistent}}.
It is yet unclear how to modify this algorithm if there is a cycle $\eta$ such that $1 - \cos \theta(\eta)>1${, in which case we say that it is a \emph{strongly inconsistent} cycle}.
To overcome this limitation, we shall use self-normalized importance sampling with a modification of \cref{eq:proba_MTSF} in Section~\ref{sec:montecarlo_with_selfnormalized_importance_sampling}.
For the rest of this section, we assume that all cycles in the graph are weakly inconsistent.

\paragraph{Oriented MTSFs.}
The algorithm to be described is based on a random walk which naturally generates oriented subgraphs of the original graph.
These oriented subgraphs will be denoted with a superscript $o$.
Examples are provided in \cref{fig:graphs}.

An \emph{oriented rooted tree} is a tree whose edges are oriented towards its root.
An \emph{oriented cycle-rooted tree} is a cycle-rooted tree constituted of an oriented cycle whose edges point in the same direction, whereas all other edges point towards the oriented cycle.
Finally, an \emph{oriented MTSF} contains only \emph{oriented rooted trees} and \emph{oriented cycle-rooted trees}.
We consider the problem of sampling an oriented MTSF with probability
\begin{equation}
    \label{eq:proba_oriented_MTSF}
    \Pr(\calC^{o})
    \propto
    q^{|\rho(\calC^{o})|}
    \prod_{\text{cycles } \eta^{o} \in \calC^{o}}
    \alpha(\eta^{o}),
\end{equation}
where $q>0$, and where $\alpha$ takes values in $[0,1]$ and is invariant under orientation flip.
In particular, taking
\begin{equation}
    \label{eq:def_alpha}
    \alpha(\eta^{o})
    = \frac{1}{2}\Big(2 - 2\cos \theta(\eta^{o})\Big),
\end{equation}
sampling an oriented MTSF with probability \cref{eq:proba_oriented_MTSF}, and forgetting the edge orientation, we obtain an MTSF distributed according to \cref{eq:proba_MTSF}.
In particular, the factor $1/2$ in \eqref{eq:def_alpha} compensates for the fact that there are exactly two oriented cycles for each (unoriented) cycle in an MTSF.

\paragraph{An auxiliary oriented MTSF with a root.}
Following the original approach of \cite{Wilson96} for spanning trees, we now associate to each oriented MTSF an \emph{oriented MTSF with a root}.
Specifically, we first augment $\mathcal{G}$ with an extra node $r$ ($r$ stands for ``root'', as shall become clear), to which every node is connected with an edge with weight $q$, resulting in the graph $\mathcal{G}_r$.
This can be seen as adding a Dirichlet boundary condition at node $r$~\citep[Section 2.2]{Poncelet_2018}.
Note that, if $q  = 0$, node $r$ will play no role.

An \emph{oriented MTSF with root} $r$ is defined as the disjoint union of a single oriented tree rooted at $r$ and oriented cycle rooted trees.
The algorithm to be described below samples an oriented MTSF with root $r$.
\begin{figure}[t]
    \begin{subfigure}[b]{0.25\textwidth}
        \centering
        \begin{tikzpicture}[%
                scale = 0.5,
                every node/.style={draw,fill=gray!10,circle,minimum size=1pt}]
            \node (one)  at (0,1*1.2){ };
            \node (two)  at (1*1.2,1*1.2){ };
            \node (three)  at (2*1.2,1*1.2){ };
            \node (four)  at (3*1.2,1*1.2){ };

            \node (five)  at (0*1.2,0*1.2){ };
            \node (six)  at (1*1.2,0*1.2){ };
            \node (seven)  at (2*1.2,0){ };
            \node (eight)  at (3*1.2,0){ };

            \node (nine)  at (0,-1*1.2){ };
            \node (ten)  at (1*1.2,-1*1.2){ };
            \node (eleven)  at (2*1.2,-1*1.2){ };
            \node (twelve)  at (3*1.2,-1*1.2){ };

            \draw[<-,blue, thick] (one) -- (two);
            \draw[<-,blue, thick] (five) -- (one);
            \draw[<-,blue, thick] (six) -- (five);
            \draw[<-,blue, thick] (two) -- (six);

            \draw[->,blue, thick] (three) -- (two);
            \draw[->,blue, thick] (four) -- (eight);

            \draw[->,blue, thick] (nine) -- (five);
            \draw[->,blue, thick] (ten) -- (nine);
            \draw[->,blue, thick] (seven) -- (eight);
            \draw[<-,blue, thick] (eleven) -- (twelve);
            \draw[<-,blue, thick] (twelve) -- (eight);
            \draw[<-,blue, thick] (seven) -- (eleven);
        \end{tikzpicture}
        \caption{Oriented CRSF.}
    \end{subfigure}
    \hfill
    \begin{subfigure}[b]{0.25\textwidth}
        \centering
        \begin{tikzpicture}[%
                scale = 0.5,
                every node/.style={draw,fill=gray!10,circle,minimum size=1pt}]
            \node (one)  at (0,1*1.2){ };
            \node (two)  at (1*1.2,1*1.2){ };
            \node (three)  at (2*1.2,1*1.2){ };
            \node (four)  at (3*1.2,1*1.2){ };

            \node (five)  at (0*1.2,0*1.2){ };
            \node (six)  at (1*1.2,0*1.2){ };
            \node (seven)  at (2*1.2,0){ };
            \node (eight)  at (3*1.2,0){ };

            \node (nine)  at (0,-1*1.2){ };
            \node (ten)  at (1*1.2,-1*1.2){ };
            \node (eleven)  at (2*1.2,-1*1.2){ };
            \node[blue] (twelve)  at (3*1.2,-1*1.2){};

            \draw[<-,blue, thick] (five) -- (one);
            \draw[<-,blue, thick] (six) -- (five);
            \draw[->,blue, thick] (two) -- (six);

            \draw[->,blue, thick] (three) -- (two);
            \draw[->,blue, thick] (four) -- (eight);
            \draw[->,blue, thick] (seven) -- (eight);

            \draw[->,blue, thick] (nine) -- (five);
            \draw[->,blue, thick] (ten) -- (nine);
            \draw[->,blue, thick] (eleven) -- (twelve);
            \draw[<-,blue, thick] (twelve) -- (eight);
            \draw[<-,blue, thick] (six) -- (ten);
        \end{tikzpicture}
        \caption{Oriented MTSF.}
    \end{subfigure}
    \hfill
    \begin{subfigure}[b]{0.25\textwidth}
        \centering
        \begin{tikzpicture}[%
                scale = 0.5,
                every node/.style={draw,fill=gray!10,circle,minimum size=1pt}]
            \node[blue] (one)  at (0,1*1.2){ };
            \node (two)  at (1*1.2,1*1.2){ };
            \node (three)  at (2*1.2,1*1.2){ };
            \node (four)  at (3*1.2,1*1.2){ };

            \node (five)  at (0*1.2,0*1.2){ };
            \node (six)  at (1*1.2,0*1.2){ };
            \node[blue] (seven)  at (2*1.2,0){ };
            \node (eight)  at (3*1.2,0){ };

            \node (nine)  at (0,-1*1.2){ };
            \node (ten)  at (1*1.2,-1*1.2){ };
            \node (eleven)  at (2*1.2,-1*1.2){ };
            \node (twelve)  at (3*1.2,-1*1.2){ };

            \draw[->,blue, thick] (five) -- (one);
            \draw[->,blue, thick] (six) -- (five);
            \draw[->,blue, thick] (two) -- (six);
            \draw[->,blue, thick] (three) -- (seven);

            \draw[->,blue, thick] (four) -- (eight);
            \draw[<-,blue, thick] (seven) -- (eight);

            \draw[<-,blue, thick] (ten) -- (nine);
            \draw[->,blue, thick] (eleven) -- (twelve);
            \draw[->,blue, thick] (twelve) -- (eight);
            \draw[<-,blue, thick] (six) -- (ten);
        \end{tikzpicture}
        \caption{Oriented SF.}
    \end{subfigure}
    \caption{
    Different oriented spanning subgraphs of a grid graph: an oriented Cycle-Rooted Spanning Forest (CRSF), an oriented Multi-Type Spanning Forest (MTSF) and an oriented Spanning Forest (SF).
        The root of an oriented tree (in blue) is the node with no out-going edge.}
    \label{fig:graphs}
\end{figure}
\paragraph{Random successor.}
Denote by $d(v)$ the number of neighbors of $v$ in $\mathcal{G}$.
The sampling algorithm relies on a random walk on the nodes of $\mathcal{G}_r$. Drawing from the Markov kernel, a procedure denoted by \textsc{RandomSuccessor}$(v)$, goes as follows: at node $v$, move to the root $r$ with probability $q/(q+d(v))$ or move to another neighbour of $v$ with probability $1/(q+d(v))$.

\paragraph{$\cyclepopping{}$.}
Let $\calC^{o}$ be an oriented subgraph and $w\notin\calC^{o}$ be a node of $\mathcal{G}_r$.
We first define the procedure $P(w,\calC^{o})$, closely following \citet[Section 2]{kassel2017}.
Starting from $w$, perform a random walk thanks to \textsc{RandomSuccessor}, until it reaches its first self-intersection $v$, hits $\calC^{o}$, or reaches the absorbing node $r$.
This trajectory is an oriented branch.
Then,
\begin{itemize}
    \item if $v\in \calC^{o}$ or $v = r$ (the root), add this oriented branch to $\calC^{o}$.
    \item if $v$ is a self-intersection, then the branch contains an oriented cycle $\eta^{o}$. Draw a Bernoulli random variable, independent from the rest, with success probability $\alpha(\eta^{o})\in [0,1]$.
          \begin{itemize}
              \item If the coin toss is successful, add the branch with the oriented cycle to $\calC^{o}$ and stop.
              \item else \emph{pop} the cycle (i.e., remove its edges from the branch) and continue \textsc{RandomSuccessor} until self-intersection, intersection with $\calC^{o}$, or until it reaches the absorbing node $r$ again.
          \end{itemize}
\end{itemize}
The entire algorithm, denoted by $\cyclepopping{}$,
goes as follows:
initialize $\calC^{o}$ as empty and take any $w\in\mathcal{G}$ as starting node to execute $P(w,\calC^{o})$.
If $\calC^{o}$ does not span all the nodes of $\mathcal{G}_r$, repeat $P(w,\calC^{o})$ starting from any $w\notin \calC^{o}$.
The procedure stops when $\calC^{o}$ spans all the nodes.

\paragraph{Post-processing.}
The cycle-popping procedure yields an oriented MTSF with root $r$. As a simple postprocessing step, we obtain an \emph{oriented MTSF} of $\mathcal{G}$ by overriding the extra node $r$ and the links that point towards it.
Each of the nodes formerly connected to $r$ are then the roots of the remaining \emph{oriented MTSF} of $\mathcal{G}$.
In particular, the number of links pointing to $r$ in the underlying oriented MTSF with a root becomes the number of roots of the associated oriented MTSF.

\begin{proposition}[Correctness]
    \label{prop:MTSF_correctness}
    Under \cref{assump:non-singularity} and if \cref{eq:cond_weakly_inconsistent} holds, the algorithm terminates and its output is distributed according to \cref{eq:proba_oriented_MTSF}.
\end{proposition}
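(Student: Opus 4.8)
The plan is to adapt Wilson's \emph{cycle-popping} analysis \citep{Wilson96} to the connection setting, in the spirit of \citet{Kassel15} and \citet{kassel2017}. The first step is to turn the random execution of $\cyclepopping{}$ into a \emph{deterministic} function of static randomness, by attaching \emph{stacks} to the graph $\mathcal{G}_r$: to each node $v$ an infinite i.i.d.\ sequence of ``successor arrows'', each drawn as \textsc{RandomSuccessor}$(v)$ (to the root $r$ with probability $q/(q+d(v))$, to a uniformly chosen neighbour otherwise); and to each \emph{coloured cycle} --- a cycle $\eta$ of $\mathcal{G}$ together with the stack levels at its nodes that realize it --- an independent Bernoulli variable of success probability $\alpha(\eta^{o})$. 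The latter is well defined precisely because \cref{eq:cond_weakly_inconsistent} and \cref{eq:def_alpha} force $\alpha(\eta^{o}) = 1 - \cos\theta(\eta^{o}) \in [0,1]$, and each coloured cycle is examined at most once since popping only moves stack levels upward. With stacks and coins fixed, the procedure $P(w,\calC^{o})$ becomes deterministic: following the top arrows from $w$ one either hits $\calC^{o}$ or $r$ (the branch is then added), or closes an oriented cycle $\eta^{o}$, and then the coloured cycle's coin decides whether $\eta^{o}$ (with its feeding branch) is \emph{accepted}, i.e.\ frozen into $\calC^{o}$, or \emph{popped}, i.e.\ its top arrows are deleted to expose the next level. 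A routine verification shows this stack dynamics reproduces the trajectory-based description of \cref{sec:sampling_a_multitype_spanning_forest}.

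The heart of the proof is an order-independence lemma extending Wilson's: regardless of the order in which one chooses which currently-visible cycle to treat, (i)~the collection of coloured cycles eventually popped, (ii)~the collection eventually accepted, and therefore (iii)~the resulting oriented MTSF with root $r$, are the same --- and, in particular, the algorithm halts for one order iff it halts for all of them. This follows by the usual exchange argument: if a coloured cycle $C$ is visible and poppable at some stage, then treating a \emph{different} visible cycle leaves the arrows composing $C$ untouched, hence leaves $C$ visible and poppable with the same colour, so $C$ must eventually be popped; symmetrically a coloured cycle that is ``ready to be accepted'' stays so. I expect the main obstacle to be the bookkeeping forced by the \emph{accepted} cycles, which is absent from the spanning-tree case: once a cycle is frozen it permanently settles its nodes, so one must check that ``being screened off by an already accepted cycle'' is itself order-independent; I would follow \citet[Section~2]{kassel2017} and \citet{Kassel15} for this combinatorial part. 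A short induction on the number of popping steps then shows that any two terminating runs produce identical output.

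Next, termination. Under \cref{assump:non-singularity}, pick the order that mimics a single killed random walk on $\mathcal{G}_r$: if $q>0$ the walk is absorbed at $r$; if $q=0$ the connection is non-trivial, so on any cycle basis at least one cycle $\eta$ has $\cos\theta(\eta)<1$, i.e.\ is rejected with probability $1 - \alpha(\eta^{o}) = \cos\theta(\eta^{o}) < 1$. In both cases the sub-Markovian walk (killed at $r$ and at accepted cycles) has spectral radius strictly less than $1$, so the expected number of arrows ever popped or kept is finite --- up to the usual loop-erasure identity it is controlled by $\Tr\big(\Delta(\Delta+q\I_n)^{-1}\big)$, the finite expected edge count recorded in \cref{sec:dpp_sampling_of_edges__multitype_spanning_forests}. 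Hence $\cyclepopping{}$ halts almost surely with an oriented MTSF with root $r$, and stripping node $r$ yields an oriented MTSF of $\mathcal{G}$ as in the post-processing step.

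Finally, I compute the output law. Fix a target oriented MTSF $\calC^{o}$. By order-independence, $\{\text{output}=\calC^{o}\}$ decomposes as a disjoint union over collections $\mathcal{P}$ of coloured cycles popped before stabilization, and each corresponding event --- ``the stacks pop exactly $\mathcal{P}$, then accept the cycles of $\calC^{o}$ and show $\calC^{o}$ on top'' --- has probability equal to the product of the transition probabilities of the arrows used by $\mathcal{P}$, the matching acceptance/rejection coin weights, and the transition probabilities of the arrows of $\calC^{o}$. The last factor equals $\big(\prod_{v}(q+d(v))^{-1}\big)\, q^{|\rho(\calC^{o})|}\, \prod_{\eta^{o}\in\calC^{o}}\alpha(\eta^{o})$: each non-root of $\calC^{o}$ supplies one neighbour-arrow of weight $1/(q+d(v))$, each root $r'$ supplies the weight-$q/(q+d(r'))$ arrow to $r$, and each accepted cycle supplies its coin $\alpha(\eta^{o})$. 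Summing over $\mathcal{P}$ produces a constant independent of $\calC^{o}$ --- this is Wilson's computation, carried out for coloured cycles with acceptance coins --- so $\Pr(\calC^{o}) \propto q^{|\rho(\calC^{o})|}\prod_{\eta^{o}\in\calC^{o}}\alpha(\eta^{o})$, which is \cref{eq:proba_oriented_MTSF}; the normalizing constant is $\det(\Delta+q\I_n)$ by \cref{eq:Normalization}, the factor $1/2$ in \cref{eq:def_alpha} accounting for the two orientations of each cycle. This completes the proof.
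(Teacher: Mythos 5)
The paper does not actually give a proof of this proposition: immediately after the statement it defers to the companion paper \citep{FaBa24}, and explicitly notes that the argument there is ``in the spirit of the work of \cite{Marchal99} on uniform spanning trees, which \emph{differs} from the stacks-of-cards approach of \cite{Wilson96,Kassel15}.'' Your sketch is precisely the stacks-of-cards route that the authors chose \emph{not} to take here. So you have produced a correct outline, but via a genuinely different path than the one the paper points to.

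On the substance of the comparison: the stacks approach you use (static randomness plus an order-independence lemma) gives a clean combinatorial bijection and a transparent derivation of the output law, at the cost of the exchange argument becoming delicate once \emph{accepted} cycles enter the picture --- you rightly flag this as the main obstacle and defer the bookkeeping to \citet[Section~2]{kassel2017} and \citet{Kassel15}. The Marchal-style route, by contrast, treats \cyclepopping{} as a Markov chain on partial configurations and computes the law of the stopping time directly; this is why the companion paper can derive not just the validity of the output distribution but also the full law of the running time $T$, which the paper then uses to state $\mathbb{E}[T]=\Tr\Delta_{\mathrm{N}}^{-1}$. That by-product is considerably harder to extract from the stacks picture.

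As a sketch, your proposal is reasonable but leaves the two hardest steps at the level of citation: (i) the order-independence lemma in the presence of frozen/accepted cycles, and (ii) the claim that ``summing over $\mathcal{P}$ produces a constant independent of $\calC^{o}$.'' Step (ii) in particular deserves more care than ``this is Wilson's computation'': in the spanning-tree case the constant arises because the set of poppable coloured cycles above the output is determined by the stacks alone, independently of which tree surfaces; here the popped set carries both arrow weights \emph{and} reject coins $\cos\theta(\eta^o)$, and the accepted cycles are part of $\calC^o$, so one has to argue that the generating function over popped configurations with reject weights is target-independent. This is true and is done in Kassel's lectures, but simply asserting it is where a referee would push back. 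None of this makes your plan wrong; it just means the proof is not complete as written, which is expected for a proof that the paper itself outsources to a companion manuscript.
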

{A proof of \cref{prop:MTSF_correctness} is given in \cite{FaBa24} in the spirit of the work of \cite{Marchal99} on uniform spanning trees, which differs from the stacks-of-cards approach of \citep{Wilson96,Kassel15}.}
Note that, if the cycle weights are given by \cref{eq:def_alpha}, the output of the algorithm is an oriented MTSF without $2$-cycles since the latter are popped with probability $1$, and therefore, by forgetting the edge orientations we obtain an MTSF with root $r$. 
If $q=0$, the output is simply a CRSF.

\begin{remark}
    Again, under \cref{assump:non-singularity} and if \cref{eq:cond_weakly_inconsistent} holds, the expected number of steps $T$ to finish MTSF sampling with cycle-popping { is the trace of the inverse of the \emph{normalized} Laplacian, i.e.,
    \begin{equation}
        \mathbb{E}[T] = \Tr\Delta_{\mathrm{N}}^{-1}. \label{eq:ET}
    \end{equation}
    where we defined $\Delta_{\mathrm{N}} = (D+q\mathbb{I})^{-1}(\Delta + q \mathbb{I})$, and $D$ is the diagonal matrix of the degrees with $\deg(v) = \sum_{u} w_{uv}$ and for $q\geq 0$.
For the proof of \eqref{eq:ET}, we refer to the companion paper \citep{FaBa24}, where we actually derive the probability law of $T$.}
A few comments are in order.
\begin{itemize}
    \item {When $q=0$, interestingly, the \emph{normalized} magnetic Laplacian is the relevant object appearing the corresponding Cheeger inequality \citep{Bandeira}, which related the frustration of the connection to the least eigenvalue of $\Delta_{\mathrm{N}}$. 
    Intuitively, we expect the average sampling time \cref{eq:ET} to be large when the least eigenvalue of the normalized magnetic Laplacian $D^{-1}\Delta$ is small, namely when there are only very weak inconsistencies in the connection graph.
    As we mentioned already, in practice, we observe that $\cyclepopping{}$ is faster than HKPV algorithm \citep{HKPV06}; see the simulations of \cref{sec:comp_time_HKPV_cyclepopping}.}
    \item {Another case of interest is when the connection is consistent and $q>0$. 
    In this case, the magnetic Laplacian is unitarily equivalent to the combinatorial Laplacian $\Lambda$ and the expected number of steps reads
    \[
        \mathbb{E}[T] = \Tr\left(D (\Lambda + q \I)^{-1}\right) + \Tr\left(q (\Lambda + q \I)^{-1}\right),
    \]
    where the second term is the expected number of steps to the roots of the spanning forest -- or the expected number of trees -- which is always bounded from above by the number of nodes $n$ \citep{AvGaud2018,PABGAA2020}.
    The first term in the above formula is clearly a decreasing function of $q$.
    An upper bound on $\mathbb{E}[T]$ is $2m/q + n$; see \cite[Section A.1]{thesis_pilavci}.
    As a comparison, the expected number of steps to sample determinantal STs rooted in $r$ thanks to Wilson's algorithms is $\Tr(( D^{-1}\Lambda)_{\hat{r}})^{-1}$, where the subscript $\cdot_{\hat{r}}$ indicates that the $r$-th row and the $r$-th column are removed.
    Thus, we expect the practical sampling time with $\cyclepopping{}$ to be a decreasing function of $q$, as illustrated on a toy graph in \cref{sec:emp_sampling_time} where $\cyclepopping{}$ becomes faster than Wilson's algorithm for sampling STs if $q$ is large enough.}
\end{itemize}

\end{remark}




\subsection{{Strongly inconsistent} graphs with importance sampling} 
\label{sec:montecarlo_with_selfnormalized_importance_sampling}

If there are strongly inconsistent cycles in the connection graph, i.e., if \eqref{eq:cond_weakly_inconsistent} does not hold, then $\cyclepopping{}$  cannot be used to sample from
$
    p(\calC) 
    \propto
    q^{|\rho(\calC)|}
    \prod_{\text{cycles } \eta \in \calC}
    2(1 - \cos \theta(\eta)),
$
since some cycle weights are larger then $1$.
However, we can cap the cycle weights at $1$ and use the cycle-popping algorithm of Section~\ref{sec:sampling_a_multitype_spanning_forest} to sample from the auxiliary distribution
\begin{equation}
    \label{eq:proba_capped}
    \pIS(\calC)
    \propto
    q^{|\rho(\calC)|}
    \prod_{\text{cycles } \eta \in \calC}
    2\big\{1 \wedge \big(1 - \cos \theta(\eta)\big)\big\}.
\end{equation}
Note that \cref{eq:proba_capped} is not necessarily determinantal. We can sample from \cref{eq:proba_capped} with the help of \cyclepopping{}, indeed the proof of correctness remains valid if some cycles have weight $1$.
The effect of capping the cycle weights to unity will be accounted for by considering the reweighted sparsifier
\begin{equation}
    \label{eq:MC_sum_self-normalized}
    \widetilde{\Delta}_{t}^{(\mathrm{IS})}
    = \frac{1}{\sum_{s=1}^t w(\calC^\prime_s)} \sum_{\ell=1}^t w(\calC^\prime_\ell)
    \widetilde{\Delta}(\calC^\prime_\ell), \text{ with  } \calC^\prime_\ell \stackrel{\text{i.i.d.}}{\sim} \pIS \text{ for } 1\leq \ell \leq t.
\end{equation}
We defined the importance weight
\begin{equation}
    w(\calC) 
    = a \frac{p(\calC)}{\pIS(\calC)}
    = a \prod_{\text{cycles } \eta\in \calC}
    \Big\{
    1 \vee \Big(1 - \cos \theta(\eta)\Big)
    \Big\},
    \label{eq:importance_weight}
\end{equation}
where the normalization constant $a$ is there for $w$ to sum to $1$.
Note that the above {calculation} relies on the simple identity $\{1 \wedge x\} \{1 \vee x\} = x$ for all $x\geq 0$, and that we shall not need to know the normalization constant for the weights.

\begin{proposition}\label{prop:asymp_confidence}
    Let $p\in (0,1)$ and let \cref{assump:non-singularity} hold.
    Let $\calC^\prime_1, \calC^\prime_2,\dots, $ be i.i.d.\ random MTSFs with the capped distribution \eqref{eq:proba_capped}, and consider the sequence of matrices $(\widetilde{\Delta}_{t}^{(\mathrm{IS})}{ + q \I_n})_{t\geq 1}$ defined by \eqref{eq:MC_sum_self-normalized}.
    Finally, let $z >0$ be such that $\Pr(\|\bm{u}\|_2 \leq z) = p$ for $\bm{u}\sim \mathcal{N}(0,\I_{n^2})$.
    Then, as $t\rightarrow \infty$,
    \[
       \Pr \left[ -z (\Delta + q \I_n)\preceq (\widetilde{\Delta}_{t}^{(\mathrm{IS})} {+ q \I_n}) - (\Delta {+ q \I_n}) \preceq z (\Delta + q \I_n) \right] \rightarrow 1-p\,.
    \]
\end{proposition}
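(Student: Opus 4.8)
The plan is to establish a central limit theorem for the self-normalized estimator $\widetilde{\Delta}_{t}^{(\mathrm{IS})}$ and then to read the stated Loewner-order region off the limiting Gaussian law. Since the underlying graph is finite, there are only finitely many MTSFs and the weights $w(\calC)$ from \eqref{eq:importance_weight} are bounded, so every random vector below has finite moments of all orders and both the strong law of large numbers (SLLN) and the multivariate CLT apply with no integrability caveats. The algebraic input is the self-normalized importance-sampling identity: for any function $h$ on MTSFs, $\E_{\calC'\sim\pIS}\big[w(\calC')\,h(\calC')\big]=a\sum_{\calC}p(\calC)h(\calC)=a\,\E_{\calC\sim p}[h(\calC)]$, where $p$ is the DPP \cref{eq:proba_MTSF}. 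Taking $h\equiv 1$ gives $\E_{\pIS}[w(\calC')]=a$, and taking $h=\widetilde\Delta$ together with the unbiasedness $\E_{\calC\sim p}[\widetilde\Delta(\calC)]=\Delta$ — which holds because $\E_{\calC}[SS^\top]=\I_m$; see \cref{sec:one_MTSF_section} — gives $\E_{\pIS}[w(\calC')\widetilde\Delta(\calC')]=a\,\Delta$. By the SLLN, the numerator and the denominator in \eqref{eq:MC_sum_self-normalized}, each divided by $t$, converge almost surely to $a\,\Delta$ and to $a>0$ respectively, hence $\widetilde{\Delta}_{t}^{(\mathrm{IS})}\to\Delta$ almost surely; this consistency is what makes the confidence statement meaningful.

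For the fluctuations, note that $\widetilde{\Delta}_{t}^{(\mathrm{IS})}$ is Hermitian — a positive combination of the Hermitian matrices $\widetilde\Delta(\calC'_\ell)$ — so we identify every such matrix with its $n^2$ real coordinates. Set $V_\ell=\big(w(\calC'_\ell),\,w(\calC'_\ell)\vec(\widetilde\Delta(\calC'_\ell))\big)$, an i.i.d.\ sequence in $\R^{1+n^2}$ with mean $(a,\,a\,\vec(\Delta))$ and finite covariance $C$. The multivariate CLT gives $\sqrt t\,\big(t^{-1}\sum_{\ell}V_\ell-(a,\,a\,\vec(\Delta))\big)\Rightarrow\mathcal N(0,C)$. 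Since $\widetilde{\Delta}_{t}^{(\mathrm{IS})}$ is the image of $t^{-1}\sum_\ell V_\ell$ under the smooth map ``second block divided by first block,'' whose denominator converges to $a\neq 0$, the delta method — equivalently Slutsky's lemma applied to $\tfrac{\bar N}{\bar D}-\vec(\Delta)=\tfrac1{\bar D}\big(\bar N-\bar D\,\vec(\Delta)\big)$ — yields $\sqrt t\,\big(\vec(\widetilde{\Delta}_{t}^{(\mathrm{IS})})-\vec(\Delta)\big)\Rightarrow\mathcal N(0,\Gamma)$, with $\Gamma=a^{-2}\,\cov_{\calC\sim\pIS}\!\big(w(\calC)\,[\vec(\widetilde\Delta(\calC))-\vec(\Delta)]\big)$ coming from the Jacobian of the ratio map. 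Only the finiteness and non-degeneracy of $\Gamma$ will be needed afterwards.

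It remains to convert this vector CLT into the matrix sandwich. For Hermitian $M$ and positive definite $P$, the relation $-P\preceq M\preceq P$ is exactly $\|P^{-1/2}MP^{-1/2}\|_{\op}\leq 1$; taking $P=z(\Delta+q\I_n)$ turns the event of the proposition into $\big\{\,\|(\Delta+q\I_n)^{-1/2}\big(\widetilde{\Delta}_{t}^{(\mathrm{IS})}-\Delta\big)(\Delta+q\I_n)^{-1/2}\|_{\op}\leq z\,\big\}$, that is, a bound on a continuous functional of the quantity governed by the CLT above (rescaled by $\sqrt t$). The continuous mapping theorem then shows that the probability of this event converges to the probability that the corresponding functional of an $\mathcal N(0,\Gamma)$ vector is $\leq z$, and this limit is a fixed number in $(0,1)$ because the limiting law is absolutely continuous on the relevant subspace and hence assigns no mass to the boundary sphere $\{\|\cdot\|=z\}$. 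The final step is to express this Gaussian-region probability in terms of the reference law $\bm u\sim\mathcal N(0,\I_{n^2})$ used to define $z$, which amounts to identifying the limiting covariance in the coordinate system in which $z$ was chosen so that the reference law becomes the standard $n^2$-dimensional Gaussian. I expect this covariance identification — together with the bookkeeping between the Loewner order and the Euclidean ball — to be the main technical obstacle; the subsidiary points needing care are the self-normalization, which forces a delta-method rather than a plain CLT, and the verification that the limiting law puts no atom at $\|\cdot\|=z$, so that the limit of the probabilities equals the value of the limiting distribution function.
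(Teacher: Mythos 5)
Your opening moves are the same as the paper's -- you rewrite the Loewner sandwich as an operator-norm bound on $(\Delta+q\I_n)^{-1/2}(\widetilde\Delta_t^{(\mathrm{IS})}-\Delta)(\Delta+q\I_n)^{-1/2}$, vectorize, and invoke Slutsky/the delta method to obtain a CLT for the self-normalized importance-sampling estimator. Up to that point you are faithfully reproducing the paper's setup, including the observation that $\mathbb{E}_{\calC\sim p}[\widetilde\Delta(\calC)]=\Delta$ and the use of $\E_{\pIS}[w(\calC')]=a$ to handle the random normalizer. However, you flag the ``covariance identification'' as the remaining obstacle and stop; this is precisely the part the paper does supply, and your intended route would not close the gap. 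Two ingredients are missing.

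First, you insist on the \emph{exact} equivalence between the Loewner sandwich and the operator-norm ball, and then plan to hand the operator-norm functional to the continuous mapping theorem. But the reference law $\bm u\sim\mathcal N(0,\I_{n^2})$ and the threshold $z$ are calibrated via the \emph{Euclidean} norm $\|\bm u\|_2$, not the operator norm of the $n\times n$ Hermitian matrix you would reconstitute from $\bm u$. These two ball geometries do not match, so the limit you would obtain is some unrelated constant, not anything expressible as $\Pr(\|\bm u\|_2\le z)$. The paper avoids this by \emph{weakening} the exact equivalence to the one-sided implication $\|\cdot\|_{\op}\le\|\cdot\|_F$: it studies the event $\{\|\bar Y_t^{(\mathrm{IS})}-\bar Y\|_F\le z\}$, which implies the Loewner sandwich and whose vectorized form is exactly a Euclidean ball, matching the form in which $z$ is defined. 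Without moving to the Frobenius norm, you cannot reach the stated reference law.

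Second, even after vectorization, the limiting covariance $\Gamma$ is not $\I_{n^2}$, and there is no ``hidden coordinate system in which $z$ was chosen''; what makes the standard Gaussian $\mathcal N(0,\I_{n^2})$ appear is studentization. The paper defines an empirical covariance estimate $\widehat{\cov}_t(\bm x)$, applies Slutsky a second time to obtain $\sqrt t\,\widehat{\cov}_t(\bm x)^{-1/2}\,\bar{\bm x}_t\Rightarrow\mathcal N(0,\I_{n^2})$, and then uses the explicit operator-norm upper bound $\widehat{\cov}_t(\bm x)\preceq\omega_t\I$ (coming from a deterministic bound on $\|\bm y(\calC)-\bar{\bm y}\|_F$) to convert the studentized ball back to a bound on $\|\bar Y_t^{(\mathrm{IS})}-\bar Y\|_F$. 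Your proposal has neither the empirical covariance nor the upper bound, so the reduction to $\mathcal N(0,\I_{n^2})$ cannot be completed along the path you sketch.
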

The proof of \cref{prop:asymp_confidence}, given below, uses standard techniques such as Slutsky's lemma~\citep[Theorem 11.4]{Gut}. \cref{prop:asymp_confidence} provides us with an asymptotic {statistical} guarantee for the reweighted importance sampling procedure \cref{eq:MC_sum_self-normalized}.
\begin{proof}
    We consider the event
    \begin{equation}
        \label{e:event_of_interest}
        \left\|\left(\Delta + q \I_n\right)^{-1/2}\left(
            \sum_{\ell=1}^t
            \frac{w(\calC^\prime_\ell)}{\sum_{s=1}^t w(\calC^\prime_s)}
            \widetilde{\Delta}(\calC_\ell^\prime) - \Delta \right) \left(\Delta + q \I_n\right)^{-1/2}\right\|_{2} \leq z.
    \end{equation}

    To analyse the probability of this event as $t\to \infty$, we first rephrase the problem in a slightly more general way.
    We introduce the $n\times m$ matrix $\Psi = (\Delta + q \I_n)^{-1/2} B^*$. 
    Let $\bmpsi_e$ be the $e$-th column of $\Psi$ and define the following $n\times n$ matrices
    \[ Y(\calC) =\sum_{e\in \calC}\frac{1}{\lev(e)} \bmpsi_e \bmpsi_e^* \text{ and } \bar{Y} = \sum_{e=1}^m \bmpsi_e \bmpsi_e^*,
    \]
    with $\lev(e) = \bmpsi_e^*\bmpsi_e$ for all $e\in [m]$. Here $\bar{Y} = \sum_{\calC\subseteq [m]} p(\calC) Y(\calC)$.
    Using the definition of $Y$ and $\Psi$, inequality \eqref{e:event_of_interest} can be rewritten as
    \begin{equation*}
        \left\| \bar{Y}_{t}^{(\mathrm{IS})} - \bar{Y}\right\|_2 \leq z \quad \text{ with } \bar{Y}_{t}^{(\mathrm{IS})} = \frac{1}{\sum_{s=1}^t w(\calC_s^\prime)} \sum_{\ell=1}^t w(\calC_\ell^\prime)
        Y(\calC_\ell^\prime).
    \end{equation*}
    This bound in operator norm is implied by the following bound for the Frobenius norm
    \begin{equation}
        \left\| \bar{Y}_{t}^{(\mathrm{IS})} - \bar{Y}\right\|_F \leq z, \label{eq:diff_Fr}
    \end{equation}
    the probability of which we shall now investigate.

    \paragraph*{Vectorization and writing a {Central Limit Theorem}.} For a matrix $A$, denote by $\vec A$ the vector obtained by stacking the columns of $A$ on top of each other.
    Let $\bm{y}(\calC) = \vec(Y(\calC))$ and $\bar{\bm{y}}   = \vec(\bar{Y})$.
    We aim to approximate the expectation
    $
        \bar{\bm{y}} = \sum_{\calC\subseteq [m]} p(\calC) \bm{y}(\calC),
    $
    since we cannot sample from $p(\calC)$.
    The (self-normalized) importance sampling estimator of this expectation reads
    \begin{equation*}
        \bar{\bm{y}}_{t}^{(\mathrm{IS})}
        =  \sum_{\ell=1}^t \frac{w(\calC_\ell^\prime)}{\sum_{s=1}^t w(\calC_s^\prime)}
        \bm{y}(\calC_\ell^\prime) = \vec\left(\bar{Y}_{t}^{(\mathrm{IS})}\right).
    \end{equation*}
    Since
    $\| \bar{\bm{y}}_{t}^{(\mathrm{IS})} - \bar{\bm{y}}\|_F = \| \bar{Y}_{t}^{(\mathrm{IS})} - \bar{Y} \|_F$,
    it is enough to control the Frobenius norm of
    \[
        \bar{\bm{y}}_{t}^{(\mathrm{IS})} - \bar{\bm{y}}
        = \frac{
            \frac{1}{t} \sum_{\ell=1}^t w(\calC_\ell^\prime)
            \Big(\bm{y}(\calC_\ell^\prime) - \bar{\bm{y}}\Big)/a
                }
                {
                \frac{1}{t} \sum_{\ell=1}^t w(\calC_\ell^\prime)/a
                },
    \]
    to be able to establish the bound \cref{eq:diff_Fr}.
    By the law of large numbers,  the limit of the denominator is
    \begin{equation}
            \lim_{t\to +\infty}\frac{1}{t} \sum_{\ell=1}^t w(\calC_\ell^\prime)/a = \sum_{\calC^\prime} \pIS(\calC^\prime) w(\calC^\prime)/a = 1, \label{eq:asymptotic_a}
    \end{equation}
    almost surely.
    Define the properly normalized random vector
    \[
        \bm{x}(\calC^\prime) = w(\calC^\prime)
        \Big(\bm{y}(\calC^\prime) - \bar{\bm{y}}\Big)/a,
    \]
    which has expectation zero w.r.t.\ \cref{eq:proba_capped}, i.e., $\sum_{\calC^\prime} \pIS(\calC^\prime)\bmx(\calC^\prime) = 0$.
    Moreover,
    \[
        \frac{1}{t} \sum_{\ell=1}^t w(\calC_\ell^\prime)
            \Big(\bm{y}(\calC_\ell^\prime) - \bar{\bm{y}}\Big)/a = \frac{1}{t} \sum_{\ell=1}^t \bm{x}(\calC_\ell^\prime) \triangleq \bar{\bmx}_t.
    \]
    Equivalently, the following equality holds
    \[
        \bar{\bm{y}}_{t}^{(\mathrm{IS})} - \bar{\bm{y}}
        = \frac{\bar{\bmx}_t}
                {
                \frac{1}{t} \sum_{\ell=1}^t w(\calC_\ell^\prime)/a
                },
    \]
    and implies an equality between  squared norms
    \begin{equation}
        \| \bar{\bmx}_t \|_F^2 = \left(\frac{1}{t} \sum_{\ell=1}^t w(\calC_\ell^\prime)/a\right)^2 \|\bar{\bm{y}}_{t}^{(\mathrm{IS})} - \bar{\bm{y}}\|_F^2 = \left(\frac{1}{t} \sum_{\ell=1}^t w(\calC_\ell^\prime)/a\right)^2 \| \bar{Y}_{t}^{(\mathrm{IS})} - \bar{Y} \|_F^2.\label{eq:x_vs_Y}
    \end{equation}
    By using Slutsky's lemma and {the classical Central Limit Theorem}, we have the convergence in distribution
    \[
        \sqrt{t}\left(\bar{\bm{y}}_{t}^{(\mathrm{IS})} - \bar{\bm{y}} \right) \Rightarrow \mathcal{N}(0,\Sigma),
    \]
    where $\Sigma = \cov(\bm{x})$; see~\cite{Geweke1989} for an univariate treatment.
    The covariance of $\bmx$ is
    \begin{equation}
        \cov(\bm{x}) = \frac{1}{a^2} \sum_{\calC^\prime}  \pIS(\calC^\prime)  w(\calC^\prime)^2
    \Big(\bm{y}(\calC^\prime) - \bar{\bm{y}}\Big)\Big(\bm{y}(\calC^\prime) - \bar{\bm{y}}\Big)^* =  \sum_{\calC^\prime}  p(\calC^\prime)  \frac{w(\calC^\prime)}{a}
    \Big(\bm{y}(\calC^\prime) - \bar{\bm{y}}\Big)\Big(\bm{y}(\calC^\prime) - \bar{\bm{y}}\Big)^*.\label{eq:cov}
    \end{equation}

    To obtain an empirical estimate of this covariance, we first remark that $a  = \sum_{\calC^\prime}\pIS (\calC^\prime) w(\calC^\prime)$ and thus an estimator of $a$ is
    \[
        \widehat{a} = \frac{1}{t}\sum_{\ell = 1}^t w(\calC_\ell^\prime)
    \]
    for $\calC_\ell^\prime $ i.i.d.\ w.r.t.\ \cref{eq:proba_capped}.
    Thus, we build an estimator of the covariance by using the first identity in \cref{eq:cov}, as follows
    \begin{equation}
        \widehat{\cov}_t(\bm{x}) = \frac{
            \frac{1}{t}\sum_{\ell = 1}^{t}  \  w(\calC_\ell^\prime)^2
    \Big(\bm{y}(\calC_\ell^\prime) - \bar{\bm{y}}\Big)\Big(\bm{y}(\calC_\ell^\prime) - \bar{\bm{y}}\Big)^*}
    {\left(\frac{1}{t}\sum_{\ell = 1}^t w(\calC_\ell^\prime)\right)^2} .\label{eq:cov_estimate}
    \end{equation}

    Then, by Slutsky's theorem again, we obtain the asymptotic {Central Limit Theorem}
    \begin{equation*}
        \sqrt{t} \left(\widehat{\cov}_t(\bm{x})\right)^{-1/2}\bar{\bmx}_t  \Rightarrow \mathcal{N}(0,\I_{n^2}).
    \end{equation*}
    The rest of the proof consists in bounding the empirical covariance matrix and writing the corresponding confidence ball.

    \paragraph*{An upper bound on the empirical covariance matrix.} Note that, since $\|\bmpsi_e \bmpsi_e^*\|_F = \lev(e) \leq 1$, we have thanks to a triangle inequality
    \begin{equation}
        \|\bm{y}(\calC) - \bar{\bm{y}}\|_F = \| Y(\calC) - \bar{Y} \|_F\leq \sum_{e\in \calC}\frac{1}{\lev(e)} \|\bmpsi_e \bmpsi_e^*\|_F + \sum_{e=1}^m \|\bmpsi_e \bmpsi_e^*\|_F \leq m  + \deff, \label{eq:bound_diff_y}
    \end{equation}
    where $\deff = \Tr(\Psi^*\Psi)$.
    Since, in light of \cref{eq:bound_diff_y}, we have
    \[
    \Big\| \Big(\bm{y}(\calC_\ell^\prime) - \bar{\bm{y}}\Big)\Big(\bm{y}(\calC_\ell^\prime) - \bar{\bm{y}}\Big)^*\Big\|_2 = \|\bm{y}(\calC_\ell^\prime) - \bar{\bm{y}}\|_F^2 \leq ( m + \deff )^2,
    \]
    an upper bound on this covariance matrix reads
    \begin{equation}
        \widehat{\cov}_t(\bm{x}) \preceq
        \frac{
            \frac{1}{t}\sum_{\ell = 1}^{t}  \  w(\calC_\ell^\prime)^2}
    {\left(\frac{1}{t}\sum_{\ell = 1}^t w(\calC_\ell^\prime)\right)^2} ( m + \deff )^2 \I \triangleq \omega_t \I, \label{eq:upper_bound_emp_cov}
    \end{equation}
    where we used the triangle inequality.
    \paragraph*{Confidence ball.}
    We would like to establish an asymptotic confidence interval. Suppose
    \[
        \|\sqrt{t} \left(\widehat{\cov}_t(\bm{x})\right)^{-1/2}\bar{\bmx}_t\|_F^2 \leq z^2,
    \]
    for some $z >0$.
    By using the upper bound on the empirical covariance matrix given in \cref{eq:upper_bound_emp_cov}, we find
    $
    \|\bar{\bmx}_t\|_F^2 \leq \frac{z^2 \omega_t}{t},
    $
    By using \cref{eq:x_vs_Y}, this bound gives
    \[
        \| \bar{Y}_{t}^{(\mathrm{IS})} - \bar{Y} \|_F \leq z \sqrt{\frac{ \omega_t}{t}}\left(\frac{1}{t} \sum_{\ell=1}^t w(\calC_\ell^\prime)/a\right),
    \]
    where the last factor on the right-hand side tends to $1$ asymptotically, see \cref{eq:asymptotic_a}.
    Asymptotically in $t$ and with confidence level $p$, we have the inequality
    \[
        \| \bar{Y}_{t}^{(\mathrm{IS})} - \bar{Y} \|_F \leq z \sqrt{\frac{ \omega_t}{t}},
    \]
    where $z$ is chosen such that $\Pr(\|\bm{u}\|_2 \leq z) = p$ for $\bm{u}\sim \mathcal{N}(0,\I_{n^2})$.
    This proves \cref{prop:asymp_confidence}.
\end{proof}
\section{Empirical results} 
\label{sec:empirical_results}
 After describing in \cref{sub:settings_and_baselines}  settings which are shared in most of our simulations, we provide a few illustrations of the applicability of the approximations given above.
 In \cref{sub:magnetic_laplacian_sparsification_and_ranking}, we consider the unsupervised angular synchronization problem and its application to ranking from pairwise comparisons.
Next, Laplacian preconditioning is discussed in \cref{sub:laplacian_preconditioning}.
The code to reproduce the experiments is freely available on GitHub.\footnote{
    \url{https://github.com/For-a-few-DPPs-more/MagneticLaplacianSparsifier.jl}
}
{Here are the key takeaways.}
\begin{itemize}
    \item {\emph{Angular synchronization performance}. The least eigenvector of the sparsifiers obtained with STs and CRSFs approximates well the least eigenvector of the full magnetic Laplacian.
    However, the difference in terms of ranking recovery is at best marginal compared with i.i.d.\ sparsifiers in the Erd\H{o}s-Rényi graphs that we considered.}
    \item {\emph{Preconditioning performance}. In all our simulations, i.i.d.\ edge sampling is outperformed by ST or MTSF sampling for constructing a preconditioner of the regularized Laplacian.
    In particular, in our simulations, CRSF-based sparsifiers allow to approximate well the unregularized magnetic Laplacian with a batch of as few as $2$ or $3$ CRSFs.
    These sparsifiers are also invertible by construction.
    It seems that the main shortcoming of a sparsifier based on i.i.d.\ edge sampling is that the resulting subgraph is not connected by design.}

    {Sparsifiers based on STs also perform well except when connection graphs contain only a few inconsistencies, namely, for example in the case of the Erd\H{o}s-Rényi Outlier (ERO) model considered here. 
    Contrary to the CRSF case, Laplacians obtained with batches of STs are not guaranteed to be nonsingular.}
    \item {\emph{Role of leverage score estimation}.
    Our simulations indicate that,  for graphs with a concentrated degree distribution, leverage score estimation is not crucial. The proposed uniform leverage scores heuristics works well in our simulations and yields sparsifiers which can efficiently reduce the Laplacian condition number at a low computational cost. Although the bias of the latter approach is manifest in our simulations, it seems to be a satisfying computational proxy.}
    \item {\emph{Timing}. For many small-scale graphs considered here, we observed that sampling STs with Wilson's algorithm is faster than sampling CRSFs or MTSFs with $\cyclepopping{}$, except in special cases such as graphs with bottlenecks, e.g., the so-called Barbell graph. 
    For real-world larger graphs, $\cyclepopping{}$ and Wilson's algorithm have similar sampling times, at least in the case where the root of Wilson's algorithm is sampled uniformly.
    An important computational cost -- and often the dominant cost -- is the estimation of leverage scores. As we already mentioned, leverage scores can be taken to be uniform without too big a compromise on preconditioning performance.}
\end{itemize}
\subsection{Settings and baselines}
\label{sub:settings_and_baselines}
In our numerical simulations, we generate random connection graphs with a controlled amount of noise, so that we can vary the level of consistency of the cycles.
\subsubsection{Random graphs and random connection graphs} \label{sec:random_graphs}


An Erd\H{o}s-Rényi graph $\mathrm{ER}(n,p)$ is a random graph with $n$ nodes, defined by independently adding an edge between each unordered pair of nodes with probability $p$.
{This simple type of random graph will be endowed with a random $\Uone$-connection to understand the role of the connection inconsistencies in graphs without clear bottlenecks or community structure.}

\paragraph{Noisy comparison models and random connection graphs.}
Our $\Uone$-connection graphs are generalizations of Erd\H{o}s-Rényi graphs with a planted trivial connection -- namely with consistent pairwise comparisons of the type $h_u-h_v$ for each edge $uv$ -- which is corrupted by noise.
Two statistical models for noisy pairwise comparisons are used: the $\mathrm{MUN}$ and $\mathrm{ERO}$ models.
These models are defined given a fixed ranking score vector $\bm{h}$ such that $h_u$ is the ranking score of $u\in \calV$.
Here we choose $\bm{h}$ as a (uniform) random permutation of $[1, \dots, n]$.
\begin{itemize}
    \item \textbf{Multiplicative noise affects all edges.} Following~\cite{Cucuringu16}, we define the Multiplicative Uniform Noise model, denoted by $\mathrm{MUN}(n,p,\eta)${, where each pairwise comparison is independently corrupted by a multiplicative noise.}
    More precisely, with probability $p$, and independently from other edges, there is an edge $e=uv$ with $1\leq u<v\leq n$ coming with an angle $\vartheta(uv)=(h_u-h_v) (1+\eta \epsilon_{uv})/(\pi(n-1))$ where $\epsilon_{uv}\sim \mathcal{U}([0,1])$ are independent noise variables. Then, $\vartheta(vu) \triangleq -\vartheta(uv)$.
    \item \textbf{Pure noise affects only a few edges.} Another noise model used here and defined by~\citet{Cucuringu16} is the Erd\H{o}s-R\'enyi Outliers (ERO) model, denoted by $\mathrm{ERO}(n,p,\eta)${, where -- roughly speaking -- each planted pairwise comparison has a probability to be replaced by a noise drawn from the uniform distribution.}
    For all pairs $uv$ with $1\leq u<v\leq n$, the corresponding edge $e = uv$ is added with probability $p$, independently again from other edges.
    With probability $1-\eta$, this new edge comes with an angle $\vartheta(uv) = (h_u-h_v)/(\pi(n-1))$, otherwise the angle is set to $\vartheta(uv)=\epsilon_{uv}/(\pi(n-1))$, where $\epsilon_{uv}$ is drawn from the discrete uniform distribution on $\{-n+1, \dots, n-1\}$.
    Again, $\vartheta(vu) \triangleq -\vartheta(uv)$.
\end{itemize}

\subsubsection{Random connection graphs from real networks} \label{sec:random_graph_real_nets}
{In the same way as random connections are defined on Erd\H{o}s-Rényi graphs in \cref{sec:random_graphs}, we consider here two real networks which are endowed with random connections. 
The motivation behind this is to design connection graphs with a more complex degree distribution and topology than ER graphs.}
\begin{itemize}
    \item {Epinions\footnote{Available at \url{https://snap.stanford.edu/data/soc-Epinions1.html}.} \citep{richardson2003trust}, a who-trusts-whom online social network (Epinions.com) with $n = 75,869$ nodes, $m= 405,057$ edges and considered as undirected.}
    \item {Stanford\footnote{Available at \url{https://snap.stanford.edu/data/web-Stanford.html}.} \citep{leskovec2009community}, the network of web pages of Stanford university. We consider its largest connected component as  an undirected graph with $n = 255,265$ nodes and  $m=1,941,926$ edges.}
\end{itemize}

{In exact analogy with the MUN and ERO models, we define two types of random connection graphs: (i) Epinions-MUN$(\eta)$ and Stanford-MUN$(\eta)$ for which the noise is multiplicative and uniform, (ii) Epinions-O$(\eta)$ and Stanford-O$(\eta)$ for which an edge is independently endowed with a random (uniform) complex phase with probability $\eta$.}
\subsubsection{Methods and Baselines} \label{sec:methods_and_baselines}

In our simulations, we compute sparsifiers of the form $\frac{1}{t} \sum_{\ell=1}^t\widetilde{\Delta}(\calC_\ell) = B^* S^{(b)} S^{(b)\top}B$ for $t\geq 1$ batches of samples; with the exception of CRSFs and MTSFs, for which we use the self-normalized formula in \cref{eq:MC_sum_self-normalized}.

{\paragraph*{Connectivity.}
To assess if a sparsifier associated with a batch of edges corresponds to a connected graph we simply define the connectivity as the function which gives $1$ if the graph is connected and gives $0$ otherwise.
Then, in our simulations, we often report the average connectivity over several independent runs, valued in $[0,1]$.}

\paragraph{Leverage score approximation.}
The sparsification guarantees presented in Section~\ref{sec:sparsification_of_the_regularized_magnetic_laplacian} implicitly assume that the sampling matrix \eqref{eq:sampling_matrix} can be calculated exactly.
However, evaluating or even approximating leverage scores is far from trivial.
This is a known issue even for i.i.d.\ sampling and in the case of the combinatorial Laplacian; see~\citet{SpielmanSrivastava} or~\cite{DKPRS2017} for a recent approach.
Hence, we consider two options for the leverage scores in \eqref{eq:sampling_matrix}.

{First}, we use the standard ``uniform leverage scores" heuristic: for each batch $\{\calC_\ell\}_{1\leq \ell \leq t}$, we weigh each edge $e\in \calC_\ell$ in \cref{eq:sampling_matrix} by
    \begin{equation}
        \widetilde{\lev}(e) = \lev_{\rm unif}(e)  \text{ with }\lev_{\rm unif}(e) = |\calC_\ell| / m.\label{eq:uniform_heuristics}
    \end{equation}
    This approach for sparsification with \cref{eq:uniform_heuristics} is called ``DPP(K) unif'' in plots. 
    The choice of normalizing constant can be explained as follows: the exact leverage scores are normalized so that their sum is the expected sample size $\sum_{e=1}^m \lev(e) = \E_{\calC \sim \DPP(K)}[|\calC|]$; see \cref{sec:dpp_sampling_of_edges__multitype_spanning_forests}. Similarly, since $\calC_\ell \sim \DPP(K)$, in this paper, we choose the normalization so that $\sum_{e=1}^m \lev_{\rm unif}(e) = |\calC_\ell|$.
    This heuristic is expected to be efficient whenever the connection graph is well connected. It is closely related to the heuristic used for preconditioning linear sytems in the context of kernel ridge regression in \cite{Rudi2017}.
    
    {Second, we leverage the Johnson-Lindenstrauss lemma to sketch LSs with the help of Rademacher random matrices, as advised in \citep[Section 4]{SpielmanSrivastava}.}
    {A simple adaptation is necessary to deal with the case $q\neq 0$, which amounts to consider an augmented graph where one extra \emph{auxiliary edge} is added to each node; see \citep{pilavci2020}.
    The augmented leverage scores associated with this augmented graph are the diagonal entries of the orthogonal projector onto the column space of the augmented incidence matrix $$ \begin{bmatrix}\sqrt{q} \cdot \I_{n\times n} \\ B\end{bmatrix}.$$
    By design, the desired leverage scores are the last $m$ entries of the augmented leverage scores.
    Thus, for sketching LSs in this case, we take 
    \begin{equation}
        k = \lceil 40 \log (m +n) + 1 \rceil \label{eq:k_for_JL}
    \end{equation}
    and define a random  matrix $Q$ -- filled with $\pm 1/\sqrt{k}$ independent Rademacher random variables -- of size a $(m + n)\times k$.
    Denote by $T$ the solution of 
    \begin{equation}
        (\Delta + q \I) T =  
        \begin{bmatrix}
            \sqrt{q} \cdot \I_{n\times n} & B^*
        \end{bmatrix}Q,\label{eq:linear_system_sketch}
    \end{equation}
    so that only $\mathcal{O}(\log (m+n))$ linear systems are solved.
    In this case, we define the sketched LSs by  
    \begin{equation}
        \widetilde{\lev}(e) = \lev_{\mathrm{JL}}(e) \triangleq \|\bmdelta_e^\top B T\|^2_2.\label{eq:lev_sketch}
    \end{equation}
    When $q=0$, we rather take $k = \lceil 40 \log (m) + 1 \rceil$ and the right-hand side of \cref{eq:linear_system_sketch} simply reads $B^*Q$. }

\paragraph{Baselines with i.i.d.\ sampling and with spanning trees.}
As a first type of baseline, we use sparsifiers obtained with edges sampled i.i.d. proportionally to the approximate leverage scores~\citep{SpielmanSrivastava}, computed using the Johnson-Lindenstrauss lemma -- this approach is referred to as ``i.i.d.\ JL-LS''.
For a fair comparison of these i.i.d.\ baselines with a batch of MTSFs, we sample $t$ independent copies $\calC_\ell$, $1\leq \ell \leq t$, where each $\calC_\ell$ is an i.i.d.\ sample of edges.
Next, for a given $q\geq 0$, we compute the average sparsifier $\frac{1}{t} \sum_{\ell=1}^t\widetilde{\Delta}_{\calC_\ell} + q \I_n$.

As a second baseline, we compute sparsifiers obtained by sampling batches of uniform spanning trees (STs) with Wilson's algorithm.
The corresponding edges are then taken with their corresponding complex phases, although the latter are not used for sampling the spanning trees. 
The sparsifiers are then built using the formula \cref{eq:sampling_matrix} with our spanning trees.
This yields the baseline ``ST JL-LS'', where leverage scores are again approximated using the Johnson-Lindenstrauss lemma.


\paragraph{Sampling strategy and self-normalized importance sampling.} To sample CRSFs and SFs, we use the algorithm described in~\cref{sec:sampling_a_multitype_spanning_forest}, whereas we use Wilson's algorithm~\citep{Wilson96} to sample STs.
As a common feature, the starting node of the cycle popping random walk is chosen uniformly at random.
For sampling STs, we first sample an oriented rooted spanning tree with a root node sampled uniformly at random, and then, we forget the root and the orientation of this spanning tree.

For sampling MTSFs or CRSFs in the graphs considered here, since we do not know if the cycles are weakly inconsistent, we use the importance sampling distribution \cref{eq:proba_capped} with capped cycle weights, and use the self-normalized formula in \cref{eq:MC_sum_self-normalized}.
By an abuse of notation, these methods are denoted as ``DPP(K)''.

\paragraph{Sparsification with batches of edges.} In the simulations where we compare the accuracy of sparsifiers for a different batchsize, each sparsifier is drawned independently from all the others.
In what follows, we often plot the performance of a sparsification method for a given batchsize and the $x$-axis then reports {the ratio of the number of edges to the number of nodes}.
Note that since the number of edges in an MTSF is random, horizontal error bars appear when we average over independent realizations.
In our experiments, these error bars are however very small.

\subsubsection{{Hardware and timing}} \label{sec:hardware}
{All the simulations were performed on a laptop with a $1.1$ GHz Dual-Core processor and $8$ GB RAM.
Timings are reported in order to provide a rough estimation of the compute time.
The Julia code was executed in a Jupyter notebook where timing was measured using the $@$timed function.
}
\subsection{Magnetic Laplacian sparsification and ranking} 
\label{sub:magnetic_laplacian_sparsification_and_ranking}

One of our motivations was angular synchronization; see Section~\ref{sec:magnetic_laplacian}.
One application of angular synchronization is ranking from pairwise comparisons~\citep[Sync-Rank]{Cucuringu16}, which we use here as a case study to illustrate the interest of sparsifying the magnetic Laplacian.
More specifically, we consider Sync-Rank, an algorithm for ranking items from pairwise comparisons, which has favorable robustness properties w.r.t.\ corruptions in the comparisons~\citep{Cucuringu16,Yu12}.
The optimization objective of Sync-Rank is directly inspired by the angular synchronization problem as we explain below.

\subsubsection{Angular synchronization and spectral relaxation.}

The Sync-Rank algorithm of \cite{Cucuringu16} starts from comparisons between neighbouring nodes, where $uv\in\calE$ can carries either a cardinal comparison $\kappa_{uv} \in [-n-1, n+1]$ or an ordinal comparison $\kappa_{uv}\in\{-1,+1\}$.
A positive $\kappa_{uv}$ is interpreted as $u$ being superior to $v$.

\begin{enumerate}
    \item We define the following angular embedding of the comparisons
    $
    \vartheta(uv) = \pi \kappa_{uv} /(n-1),
    $
    for all oriented edges $uv$ in the graph.
    \item Spectral Sync-Rank simply solves the spectral problem \cref{eq:spectral_syncrank}, with the connection graph obtained at step $1$. To account for a non-uniform degree distribution, \citet[Eq (12) and (13)]{Cucuringu16} recommends weights $w_{uv} = 1/\sqrt{d(u)d(v)}$, where $d(u)$ denotes the degree of $u$ in the (unweighted) connection graph. This type of normalization also typically improves graph clustering performance, and we use this weighting here in \eqref{eq:spectral_syncrank}.
    The output of this stage is an angular score $\hat{h}_u \in [0, 2\pi)$ for all $u\in \calV$, as defined above, by taking $\hat{h}_u  = \Arg(f(u))$.
    \item We now select a ranking from the angular scores.
    We find a permutation $\varsigma$ such that $\hat{h}_{\varsigma(1)} \geq \dots \geq \hat{h}_{\varsigma(n)}$.
    Let $\bm{\varsigma} = [\varsigma(1), \dots, \varsigma(n)]^\top$. Let $\bm{r} = [r_1, \dots, r_n]^\top$ be the inverse permutation of $\bm{\varsigma}$, so that $r_u \in\{1, \dots, n\}$ is an integer giving the induced ranking of the node $u\in \calV$.
    The ranking of the $n$ items is then obtained by looking for the circular {shift} of $r_1, \dots, r_n$ minimizing the number of upsets,
    \[
        \sigma_\star = \Arg \min_{\sigma \text{circular}} \sum_{\text{oriented edge }uv} | \sign(\kappa_{uv}) - \sign(r_{\sigma(u)} - r_{\sigma(v)})|;
    \]
    see~\citet[Alg.\ 1]{Cucuringu16}. Explicitly, a circular {shift} on integers in $[n]$ depends on an integer $s$ and is defined as $\sigma(\ell) =1 + (\ell  + s) \mod n$, with $s\in \{0, \dots, n-1\}$. The output is $\sigma_\star$.
\end{enumerate}
Note that we have no statistical guarantee on recovering the ranking.

\subsubsection{Empirical results for the sparsification of Sync-Rank.}\label{sec:exp_SyncRank}

\cref{fig:approx_mg_laplacian} displays the performance of the sparsify-and-eigensolve algorithms described in Section~\ref{sec:methods_and_baselines} on the two random graphs described in Section~\ref{sec:random_graphs}.
We plot two performance metrics, as a function of the total number of edges in each batch divided by the total number of nodes.
Each point on the $x$-axis corresponds to a sparsifier obtained with a  batchsize $t$ ranging from $1$ up to $6$. 
Thus, the $x$-axis reports the number of edges in $\cup_{\ell = 1}^t \calC_\ell$ over $n$.
The first row of \cref{fig:approx_mg_laplacian} reports the least eigenvector approximation accuracy, i.e., the distance $1-|\tilde{\bmf}_1^* \bmf_1|$ between the line of $\tilde{\bmf}_1$ and the line of $\bmf_1$.
Note that the eigenvectors are normalized and computed exactly.
The second row reports Kendall's tau coefficient between the recovered and exact synthetic ranking, a classical measure of accuracy in ranking recovery.
To assess the number of cycles captured by CRSF sampling, the third row displays the average number of sampled cycle rooted trees.
In the right column, we take an $\mathrm{ERO}(n,p,\eta)$ random graph and the left column corresponds to a $\mathrm{MUN}(n,p,\eta)$ graph.
The parameters are $n=2000$, $p=0.01$, $\eta = 0.1$.
The sampling is repeated $3$ times for a fixed connection graph.
We display the mean, whereas the error bars are plus/minus one standard deviation.
\begin{figure}[h!]
    \centering
    \begin{subfigure}[b]{0.49\textwidth}
        \centering
        \includegraphics[scale=0.4]{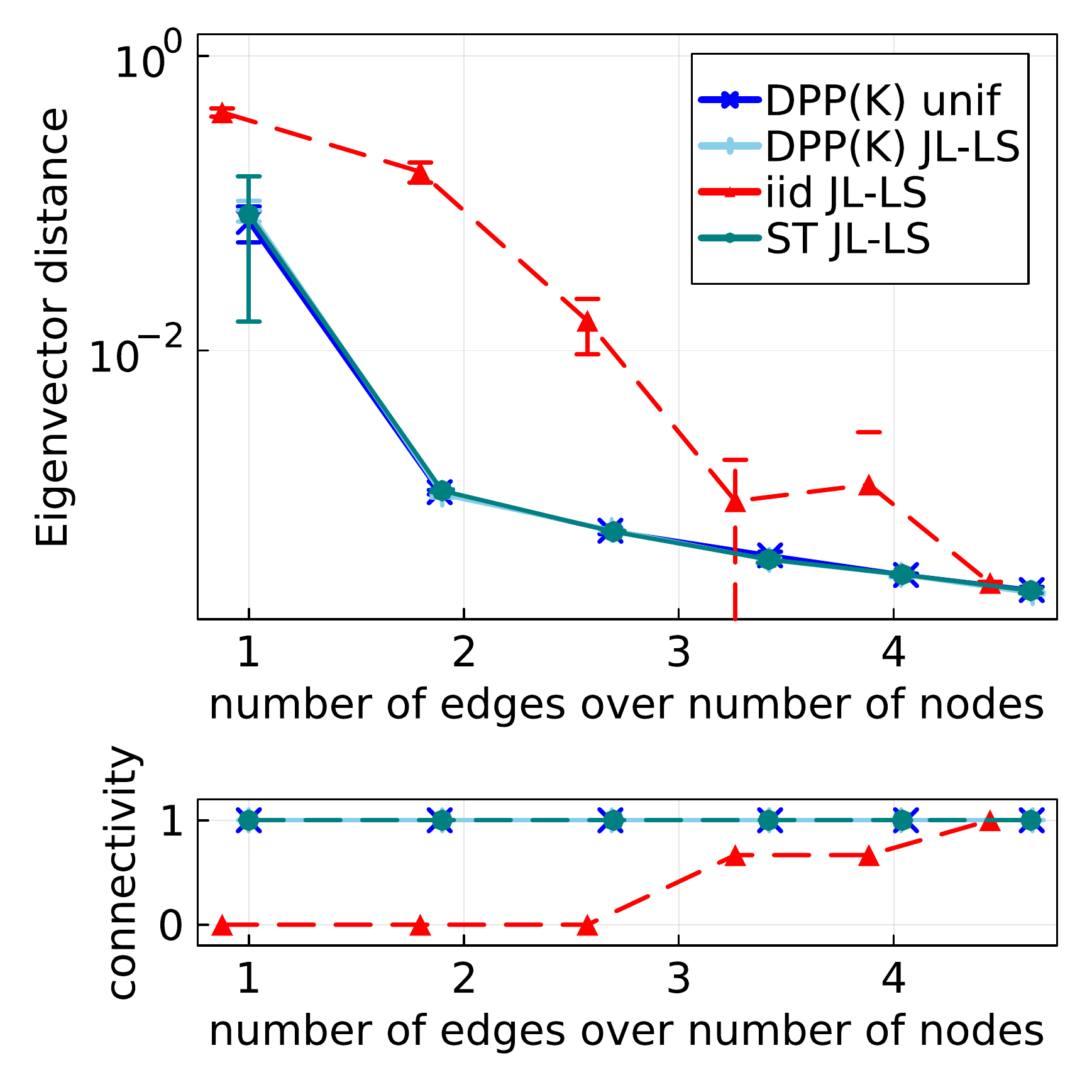}
        \caption{$1-|\tilde{f}_1^* f_1|$ (MUN).}
    \end{subfigure}
    \hfill
    \begin{subfigure}[b]{0.49\textwidth}
        \centering
        \includegraphics[scale=0.4]{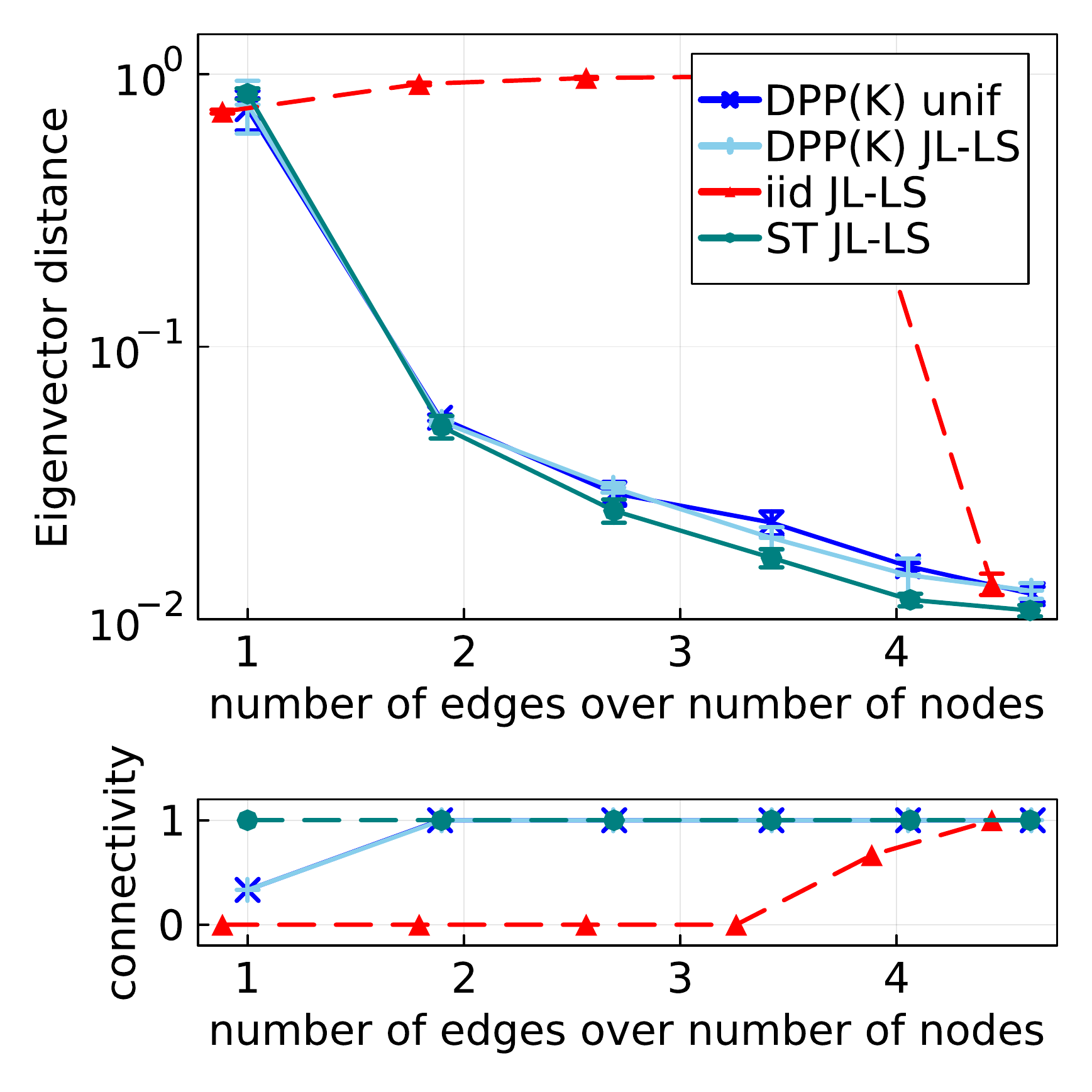}
        \caption{$1-|\tilde{f}_1^* f_1|$ (ERO).}
    \end{subfigure}
    \begin{subfigure}[b]{0.49\textwidth}
        \centering
        \includegraphics[scale=0.4]{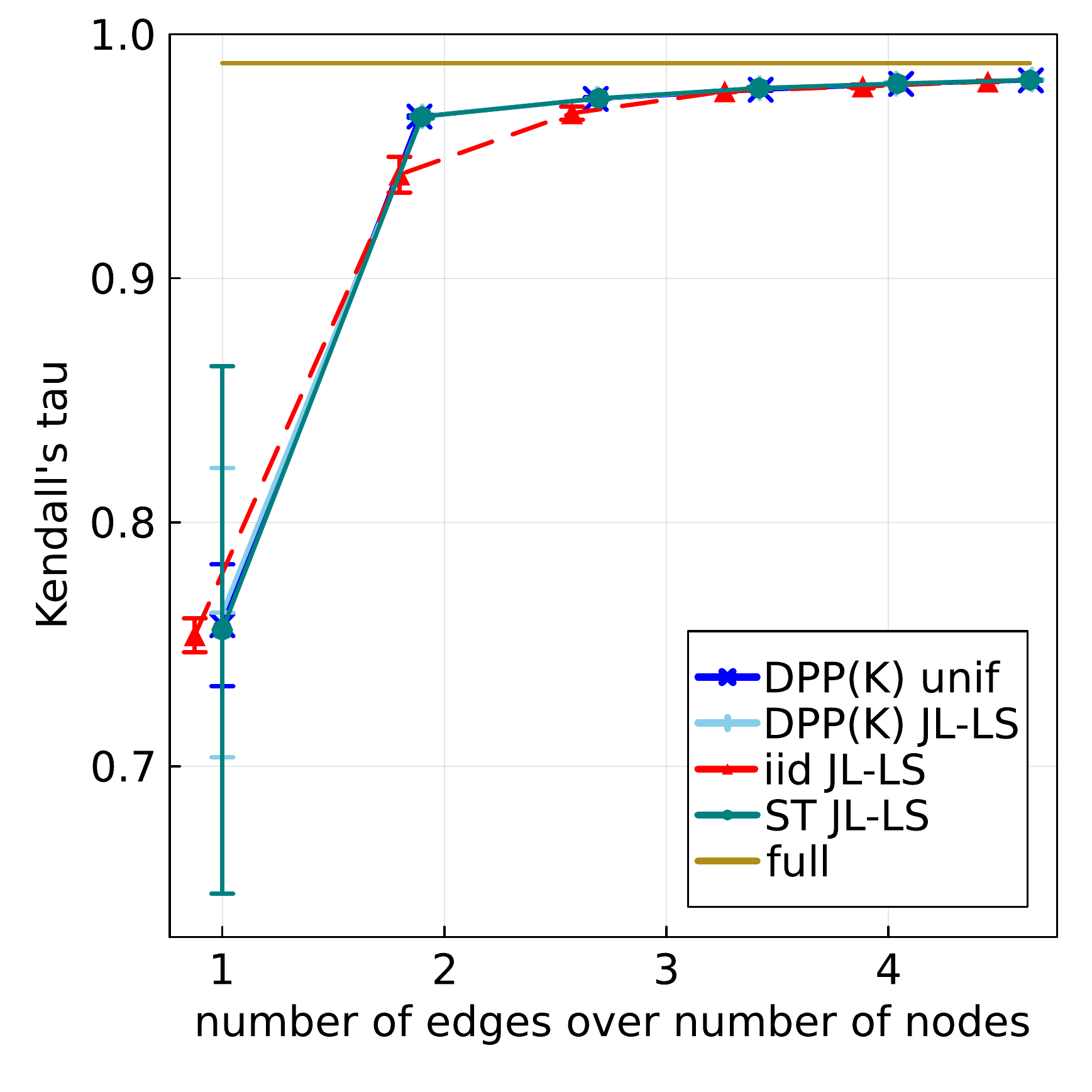}
        \caption{Ranking recovery (MUN). \label{fig:approx_mg_laplacian_MUN_ranking}}
    \end{subfigure}
    \hfill
    \begin{subfigure}[b]{0.49\textwidth}
        \centering
        \includegraphics[scale=0.4]{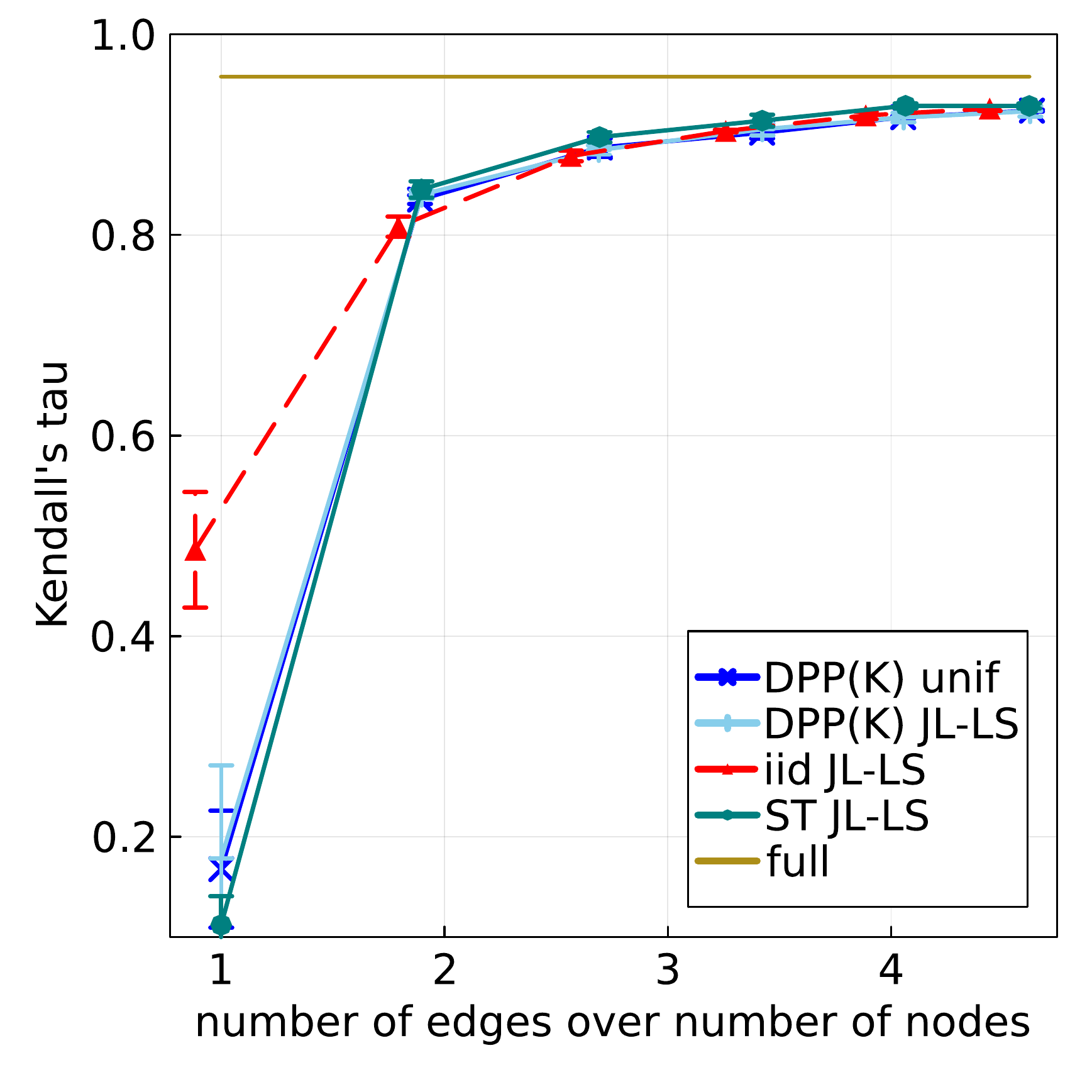}
        \caption{Ranking recovery (ERO). \label{fig:approx_mg_laplacian_ERO_ranking}}
    \end{subfigure}
    \caption{
        {Sparsify-and-eigensolve the magnetic Laplacian of MUN and ERO graphs. On the top row, we display the distance between the top eigenvectors of the magnetic Laplacian and its sparsifier. 
        On the bottom row, we report the ranking recovery with Sync-Rank based on the sparsifier, in terms of Kendall's tau distance to the planted ranking.}
        \label{fig:approx_mg_laplacian}
        }
\end{figure}
As expected, all algorithms reach the same performance with enough edges, whatever the metric.
Another immediate observation is that, for the random graphs we consider, the degree distribution is rather concentrated.
The uniform leverage score heuristic yields good results for the CRSF-based sparsifiers, in terms of the approximation of the least eigenvector of the magnetic Laplacian, as shown in \cref{fig:approx_mg_laplacian}.

We also observe that for the MUN and ERO random connection graphs considered, CRSF and ST sampling have a comparable performance, whereas i.i.d.\ sampling gives a poorer result.
For the chosen parameters, the magnetic Laplacian approximation requires fewer edges for MUN (\cref{fig:approx_mg_laplacian_MUN_ranking}) than for ERO (\cref{fig:approx_mg_laplacian_ERO_ranking}).
For the MUN graph, the number of sampled CRTs is close to minimal, namely there is about one CRT per CRSF; see \cref{fig:approx_mg_laplacian_MUN_CRTs}.
Also, the amount of CRTs sampled by our algorithm is larger in the case of the ERO model (\cref{fig:approx_mg_laplacian_ERO_CRTs}), which confirms that the latter corresponds to a more difficult problem.
{Intuitively, when the walker in $\cyclepopping{}$ meets more often inconsistent cycles, the ranking problem contains more inconsistencies to be resolved.}
In \cref{sec:cycle_inconsistencies}, the average inconsistency of the sampled cycles are also displayed to illustrate this observation.

In conclusion, for these well-connected graphs, the sparsifiers obtained with batches of CRSFs or batches of STs both reach approximately the same performance for sparsifying Sync-Rank, in terms of ranking recovery.

\subsection{Laplacian preconditioning} 
\label{sub:laplacian_preconditioning}
As motivated in \cref{sub:regularized_laplacian_systems_preconditioning_and_sparse_cholesky_factorization}, we give a few examples of condition number reduction by using sparsifiers. First, we illustrate the preconditioning of the magnetic Laplacian $\Delta$, and second, we analyse the preconditioning of the regularized combinatorial Laplacian $ \Lambda + q \I_n$.
\subsubsection{Magnetic Laplacian: sparsify-and-precondition}

In our simulations, we consider connection graph models with a planted ranking and a low level of noise since $\cond(\Delta)$ is larger in this case.
On the $y$-axis of the plots, we display $\cond(\widetilde{\Delta}^{-1}\Delta)$.
{This condition number is the ratio of the largest to the smallest eigenvalue of the generalized eigenvalue problem $\Delta \mathbf{v} = \lambda \widetilde{\Delta} \mathbf{v}$.
We compute these extremal eigenvalues thanks to \texttt{Arpack.jl}, which can deal with sparse matrices.}

The $x$-axis indicates the total number of edges contained in the union of batches.
The sampling is repeated $3$ times for a fixed connection graph and the mean is displayed.
The error bars are large for i.i.d.\ sampling and are not displayed to ease the reading of the figures.
We compare the condition number
$
    \cond(\widetilde{\Delta}^{-1}\Delta)
$
for sparsifiers $\widetilde{\Delta}$ obtained from different sampling methods.
This problem is motivated by applications to angular synchronization described in
\cref{sub:Eigenvalue problem and angular synchronization} and to semi-supervised learning described  in  \cref{sub:regularized_laplacian_systems_preconditioning_and_sparse_cholesky_factorization}.
An advantage of using CRSFs to construct the sparsifier $\widetilde{\Delta}$ is that the latter is invertible almost surely even in the case of one CRSF.
This is not necessarily true if i.i.d.\ samples or STs are used in which case we replace the sparsifier by $\widetilde{\Delta} + 10^{-12} \I_n$.

\paragraph*{{Sparsify-and-precondition $\Delta$ of random graphs.}} We consider the case of MUN$(n,p,\eta)$ and ERO$(n,p,\eta)$ connection graphs with $n= 2000$, $p=0.01$. 
For the MUN graph, we choose $\eta = 10^{-3}$ (\cref{fig:cond_number_mag_q_0_MUN_0.001}) and $\eta = 10^{-1}$ (\cref{fig:cond_number_mag_q_0_MUN_0.1}) since a low value of the multiplicative noise yields a small least eigenvalue for $\Delta$.
For the same reason, for generating the ERO graphs, we take $\eta=10^{-4}$
 (\cref{fig:cond_number_mag_q_0_ERO_0.0001}) and $\eta = 10^{-3}$ (\cref{fig:cond_number_mag_q_0_ERO_0.001})  so that a few percentage of edges (outliers) are affected by noise.
Both for MUN and ERO, results are averaged over $3$ independent runs.

We observe the sparsifiers obtained with CRSF sampling are more efficient that other sampling techniques.
For the MUN noise model where the noise is uniform over the edges, batches of spanning trees also provide a good approximation.

In the case of the ERO model, pure noise is corrupting certain edges whereas all the other edges are noiseless. 
In \cref{fig:cond_number_mag_ERO_q_0} for a small $\eta$, sparsifiers built with CRSFs yield a better approximation compared with STs or other approaches. Intuitively, we can understand this behaviour as follows: by construction, the ERO random graph contains only a few inconsistent edges, which are likely to be sampled in CRSFs.
Recall that an ERO$(n,p,\eta)$ random graph is a connection graph where an edge is present with probability $p$ and is corrupted by a uniform noise with probability $\eta$.
To confirm our intuition for this ERO model, we computed the average number of \emph{noisy} edges captured by the generated random subgraph.{
    We found $1.04 (0.2)$ noisy edges on average in a random CRSF and $0.17 (0.4)$ noisy edges in a random ST; note that the standard deviation over $100$ runs is given in parentheses.
}
%

{\paragraph*{Sparsify-and-precondition $\Delta$ of real graphs.} We also perform the same simulations on two connection graphs built on the Epinions and Stanford networks as described in \cref{sec:random_graph_real_nets}.}
{The choosen parameters are $\eta = 5\cdot 10^{-2}$ for Epinions-MUN and $\eta = 2 \cdot 10^{-5}$ for Epinions-O are such that the least eigenvalue of $\Delta$ is small.
Similarly, we take $\eta = 10^{-2}$ for Stanford-MUN and $\eta = 2\cdot 10^{-5}$ for Stanford-O.}

{
For these larger graphs, leverage score estimation is time consuming. Therefore, in \cref{fig:cond_number_Epinions_q_0} (Epinions) and  \cref{fig:cond_number_Stanford_q_0} (Stanford), we restrict our comparisons to ``DPP(K) unif'' and ``ST unif'' and run the simulation only once since the computation of the condition numbers by solving generalized eigenvalue problems is the bottleneck in this case.
In these two figures, we see an advantage of ``DPP(K) unif'' over ``ST unif'' in the ERO case.
For the MUN model, there is only a clear difference in favor of ``DPP(K) unif'' for the Stanford graph, while for the Epinions graph both methods have a similar performance.
We observe in \cref{fig:cond_number_MUN_epinions} that with a batch of two spanning subgraphs, the condition number of $\Delta$ is reduced by about three orders of magnitude.
Empirical sampling times of $\cyclepopping{}$ for CRSFs and Wilson's algorithm for STs (with uniformly sampled root node) are computed over $100$ runs and reported in \cref{fig:epinions_MUN_time_CRSF_vs_ST} and \cref{fig:epinions_ERO_time_CRSF_vs_ST}. 
The upshot is that there is no large difference between these two sampling times in this setting.}

\begin{remark}[Bound on the condition number]
    Note that the largest eigenvalue of $\Delta$ is always bounded from above by $2 \max_v d(v)$, as can be seen using Gershgorin's circles theorem. Thus, the condition number of $\Delta + q \I_n$ is bounded from above by $2 \max_v d(v) / (q + \lambda_{1}(\Delta))$. 
    A consequence is that ill-conditioning is more likely to arise for large $\Uone$-connection graphs with a low level of inconsistencies.
\end{remark}
\begin{figure}
    \centering
    \begin{subfigure}[b]{0.49\textwidth}
        \centering
        \includegraphics[scale=0.4]{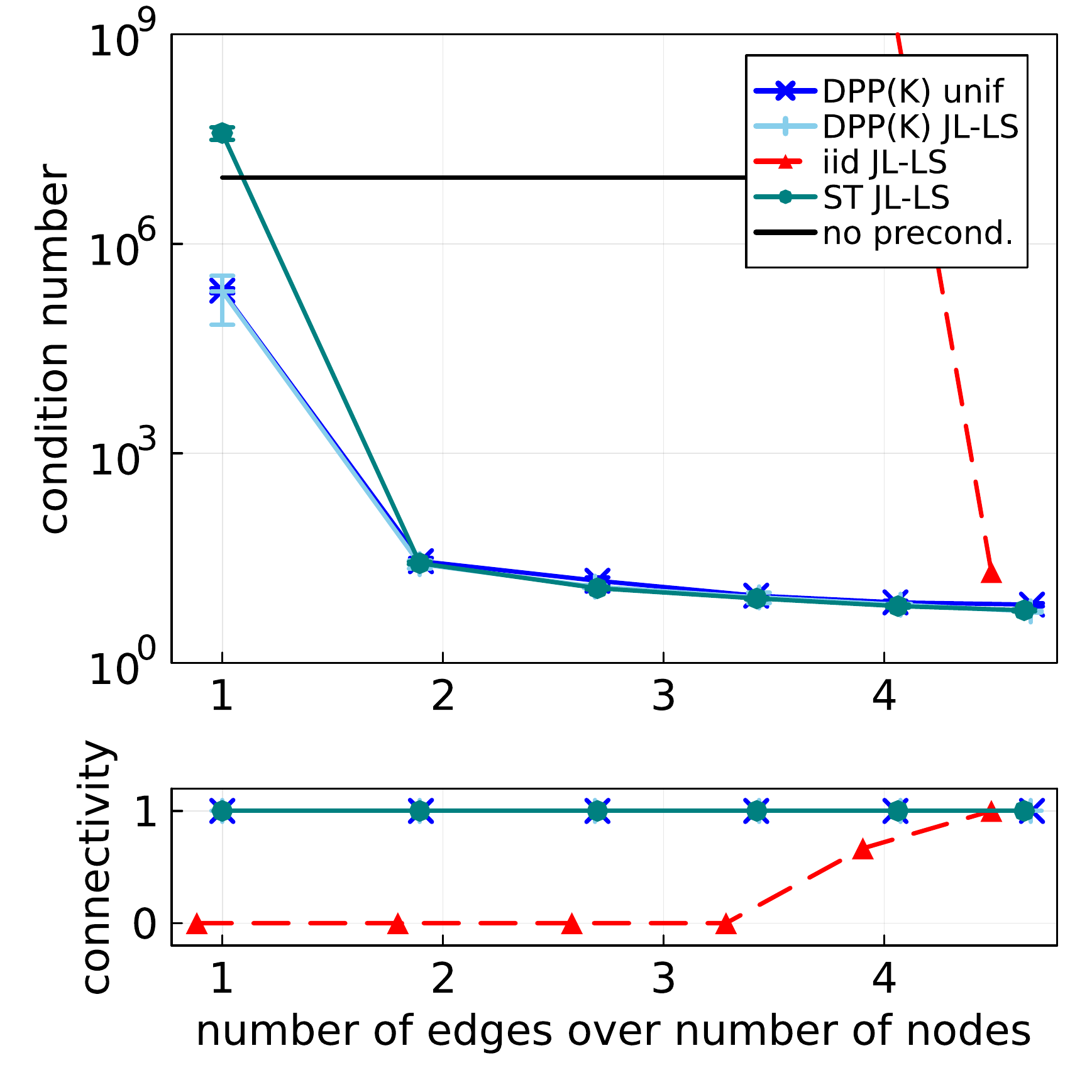}
        \caption{$\cond(\widetilde{\Delta} 
        ^{-1}\Delta)$ for MUN: $\eta=10^{-3}$. \label{fig:cond_number_mag_q_0_MUN_0.001}}
    \end{subfigure}
    \begin{subfigure}[b]{0.49\textwidth}
        \centering
        \includegraphics[scale=0.4]{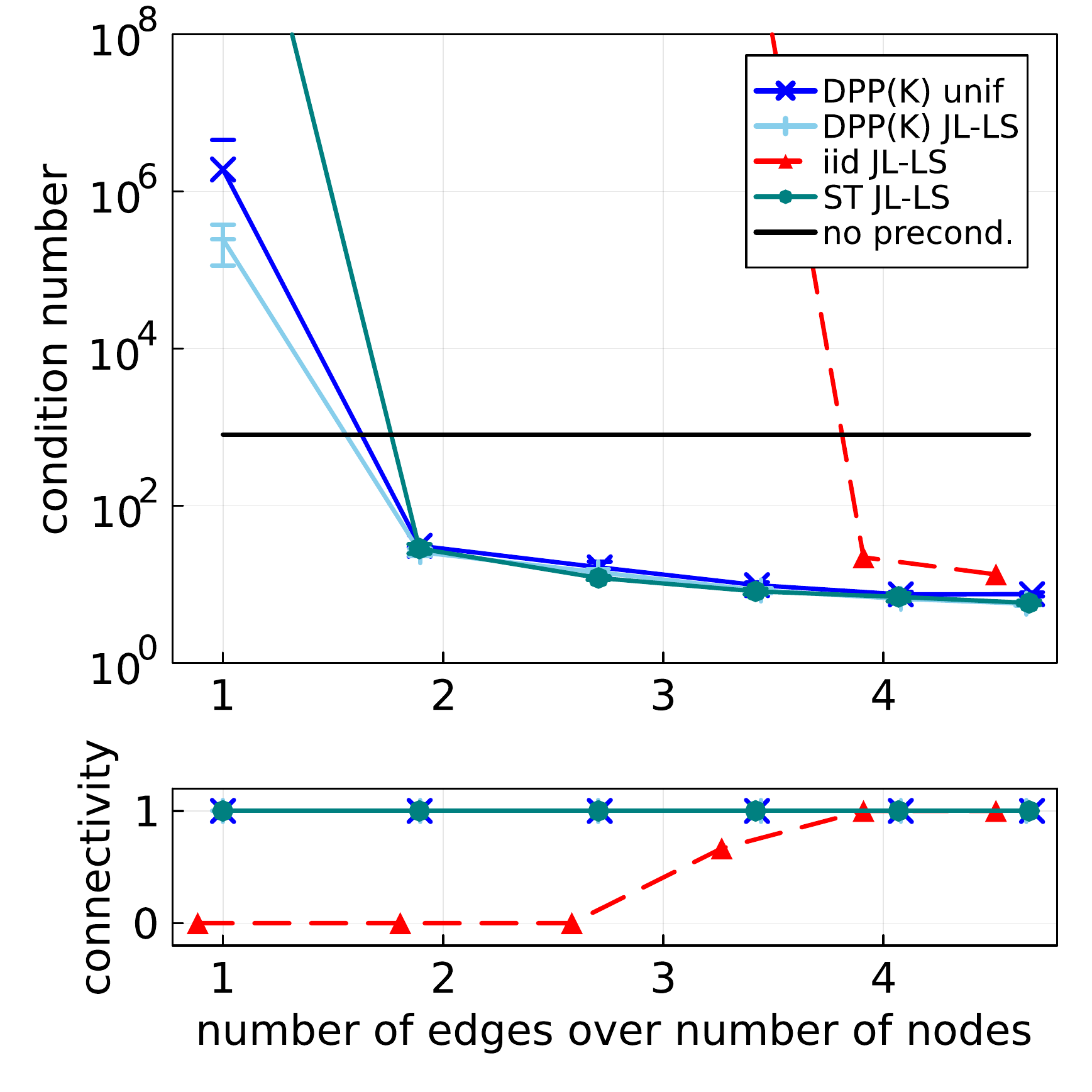}
        \caption{$\cond(\widetilde{\Delta} 
        ^{-1}\Delta)$  for MUN:  $\eta=10^{-1}$.\label{fig:cond_number_mag_q_0_MUN_0.1}}
    \end{subfigure}
    \caption{
        Sparsify-and-precondition the magnetic Laplacian of a MUN$(n,p,\eta)$ {graph with $n= 2000$ and $p=0.01$}. We display $\cond(\widetilde{\Delta} 
         ^{-1}\Delta)$.
        Results are averaged over $3$ independent estimates $\widetilde{\Delta}$ for a fixed MUN graph.}
    \label{fig:cond_number_Epinions_q_0}
\end{figure}

\begin{figure}
    \centering
    \begin{subfigure}[b]{0.49\textwidth}
        \centering
        \includegraphics[scale=0.4]{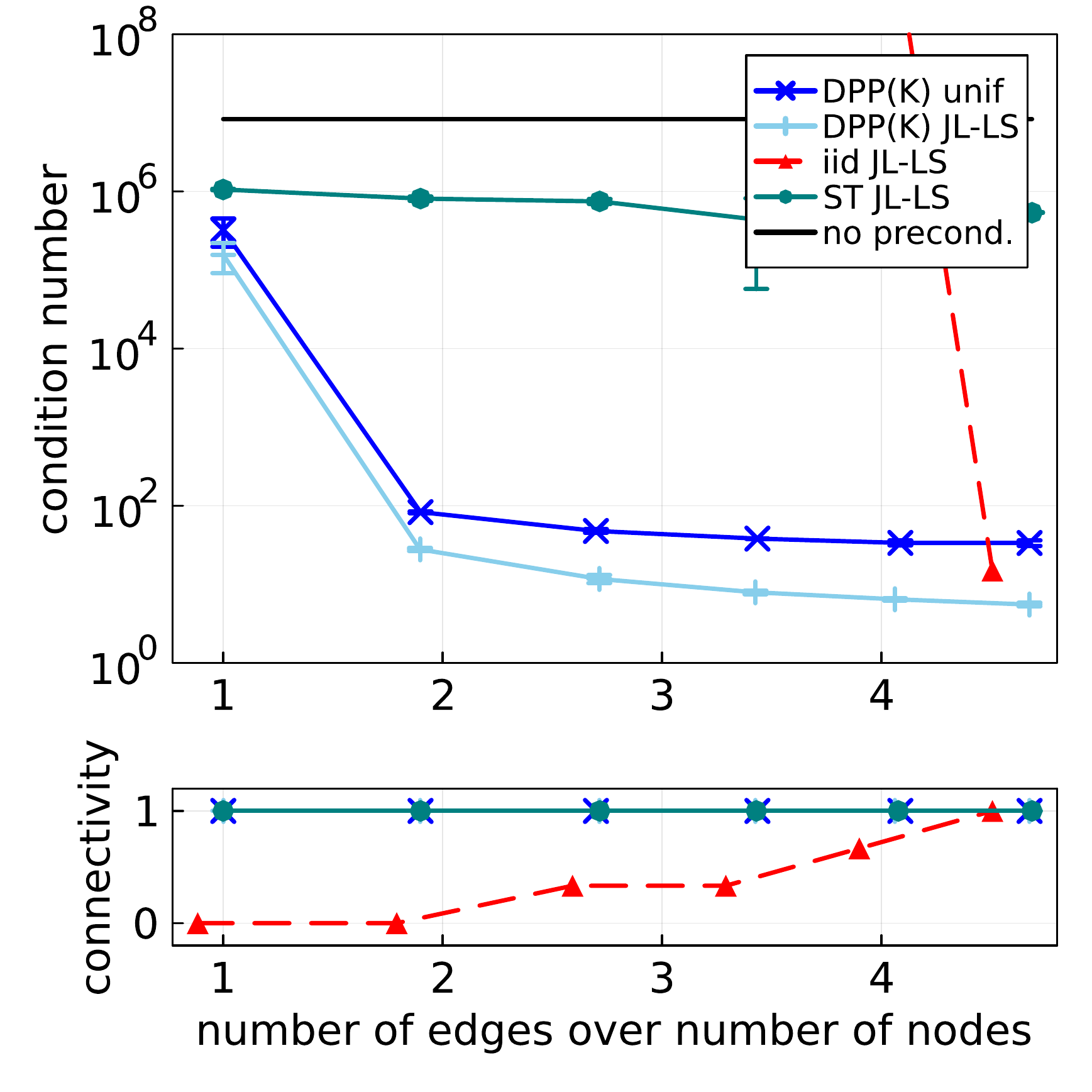}
        \caption{$\cond(\widetilde{\Delta} 
        ^{-1}\Delta)$ for ERO: $\eta=10^{-4}$.\label{fig:cond_number_mag_q_0_ERO_0.0001}}
    \end{subfigure}
    \begin{subfigure}[b]{0.49\textwidth}
        \centering
        \includegraphics[scale=0.4]{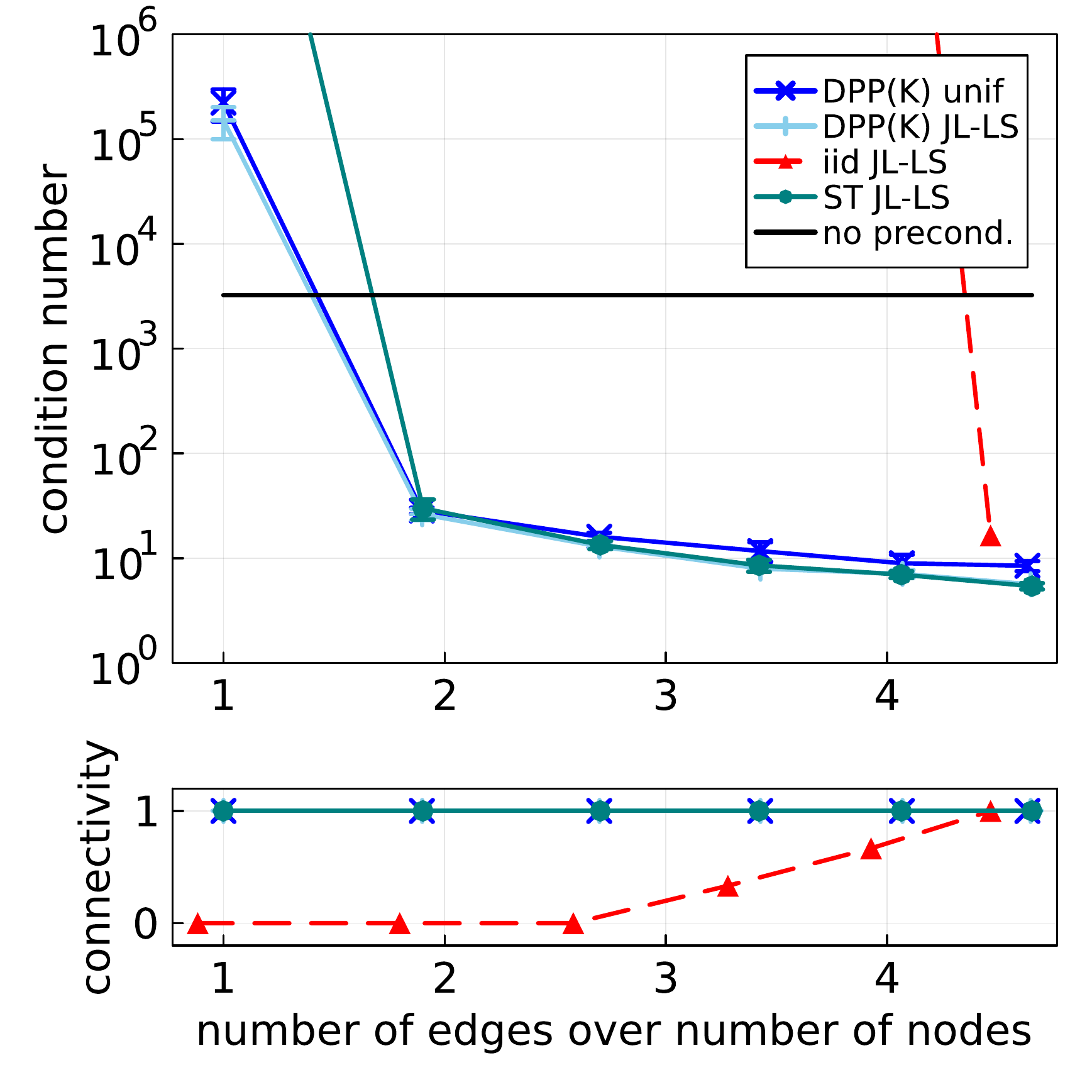}
        \caption{$\cond(\widetilde{\Delta} 
        ^{-1}\Delta)$ for ERO: $\eta=10^{-3}$. \label{fig:cond_number_mag_q_0_ERO_0.001}}
    \end{subfigure}
    \caption{
        Sparsify-and-precondition the magnetic Laplacian of ERO$(n,p,\eta)$ graph with $n= 2000$ and $p=0.01$. 
        We display $\cond(\widetilde{\Delta} 
         ^{-1}\Delta)$.
         Results are averaged over $3$ independent estimates $\widetilde{\Delta}$ for a fixed ERO graph.}
    \label{fig:cond_number_mag_ERO_q_0}
\end{figure}
\begin{figure}
    \centering
    \begin{subfigure}[b]{0.49\textwidth}
        \centering
        \includegraphics[scale=0.4]{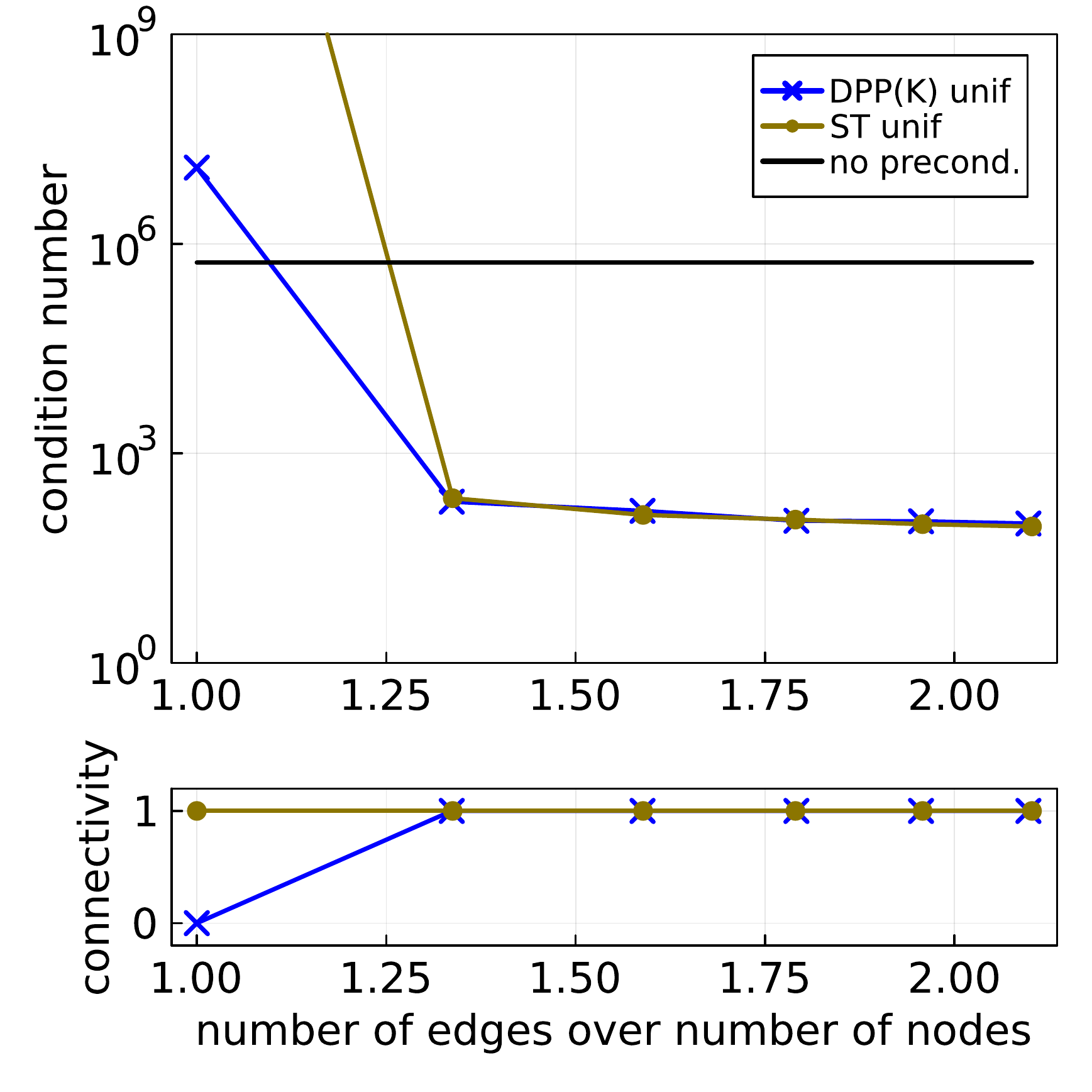}
        \caption{$\cond(\widetilde{\Delta} 
        ^{-1}\Delta)$ for Epinions-MUN: $\eta=5\cdot 10^{-2}$. \label{fig:cond_number_MUN_epinions}}
    \end{subfigure}
    \hfill
    \begin{subfigure}[b]{0.49\textwidth}
        \centering
        \includegraphics[scale=0.4]{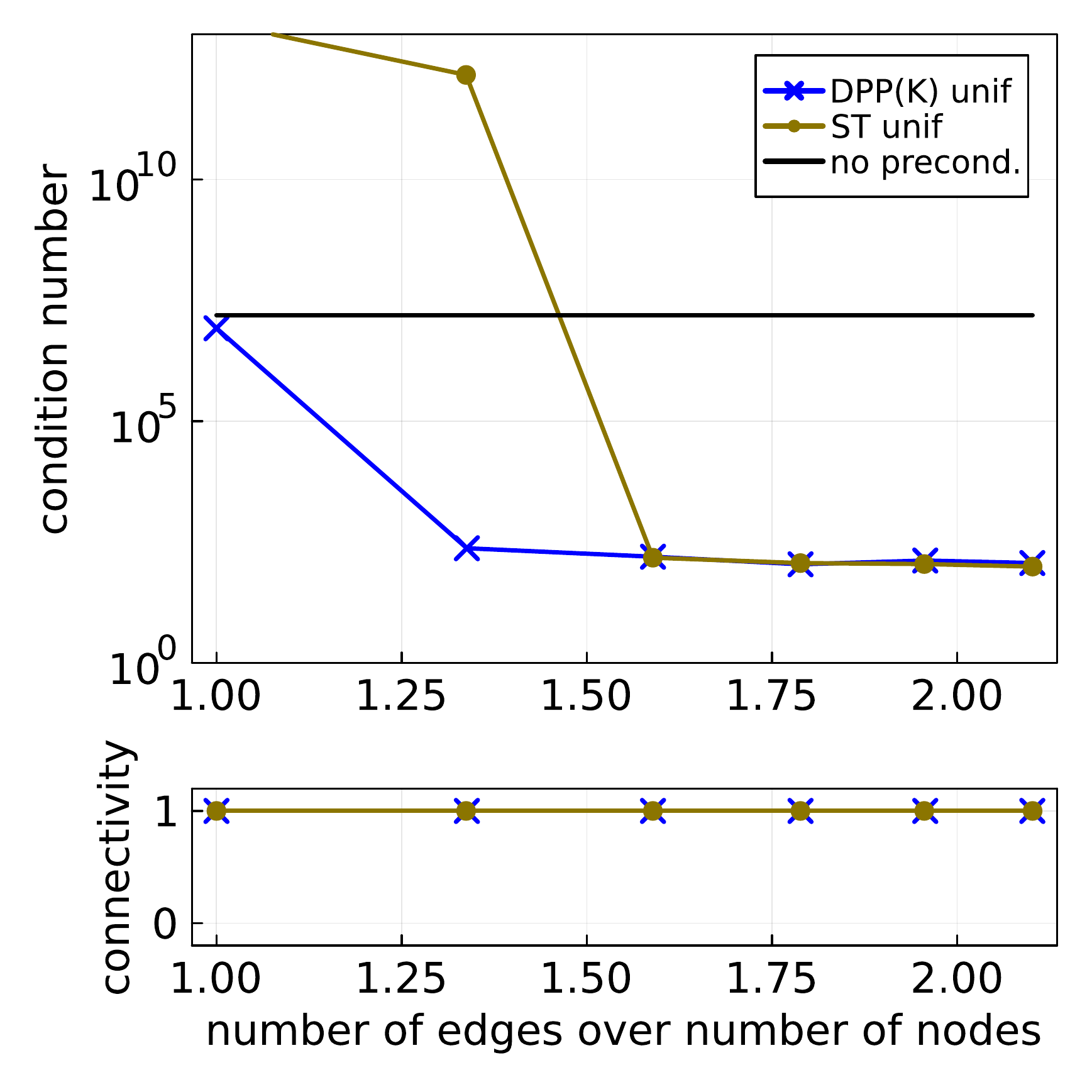}
        \caption{$\cond(\widetilde{\Delta} 
        ^{-1}\Delta)$ for Epinion-O: $\eta= 2\cdot 10^{-5}$. \label{fig:cond_number_ERO_epinions}}
    \end{subfigure}
    \begin{subfigure}[b]{0.49\textwidth}
        \centering
        \includegraphics[scale=0.4]{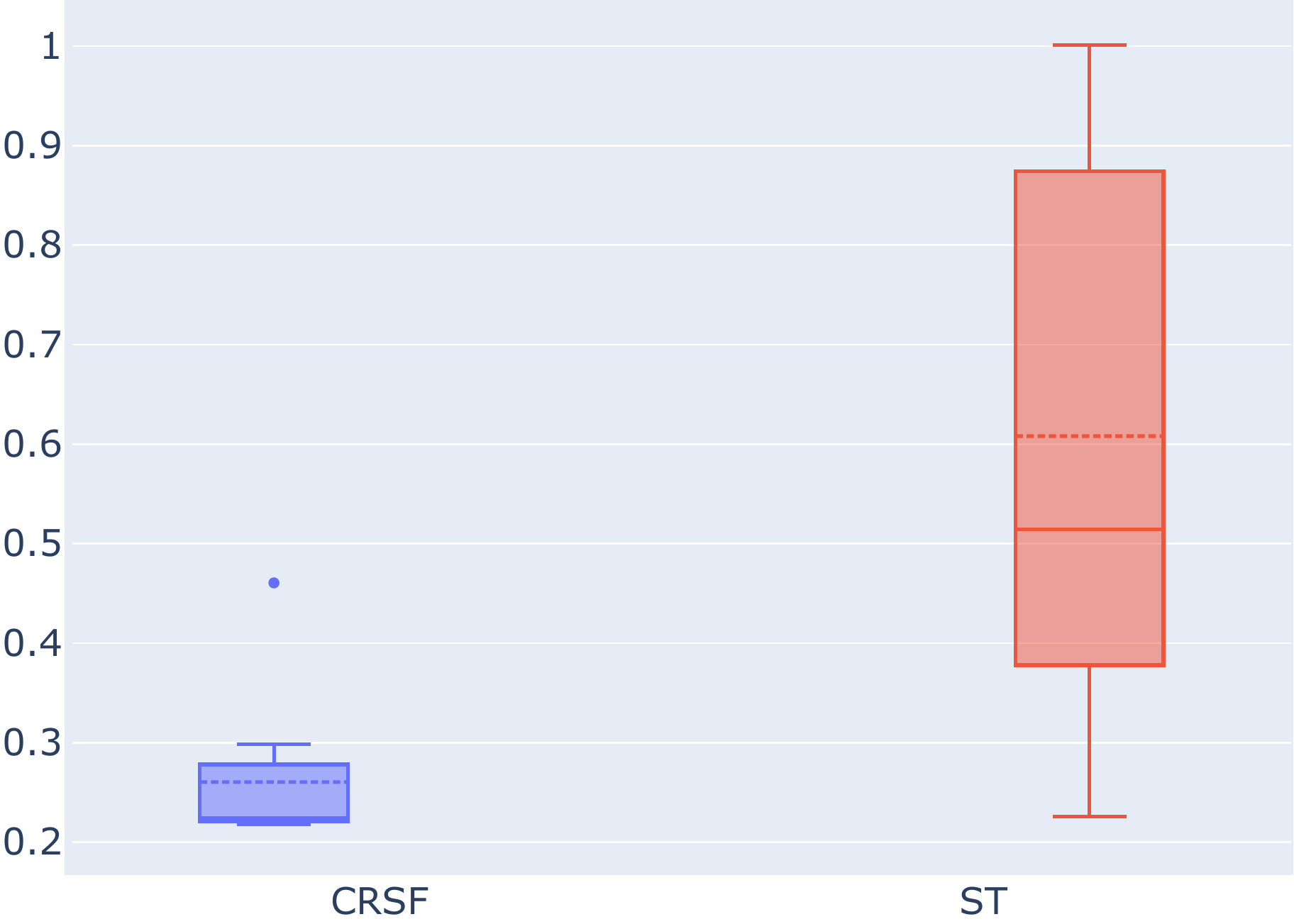}
        \caption{Sampling time (s) in Epinions-MUN: $\eta=5\cdot 10^{-2}$. \label{fig:epinions_MUN_time_CRSF_vs_ST}}
    \end{subfigure}
    \hfill
    \begin{subfigure}[b]{0.49\textwidth}
        \centering
        \includegraphics[scale=0.4]{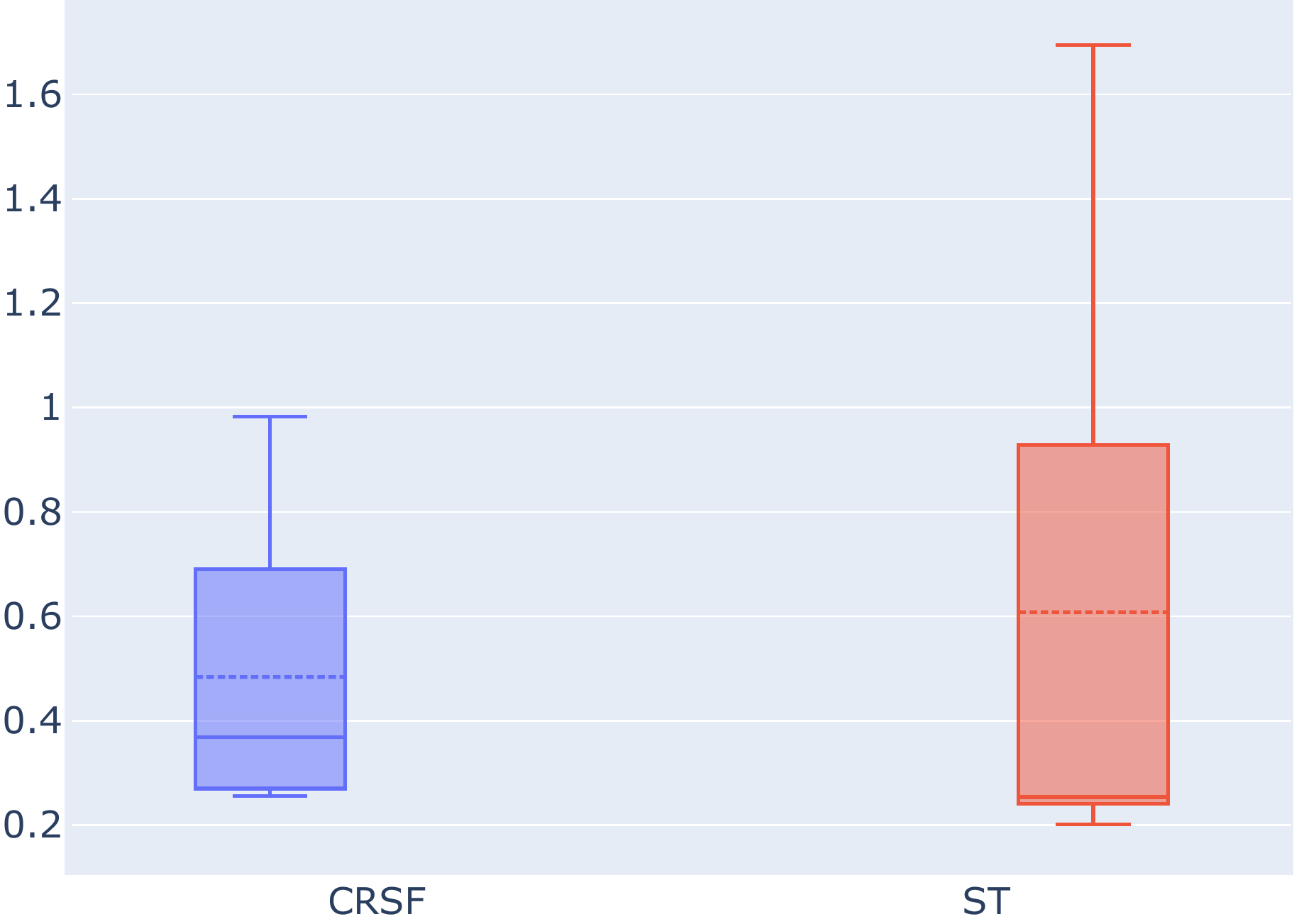}
        \caption{Sampling time (s) in Epinions-O: $\eta= 2\cdot 10^{-5}$. \label{fig:epinions_ERO_time_CRSF_vs_ST}}
    \end{subfigure}
    \caption{
        Sparsify-and-precondition the magnetic Laplacian of Epinions-MUN$(\eta)$ and Epinions-O$(\eta)$ graphs. 
        The first row of figures displays $\cond(\widetilde{\Delta} 
         ^{-1}\Delta)$ for only $1$ estimate $\widetilde{\Delta} $.
        The second row reports boxplots of sampling times (s) of $100$ executions of $\cyclepopping{}$ for CRSFs and Wilson's algorithm for STs. For each boxplot, the mean is the dashed line in the box.}
    \label{fig:cond_number_Epinions_q_0}
\end{figure}
\begin{figure}
    \centering
    \begin{subfigure}[b]{0.49\textwidth}
        \centering
        \includegraphics[scale=0.4]{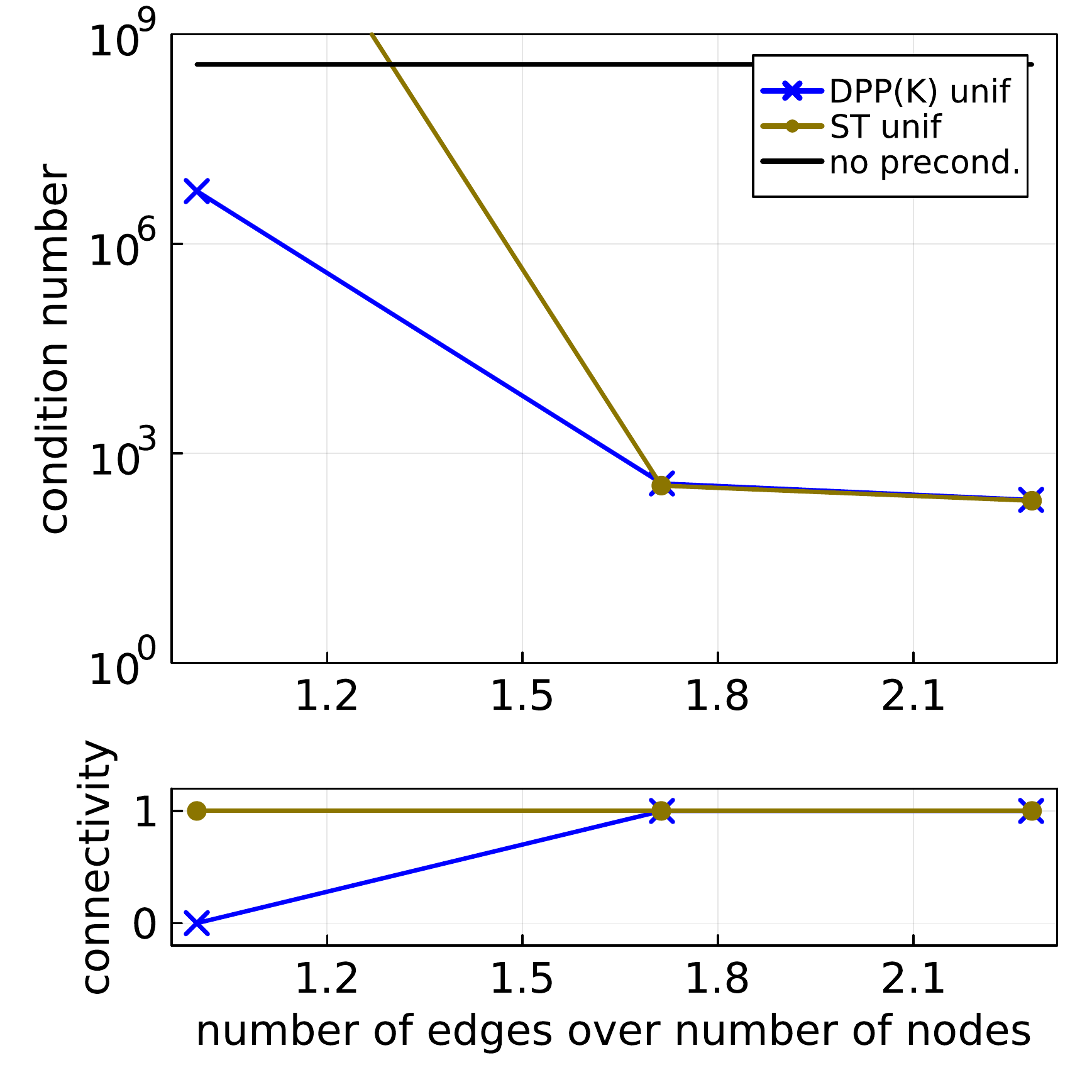}
        \caption{$\cond(\widetilde{\Delta} 
        ^{-1}\Delta)$ Stanford-MUN: $\eta=10^{-2}$. \label{fig:cond_number_MUN_epinions}}
    \end{subfigure}
    \hfill
    \begin{subfigure}[b]{0.49\textwidth}
        \centering
        \includegraphics[scale=0.4]{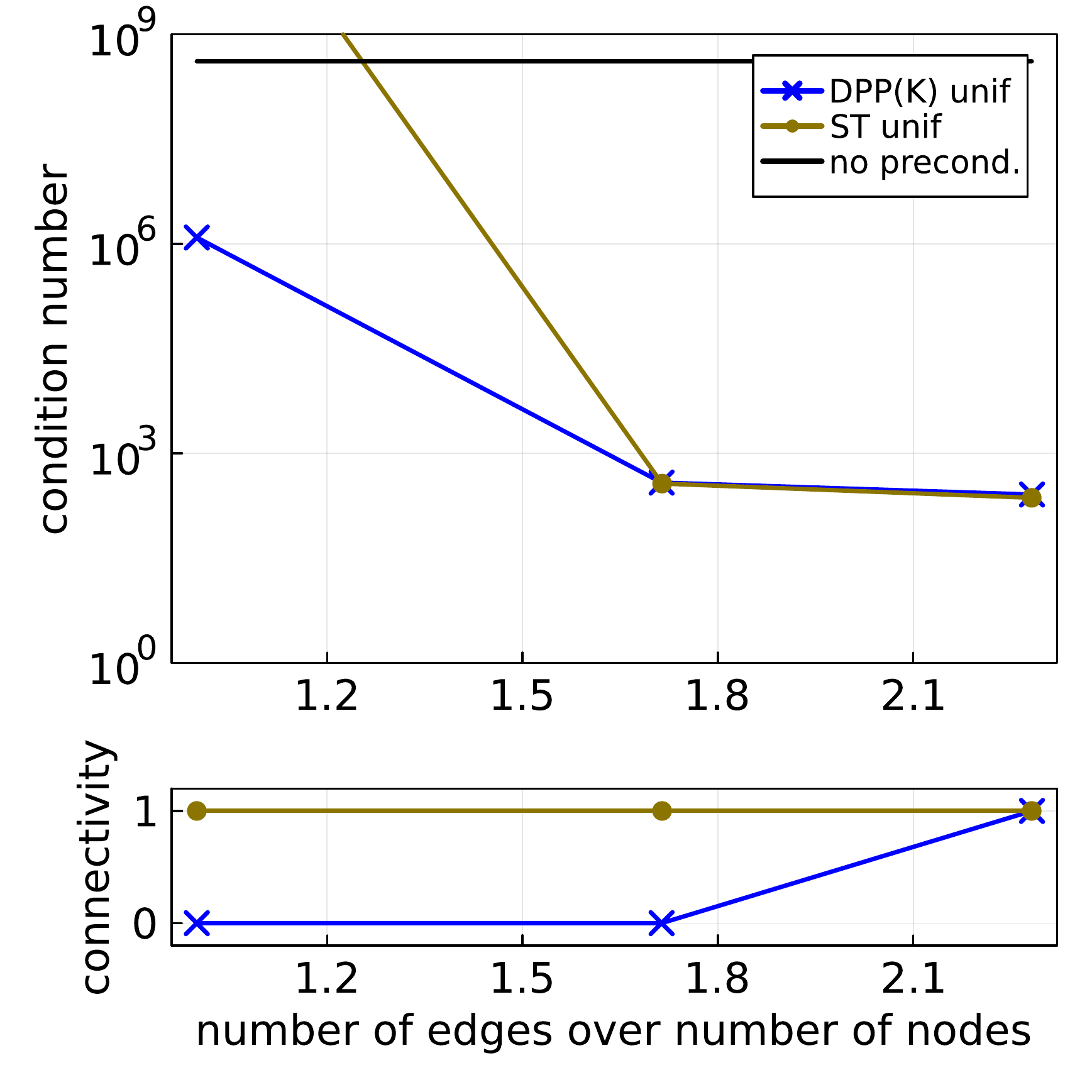}
        \caption{$\cond(\widetilde{\Delta} 
        ^{-1}\Delta)$ Stanford-O: $\eta=  10^{-5}$. \label{fig:cond_number_ERO_epinions}}
    \end{subfigure}
    \begin{subfigure}[b]{0.49\textwidth}
        \centering
        \includegraphics[scale=0.4]{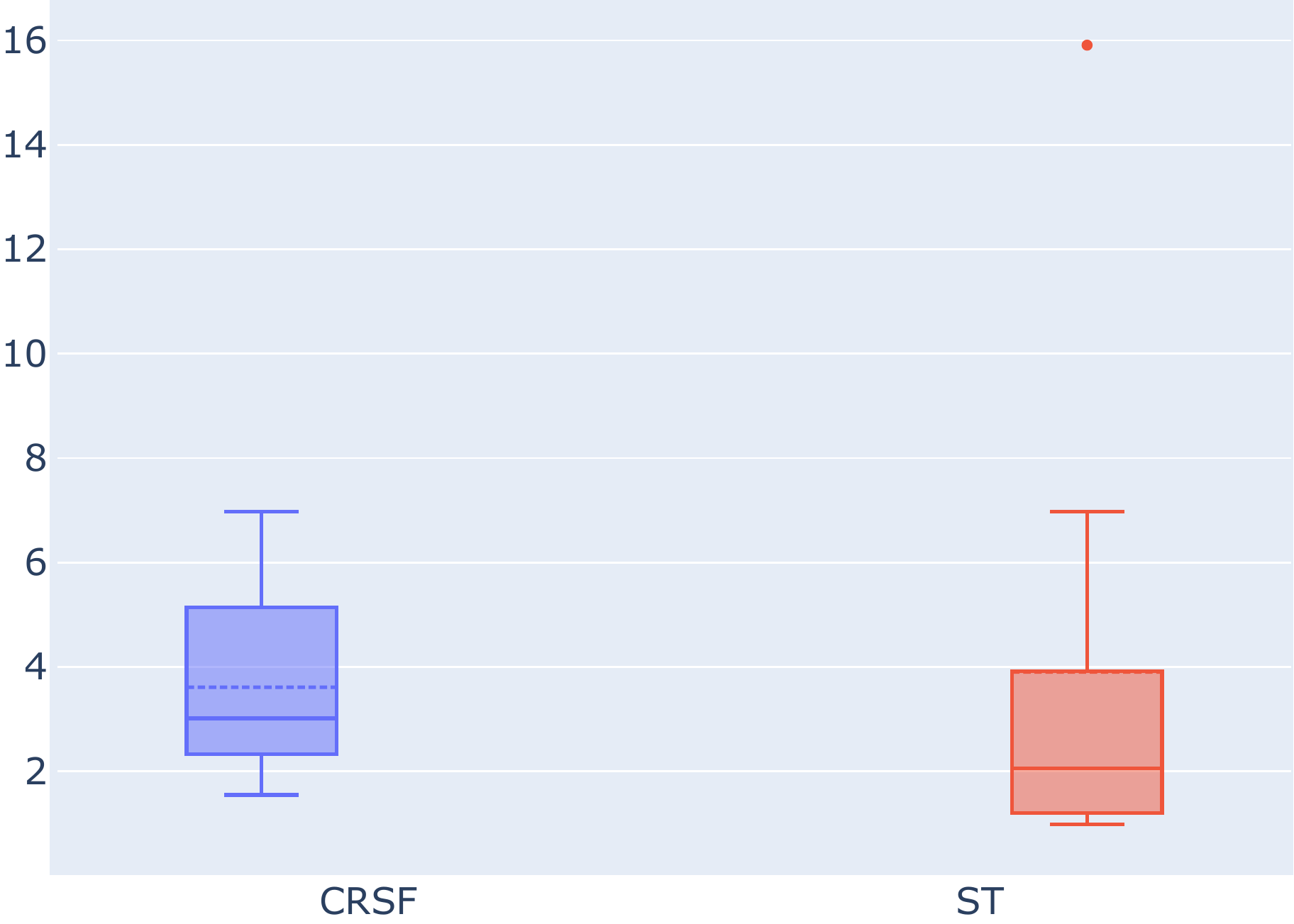}
        \caption{Sampling time (s) in Stanford-MUN: $\eta=10^{-2}$. \label{fig:epinions_MUN_time_CRSF_vs_ST}}
    \end{subfigure}
    \hfill
    \begin{subfigure}[b]{0.49\textwidth}
        \centering
        \includegraphics[scale=0.4]{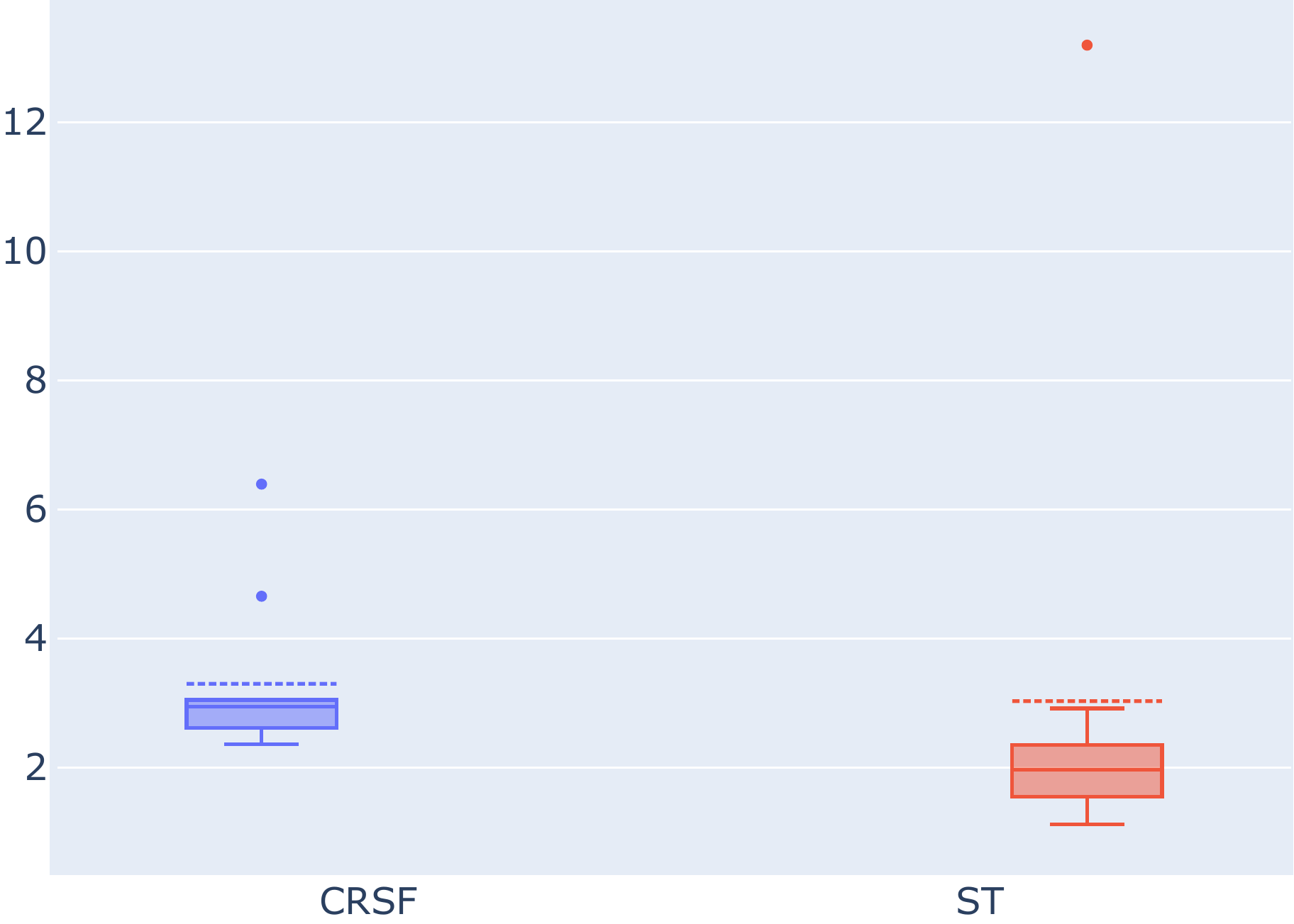}
        \caption{Sampling time (s) in Stanford-O: $\eta=  10^{-5}$. \label{fig:epinions_ERO_time_CRSF_vs_ST}}
    \end{subfigure}
    \caption{
        Sparsify-and-precondition the magnetic Laplacian of Stanford-MUN$(\eta)$ and Stanford-O$(\eta)$ graphs. 
        The first row of figures displays $\cond(\widetilde{\Delta} 
         ^{-1}\Delta)$ for only $1$ estimate $\widetilde{\Delta} $.
        The second row reports boxplots of sampling times (s) of $10$ executions of $\cyclepopping{}$ for CRSFs and Wilson's algorithm for STs. For each boxplot, the mean is the dashed line in the box.}
    \label{fig:cond_number_Stanford_q_0}
\end{figure}
\subsubsection{Regularized Combinatorial Laplacian}
We consider here the sparsified combinatorial Laplacian defined by $\widetilde{\Lambda} = B^*_0 SS^\top  B_0$ where $B_0$ is the (real) oriented incidence matrix -- see \cref{sec:magnetic_laplacian} -- and the sampling matrix
$
    S_{ee'}(\calC) = \delta_{ee'} 1_{\calC}(e') /\sqrt{\lev_0(e)},
$
where $\lev_0(e) = [B_0(B_0^* B_0)^{+}B_0^*]_{ee}$ for all $e\in \calE$.
In what follows, we simply display in the figures the condition number of $(\widetilde{\Lambda} + q \I_n)^{-1}(\Lambda + q \I_{n})$ for sparsifiers $\widetilde{\Lambda}$ obtained thanks to several sampling strategies.
\paragraph{Sparsify-and-precondition $\Lambda + q \I_{n}$ of Erd\H{o}s-Rényi graph.}  Our first simulation concerns an Erd\H{o}s-Rényi graph $\mathrm{ER}(n,p)$ with $n=2000$ and $p=0.01$.
The combinatorial Laplacian of this type of graphs has a simple spectrum and is reputated to have a simple topology as well as no clear community structure.
Hence, we expect a concentrated leverage score distribution.
To give an idea of the magnitude of its eigenvalues, the three smallest eigenvalues of a random $\mathrm{ER}(n,p)$ are {$\lambda_1(\Lambda)=0$, $\lambda_2(\Lambda)\approx 6$ and $\lambda_3(\Lambda)\approx 7$. 
In this academic example, which is designed to understand the role of leverage score estimation on the sparsifiers, we consider very small values of $q$ in comparison with $\lambda_2$.}
In \cref{fig:cond_number_std}, we display on the $y$-axis $\cond((\widetilde{\Lambda} + q \I_n)^{-1}(\Lambda + q \I_{n}))$ for $q = 10^{-3}$ (Left-hand side) and $q=10^{-1}$ (Right-hand side) for an $\widetilde{\Lambda}$ built with batchsizes ranging from $1$ to $6$, whereas the $x$-axis indicates the total number of edges over $n$.
Also, we report the empirical distribution of the number of roots to visualize the effect of $q$ on the number of connected components of the SFs.
The sampling is repeated $3$ times for a fixed Erd\H{o}s-R\'enyi graph and the mean is displayed.

We observe that the SFs sampling strategy outperforms i.i.d.\ sampling. 
Again, we presume that this is due to a lack of connectivity of the subgraphs with i.i.d.\ edges; see the bottom of \cref{fig:cond_number_std_cond_0.001} and \cref{fig:cond_number_std_cond_0.1}.
As it is expected, the SF and i.i.d.\ approaches tend to reach in \cref{fig:cond_number_std} the same accuracy as the number of sampled edges increases.
As displayed in \cref{fig:root_nb_0.001} and \cref{fig:root_nb_0.1}, for a small value of $q$, the sampled subgraphs are often spanning trees and have many more connected components for a larger value of $q$.

\paragraph{Sparsify-and-precondition $\Lambda + q \I_{n}$ of a real graph.} 
Now, we perform the same simulations for a real network, the Polblogs data set\footnote{\url{http://www-personal.umich.edu/~mejn/netdata/}}, which represents the connectivity of blogs about US politics~\citep{polblogs} and is considered here as an  undirected graph.
This graph has $n=1,490$ nodes and $m = 16,718$ edges.
{Only its largest connected component with $n=1,222$ nodes and $m = 16,717$ is considered here.
This network is known to have two densely connected clusters or communities, so that we can expect the leverage score distribution to be less concentrated than in the case of the $\mathrm{ER}$ graph.}
{Note that the three smallest eigenvalues are $\lambda_1(\Lambda)=0$, $\lambda_2(\Lambda)\approx 0.17$ and $\lambda_3(\Lambda)\approx 0.3$.
In the spirit of Tikhonov regularization, the value of the regularization parameter $q$ has to be chosen in order to filter out large frequencies; see \cref{eq:Laplacian_q_reg_system}.
Hence, for our simulations, we choose two values of $q$: one much smaller than $\lambda_2(\Lambda)$ and another with the same order of magnitude. 
These are still rather small values so that the preconditioning problem remains interesting.}
In \cref{fig:cond_number_Polblogs}, we display on the $y$-axis $\cond((\widetilde{\Lambda} + q \I_n)^{-1}(\Lambda + q \I_{n}))$ for $q=0.001$ (Left-hand side) and  for $q = 0.1$ (Right-hand side) with up to $6$ batches of edges, whereas the $x$-axis indicates the total number of edges.
On the right-hand side, we report the cumulated number of roots sampled for each set of batches. We display the average and standard deviations over $3$ runs.

We observe a larger difference between the uniform LS heuristics and the JL approximation, due to the more complex LS distribution of this real network.
Although it is obviously biased, the former heuristics is still able to reduce significantly the condition number of the full Laplacian, namely by about two orders of magnitude.

\begin{figure}
    \centering
    \begin{subfigure}[b]{0.49\textwidth}
        \centering
            \includegraphics[scale=0.4]{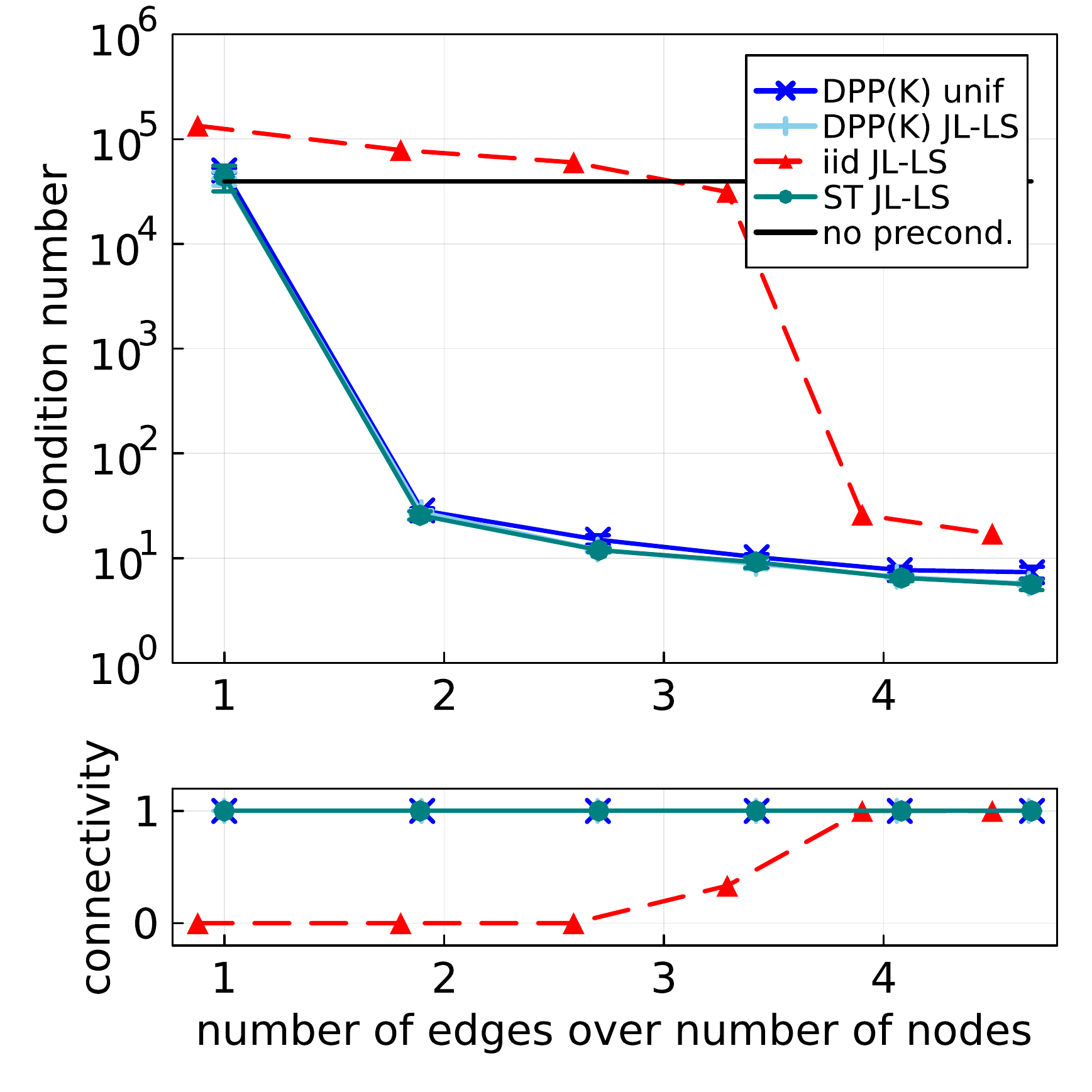}
            \caption{$\cond((\widetilde{\Lambda} + q \I)^{-1}(\Lambda + q \I))$ for $\mathrm{ER}$: $q = 0.001$. \label{fig:cond_number_std_cond_0.001}}
    \end{subfigure}
    \hfill
    \begin{subfigure}[b]{0.49\textwidth}
        \centering
            \includegraphics[scale=0.4]{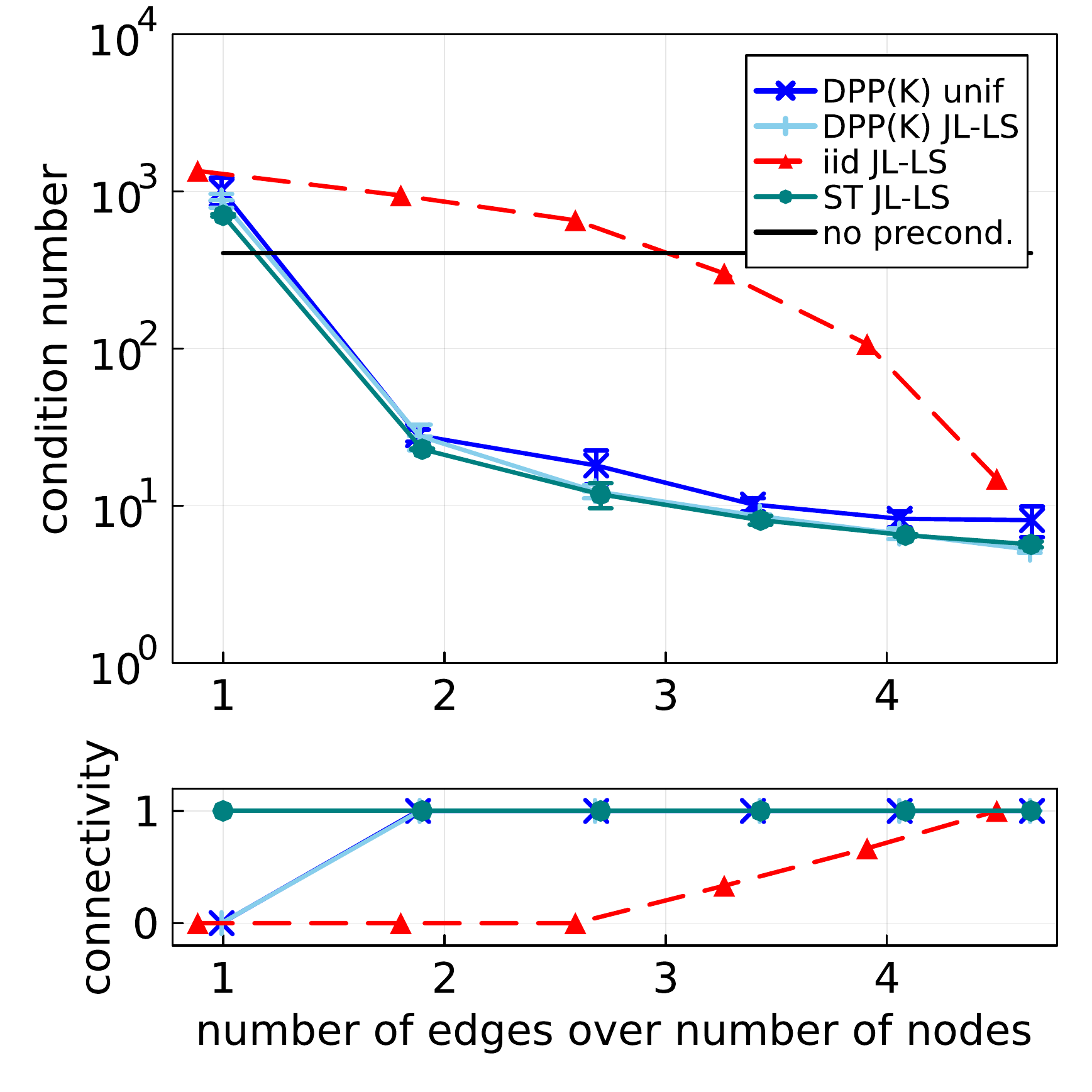}
            \caption{$\cond((\widetilde{\Lambda} + q \I)^{-1}(\Lambda + q \I))$ for $\mathrm{ER}$: $q = 0.1$. \label{fig:cond_number_std_cond_0.1}}
    \end{subfigure}
    \begin{subfigure}[b]{0.49\textwidth}
        \centering
        \includegraphics[scale=0.25]{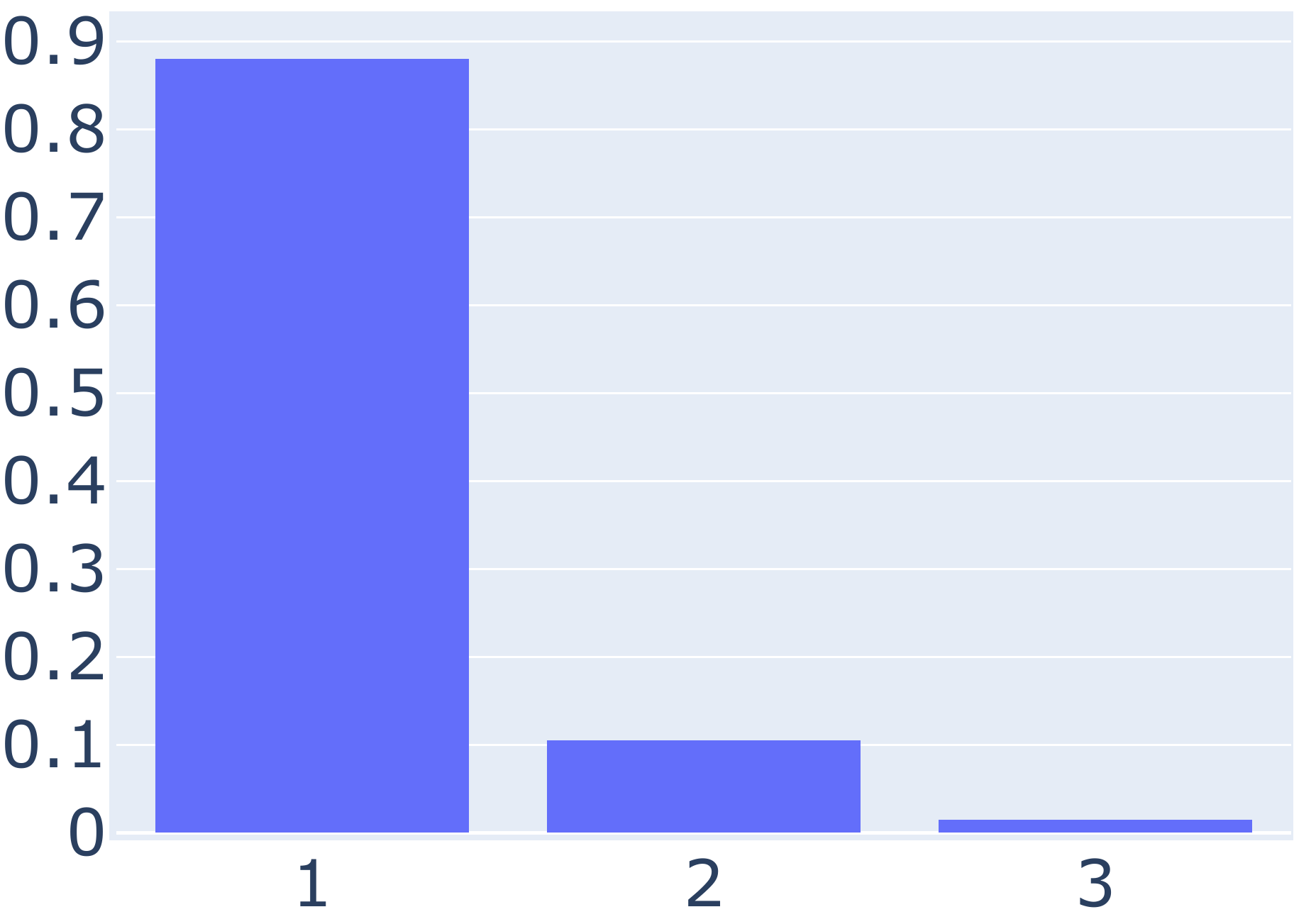}
        \caption{Root number distribution for $\mathrm{ER}$: $q = 0.001$. \label{fig:root_nb_0.001}}
    \end{subfigure}
    \hfill
    \begin{subfigure}[b]{0.49\textwidth}
        \centering
        \includegraphics[scale=0.25]{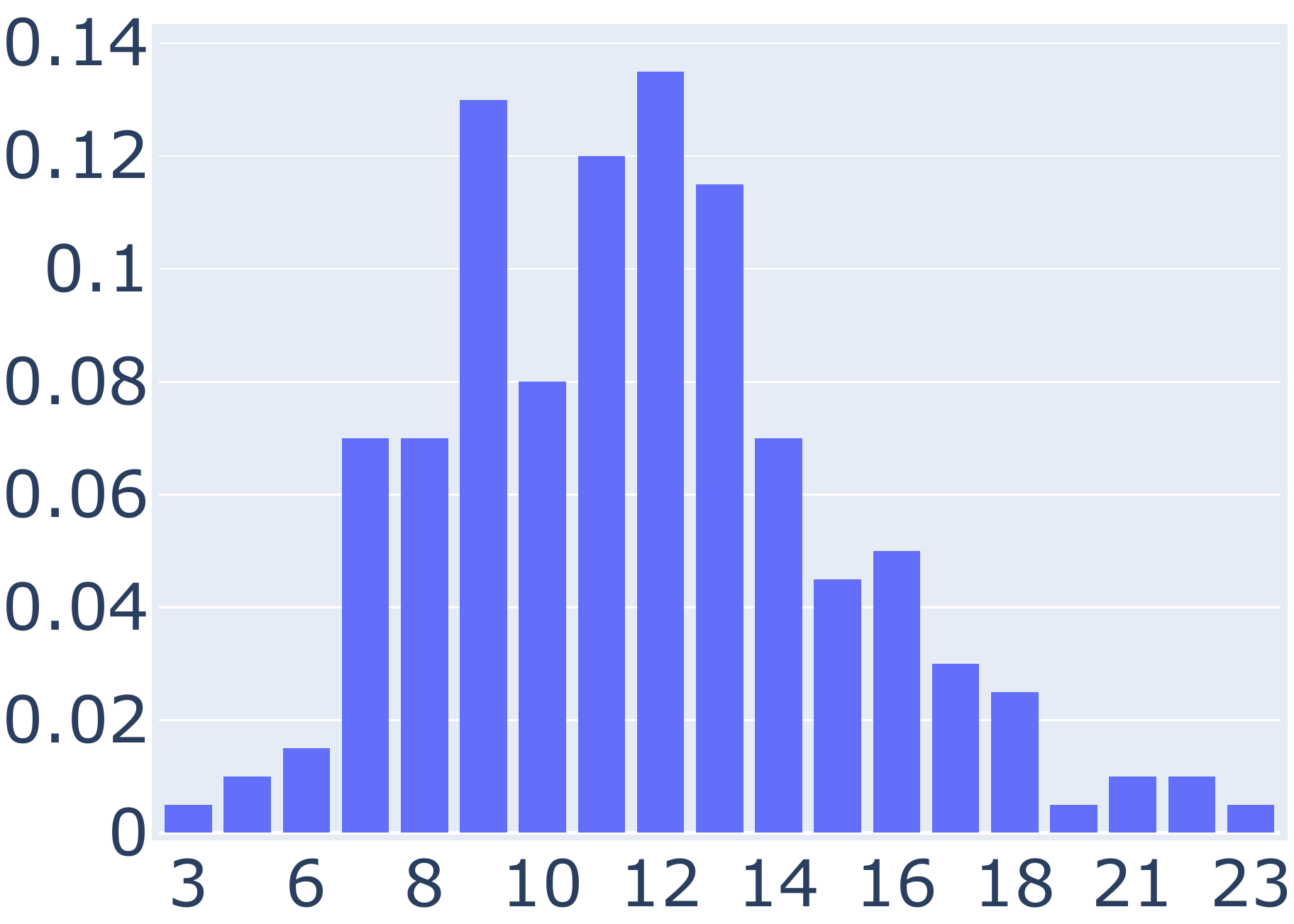}
        \caption{Root number distribution for $\mathrm{ER}$: $q = 0.1$. \label{fig:root_nb_0.1}}
    \end{subfigure}
    \caption{Sparsify-and-precondition the regularized combinatorial Laplacian of $\mathrm{ER}(n,p)$ graph for $n=2000$ and $p=0.01$. 
    At the first row of \cref{fig:cond_number_std_cond_0.001} and \cref{fig:cond_number_std_cond_0.1}, we report $\cond((\widetilde{\Lambda} + q \I)^{-1}(\Lambda + q \I))$ for several methods as a function of $n$ divided by the number of edges in the sparsifier, whereas the second row displays the connectivity of the subgraphs. 
    These figures report averages over $3$ independent runs.
    \cref{fig:root_nb_0.001} and \cref{fig:root_nb_0.1} display the empirical distribution of the number of roots in one determinantal subgraph. }
    \label{fig:cond_number_std}
\end{figure}
%
%
\begin{figure}
    \centering
    \begin{subfigure}[b]{0.49\textwidth}
        \centering
        \includegraphics[scale=0.4]{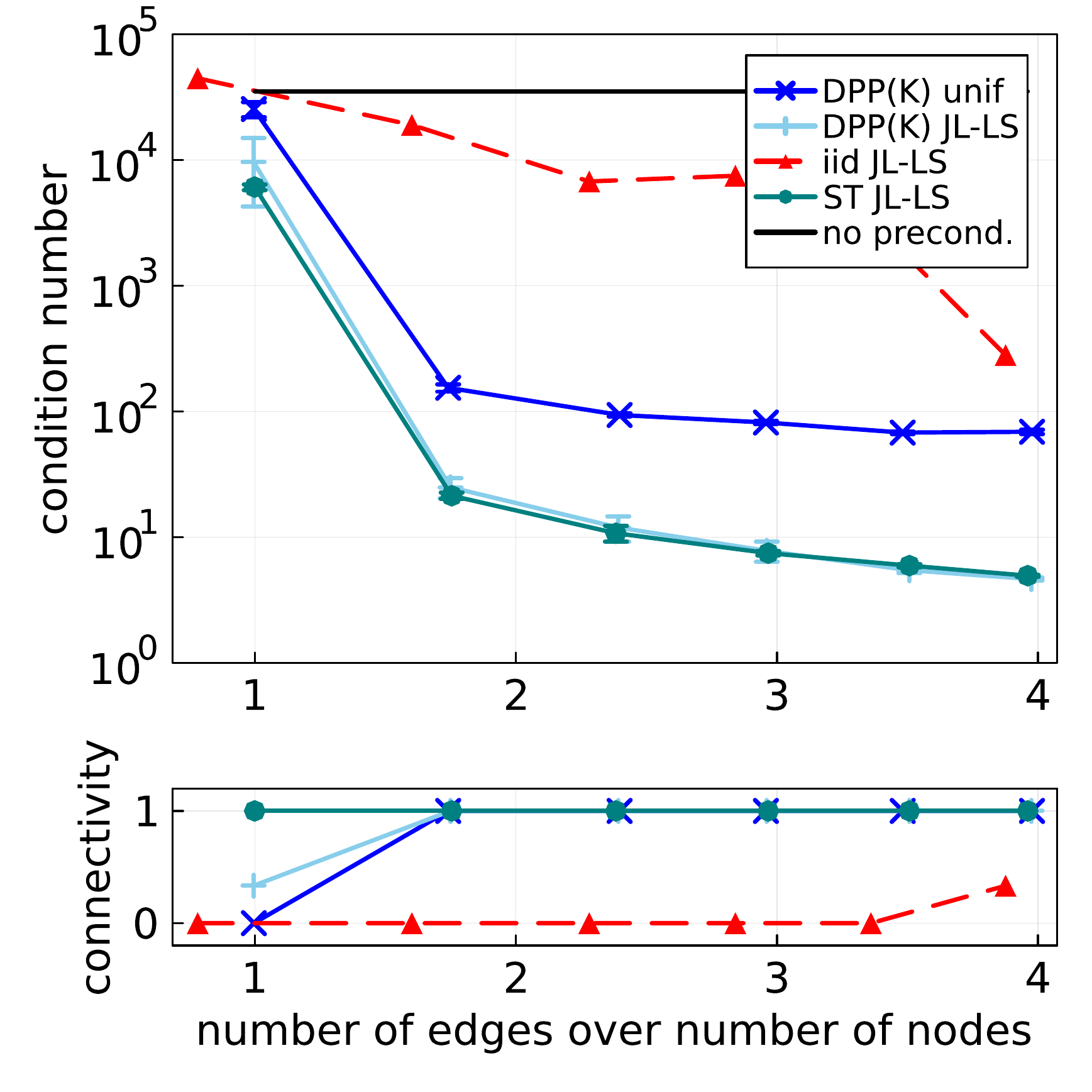}
        \caption{$\cond((\widetilde{\Lambda} + q \I_n)^{-1}(\Lambda + q \I_{n}))$ for Polblogs: $q=0.01$.}
    \end{subfigure}
    \begin{subfigure}[b]{0.49\textwidth}
        \centering
        \includegraphics[scale=0.4]{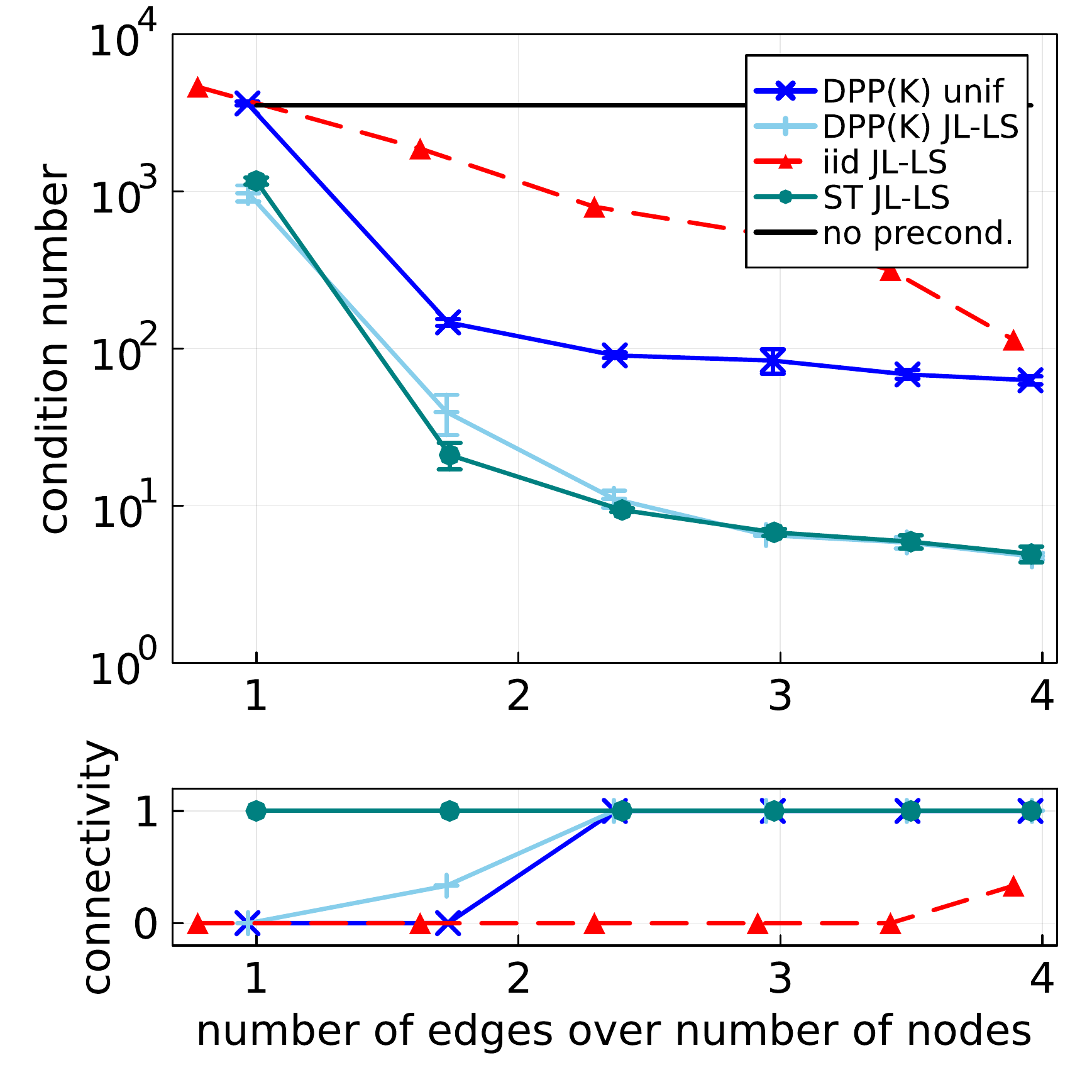}
        \caption{$\cond((\widetilde{\Lambda} + q \I_n)^{-1}(\Lambda + q \I_{n}))$ for Polblogs: $q=0.1$.}
    \end{subfigure}
    \caption{Sparsify-and-precondition the regularized combinatorial Laplacian of the Polblogs graph. 
    We report $\cond((\widetilde{\Lambda} + q \I_n)^{-1}(\Lambda + q \I_{n}))$ for several methods as a function of $n$ divided by the number of edges in the sparsifier, whereas the second row displays the connectivity of the subgraphs. 
    These figures report averages over $3$ independent runs.
    }
    \label{fig:cond_number_Polblogs}
\end{figure}

\subsection{Empirical sampling time for the cycle popping algorithm \label{sec:emp_sampling_time}}
In this section, we compare the sampling time of CRSFs and SFs thanks to $\cyclepopping{}$ to the sampling time of STs with Wilson's algorithm. 
{The aim of this section is to illustrate on an academic example of small size a scenario where sampling STs w.r.t.\ the uniform measure can be slower than sampling CRSFs or SFs w.r.t.\ the desired measure \cref{eq:proba_MTSF}.}
This relative comparison is interesting since we use a similar algorithm for sampling these subgraphs.
Nevertheless, the overall time scale might not be reliable since our implementation is not optimized.
A difference in sampling times can be visible for well-chosen graphs of rather small sizes.

We consider a graph -- which is a variant of  barbell graph that we denote by Barbell$(n)$ -- made of two cliques of $n/2$ nodes (for even $n$), namely two fully connected components, which are connected by \emph{only one} edge. This type of graphs with two cliques linked by a line graph is a worst case example for Wilson's cycle popping algorithm in terms of expected runtime\footnote{The mean hitting time for sampling a rooted spanning tree in a barbell graph with two cliques of size $n/3$ linked by a line graph with $n/3$ nodes is $\Omega(n^3)$ when the initial node and the root are sampled according to the stationary distribution of the walk; see \citet[Example 5.11]{aldous-fill-2014} for more details.}; see~\citet{GH20} for the partial rejection sampling viewpoint. This graph is considered for estimating the average time for sampling a random SF versus sampling a ST. In \cref{fig:cliques_time_SFs}, we report the average sampling time over $1000$ runs and the error bars are the standard errors of the mean.
\begin{figure}
    \centering
    \begin{subfigure}[b]{0.49\textwidth}
        \centering
        \includegraphics[scale=0.4]{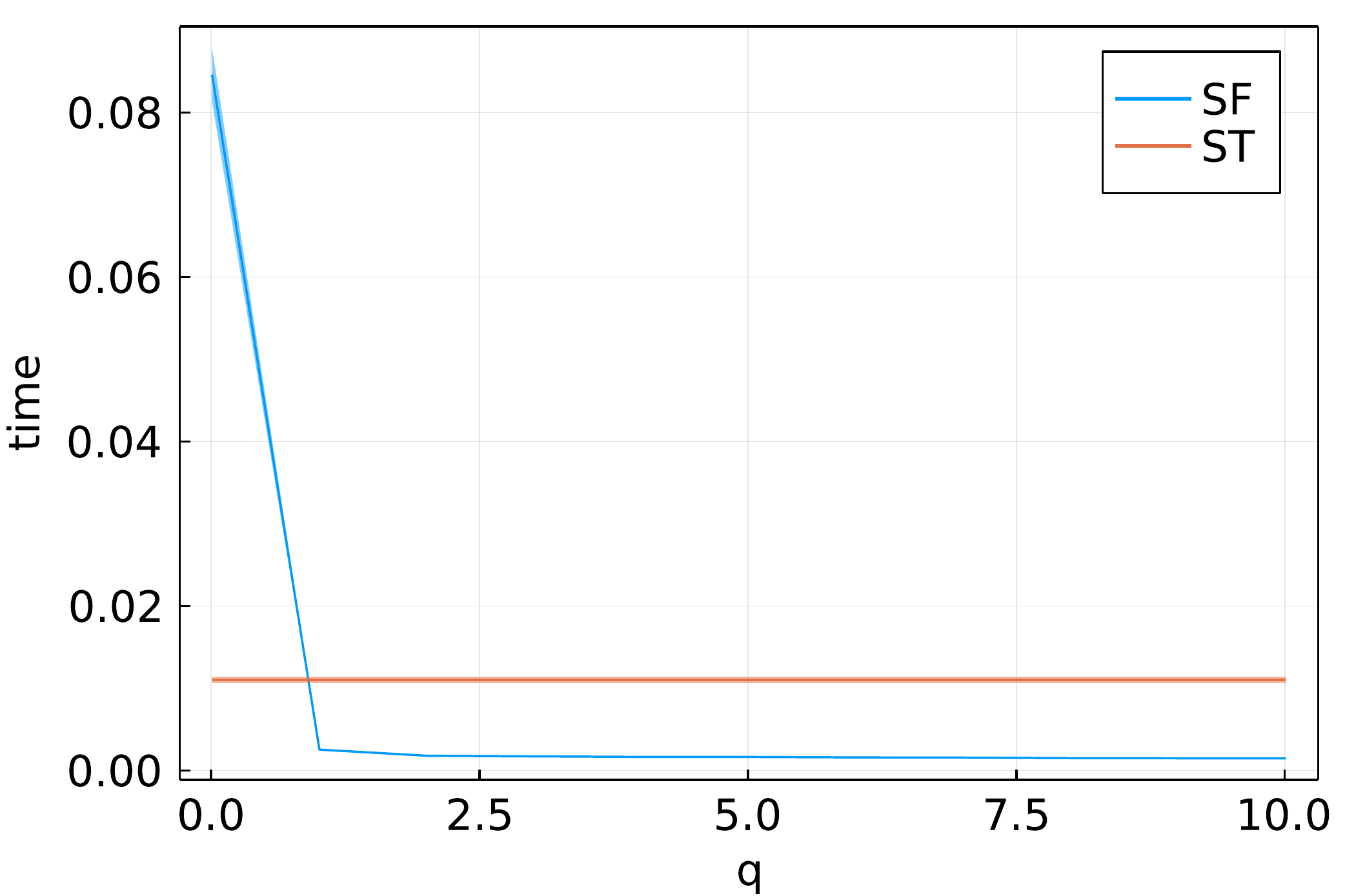}
        \caption{Average sampling time (s) \emph{vs}.\ $q$.}
    \end{subfigure}
    \hfill
    \begin{subfigure}[b]{0.49\textwidth}
        \centering
        \includegraphics[scale=0.4]{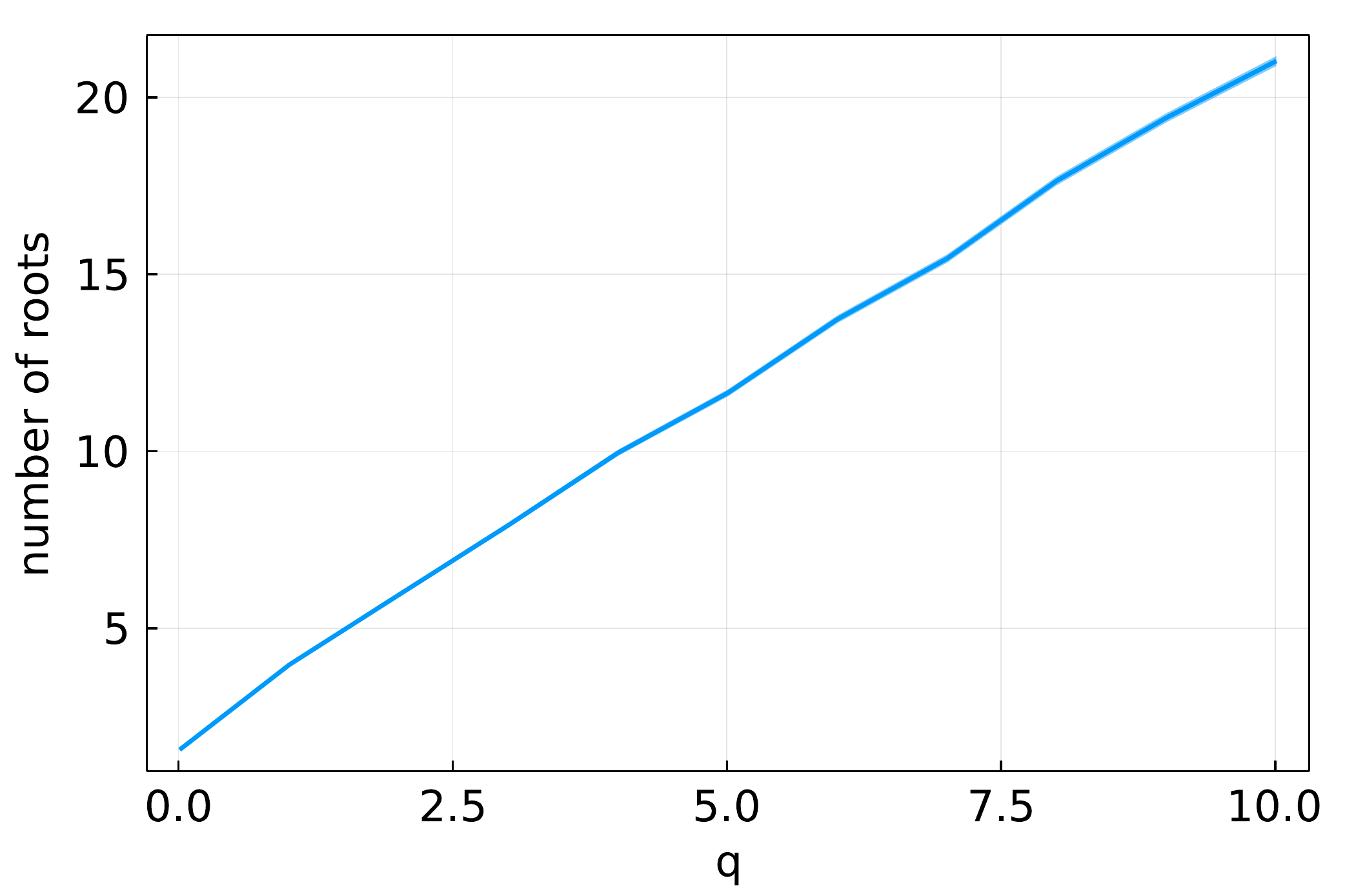}
        \caption{Number of roots in the SF \emph{vs}.\ $q$.}
    \end{subfigure}
    \caption{Comparison of sampling times for SFs w.r.t.\ \cref{eq:proba_MTSF}  and uniform STs for a Barbell$(n)$ graph with $n=500$.  \label{fig:cliques_time_SFs}}
\end{figure}
Our empirical observation is that sampling an SF might be faster than sampling an ST if the graph has dense communities linked with bottlenecks. This is particularly visible if $q$ has a large value. This phenomenon is observed in the extreme case of the barbell graph in \cref{fig:cliques_time_SFs}.

To perform a similar comparison of sampling times for CRSFs and SFs, the same barbell graph is endowed with a connection as follows. Let $\bm{h}$ be the planted ranking score such as defined in \cref{sub:settings_and_baselines}.
For each edge $uv$ in the barbell graph, with probability $1-\eta$, these edge comes with an angle $\vartheta(uv) = (h_u-h_v)/(\pi(n-1))$ and $\vartheta(uv)=\epsilon_{uv}/(\pi(n-1))$ with probability $\eta$ where $\epsilon_{uv}$ drawn from the discrete uniform distribution on $\{-n+1, \dots, n-1\}$. Once more, we take $\vartheta(vu) \triangleq -\vartheta(uv)$.
The resulting connection graph is denoted by Barbell$(n,\eta)$.

In \cref{fig:cliques_time_CRSFs}, on the left-hand side, we display the average sampling time over $5000$ runs for CRSFs and STs as a function of the noise parameter $\eta$. On the right-hand side, we report the average number of components (CRTs) in the sampled CRSFs as a function of $\eta$.
The error bars are the standard errors of the mean.

Empirically, we observe that the average sampling time for a CRSF indeed decreases as $\eta$ increases whereas the number of cycle-rooted trees also increases. Informally, for small values of $\eta$, the average time for sampling a CRSF in this case might be large since $\cyclepopping{}$ will pop many cycles before finally accepting one. This behavior is rather similar to the case of SFs in \cref{fig:cliques_time_SFs}.
\begin{figure}
    \centering
    \begin{subfigure}[b]{0.49\textwidth}
        \centering
        \includegraphics[scale=0.4]{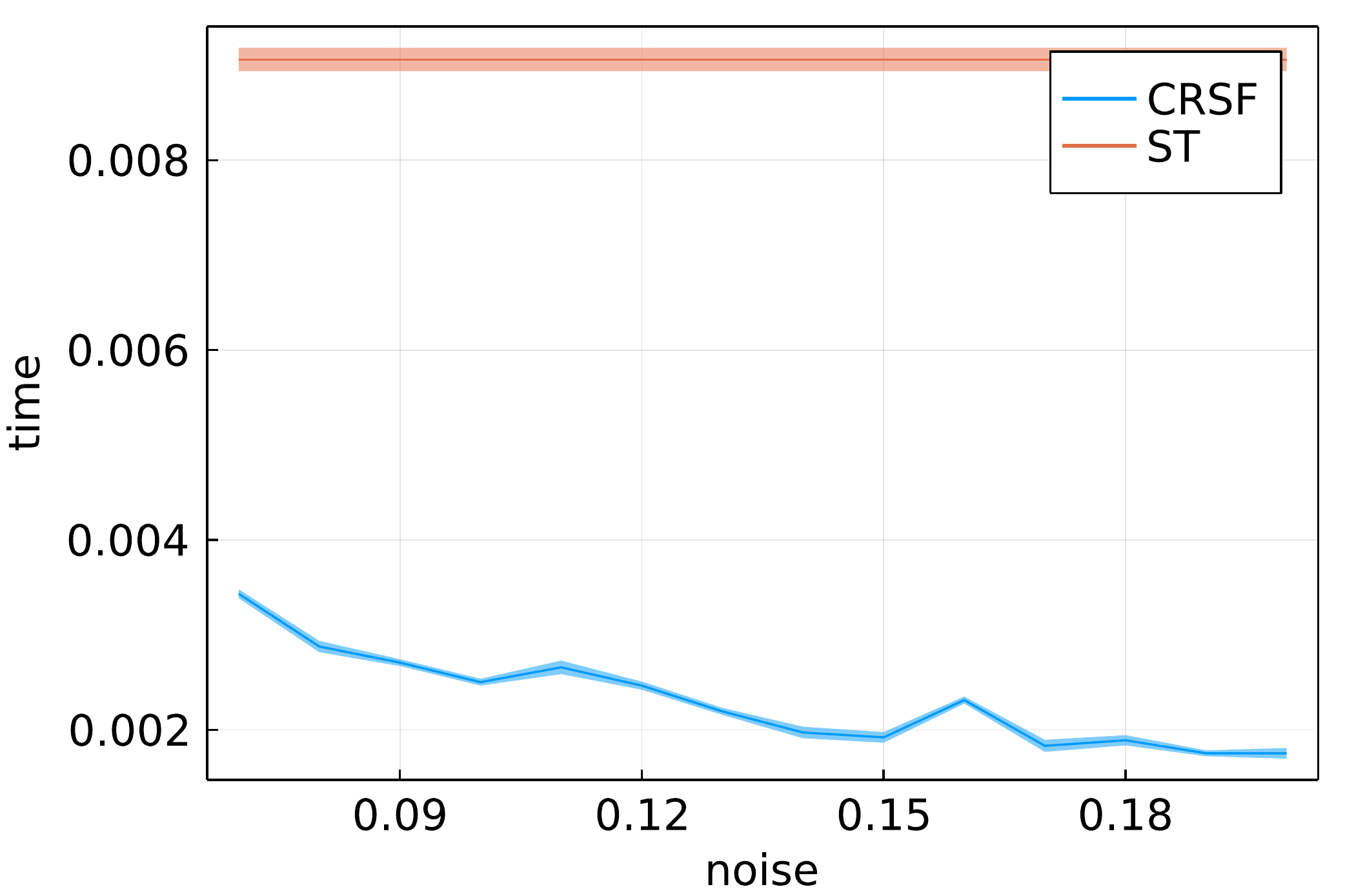}
        \caption{Average sampling time (s) \emph{vs}.\ $\eta$.}
    \end{subfigure}
    \hfill
    \begin{subfigure}[b]{0.49\textwidth}
        \centering
        \includegraphics[scale=0.4]{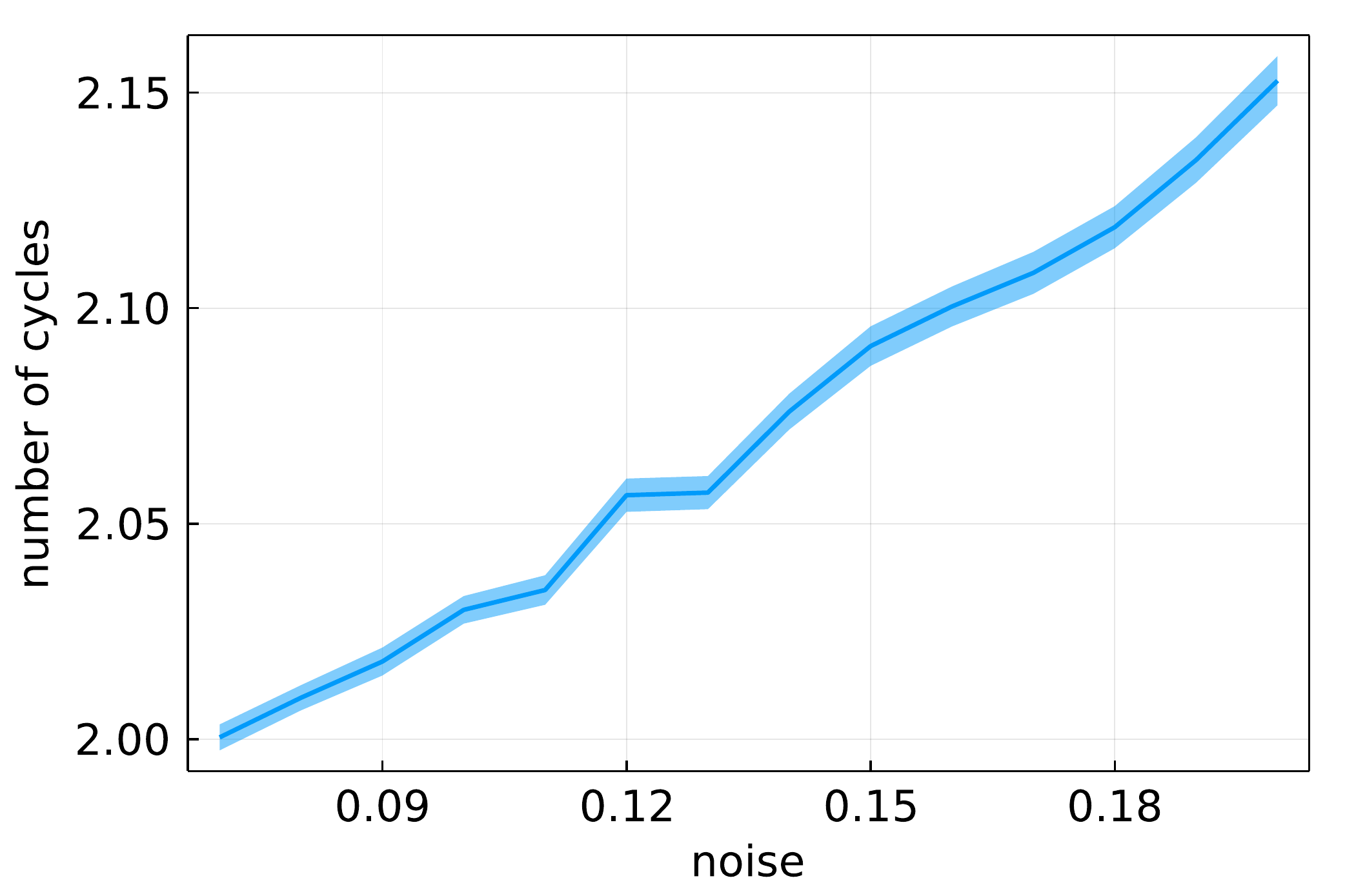}
        \caption{Number of CRTs \emph{vs}.\ $\eta$.}
    \end{subfigure}
    \caption{Comparison of sampling times of CRSFs w.r.t.\ \cref{eq:proba_MTSF}  and uniform  STs for a Barbell$(n,\eta)$ with $n = 500$.
     \label{fig:cliques_time_CRSFs}}
\end{figure}
\subsection{Estimation of leverage scores}
{In this section, we first provide a comparison between exact LSs and empirical estimates, which shows that our sampling algorithm has the correct marginal probabilities.
Second, we show that the Johnson-Lindenstrauss estimates of LSs are indeed good approximations of the exact LSs.}

\begin{figure}[ht]
    \centering
    \begin{subfigure}[b]{0.49\textwidth}
        \centering
        \includegraphics[scale=0.4]{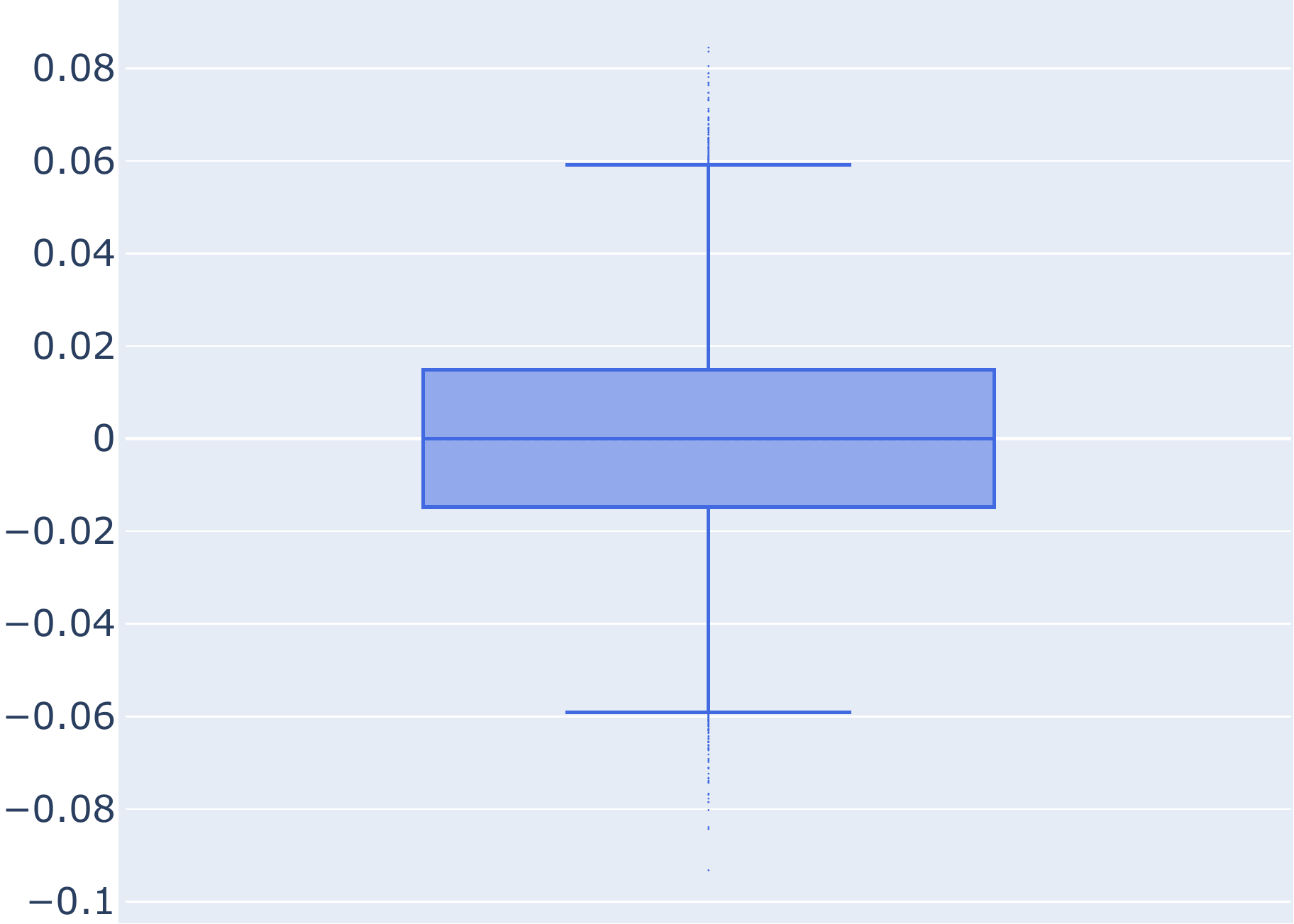}
        \caption{$(\lev(e) -\lev_{\mathrm{emp}}(e))/ \lev(e)$ for MUN: $q = 0$.}
    \end{subfigure}
    \begin{subfigure}[b]{0.49\textwidth}
        \centering
        \includegraphics[scale=0.4]{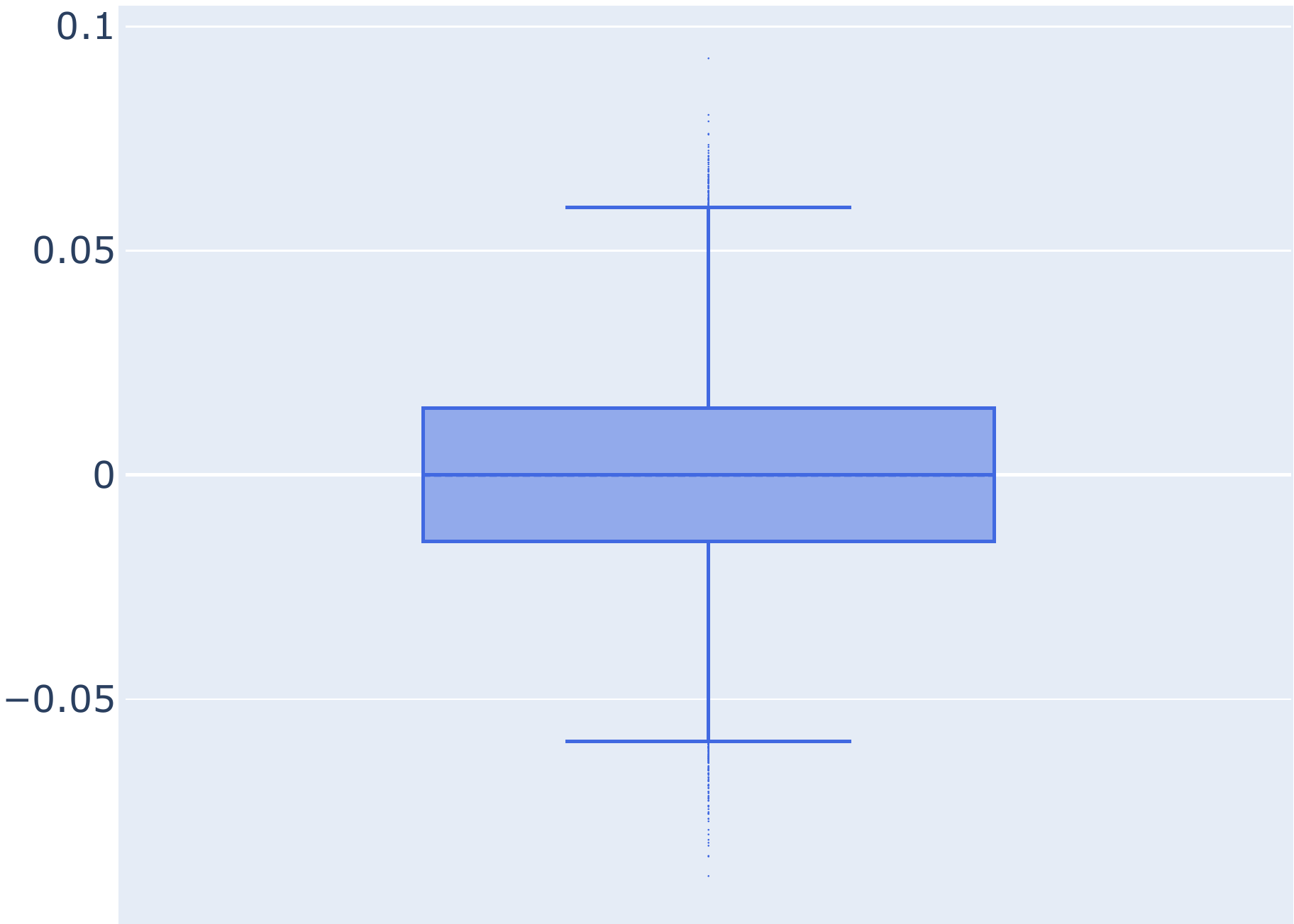}
        \caption{$(\lev(e) -\lev_{\mathrm{emp}}(e))/ \lev(e)$ for MUN: $q = 1$.}
    \end{subfigure}
    \caption{Relative differences between the exact leverage scores and their empirical estimates over $10^5$ Monte Carlo runs of $\cyclepopping{}$ for a MUN$(n,p,\eta)$ graph. }
    \label{fig:JL_lev}
\end{figure}
\subsubsection{Empirical estimates of leverage scores}
In \cref{fig:emp_lev}, we provide an empirical validation of the MTSF sampling algorithm given in \cref{sec:sampling_a_multitype_spanning_forest}. We empirically compute the probability that any edge $e$ belongs to an MTSF $\calC\sim \DPP(K)$, i.e., the leverage score $\Pr(e\in \calC) = K_{ee}$
with $K$ given in \cref{eq:K}; see \cref{sec:dpp_sampling_of_edges__multitype_spanning_forests}. For a fixed graph, we compute an empirical counterpart by calculating the frequencies
\begin{equation}
    \lev_{\mathrm{emp}}(e) = \frac{1}{t}\sum_{\ell=1}^t 1_{\calC_\ell}(e)\label{eq:emp_LS}
\end{equation}
where the MTSFs $\{\calC_\ell\}_{1\leq \ell\leq t}$ are sampled independently thanks to $\cyclepopping{}$ with capped cycle weights, i.e., the weight of a cycle $\eta$ reads $1 \wedge \big(1 - \cos \theta(\eta)\big)$.
No importance weight is considered in \cref{eq:emp_LS}.
\Cref{fig:emp_lev} provides a comparison of the exact and empirical leverage scores by computing a boxplot of relative differences $(\lev(e) -\lev_{\mathrm{emp}}(e))/ \lev(e)$,
for CRSFs ($q=0$) and MTSFs ($q >0$) for a MUN$(n,p,\eta)$ random graph  with $n=500$, $p=0.2$ and $\eta=0.1$.
The empirical leverage scores are computed over $10^5$ runs of $\cyclepopping{}$ for a fixed MUN random graph for $q=0$ (Left-hand side) and $q = 1$ (Right-hand side).
Each boxplot displays the mean (dashed line) and median as well as the $25$-th and $75$-th percentile in the blue box.

We observe that the CRSF and MTSF sampling algorithms yield samples with empirical edge marginal probabilities close to the exact edge marginal probabilities since the distribution of relative differences is centered at $0$ and is rather peaked.
\begin{figure}[ht]
    \centering
    \begin{subfigure}[b]{0.49\textwidth}
        \centering
        \includegraphics[scale=0.4]{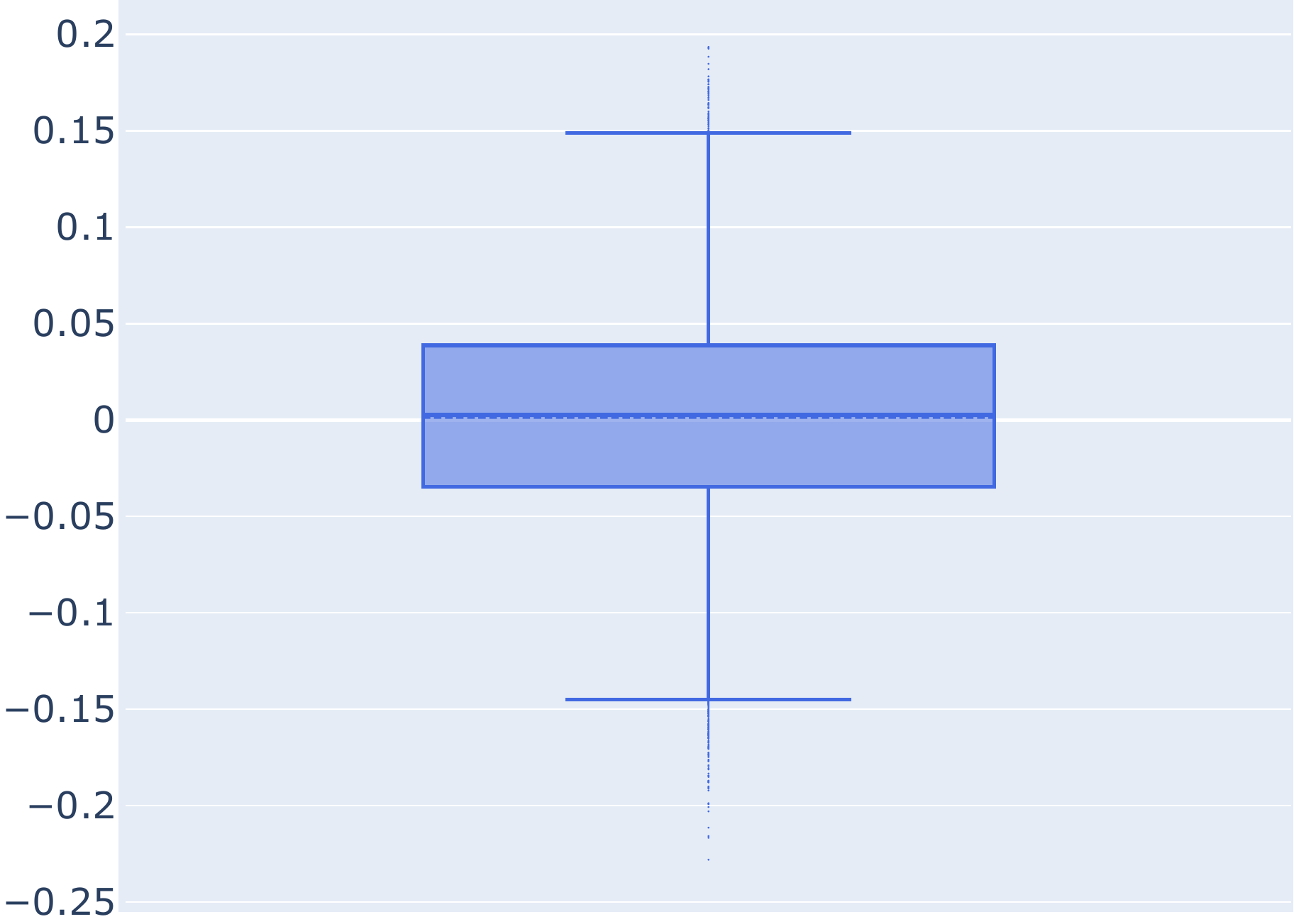}
        \caption{$(\lev(e) -\lev_{\mathrm{JL}}(e)(e))/ \lev(e)$ with $q = 0$.}
    \end{subfigure}
    \begin{subfigure}[b]{0.49\textwidth}
        \centering
        \includegraphics[scale=0.4]{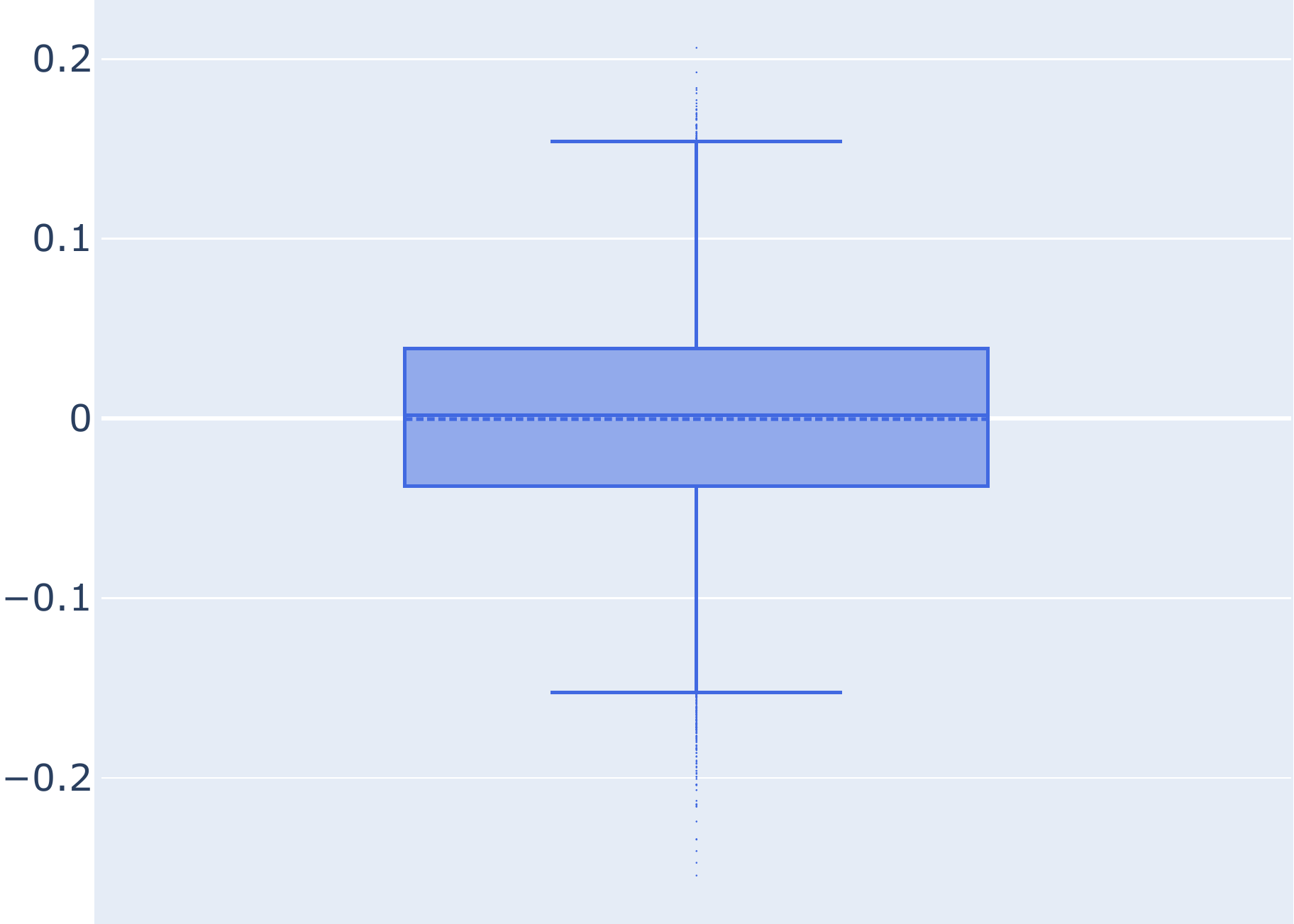}
        \caption{$(\lev(e) -\lev_{\mathrm{JL}}(e)(e))/ \lev(e)$ with $q = 1$.}
    \end{subfigure}
    \caption{Relative differences between sketched and exact leverage scores for all the edges of a MUN$(n,p,\eta)$ graph with $n=500$, $p=0.2$ and $\eta=0.1$.}
    \label{fig:emp_lev}
\end{figure}

\subsubsection{JL estimates of leverage scores}

{To illustrate the accuracy of the LS approximation, we here compare the sketched LSs obtained with JL lemma with their exact values.
The estimates $\lev_{\mathrm{JL}}(e)$ are computed by using \eqref{eq:linear_system_sketch} and \eqref{eq:lev_sketch}.
We compute a boxplot of the relative differences $(\lev(e) -\lev_{\mathrm{JL}}(e))/ \lev(e)$ for a MUN$(n,p,\eta)$ random graph with $n=500$, $p=0.2$ and $\eta=0.1$.
This boxplot is constructed in the same way as in \cref{fig:emp_lev}.
To give an order of magnitude, the number of columns in the JL approximation is then roughly equal to $k = 4\cdot 10^2$ (see \cref{eq:k_for_JL}) over a total of about $2.5\cdot 10^4$ edges.
In \cref{fig:JL_lev}, we observe that JL-estimates give a good approximation since the relative differences have a mean of $2 \cdot 10^{-3}$ and a standard deviation of $6 \cdot 10^{-2}$.
Therefore, in this case, the relative absolute error on the LSs is typically often less than $20\%$.}

\section{Discussion} 
\label{sec:conclusion}
Similarly to how ensembles of uniform spanning trees sparsify the combinatorial Laplacian, we have investigated ensembles of multi-type spanning forests to sparsify the magnetic Laplacian.
The distribution we use to draw one such MTSF is a determinantal point process (DPP), like uniform spanning trees.
Our theoretical results include a Chernoff bound with intrinsic dimension for DPPs, which might be of independent interest, and a confidence ``interval'' based on self-normalized importance sampling to treat graphs with large inconsistencies.

Our experiments suggest that sparsifiers based on CRSFs and MTSFs are the solution of choice when only a few inconsistent cycles are present, so that it is important not to miss them. In this circumstance, the least eigenvalue of the magnetic Laplacian is expected to be small, yielding a large condition number for the system associated with $\Delta$.
In other cases, and in particular when the graph is well-connected, a simpler approach using spanning forests with comparable leverage-score heuristics actually performs on par with our MTSFs.
This good performance of STs may appear surprising at first, given that the distribution used for STs does not rely on the graph's connection.
However, in our experiments, the number of cycles in CRSFs remained relatively low, and it would be interesting to further investigate graph structures that favor a large number of cycles in a CRSF.

From the perspective of the sampling algorithm, the sampling time for the cycle-popping algorithm of Section~\ref{sec:sampling_a_multitype_spanning_forest} is also expected to be large when the least eigenvalue of $\Delta$ is small, i.e., for a low level of inconsistency.
Indeed, if the cycles have low inconsistencies, the random walk is likely to spend a long time popping cycles before one cycle is successfully accepted.
In this case, the aforementioned ``generic'' DPP sampling procedure by \citet{HKPV06} might also be costly or be prone to numerical errors since it requires the full eigenvalue decomposition of the correlation kernel $K = B\Delta^{-1}B^*$; the latter eigenvalue problem being also possibly ill-conditioned due to the factor $\Delta^{-1}$.
{Although they do not need an eigendecomposition, the algebraic DPP sampling algorithms such as in \citep{launay_galerne_desolneux_2020} or \citep{poulson2020high} are also likely to be prone to numerical errors since they used slight modifications of $LU$ and $LDL^\top$ decompositions.}

Let us list below a few remarks about technological aspects of MTSF sampling.
\begin{itemize}
    \item A computational advantage of $\cyclepopping{}$ over the algebraic algorithm of \citet{HKPV06} is that it is decentralized in the sense that the random walker only needs to query the knowledge of the neighbouring nodes and edges.
    Hence, the knowledge of the full connection graph is not necessarily needed for running $\cyclepopping{}$ and the random walker can discover the graph on the fly. This may  reduce memory requirements.
    \item The output of $\cyclepopping{}$ is also \emph{necessarily} an MTSF \emph{by design}.
    This makes $\cyclepopping{}$ less sensitive to numerical errors compared with the generic $\HKPV$ algorithm of \citet{HKPV06} in this specific case, which relies on an iterative linear algebraic procedure. In other words, in case of an ill-conditionned $\Delta$, it is not excluded that an algebraic sampler outputs a graph which is not an MTSF due to numerical errors. 
    A few numerical results in this direction are given in \cref{sec:HKPV_extra_results}.
    \item $\cyclepopping{}$ relies on the condition that all  cycles in the connection graph are weakly inconsistent, which seems a rather arbitrary limitation.
    Currently, we have no strategy for checking that this condition is satisfied.
    As a possible future work, extending $\cyclepopping{}$ beyond the case of weakly inconsistent connection graphs would be an interesting contribution to avoid using self-normalized importance sampling for which only weaker statistical guarantees for sparsification are given in \cref{prop:asymp_confidence}.
    \item As illustrated in our simulations, leverage scores are important in order to sparsify Laplacians of graphs with a non-concentrated degree distribution, such as in the case of the Polblogs network of \cref{fig:cond_number_Polblogs}.
    A prospect for future work consists in developping a leverage score approximation scheme for which some statistical guarantees can be derived.
\end{itemize}

Our implementation of $\cyclepopping{}$ is currently far from optimal. As a perspective, an optimized implementation of $\cyclepopping{}$ would have to be developed in order to compare standard linear algebraic solvers with our simulation results in terms of computational time.
As another prospect, we can investigate the use of $\cyclepopping{}$ in the context of cycle-based synchronization in the light of \citep{lerman2022robust}.

\paragraph*{Acknowledgements.}
We thank Guillaume Gautier for numerous discussions, and his precious support for Julia implementations.
The paper further benefited from comments of Simon Barthelmé, and from discussions on partial rejection sampling.
We also thank Nicolas Tremblay for discussions about spectral sparsification and suggestions concerning the numerical simulations. Furthermore, we thank Hugo Jaquard for pointing out typos.

We acknowledge support from ERC grant BLACKJACK (ERC-2019-STG-851866) and ANR AI chair
BACCARAT (ANR-20-CHIA-0002).

%
\appendix
\FloatBarrier

\section{Deferred results and  proofs} 
\label{sec:deferred_proofs}

\subsection{Influence of $q$} 

\begin{proposition}\label{prop:decreasing_deff}
    {
    Let \cref{assump:non-singularity} hold.
    The functions $\deff = \Tr(\Delta(\Delta + q \I)^{-1})$,  $\mumax = \|\Delta(\Delta + q \I)^{-1}\|_{\op}$, and $\deff/\kappa$ 
    are decreasing functions of $q$.}
\end{proposition}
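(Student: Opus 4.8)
The plan is to diagonalize $\Delta$ and reduce everything to scalar calculus. Under \cref{assump:non-singularity}, write the eigenvalues of $\Delta$ as $0 \le \lambda_1 \le \dots \le \lambda_n$ (with $\lambda_1 > 0$ when the connection is nontrivial, and otherwise $q > 0$ so that $\lambda_i + q > 0$ for all $i$). Since $\Delta$ and $(\Delta + q\I)^{-1}$ commute, $\Delta(\Delta + q\I)^{-1}$ has eigenvalues $\lambda_i/(\lambda_i + q)$, so
\[
    \deff(q) = \sum_{i=1}^n \frac{\lambda_i}{\lambda_i + q}, \qquad
    \mumax(q) = \max_{1 \le i \le n} \frac{\lambda_i}{\lambda_i + q} = \frac{\lambda_n}{\lambda_n + q}.
\]
First I would observe that each scalar map $q \mapsto \lambda/(\lambda + q) = 1 - q/(\lambda + q)$ is nonincreasing on $[0,\infty)$ (strictly decreasing when $\lambda > 0$), since its derivative is $-\lambda/(\lambda + q)^2 \le 0$. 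This immediately gives that $\deff$ is decreasing (as a finite sum of decreasing functions) and that $\mumax$ is decreasing (since $\lambda_n > 0$, as $\Delta$ is not the zero matrix for a connected graph with at least one edge).

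The only genuinely substantive step is the ratio $\deff/\mumax$. Here I would write
\[
    \frac{\deff(q)}{\mumax(q)} = \frac{\lambda_n + q}{\lambda_n} \sum_{i=1}^n \frac{\lambda_i}{\lambda_i + q}
    = \sum_{i=1}^n \frac{\lambda_i}{\lambda_n} \cdot \frac{\lambda_n + q}{\lambda_i + q}.
\]
It therefore suffices to show that for each fixed $i$ the map $q \mapsto (\lambda_n + q)/(\lambda_i + q)$ is nonincreasing on $[0,\infty)$. Writing $(\lambda_n + q)/(\lambda_i + q) = 1 + (\lambda_n - \lambda_i)/(\lambda_i + q)$ and differentiating, the derivative is $-(\lambda_n - \lambda_i)/(\lambda_i + q)^2 \le 0$ because $\lambda_n \ge \lambda_i$. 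Hence $\deff/\mumax$ is a sum of nonincreasing functions, so it is nonincreasing; and it is strictly decreasing as soon as $\lambda_i < \lambda_n$ for some $i$, i.e.\ whenever $\Delta$ does not have all eigenvalues equal.

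The main obstacle, such as it is, is purely bookkeeping: making sure the denominators never vanish (handled by \cref{assump:non-singularity}) and deciding whether to claim "decreasing" or "nonincreasing" — I would state the result as monotone nonincreasing, noting strict monotonicity holds under the mild genericity conditions above (e.g.\ $q > 0$, or a spectrum that is not a single repeated eigenvalue). No matrix-analytic subtlety beyond simultaneous diagonalization is needed, so the write-up should be short.
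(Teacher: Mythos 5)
Your proof is correct and follows essentially the same route as the paper: diagonalize $\Delta$, reduce to the scalar maps $q\mapsto\lambda_i/(\lambda_i+q)$, and for $\deff/\kappa$ rewrite each summand as $\tfrac{\lambda_i}{\lambda_n}\bigl(1+\tfrac{\lambda_n-\lambda_i}{\lambda_i+q}\bigr)$ and use $\lambda_n\ge\lambda_i$. Your added remarks on strict versus weak monotonicity are a minor refinement the paper glosses over, but the argument is identical in substance.
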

\begin{proof}
    {Let $\lambda_\ell$ for $1\leq \ell \leq n$ be the eigenvalues of $\Delta$, including possible multiplicities and sorted in increasing order. 
    Denote $\lambda_{\max} = \lambda_n$.
    The following function
    $
        \deff = \frac{\lambda_1}{\lambda_1 + q} + \dots + \frac{\lambda_{\max}}{\lambda_{\max} + q}
    $
    is clearly decreasing. Similarly,
    $
        \kappa = \lambda_{\max}/(\lambda_{\max} + q)
    $
    is also decreasing.}
    {Furthermore, we have 
    \[
        \deff/\kappa = \sum_{\ell} \frac{\lambda_{\ell}/(\lambda_{\ell} + q)}{\lambda_{\max}/(\lambda_{\max} + q)} = \sum_{\ell} \frac{\lambda_{\ell}}{\lambda_{\max}} \frac{\lambda_{\max} + q}{\lambda_{\ell} + q} = \sum_{\ell} \frac{\lambda_{\ell}}{\lambda_{\max}} \left(1+ \frac{\lambda_{\max} - \lambda_{\ell} }{\lambda_{\ell} + q}\right).
    \]
    Since $\lambda_{\max} \geq \lambda_{\ell}$, the function $\deff/\kappa$ is also decreasing.}
\end{proof}

\subsection{Proof of \cref{prop:CholeskyMTSF}} 
\label{sec:cholesky_decomposition_of_the_magnetic_laplacian_of_a_mtsf}
The proof technique is directly adapted from~\citet[Theorem 13.3]{Vishnoi2013}. It relies on the observation that the Cholesky decomposition of a matrix $A$ follows from the Schur complement formula
\begin{equation}
    \label{eq:Schur}
    \begin{pmatrix}
        d      & \bm{u}^* \\
        \bm{u} & A
    \end{pmatrix} = \begin{pmatrix}
        1        & 0^\top \\
        \bm{u}/d & \I
    \end{pmatrix}\begin{pmatrix}
        d & 0^\top              \\
        0 & A-\bm{u}\bm{u}^* /d
    \end{pmatrix}
    \begin{pmatrix}
        1 & \bm{u}^*/d \\
        0 & \I
    \end{pmatrix},
\end{equation}
which corresponds to the elimination of the first row and column~\citep[Proof of Theorem 13.1]{Vishnoi2013}. In what follows, we shall associate to a sparse symmetric matrix $A$ a graph which has an edge for each non-zero off-diagonal element of $A$. This matrix is equal to a block of the regularized Laplacian at the first iteration of the decomposition algorithm and is updated at the subsequent iterations.
A formula akin to~\cref{eq:Schur} holds for the elimination of the $i$-th row and column.
Recursively applying this formula, the Cholesky decomposition appears as a product of triangular matrices.

In our case, we first note that each connected component of the MTSF can be treated separately since, by an appropriate permutation, the regularized magnetic Laplacian of the MTSF $\widetilde{\Delta} + q \mathbb{I}$ can be reduced to have a diagonal block structure.
We thus henceforth assume that $\calC$ has a unique cycle, of length $c > 0$.
The idea of the proof is to find a permutation matrix $Q$ such that the Cholesky decomposition of $Q(\widetilde{\Delta} + q \mathbb{I})Q^\top$ has at most $\mathcal{O}(n + c)$ non-zero off-diagonal entries.
This permutation corresponds to the desired reordering of the nodes.

    We now construct the permutation matrix $Q$.
    Each connected component of a MTSF is either a tree or a cycle-rooted tree.
    Consider first the case of a rooted tree.
    We can proceed with the elimination by \emph{peeling off} the leaves of the tree.
    Notice that after eliminating a leaf, the resulting structure of $A-\bm{u}\bm{u}^* /d$ is a tree with one missing leaf.
    This procedure specifies an ordering of the nodes, and therefore the permutation matrix Q.
    Since each leaf only has one neighbor, the vector $u$ only has one non-zero entry and therefore, the corresponding column of the triangular matrix $\left(
        \begin{smallmatrix}
                1           & 0^\top \\
                \bm{u}/d    & \I
            \end{smallmatrix}
        \right)$
    in \cref{eq:Schur} has only one non-zero off-diagonal entry.
    For the same reason, the subtraction $A-\bm{u}\bm{u}^* /d$ requires $\mathcal{O}(1)$ operations.
    By computing the product of triangular matrices obtained by eliminating one leaf after the other, we see that each elimination corresponds to the update of one column in the final triangular matrix.
    Thus, the Cholesky decomposition of the Laplacian of a rooted tree of $n_{t}$ nodes has at most $n_{t} -1$ non-zero off-diagonal entries, and requires $\mathcal{O}(n_{t})$ operations.

    Second, consider the case of a cycle-rooted tree, with one cycle of $n_i$ nodes and $n_{crt}$ nodes in total.
    We proceed similarly by eliminating one leaf after the other until only the cycle remains.
    This first stage yields $n_{crt}-n_i$ non-zero off-diagonal entries.
    Next, we eliminate the nodes of the cycle.
    For each such node, the vector $\bm{u}$ in \cref{eq:Schur} has two non-zero entries since each node in a cycle has two neighbors.
    Furthermore, if $\bm{u}$ has two non-zero entries, the matrix $\bm{u}\bm{u}^*$ has $4$ non-zero entries and the corresponding update costs $\mathcal{O}(1)$ operations.
    Hence, if $n_{i}>3$, eliminating a node in a cycle with $n_{i}$ nodes yields another cycle with $n_{i}-1$ nodes.
    This continues until only two nodes remain.
    Eliminating one of them yields one non-zero entry in the triangular matrix.
    Thus, in total, eliminating a cycle with $n_i\geq 3$ nodes yields $2 (n_i - 2) + 1$ non-zero entries.
    So, for a cycle-rooted tree of $n_{crt}$ nodes in total with a cycle of $n_i$ nodes, we have at most: $n_{crt}-n_i$ non-zero entries for the nodes in the trees and $2 n_i -3$ non-zero entries for the nodes in the cycle.
    Thus, in total, there are at most $n_{crt} + n_i -3$ non-zero entries in the triangular matrix  and this costs $\mathcal{O}(n_{crt} + n_i)$ operations.

    By combining the counts for the rooted trees and cycle-rooted trees, we obtain the desired result.
    Note that the diagonal of $\widetilde{\Delta} + q \mathbb{I}$ does not influence the sparsity of the decomposition.
\subsection{Proof of \cref{lem:consequence_of_avg_mult_indpt}} \label{sec:proof_of_inf_am}
Let $\pp$ be a $k$-homogeneous point process on $[m]$ for some integer $k \geq 1$.
Recall the intensity of $\pp$ is the probability mass function $\nu_\pp(e) = \rho_\pp(e)/k$ for all $e\in [m]$, as given in \cref{sec:Intensity_Density}.
First, we notice that for all  $i$ and $j$ with $i\neq j$ such that $\rho_\pp(i) > 0$ and $\rho_\pp(j) > 0$, we have
\begin{equation}
    \nu_\pp(i) \rho_{\pp_i}(j) = \nu_\pp(j) \rho_{\pp_j}(i).\label{eq:bayes_p_nu}
\end{equation}
Now, recall the notation
\[
    \delta_{\pp,e}(i) = \begin{cases}
        1 - \rho_\pp(e) & \text{ if } i = e,\\
        \rho_{\pp_e}(i) - \rho_{\pp}(i) & \text{ otherwise.}
    \end{cases}
\]
If $\rho_\pp(i) > 0$, it holds that
\begin{align*}
    \E_{e \sim \nu_\pp} | \delta_{\pp,e}(i) | &= \nu_\pp(i) |1 - \rho_{\pp}(i)| +  \sum_{\substack{e\neq i\\ \rho_{\pp}(e) >0}} \nu_\pp(e) | \rho_{\pp_e}(i) - \rho_{\pp}(i) | \\
    &= \nu_\pp(i) |1 - \rho_{\pp}(i)| + \sum_{\substack{e \neq i\\ \rho_{\pp}(e) >0}} |\nu_\pp(e) \rho_{\pp_e}(i) - \nu_\pp(e)\rho_{\pp}(i) | \\
    &= \nu_\pp(i) |1 - \rho_{\pp}(i)| + \sum_{\substack{e \neq i\\ \rho_{\pp}(e) >0}} |\nu_\pp(i) \rho_{\pp_i}(e) - \frac{\rho_{\pp}(e)\rho_{\pp}(i)}{k} | \quad \text{(by using \cref{eq:bayes_p_nu})}\\
    & = \nu_\pp(i) |1 - \rho_{\pp}(i)| + \nu_\pp(i)\sum_{\substack{e \neq i\\ \rho_{\pp}(e) >0}} | \rho_{\pp_i}(e) - \rho_{\pp}(e) | \\
    & = \nu_\pp(i) |1 - \rho_{\pp}(i)| + \nu_\pp(i)\sum_{e \neq i} | \rho_{\pp_i}(e) - \rho_{\pp}(e) |\\
    &\leq \nu_\pp(i) \dinf
\end{align*}
where the last inequality is due to \cref{def:pp_parameter}. This concludes the first part of the proof.

For the second part of the proof, we simply use that $K^2 =K$ since $\DPP(K)$ is projective and we directly compute the
 desired expectation by recalling that $|\delta_e(e)| = 1 - K_{ee}$ and $|\delta_e(i)| = K_{ei} K_{ie}/K_{ee}$ for $e\neq i$.
The conclusion follows from simple algebra.

\subsection{Proof of \cref{lem:Fact5.4}}\label{sec:proof_of_Fact5.4}
We only introduce minor changes in a proof from \citet[arxiv v2]{KKS22}.
Let $e\in [m]$ such that $\rho_\pp(e) >0$.
Recall the following expression
\[
    Z_e = \E_{\pp_e} \left[Y_e + \sum_{i \in \pp_e}  Y_i\right] - \E_{\pp} \left[\sum_{i\in \pp} Y_i\right].
\]
(i) First, we simply rewrite $Z_e$ as follows
\begin{align*}
    Z_e = Y_e + \sum_{i \neq e} \rho_{\pp_e}(i) Y_i - \sum_{i} \rho_\pp(i) Y_i  = (1-\rho_\pp(e))Y_e + \sum_{i \neq e} \left(\rho_{\pp_e}(i) - \rho_\pp(i)\right) Y_i.
\end{align*}
The latter expression is used to upper bound $|Z_e|$,
where $|\cdot|$ is the matrix absolute value. We begin by noticing that $ \lambda_{\max}(|Z_e|) \leq \|Z_e\|_{\op}$.
Next, in light of \cref{def:pp_parameter}, we can control the following quantity as follows
\[
    |1-\rho_\pp(e)| + \sum_{i \neq e} |\rho_{\pp_e}(i) - \rho_\pp(i)| \leq \dinf.
\]
By using $Y_i \preceq r \I$ for all $i\in[m]$  and the triangular inequality, we find
\begin{align*}
    |Z_e| \preceq \|Z_e\|_{\op} \I \preceq |1-\rho_\pp(e)|\cdot \|Y_e\|_{\op} \I+ \sum_{i \neq e} |\rho_{\pp_e}(i) - \rho_\pp(i)| \cdot \|Y_i\|_{\op} \I \preceq  \dinf r \I.
\end{align*}
(ii) We now consider the square of $Z_v$ and we bound its expression by using the identity $A B + B A \preceq A^2 + B^2$. To simplify the expressions, we recall the notation
\[
    \delta_{\pp,e}(i) = \begin{cases}
        1 - \rho_\pp(e) & \text{ if } i = e,\\
        \rho_{\pp_e}(i) - \rho_{\pp}(i) & \text{ otherwise.}
    \end{cases}
\]
Define $s_i = {\rm sign}(\delta_{\pp,e}(i))$. By expanding the square, we find
\begin{align*}
    |Z_e|^2 &\preceq  \sum_{i} \left|\delta_{\pp,e}(i) \right|^2 Y_i^2 + \sum_{i} \sum_{j < i} |\delta_{\pp,e}(i)|\cdot |\delta_{\pp,e}(j)| (s_i Y_i s_j Y_j + s_j Y_j s_i Y_i)\\
    & \preceq \sum_{i} \left|\delta_{\pp,e}(i) \right|^2 Y_i^2 + \sum_{i} \sum_{j < i} |\delta_{\pp,e}(i)|\cdot |\delta_{\pp,e}(j)| (Y_i^2 + Y_j^2)\\
    & \preceq \sum_{i} \left|\delta_{\pp,e}(i) \right|^2 r Y_i + \sum_{i}\sum_{j < i}|\delta_{\pp,e}(i)| \cdot  |\delta_{\pp,e}(j)| rY_i + \sum_{i}\sum_{j < i} |\delta_{\pp,e}(i)| \cdot |\delta_{\pp,e}(j)|  rY_j\\
    &= \sum_{i} \left|\delta_{\pp,e}(i) \right|^2 rY_i + \sum_{i}\sum_{j < i}|\delta_{\pp,e}(i)| \cdot  |\delta_{\pp,e}(j)| r Y_i  + \sum_{j}\sum_{i < j} |\delta_{\pp,e}(i)| \cdot |\delta_{\pp,e}(j)|  r Y_i\\
    &= \sum_{j} \left|\delta_{\pp,e}(j) \right| \sum_i  |\delta_{\pp,e}(i)|  r Y_i \\
    &\preceq \dinf r \sum_i  |\delta_{\pp,e}(i)|  Y_i,
\end{align*}
where the indices were renamed in the next-to-last equality, and where we used $Y_i\preceq r \I$.
Now, recall that $\E_{e\sim \nu }|\delta_{\pp,e}(i)| \leq \dinf\nu_\pp(i)$ for all $i$ in the support of $\rho_\pp$ in light of \cref{lem:consequence_of_avg_mult_indpt}.
Finally, we compute the following expectation of the last bound,
\begin{align*}
    \E_{e\sim \nu } [|Z_e|^2] &\preceq r\dinf  \sum_{i}\E_{e\sim \nu }|\delta_{\pp,e}(i)| Y_i \preceq r\dinf^2 \sum_{i}\nu_\pp(i) Y_i = r\dinf^2 \E_{e\sim \nu_\pp } [Y_e].
\end{align*}
This is the desired result.
\FloatBarrier
\section{Additional numerical simulations} 
\label{sec:additional_numerical_simulations}
\subsection{Consistency of cycles for the random graphs of \cref{sec:exp_SyncRank}}\label{sec:cycle_inconsistencies}

In \cref{fig:MUN_weights} and \cref{fig:ERO_weights}, we display in boxplots the average cycle importance weights
$
\prod_{\text{cycles } \eta} \left(1 \vee (1 - \cos \theta(\eta))\right)
$
of the CRSFs sampled by $\cyclepopping{}$ and used for the sparsification of the magnetic Laplacian in \cref{fig:approx_mg_laplacian}.
These weights are used in the self-normalized importance sampling estimate; see \cref{eq:importance_weight}.
A weakly inconsistent cycle has a unit weight, whereas the largest possible weight is two.
Thus, the MUN example is likely to be a case where the assumption of \emph{weak inconsistencies} is satisfied and $\cyclepopping{}$ output is correctly distributed.

Clearly, the inconsistent cycles in the MUN model have smaller weights compared with the case of the ERO model, since the distribution is centered at $1$ in \cref{fig:MUN_weights} and is close to $2$ in \cref{fig:ERO_weights}.
Also, we computed the distribution of cycles (or CRTs) in the CRSFs sampled in the same setting as for the simulation of \cref{fig:approx_mg_laplacian}, thanks a Monte Carlo estimation over $100$ independent executions. 
This empirical distribution is reported in \cref{fig:cycle_weights}.
In \cref{fig:approx_mg_laplacian_MUN_CRTs}, we see that CRSFs sampled in the MUN graph often have only one component, whereas CRSFs sampled in the ERO graph have much more components; see \cref{fig:approx_mg_laplacian_ERO_CRTs}.
This confirms that inconsistencies are larger in the former case.

\begin{figure} 
    \begin{subfigure}[b]{0.49\textwidth}
        \centering
        \includegraphics[scale=0.3]{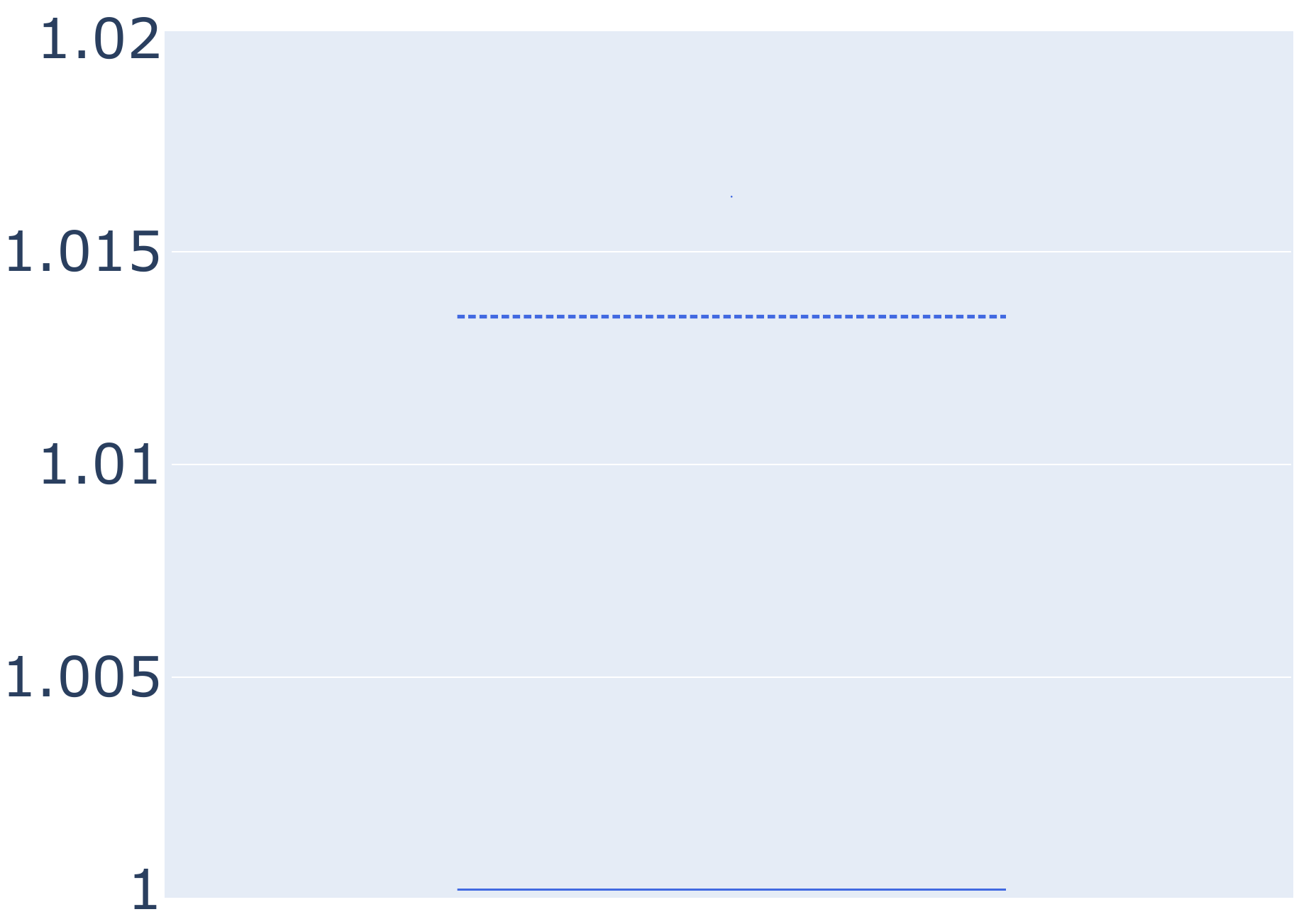}
        \caption{Average cycle weight (MUN). \label{fig:MUN_weights}}
    \end{subfigure}
    \hfill
    \begin{subfigure}[b]{0.49\textwidth}
        \centering
        \includegraphics[scale=0.3]{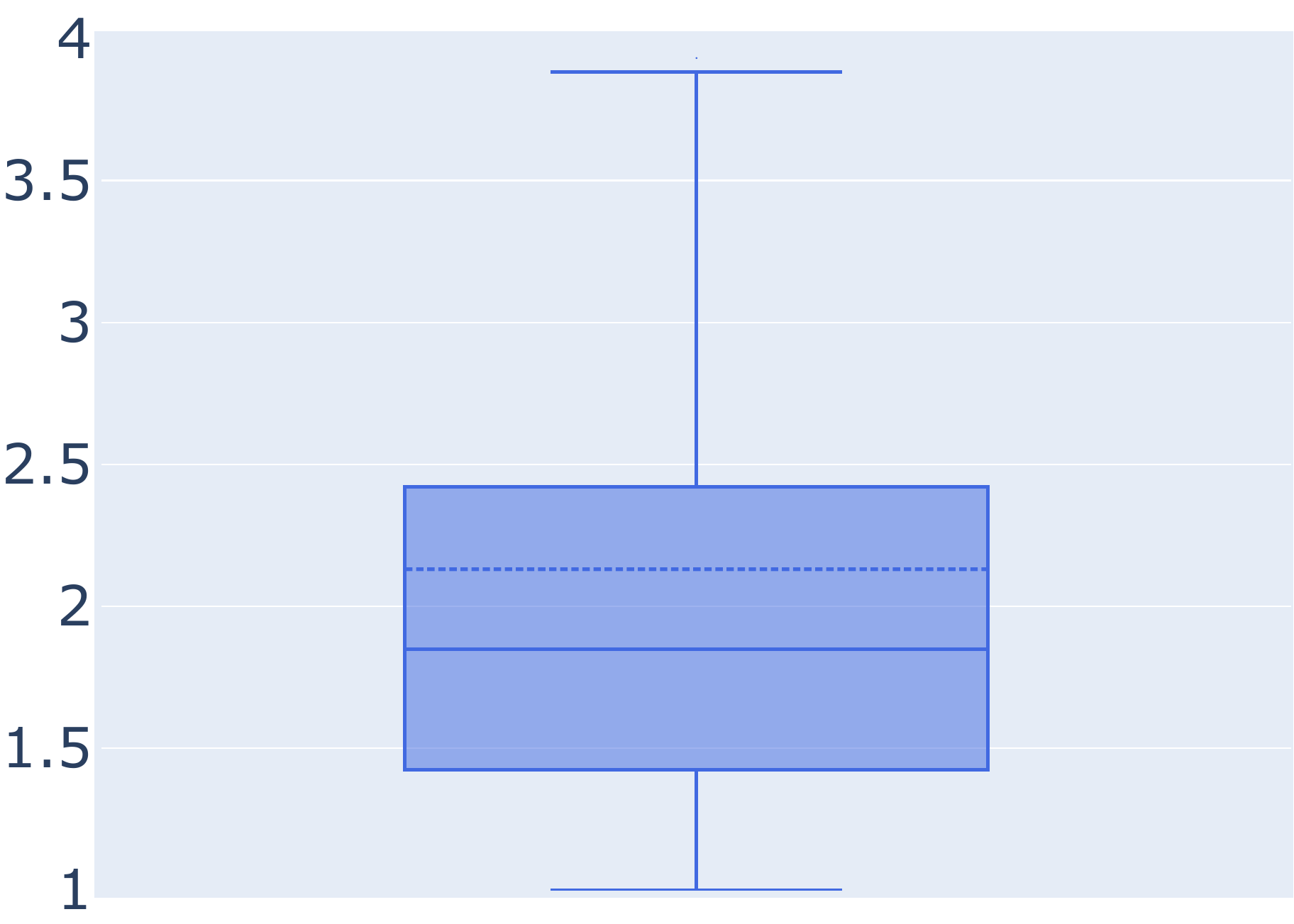}
        \caption{Average cycle weight (ERO).\label{fig:ERO_weights}}
    \end{subfigure}
    \begin{subfigure}[b]{0.49\textwidth}
        \centering
        \includegraphics[scale=0.3]{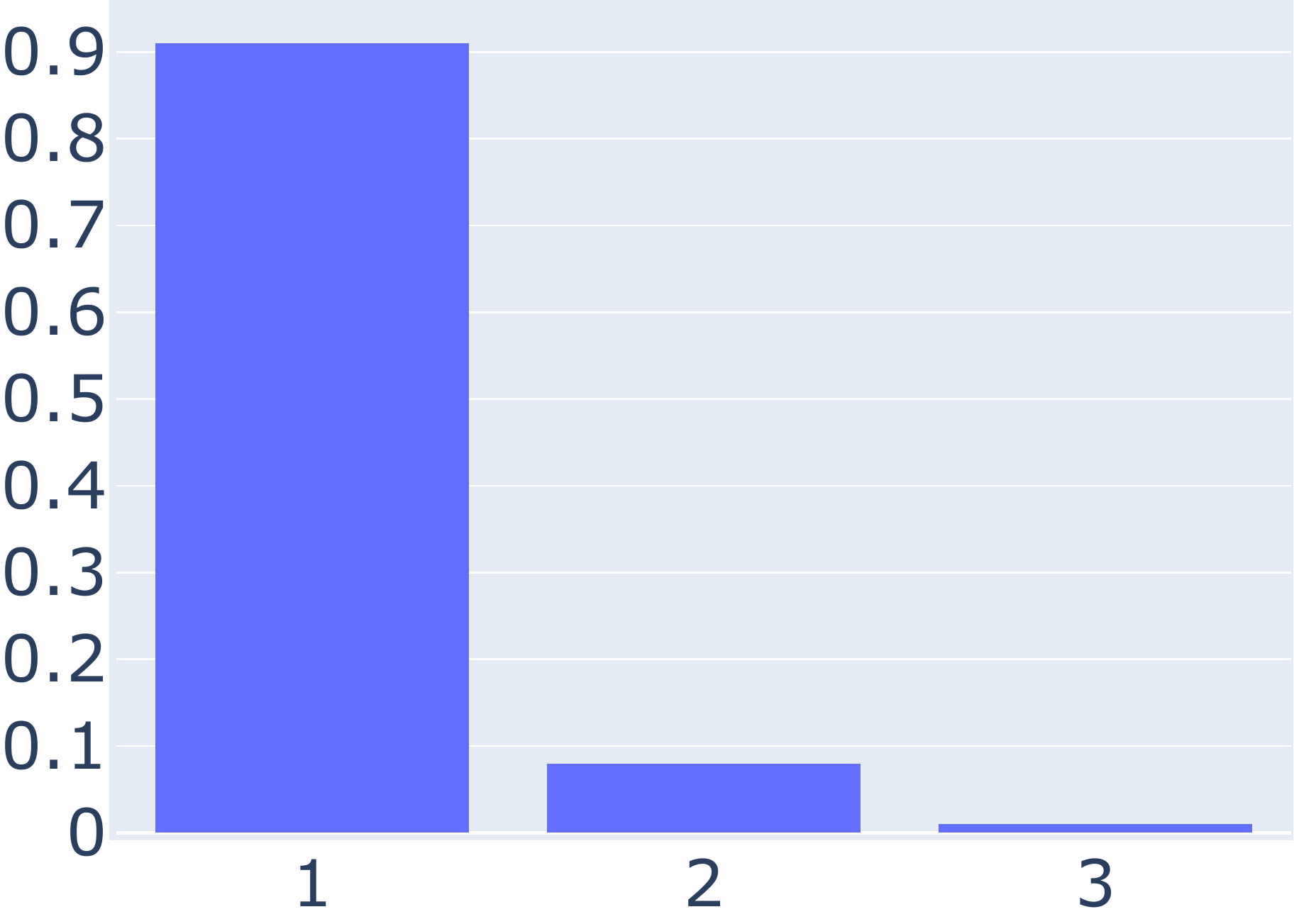}
        \caption{Number of CRTs in a CRSF (MUN). \label{fig:approx_mg_laplacian_MUN_CRTs}}
    \end{subfigure}
    \hfill
    \begin{subfigure}[b]{0.49\textwidth}
        \centering
        \includegraphics[scale=0.3]{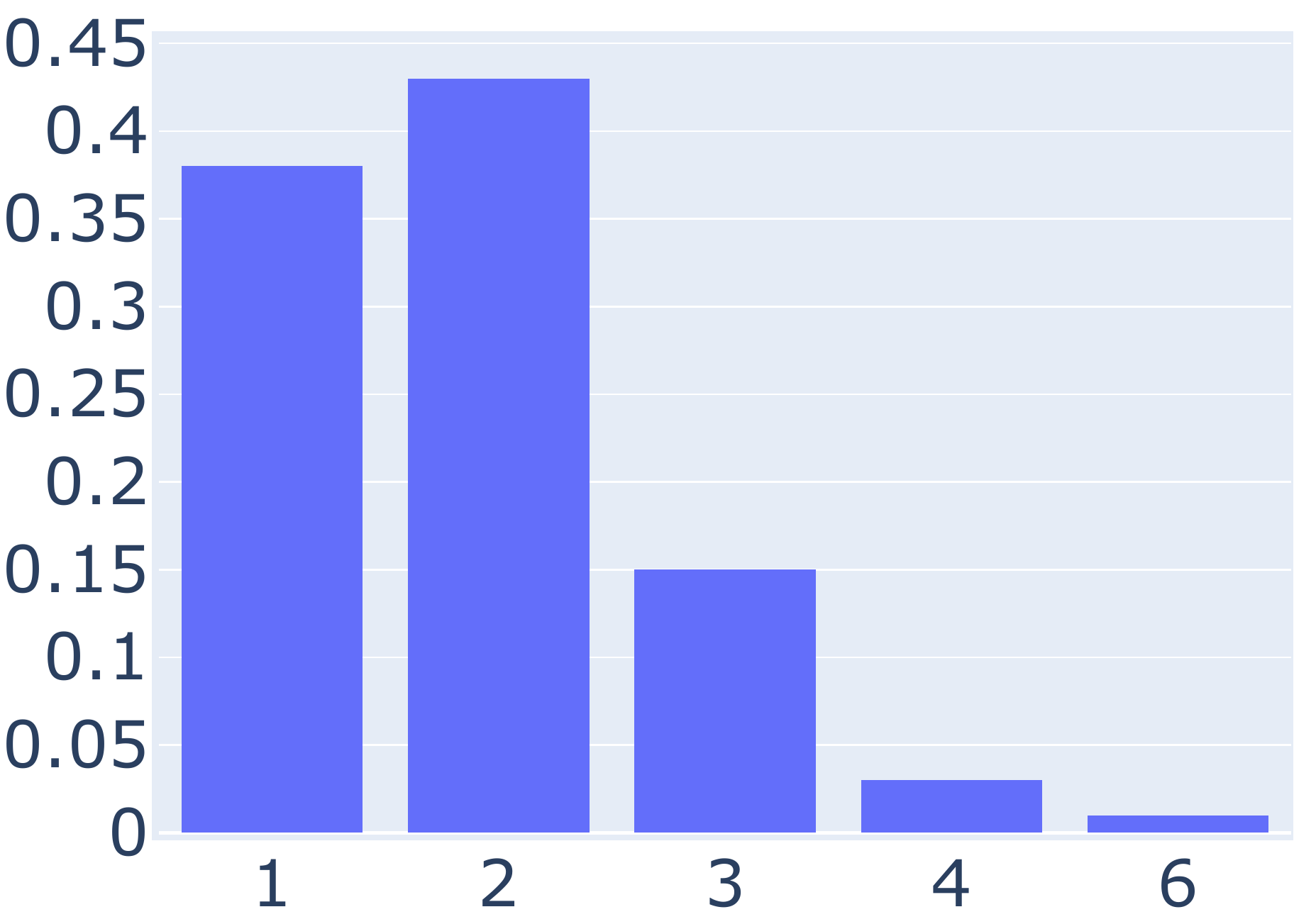}
        \caption{Number of CRTs in a CRSF  (ERO).\label{fig:approx_mg_laplacian_ERO_CRTs}}
    \end{subfigure}
    \caption{On the top row, average cycle weight $
    1 \vee (1 - \cos \theta(\eta))$ for the sparsifiers obtained with self-normalized importance sampling in \cref{fig:approx_mg_laplacian}.
    On the bottom row, distribution of the number of CRTs in one CRSF samples for the setting of \cref{fig:approx_mg_laplacian}. 
    \label{fig:cycle_weights}}
\end{figure}



\subsection{Comparison with algebraic sampling \label{sec:HKPV_extra_results}}
In this section, we report a brief empirical evaluation of the generic DPP sampler, called here $\HKPV$ algorithm \citep*{HKPV06},  for sampling determinantal CRSFs.
\subsubsection{Subgraphs sampled with HKPV}
\begin{figure}[ht]
    \centering
    \begin{subfigure}[b]{0.49\textwidth}
        \centering
        \includegraphics[scale=0.25]{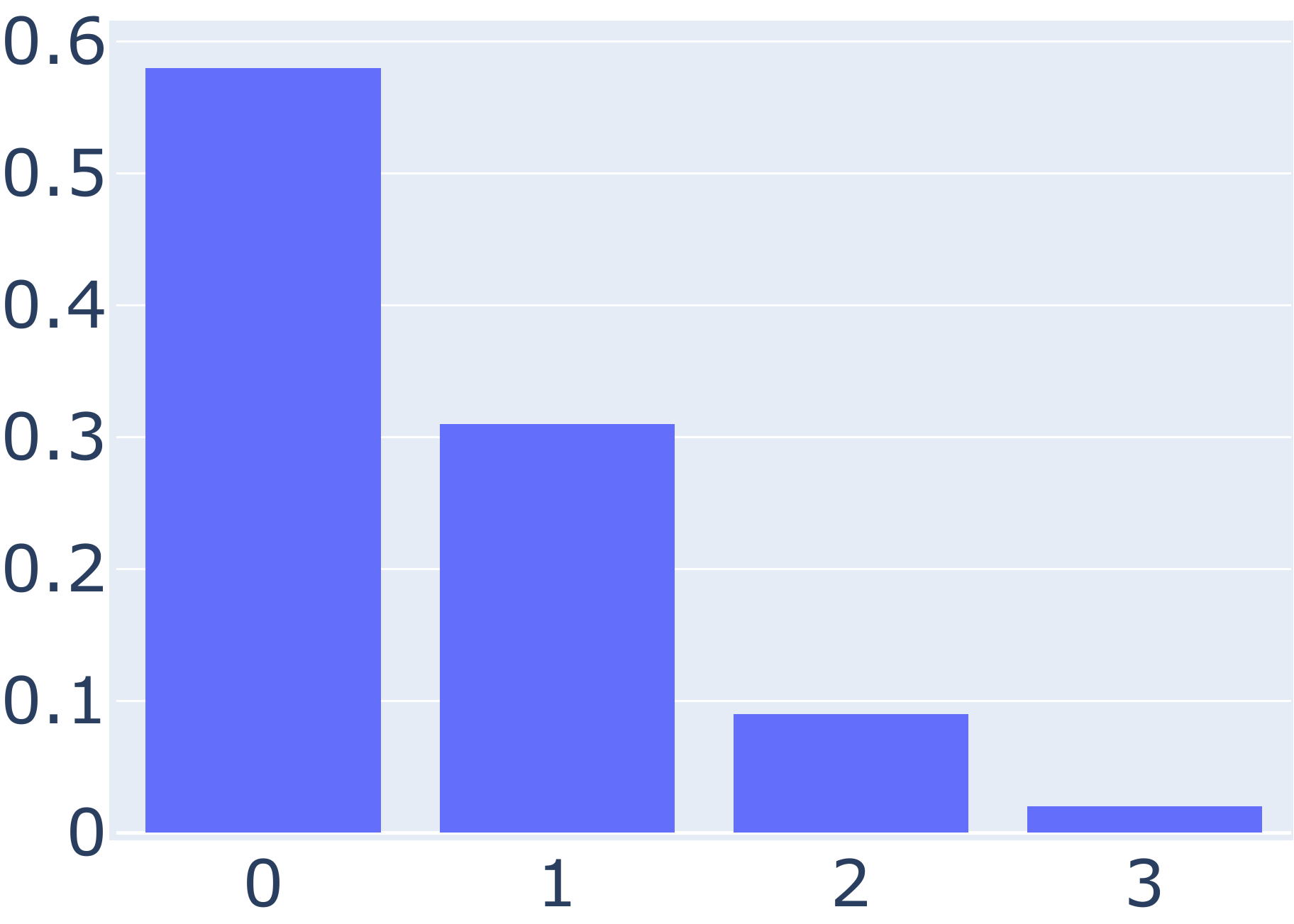}
        \caption{Subgraphs with no cycle ($\eta=0.1$).} 
    \end{subfigure}
    \begin{subfigure}[b]{0.49\textwidth}
        \centering
        \includegraphics[scale=0.25]{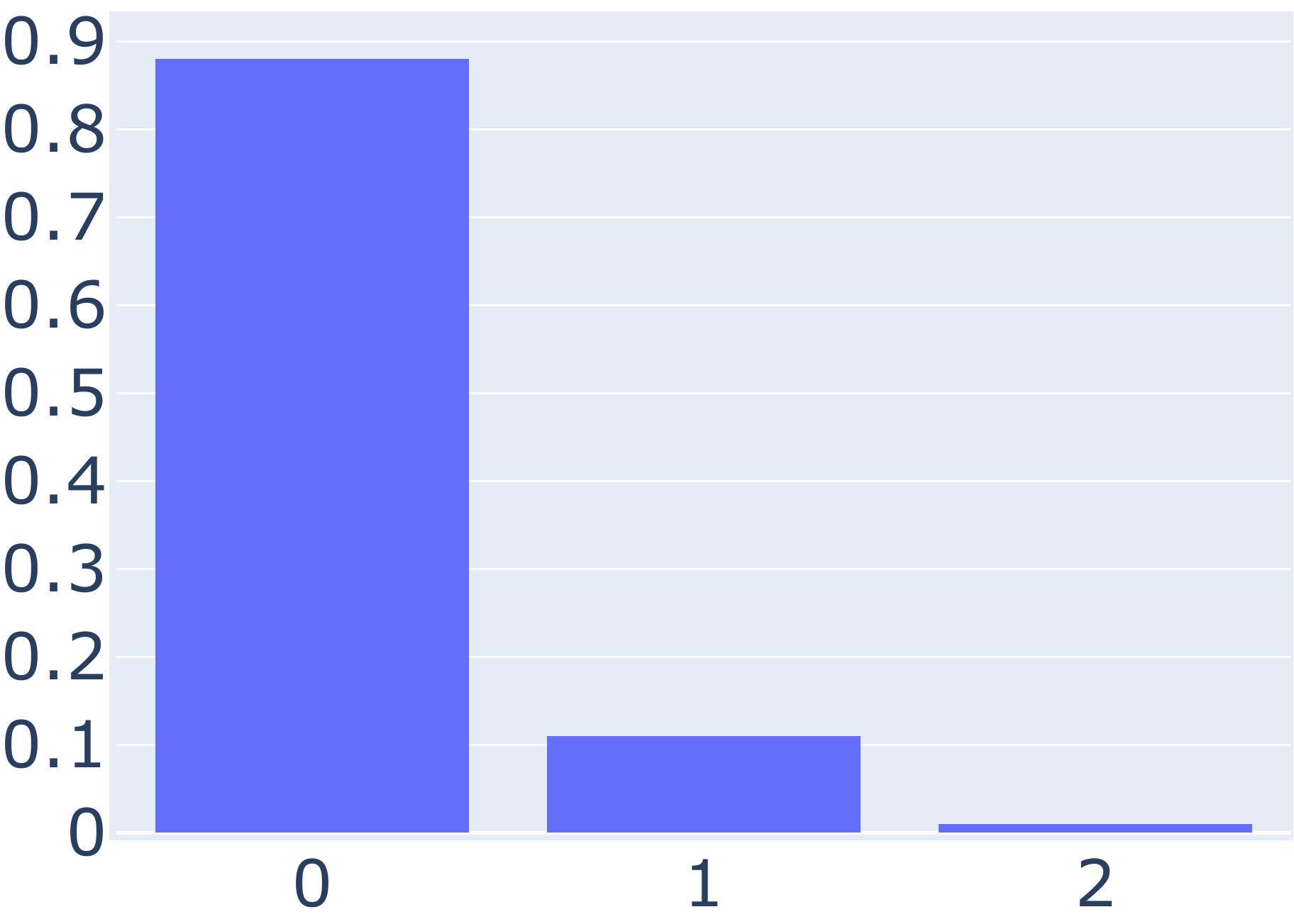}
        \caption{Subgraphs with no cycle ($\eta=0.7$).}
    \end{subfigure}
    \begin{subfigure}[b]{0.49\textwidth}
        \centering
        \includegraphics[scale=0.25]{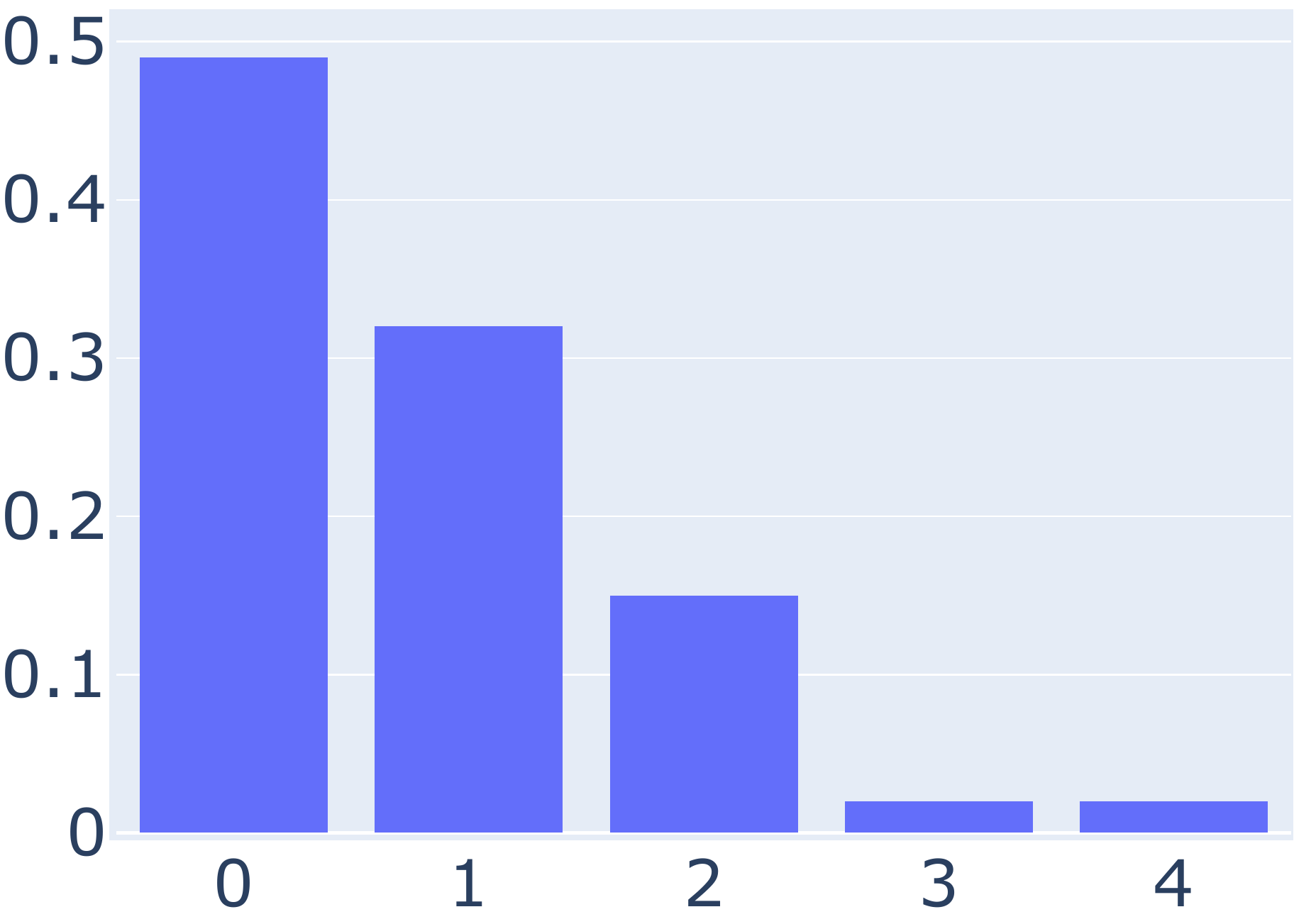}
        \caption{Subgraphs with exactly $1$ cycle ($\eta=0.1$).} 
    \end{subfigure}
    \begin{subfigure}[b]{0.49\textwidth}
        \centering
        \includegraphics[scale=0.25]{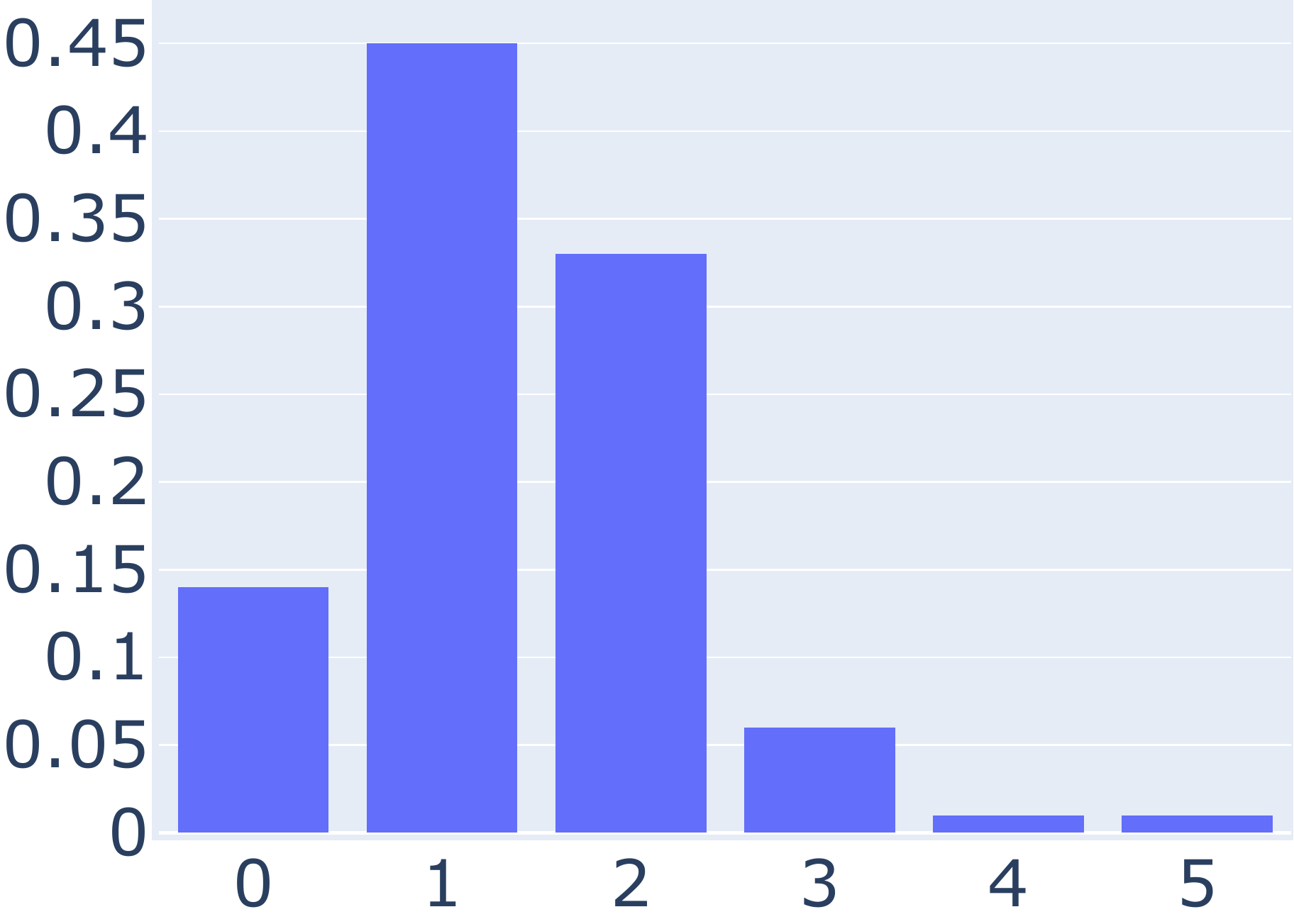}
        \caption{Subgraphs with exactly $1$ cycle ($\eta=0.7$).}
    \end{subfigure}
    \begin{subfigure}[b]{0.49\textwidth}
        \centering
        \includegraphics[scale=0.25]{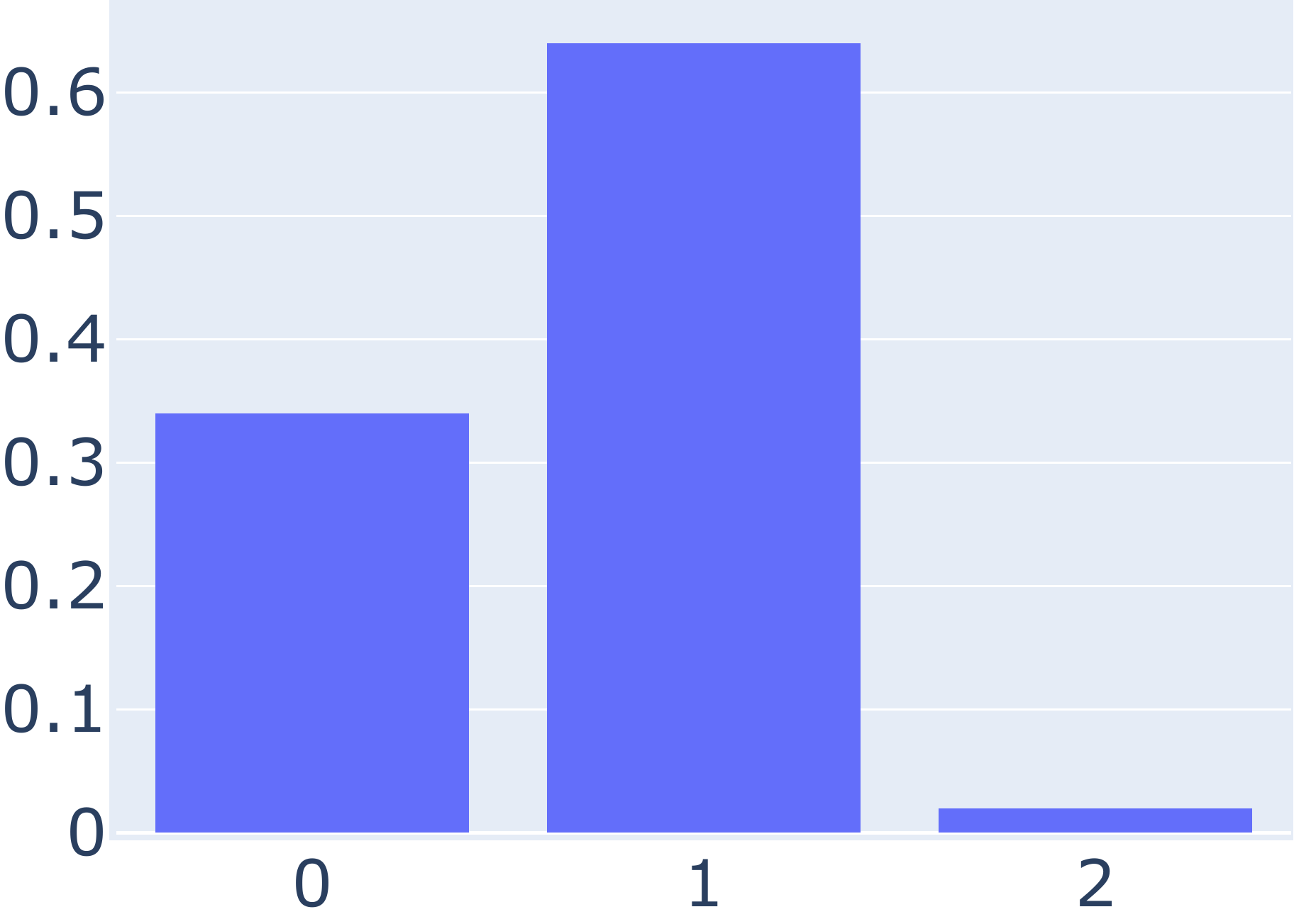}
        \caption{Subgraphs at least $2$ cycles ($\eta=0.1$).} 
    \end{subfigure}
    \begin{subfigure}[b]{0.49\textwidth}
        \centering
        \includegraphics[scale=0.25]{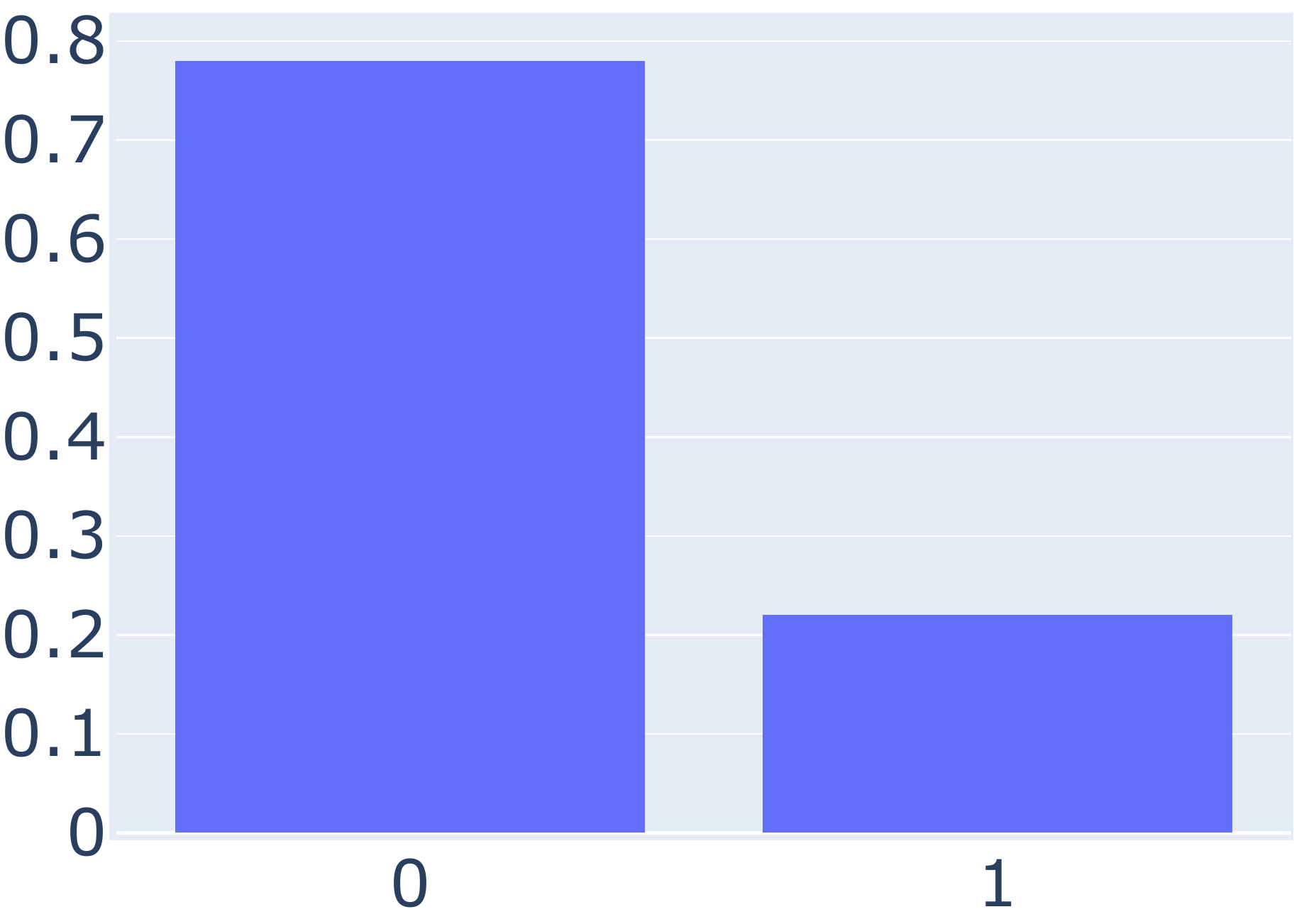}
        \caption{Subgraphs at least $2$ cycles ($\eta=0.7$).}
    \end{subfigure}
    \caption{Histogram of number of connected subgraphs in samples  obtained thanks to HPKV algorithm for $\DPP(K)$ with $K = B\Delta^{-1}B^*$ where $B$ is the incidence matrix of a MUN$(n,p,\eta)$ graph with $n=100$, $p=0.6$  and where  $\eta=0.1$ (Left-hand side) and $\eta=0.7$ (Right-hand side).
    A subgraph in the support of this determinantal measure should have no connected component with no cycle and no connected component with at least $2$ cycles.
     \label{fig:HKPV_subgraphs}}
\end{figure}

{With the help of the $\HKPV$ algorithm, we sample the determinantal measure over CRSFs on a MUN$(n,p,\eta)$ graph with $n=100$, $p=0.6$ and for two values of the noise parameter $\eta$, i.e., $\eta=0.1$ and $\eta=0.7$.
The implementation goes as follows. 
First, a QR decomposition of the $m\times n$ magnetic incidence matrix $B$ is performed, so that the correlation kernel of the DPP associated with the orthogonal projection onto the column space of $B$ reads $K = Q Q^*$. 
QR decomposition is often a useful pre-processing step for DPP sampling \citep{barthelme2023faster}.
Second, a chain rule is applied to the $m\times n$ matrix $Q$ which gives the desired sample; see \citep{HKPV06}. At this stage we use the implementation of \texttt{Determinantal.jl}\footnote{\url{https://github.com/dahtah/Determinantal.jl}}.}

{$\HKPV$ yields then by construction a sample of $n$ edges.
Next, we sample $100$ subgraphs with this algorithm. 
For each subgraph, we count the number of connected components with (i) no cycle or trees (illegitimate), (ii) exactly one cycle or CRT (legitimate) and (iii) at least two cycles (illegitimate).
The corresponding histograms are given in \cref{fig:HKPV_subgraphs}.}

{Recall that a CRSF is defined as a spanning subgraph whose connected components have exactly $1$ cycle.
By inspection of \cref{fig:HKPV_subgraphs}, we observe that several subgraphs sampled in this simulation contain trees or event subgraphs with at least $2$ cycles.
The latter subgraphs cannot in principle appear in a CRSF.
Notwithstanding, the distribution of subgraphs is closer to the target distribution (i.e., associated with a CRSF) for $\eta=0.7$ as compared with $\eta = 0.1$.
We therefore presume that this effect is due to numerical errors in this algebraic algorithm -- since a large value of $\eta$ intuitively makes the columns of the magnetic incidence matrix $B$ less collinear.}

{As a sanity check, we compared the empirical LS distribution $\lev_{\mathrm{emp}}(e)$ obtained with $10^7$ samples of $\HKPV$ -- computed with \cref{eq:emp_LS} -- to the exact LS for a rather small MUN$(n,p,\eta)$ graph with $n = 20$, $p = 0.3$, and $\eta = 0.9$. 
In the boxplot of \cref{fig:LS_est_HKPV}, we observe that the empirical inclusion probabilities of edges obatined with $\HKPV$ are indeed close to the exact inclusion probabilities.
The mean value and standard deviation of the relative differences $(\lev(e) -\lev_{\mathrm{emp}}(e))/ \lev(e)$ for all edges $e$ are respectively $-0.004$ and $0.05$.}

\subsubsection{Time to sample CRSFs with $\HKPV$ and $\cyclepopping{}$\label{sec:comp_time_HKPV_cyclepopping}}

{For completeness, we also report in \cref{fig:timings_HKPV_vs_CyclePopping} boxplots corresponding to the run times for $100$ executions of $\HKPV$ and $\cyclepopping{}$.
The setting is the same as for the simulation reported in \cref{fig:HKPV_subgraphs}.
In this example, we observe that the mean time to sample with $\HKPV$ is about one order of magnitude larger w.r.t.\ $\cyclepopping{}$.} 
\begin{figure}[ht]
    \centering
    \begin{subfigure}[b]{0.49\textwidth}
        \centering
        \includegraphics[scale=0.25]{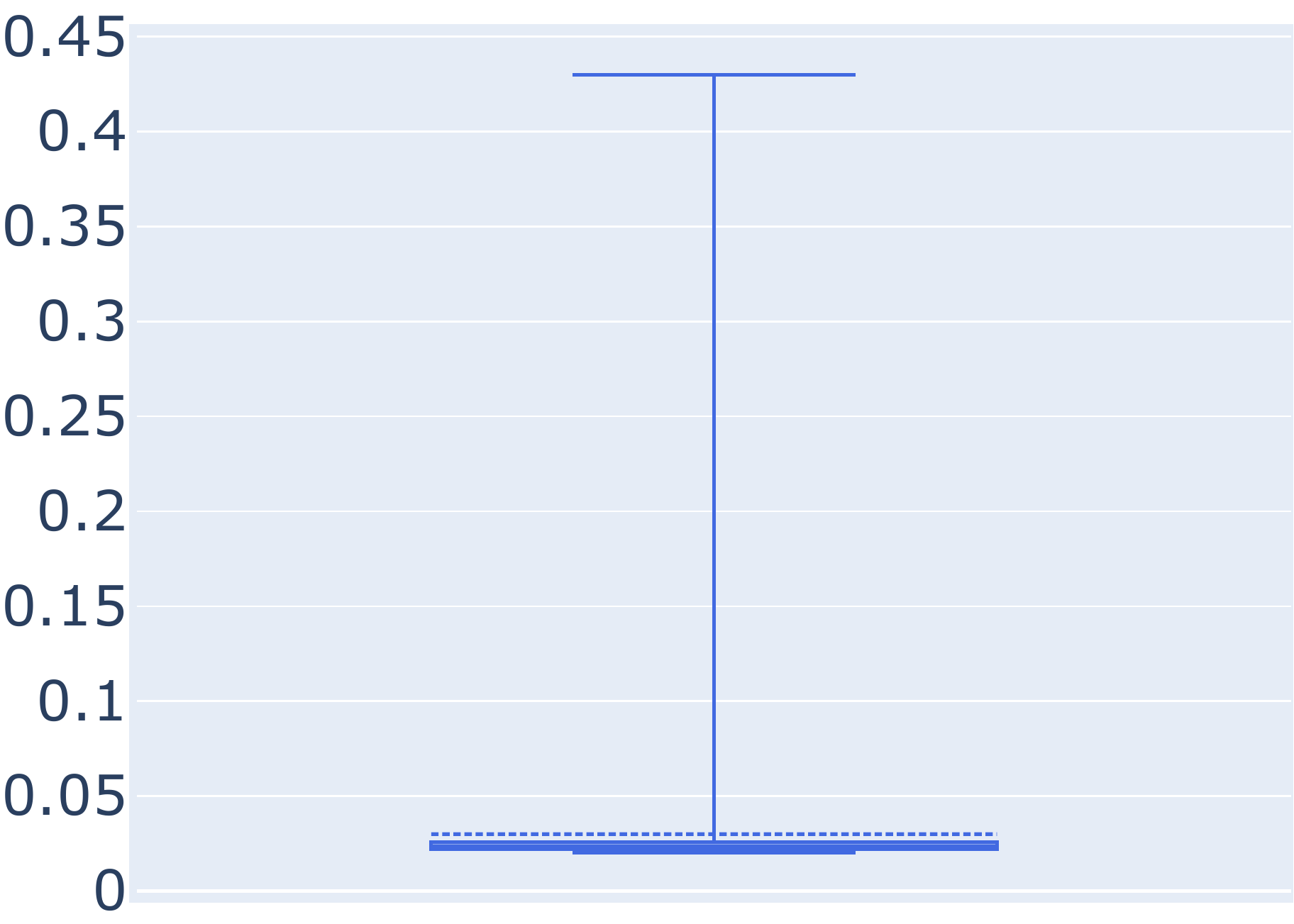}
        \caption{$\HKPV$ ($\eta=0.1$).} 
    \end{subfigure}
    \begin{subfigure}[b]{0.49\textwidth}
        \centering
        \includegraphics[scale=0.25]{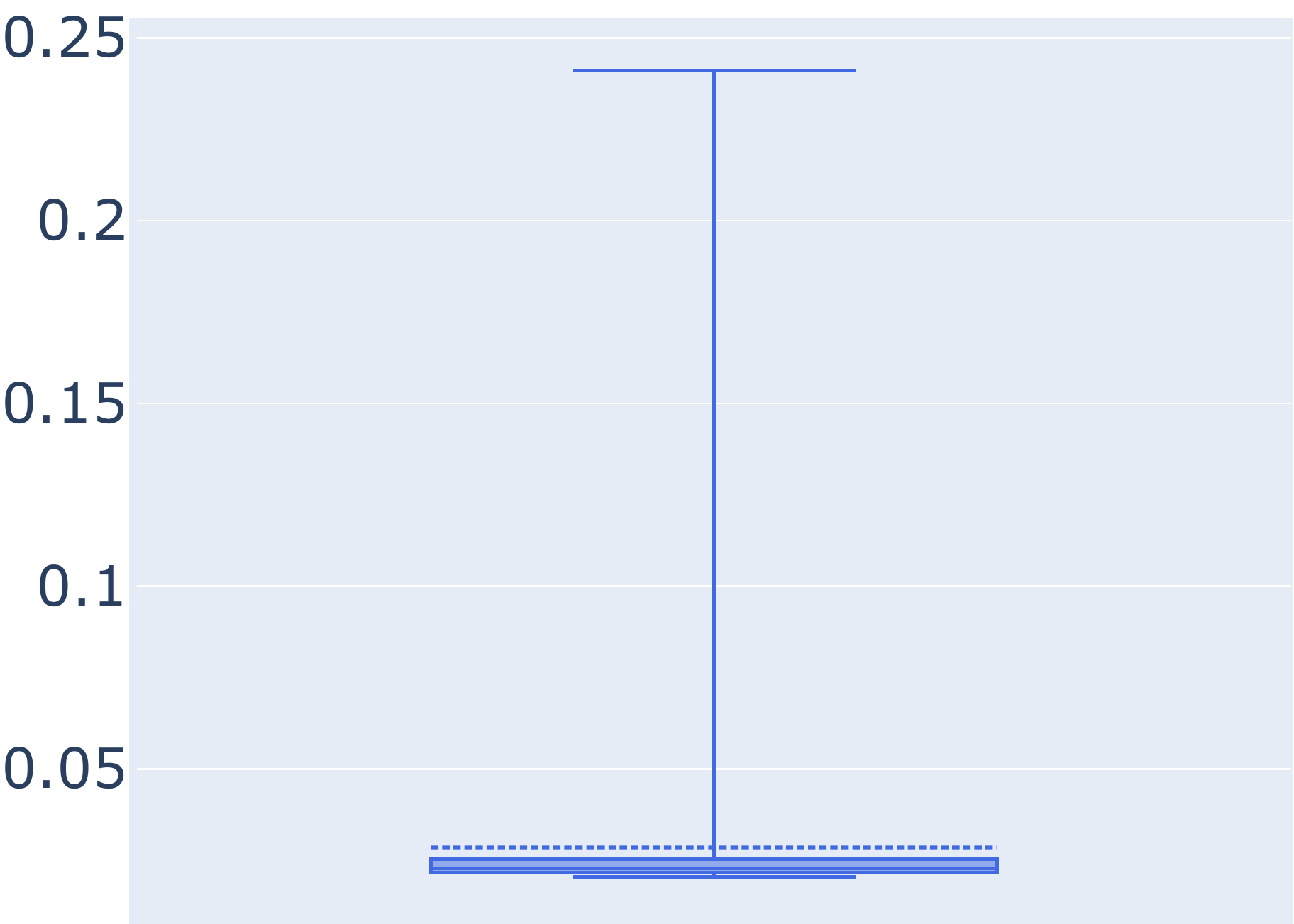}
        \caption{$\HKPV$ ($\eta=0.7$).}
    \end{subfigure}
    \begin{subfigure}[b]{0.49\textwidth}
        \centering
        \includegraphics[scale=0.25]{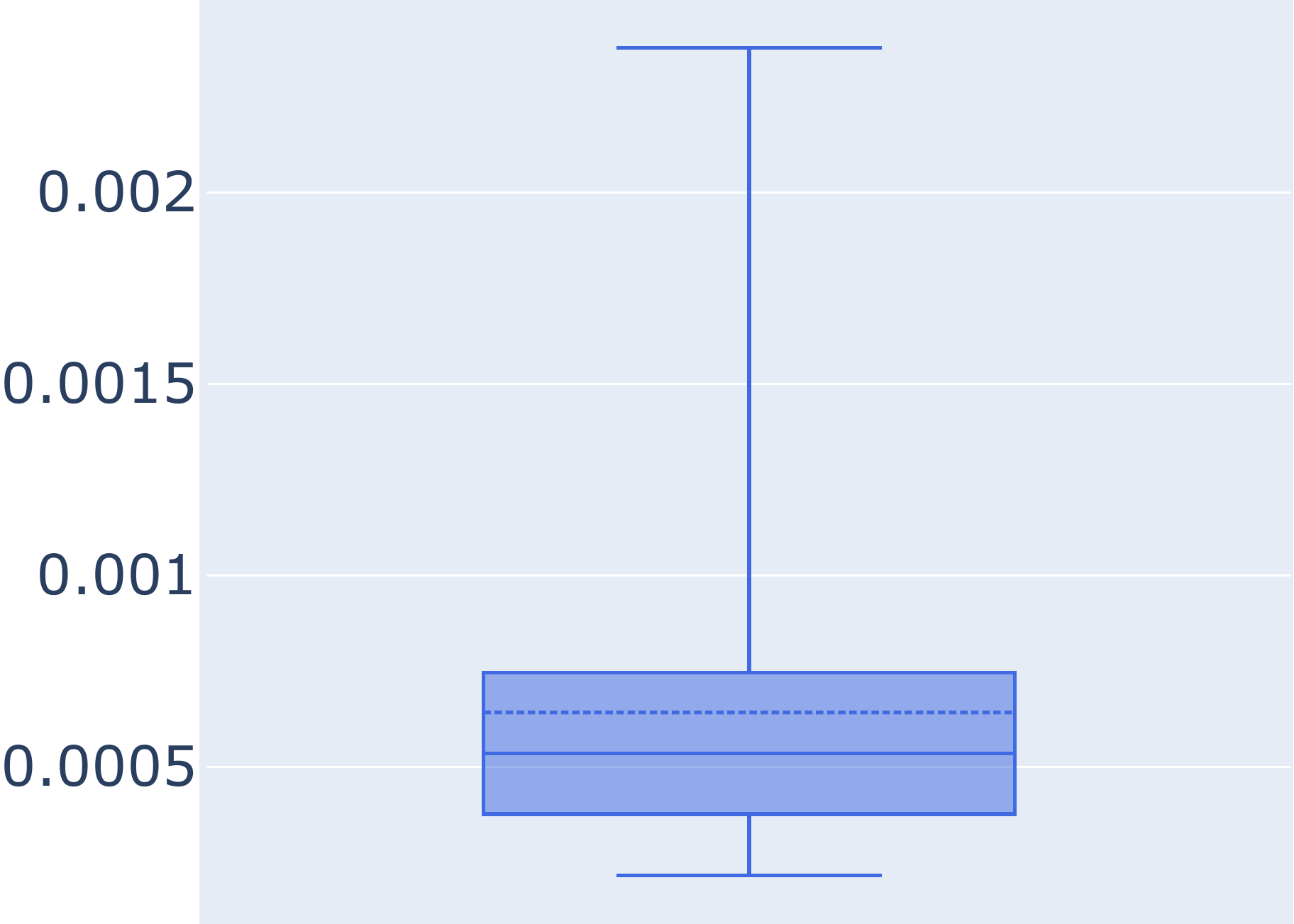}
        \caption{$\cyclepopping{}$ ($\eta=0.1$).} 
    \end{subfigure}
    \begin{subfigure}[b]{0.49\textwidth}
        \centering
        \includegraphics[scale=0.25]{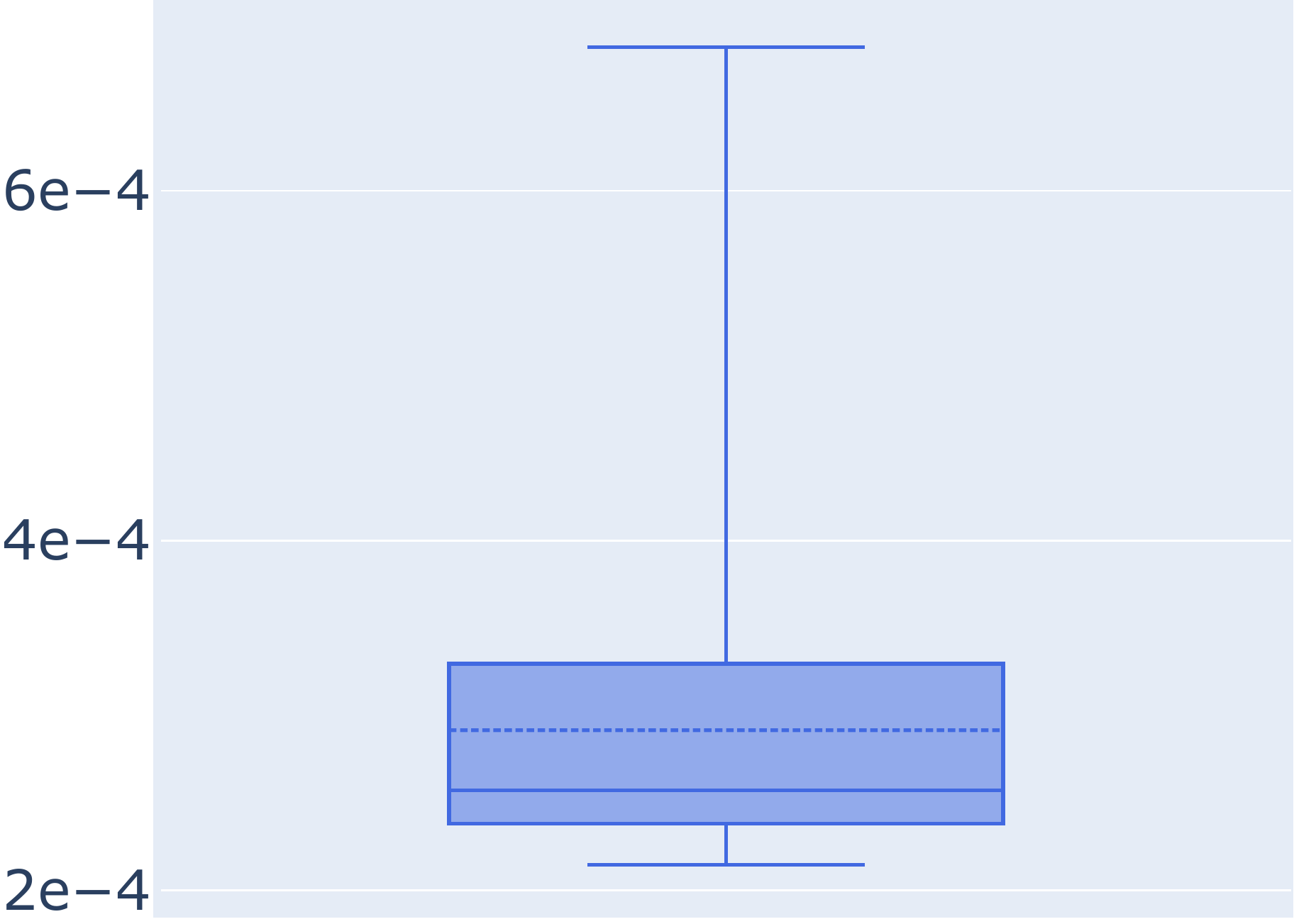}
        \caption{$\cyclepopping{}$ ($\eta=0.7$).}
    \end{subfigure}
    \caption{Timing (s) for sampling a CRSF with $q=0$ in a MUN$(n,p,\eta)$ with $n=100$, $p=0.6$  and where  $\eta=0.1$ (Left-hand side) and $\eta=0.7$ (Right-hand side).
    Boxplots are computed on $100$ independent samples with $\HKPV$ algorithm (top row) and $\cyclepopping{}$ (bottom row) with cycle weight capping. \label{fig:timings_HKPV_vs_CyclePopping}}    
\end{figure}

\begin{figure}
    \centering
        \includegraphics[scale=0.25]{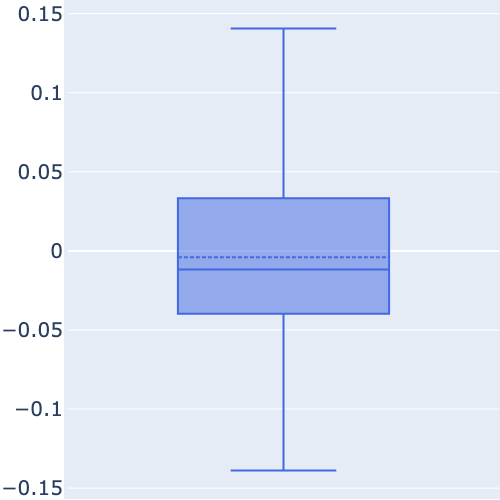}
        \caption{Relative differences between the exact leverage scores and their empirical estimates with $\HKPV$ sampling over $10^7$ Monte Carlo runs for a MUN$(n,p,\eta)$ graph with $n = 20$, $p = 0.3$, and $\eta = 0.9$.
        \label{fig:LS_est_HKPV}}    
\end{figure}
\FloatBarrier


\bibliography{References}

\end{document}